%% file: main.tex
\documentclass[12pt]{article}

\usepackage{agca}
\usepackage{bibunits}

\defaultbibliographystyle{chicago}
\defaultbibliography{articles}

\title{Anchored Geodesic Analysis for Multivariate Extremes}
\author{Alberto Quaini\thanks{Corresponding author: quaini@ese.eur.nl. We thank Nicola Gnecco for the valuable comments.} \qquad Chen Zhou\\
Erasmus University Rotterdam}
\date{}

\begin{document}

\begin{bibunit}
\maketitle
\enlargethispage{.75\baselineskip}

\begin{abstract}
Extremal dependence is naturally described by the angular law of large multivariate observations.
We introduce anchored geodesic component analysis (AGCA), a dimension-reduction method for
extremal angular laws on the positive unit sphere. AGCA approximates angular variation by great
subspheres constrained to pass through a chosen reference direction, with balanced complete
dependence as the default anchor. Under a bounded sine-squared geodesic loss, the population and
empirical problems reduce exactly to eigenanalysis of a second-moment matrix of anchored tangent
departures. The resulting scores, loadings, residual risks and explained-variation summaries
describe departures from the benchmark and remain well defined for face and near-axis extremes.
Low-rank AGCA reconstructions also support tail simulation: bounded Lipschitz functionals and
homogeneous tail scores, including portfolio capped excesses and value-at-risk, inherit explicit
error bounds from the AGCA residual risk. We establish top-\(k\) consistency for oracle and
rank-Pareto AGCA summaries and an oracle central limit theorem whose covariance is that of an
independent sample from the limiting angular law. In daily equity-portfolio losses, AGCA finds
concentrated benchmark-relative tail directions: ten components explain about \(91\%\) of anchored
variation and approximate capped-excess and normalized value-at-risk summaries with about
\(1.25\%\) average relative error.
\end{abstract}

\smallskip
\noindent\textbf{Keywords:} Multivariate extremes; geodesic analysis; angular 
measure; dimension reduction; tail simulation; portfolio tail risk.

\smallskip
\noindent\textbf{JEL classification:} C14; C38; C58; G11.

\subfile{01_introduction}
\subfile{02_population_theory}
\subfile{03_estimation}
\subfile{04_empirics_portfolios}
\subfile{05_discussion}

\putbib

\clearpage

\end{bibunit}

\begin{bibunit}

\begin{center}
{\Large Supplementary Material for\\[0.5em]
Anchored Geodesic Analysis for Multivariate Extremes}
\end{center}
\bigskip

\setcounter{section}{0}
\setcounter{subsection}{0}
\setcounter{figure}{0}
\setcounter{table}{0}
\setcounter{equation}{0}
\setcounter{assumption}{0}
\setcounter{theorem}{0}
\setcounter{proposition}{0}
\setcounter{lemma}{0}
\setcounter{corollary}{0}
\setcounter{definition}{0}
\setcounter{remark}{0}
\renewcommand{\thesection}{S\arabic{section}}
\renewcommand{\thefigure}{S\arabic{figure}}
\renewcommand{\thetable}{S\arabic{table}}
\renewcommand{\theequation}{S\arabic{equation}}
\renewcommand{\theassumption}{S\arabic{assumption}}
\renewcommand{\thetheorem}{S\arabic{theorem}}
\renewcommand{\theproposition}{S\arabic{proposition}}
\renewcommand{\thelemma}{S\arabic{lemma}}
\renewcommand{\thecorollary}{S\arabic{corollary}}
\renewcommand{\thedefinition}{S\arabic{definition}}
\renewcommand{\theremark}{S\arabic{remark}}
\renewcommand{\theHsection}{supp.section.\arabic{section}}
\renewcommand{\theHsubsection}{supp.subsection.\arabic{section}.\arabic{subsection}}
\renewcommand{\theHfigure}{supp.figure.\arabic{figure}}
\renewcommand{\theHtable}{supp.table.\arabic{table}}
\renewcommand{\theHequation}{supp.equation.\arabic{equation}}
\renewcommand{\theHassumption}{supp.assumption.\arabic{assumption}}
\renewcommand{\theHtheorem}{supp.theorem.\arabic{theorem}}
\renewcommand{\theHproposition}{supp.proposition.\arabic{proposition}}
\renewcommand{\theHlemma}{supp.lemma.\arabic{lemma}}
\renewcommand{\theHcorollary}{supp.corollary.\arabic{corollary}}
\renewcommand{\theHdefinition}{supp.definition.\arabic{definition}}
\renewcommand{\theHremark}{supp.remark.\arabic{remark}}

\subfile{appendix}
\subfile{appendix_simulations}
\subfile{appendix_empirics_portfolios}

\putbib

\end{bibunit}

\end{document}

%% file: 01_introduction.tex
\section{Introduction}
\label{sec:introduction}

Extremes increasingly arise as multivariate events. A storm is a joint profile over a region, a
market crisis is a configuration of simultaneous tail losses, and an infrastructure failure may
be visible only through several stressed measurements. For such data, the central question is
extremal dependence: how do coordinates participate in large events, and can the relative pattern
of participation be summarized in a small number of interpretable coordinates?

Classical multivariate extreme value theory formalizes this question through angular measures.
After standardizing margins to a common tail scale, a large observation decomposes into radial
size and angular direction; the limiting angular law records the relative contribution of the
coordinates in the tail \citep{de2006extreme,resnick2007heavy}. This representation underlies
peaks-over-threshold models and multivariate generalized Pareto limits
\citep{rootzen2006multivariate,rootzen2018multivariate,mourahib2025multivariate}. It is also a
natural starting point for dimension reduction, because the angular law is the object that
remains after radial severity has been separated from extremal direction.

This paper develops anchored geodesic component analysis (AGCA) for that purpose. We represent
standardized extreme directions on the Euclidean unit sphere, which preserves extremal rays in
finite dimension while placing the angular component on a familiar Riemannian manifold. The
default reference direction is the balanced complete-dependence anchor \(\mu_0\) in
\eqref{eq:canonical-anchor}, corresponding to equal participation of all variables after
marginal standardization. AGCA fits great subspheres through this anchor and measures
reconstruction by a bounded sine-squared geodesic loss. The resulting scores, loadings, residual
risks, and explained variation describe angular departures from balanced extremal
co-movement.

The construction is related to principal geodesic analysis and principal nested spheres, which
adapt principal component ideas to nonlinear spaces
\citep[e.g.,][]{fletcher2004principal,jung2012analysis,sommer2014optimization,
tabaghi2024principal}.
Those methods typically build geodesic summaries
around data-adaptive base points and use squared geodesic reconstruction loss. AGCA instead uses
a reference direction tailored to extremal dependence and a bounded sine-squared geodesic loss.
In contrast to principal nested spheres, which allow nested small subspheres rather than only
great subspheres \citep{jung2012analysis}, AGCA deliberately restricts to great subspheres
through the anchor. The restriction makes the bounded sine-squared loss reduce spherical
projection to ordinary orthogonal projection of bounded departures in the tangent hyperplane.
Consequently, the population and empirical AGCA problems reduce exactly to eigendecomposition of
an anchored second-moment matrix. In addition, since all standardized extremal rays are
represented on the sphere, near-axis directions associated with asymptotic independence remain
finite angular observations.

This exact spectral form is useful both computationally and statistically. It gives one
eigensolution for the fitted component space and for diagnostics such as anchored variation
explained, residual geodesic risk, scores, loadings, and component-space stability. It also makes
perturbation analysis transparent: error in the anchored second-moment matrix translates into
eigenvalue error, residual-risk error, and, under an eigengap, projector error for the fitted
component space. The loadings retain a direct extremal interpretation. Near \(\mu_0\), loading
differences determine how pairwise angular participation changes relative to balanced complete
dependence.

Several neighboring literatures reduce the angular complexity, but they target different
objects. Support and sparsity methods ask which variables, faces, subcones, or representative
directions can become extreme together
\citep{chautru2015dimension,goix2016sparsity,meyer2021sparse}. Extremal graphical models encode
conditional independence and sparse Markov structure in tail limits
\citep{engelke2020graphical}. Tail dependence matrices and their decompositions summarize
pairwise extremal association, while necessarily losing higher-order angular information
\citep{cooley2019decompositions}. Clustering methods for angular extremes group directions into
representative regimes \citep{janssen2020kmeans,fomichov2023spherical}.
Principal component methods for extremes seek low-dimensional linear subspaces close to extreme
directions and study reconstruction risk, excess risk, and component selection
\citep{Drees2021principal,drees2025asymptotic,butsch2025estimation,cooley2026principal}.
Kernel, hyperplane, and compositional approaches provide further projections or transformed
linear coordinates for angular samples \citep{medina2025insights,wan2026characterizing,
lhaut2026simulation}. These methods are valuable when the target is an active face set, a sparse
tail graph or direction set, a collection of representative regimes, a pairwise matrix, or a
linear representation. They do not directly provide a benchmark-relative component analysis of
the angular law itself. Several of these approaches impose sparsity, discrete-regime,
or linear-subspace structure, whereas AGCA is designed to summarize the angular law itself
without assuming sparsity and remains well defined for face and near-axis extremes.

The paper contributes at population, estimation, and inferential levels. At the population level,
we define the anchored geodesic reconstruction problem, prove the exact projection identity,
derive the eigensolution and loading interpretation, and characterize face- and axis-supported
angular laws. Then we show how low-rank AGCA reconstructions can simulate bounded Lipschitz and
homogeneous tail functionals, including portfolio capped excesses and Value-at-Risk. At the
sample level, we give the rank-Pareto top-\(k\) construction used in applications, building on
empirical angular measure estimation as in
\citet{einmahl1997estimating,einmahl2001nonparametric,einmahl2009maximum}, and establish
consistency and stability of scalar and component-space AGCA summaries. We also prove an oracle
central limit theorem for the anchored top-\(k\) second-moment matrix. Under a second-order
angular-bias condition, random threshold selection has the same first-order limit as an
independent sample from the limiting angular law, yielding plug-in inference for the
AGCA-specific eigenvalues, anchored explained variation, residual geodesic risk, and component
projectors. Unlike the multivariate regular variation framework common to this line
\citep{Drees2021principal,drees2025asymptotic,butsch2025estimation}, the oracle AGCA CLT depends
on angular convergence and a second-order angular-bias condition rather than full radial regular
variation or tail-moment assumptions.

The empirical application studies daily equity-portfolio loss extremes. In a
\(d=24\) Fama--French panel \citep{kenFrenchDataLibrary}, the AGCA spectrum is concentrated: the
first five components explain \(76.3\%\) of anchored variation and the first ten explain
\(91.1\%\). The loadings identify repeated cross-sectional contrasts in extreme portfolio-loss
directions. The low-rank reconstruction also preserves the portfolio tail summaries motivated by
the theory: by rank ten, the average relative error across capped-excess and normalized
value-at-risk summaries is about \(1.25\%\) over a broad portfolio class. A second daily anomaly
portfolio panel from Open Source Asset Pricing \citep{openSourceAssetPricing} gives the same
qualitative message, with a less compressed but still stable anchored spectrum.

The Supplementary Material collects proofs, complementary theory, simulation
analyses, and complementary empirical results. The simulations illustrate AGCA
in designs that vary in complexity of the
angular law and the number of near-axis directions; see
Figure~\ref{fig:intro-sim-3d-sphere} for a simple visual preview.
The AGCA methodology is implemented in the R package \texttt{AGCA4extremes}
(\url{https://github.com/a91quaini/AGCA4extremes}). Data and replication code
for the simulation and empirical analyses are available at
\url{https://github.com/a91quaini/replicateAGCApaper}.

\begin{figure}[tbp]
\centering
\begin{minipage}[t]{0.45\linewidth}
\centering
\appinclude[width=\linewidth]{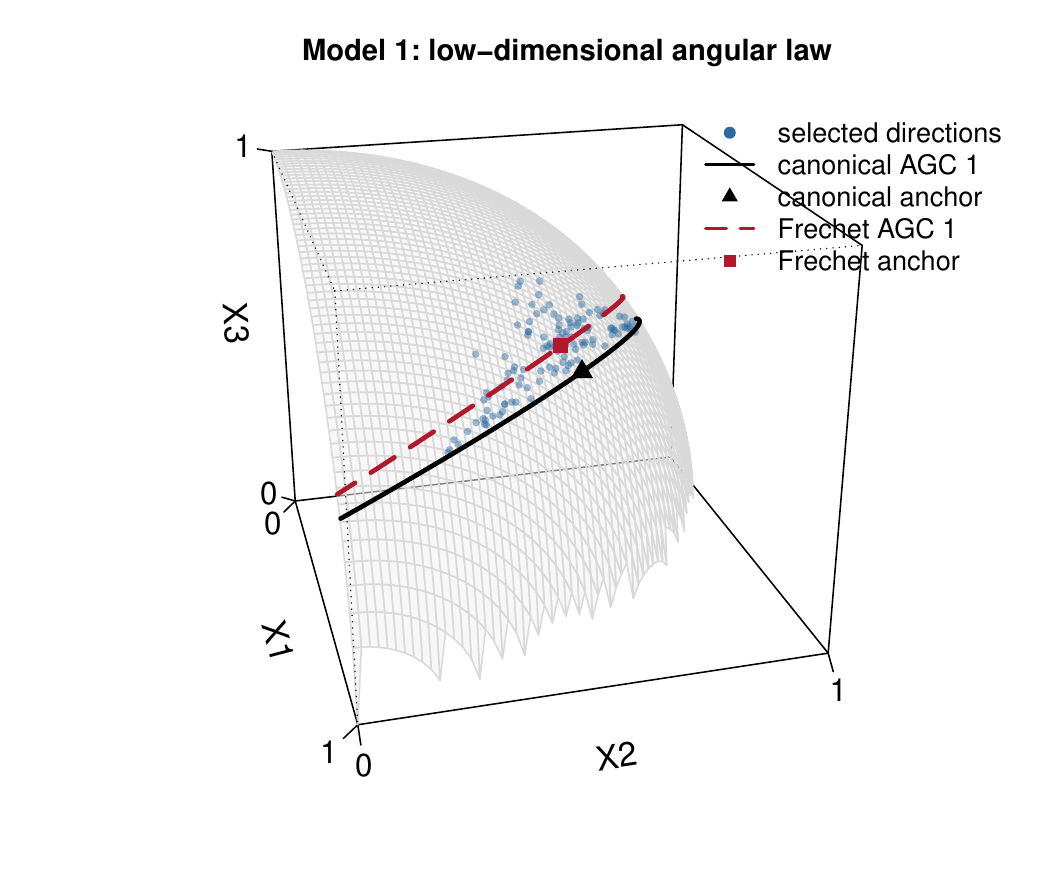}
\end{minipage}\hfill
\begin{minipage}[t]{0.45\linewidth}
\centering
\appinclude[width=\linewidth]{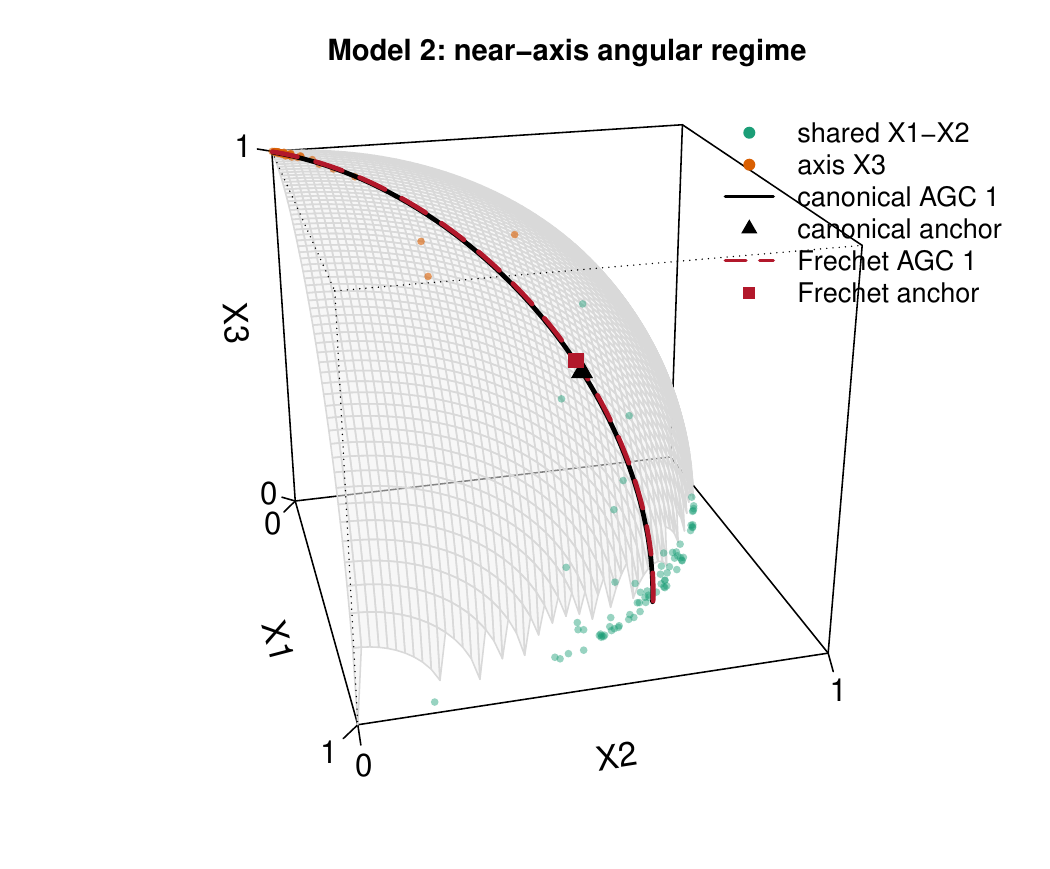}
\end{minipage}
\caption{Sphere-level geometry in two controlled simulations. The left panel shows a
low-dimensional angular law; the right panel shows a variable-specific near-axis regime. Each
panel displays selected extreme directions, the canonical anchor and first AGC, and the
Fr\'echet anchor and first AGC.}
\label{fig:intro-sim-3d-sphere}
\end{figure}

%% file: 02_population_theory.tex
\section{AGCA: population theory}
\label{sec:population-theory}
\label{sec:spherical-model}

We use the following notation throughout. For \(x,y\in\R^d\),
\(\ip{x}{y}=x^{\T}y\) and \(\norm{x}_2=\ip{x}{x}^{1/2}\).
Let \(\Sphere^{d-1}:=\{x\in\R^d:\norm{x}_2=1\}\), and let
\(\Sphere^{d-1}_{+}\) and \(\Sphere^{d-1}_{++}\) denote its intersections
with the nonnegative orthant and the strictly positive orthant, respectively.
For \(x,y\in\Sphere^{d-1}\), \[d_g(x,y):=\arccos(\ip{x}{y})\] denotes the
spherical geodesic distance, and for \(B\subset\Sphere^{d-1}\),
\(d_g(x,B):=\inf_{b\in B}d_g(x,b)\). For \(\mu\in\Sphere^{d-1}\),
\(\mu^\perp:=\{x\in\R^d:\ip{x}{\mu}=0\}\) denotes the tangent hyperplane to the
sphere at \(\mu\), and for a linear subspace \(W\), \(\Proj_W\) is the
Euclidean orthogonal projector onto \(W\). The matrix \(I\) is the identity,
\(\one\) is the vector of ones, and
\(\Span(\cdot)\), \(\tr(\cdot)\), \(\rank(\cdot)\),
\(\operatorname{range}(\cdot)\), and \(\diag(\cdot)\) denote span, trace, rank,
range, and diagonal matrix, respectively. The operator norm is
\(\norm{\cdot}_{\mathrm{op}}\), \(\abs{B}\) is the cardinality of a finite set
\(B\), and \(a_+:=\max(a,0)\) is the positive part of \(a\in\R\). The indicator
of an event \(A\) is \(\mathbb I\{A\}\), \(\mathcal L(Z)\) is the law of a
random element \(Z\), \(\Prob\), \(\E\), \(\Var\), and \(\Cov\) denote
probability, expectation, variance, and covariance, respectively,
\(\rightsquigarrow\) denotes weak convergence, and stochastic order notation
such as \(o_p(\cdot)\) has its usual meaning. For a bounded real-valued
function \(f:\R^d\to\R\), let
\[
\norm{f}_\infty:=\sup_{z\in\R^d}\abs{f(z)},
\qquad
\Lip(f):=\sup_{z\ne z'}\frac{\abs{f(z)-f(z')}}{\norm{z-z'}_2},
\]
with the convention \(\Lip(f)=\infty\) if \(f\) is not Lipschitz.

\subsection{Angular limits and the anchor}
\label{sec:population-angular-anchor}

Let \(Y=(Y_1,\ldots,Y_d)\) have continuous marginal distribution functions
\(F_1,\ldots,F_d\), and put the data on a common tail scale by the standard Pareto transform
\[
X
:=
\left(
(1-F_1(Y_1))^{-1},\ldots,(1-F_d(Y_d))^{-1}
\right).
\]
Write \(R:=\norm{X}_2\). For standardized \(X\), the norm used for polar normalization only
chooses one representative on each positive ray; it preserves component ratios and does not
change the finite-dimensional tail measure \citep[Chapter~6]{de2006extreme,resnick2007heavy}.
We use the Euclidean representative because it places angular extremes on the ordinary unit
sphere, where geodesics, tangent spaces, great subspheres, and orthogonal projections have their
standard forms.

The population target is the limiting conditional law of the direction \(X/\norm{X}_2\) given
large \(R\). It records relative participation across variables in an extreme episode after
marginal tail scales and radial size have been removed.

\begin{assumption}[Euclidean-spherical angular limit]
\label{ass:angular-limit}
There exists a random vector \(G\) taking values in \(\Sphere^{d-1}_{+}\) such that
\[
\mathcal L\!\left(
X/\norm{X}_2
\,\middle|\, \norm{X}_2>r
\right)
\rightsquigarrow
\mathcal L(G),
\qquad r\to\infty.
\]
\end{assumption}

Assumption~\ref{ass:angular-limit} is the Euclidean-spherical form of standard angular
convergence in multivariate extremes. It is deliberately only directional, so it allows mass near
the diagonal, coordinate faces, or axes. It is implied by Euclidean-polar regular variation as in
Assumption~\ref{ass:erv} below, but is weaker because it does not require radial regular
variation or a homogeneous tail measure on the cone.

AGCA is built relative to an interior reference direction \(\mu\in\Sphere^{d-1}_{++}\).
The anchor fixes the origin of the score system. It may be specified from
subject-matter considerations, chosen as a canonical benchmark,
or estimated from the angular law. Our default choice is
\begin{equation}
\label{eq:canonical-anchor}
\mu_0:=d^{-1/2}\one.
\end{equation}
The direction \(\mu_0\) in \eqref{eq:canonical-anchor} is the spherical image of the diagonal
ray \(t\mapsto t\one\). This is the balanced
complete-dependence direction: after Pareto standardization, all components participate equally in
the angular profile. With \(\mu_0\), AGCA components are coordinates for asymmetric departures
from balanced extremal co-movement. Because the canonical anchor does not depend on the sample,
threshold, or fitted angular law, it can be used
across data sets and thresholds, and the resulting loadings retain the interpretation
of departures from balanced co-movement.
The canonical anchor also has a distribution-free minimax justification: among all interior
anchors, it uniquely maximizes the smallest cosine similarity with the positive sphere,
equivalently minimizing the worst-case geodesic distance and worst-case gnomonic coordinate size;
see Supplementary Proposition~\ref{prop:app-canonical-anchor}.

Data-adaptive anchors can improve empirical alignment, but they change the reference profile:
the resulting components describe variation around an empirical angular benchmark rather than a
fixed scientific one. This is analogous to the role of estimated base points in geodesic
dimension-reduction methods
\citep{fletcher2004principal,huckemann2006principal,jung2012analysis,dai2018principal}.
Two commonly used references are the spherical Fr\'echet anchor and the principal anchor,
\begin{equation}
\label{eq:data-adaptive-anchors}
\mu_{\mathrm F}
\in
\argmin_{\mu\in\Sphere^{d-1}_{+}}\E[d_g(G,\mu)^2],
\qquad
\mu_{\mathrm P}:=v_1(\E[GG^{\T}]),
\end{equation}
where \(v_1(M)\) denotes the unit leading eigenvector of a matrix \(M\), oriented to lie in
\(\Sphere^{d-1}_{++}\) when possible. We use the Fr\'echet reference as an AGCA anchor when a
chosen minimizer lies in \(\Sphere^{d-1}_{++}\); Supplementary
Section~\ref{app:alternative-routes} discusses these anchors and anchor-free alternatives.

\subsection{Anchored coordinates and model spaces}
\label{sec:population-coordinates-models}

For an interior anchor \(\mu\), all positive angular directions lie in the open hemisphere
\[
\mathcal H_\mu:=\{g\in\Sphere^{d-1}:\ip{\mu}{g}>0\},
\]
because \(\ip{\mu}{g}>0\) for every \(g\in\Sphere^{d-1}_{+}\). Thus coordinate-face and
near-axis extremes remain in a single anchored chart. For \(g\in\mathcal H_\mu\), write
\(a_\mu(g):=\ip{\mu}{g}\) for the anchor coordinate and
\(u_\mu(g):=(I-\mu\mu^{\T})g\in\mu^\perp\) for the bounded tangent departure. Since
\(a_\mu(g)=\cos d_g(g,\mu)\), the distance \(d_g(g,\mu)=\arccos a_\mu(g)\) measures how far
\(g\) lies from the anchor along the sphere, while \(u_\mu(g)\) records the departure from the
anchor axis inside the tangent hyperplane. The decomposition
\[
g=a_\mu(g)\mu+u_\mu(g),
\qquad
a_\mu(g)=(1-\norm{u_\mu(g)}_2^2)^{1/2},
\]
separates the anchored coefficient from the orthogonal departure.
We also use the gnomonic coordinate centered at \(\mu\),\footnote{Gnomonic coordinates project
points from the sphere's center onto the tangent affine hyperplane at \(\mu\). Great subspheres
through \(\mu\) become linear subspaces in this chart.}
\[
v_\mu(g):=\frac{g}{\ip{\mu}{g}}-\mu\in\mu^\perp,\qquad
\phi_\mu(v):=\frac{\mu+v}{(1+\norm{v}_2^2)^{1/2}},
\qquad v\in\mu^\perp.
\]
The coordinate \(v_\mu(g)\) is the displacement obtained by projecting the ray through \(g\) onto
the tangent affine hyperplane \(\mu+\mu^\perp\), and \(\phi_\mu\) maps such tangent displacements
back to the sphere. The bounded and gnomonic coordinates satisfy
\[
u_\mu(g)=\frac{v_\mu(g)}{(1+\norm{v_\mu(g)}_2^2)^{1/2}},
\qquad
v_\mu(g)=\frac{u_\mu(g)}{a_\mu(g)}.
\]
Thus \(v_\mu\) is the linear chart for anchored great subspheres, while \(u_\mu\) is the bounded
rescaling used for fitting. Near the edge of the hemisphere, \(v_\mu(g)\) can be large, whereas
\(u_\mu(g)\) always has norm at most one and keeps the same tangent direction.

For a linear subspace \(W\subset\mu^\perp\), define the anchored great-subsphere model
\[
M_\mu(W)
:=
\Span(\{\mu\}\cup W)\cap\mathcal H_\mu.
\]
Thus \(M_\mu(W)\) is the great subsphere through the anchor whose allowed directions away
from the anchor are the tangent directions in \(W\), restricted to the open hemisphere
\(\mathcal H_\mu\).
The rank of the approximation is \(\dim(W)\): rank zero is the anchor alone, and rank \(p\)
allows \(p\) tangent directions through the anchor. The target is therefore a low-dimensional
geodesic summary of angular variation around a fixed benchmark. Figure~\ref{fig:gca-geometry}
shows the model set, the gnomonic chart, and the bounded projection geometry.

\begin{figure}[t]
\centering
\begin{minipage}[t]{0.31\textwidth}
\centering
\appinclude[width=0.9\linewidth]{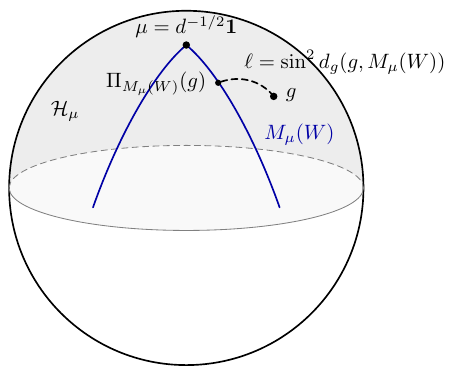}
\end{minipage}
\hfill
\begin{minipage}[t]{0.31\textwidth}
\centering
\appinclude[width=0.9\linewidth]{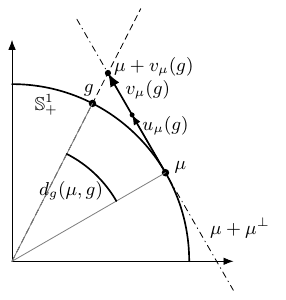}
\end{minipage}
\hfill
\begin{minipage}[t]{0.31\textwidth}
\centering
\appinclude[width=0.9\linewidth]{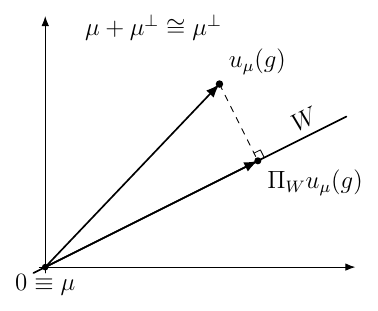}
\end{minipage}
\caption{Geometry of anchored geodesic component analysis. In the left panel, the open
hemisphere \(\mathcal H_\mu\) contains the positive Euclidean-spherical directions, and the
anchored great subsphere \(M_\mu(W)\) is the approximating set through the anchor. In the
central panel, the gnomonic chart centered at \(\mu\) sends the hemisphere \(\mathcal H_\mu\) to
the tangent affine hyperplane \(\mu+\mu^\perp\), identified with \(\mu^\perp\), and maps the
anchored great subsphere \(M_\mu(W)\) to the linear subspace \(W\). The bounded departure
\(u_\mu(g)\) is a radial rescaling of the gnomonic coordinate \(v_\mu(g)\). In the right panel,
under the loss \(\sin^2 d_g\), geodesic approximation of \(g\) by \(M_\mu(W)\) on the sphere is
the same as Euclidean orthogonal projection of \(u_\mu(g)\) onto \(W\) in \(\mu^\perp\).}
\label{fig:gca-geometry}
\end{figure}

\subsection{Population criterion and eigensolution}
\label{sec:population-projection-eigensolution}

With the anchor \(\mu\) fixed, the population fitting problem is to choose the tangent subspace
\(W\subset\mu^\perp\) that best reconstructs the limiting angular direction. Let \(G\) denote the
limiting angular direction in Assumption~\ref{ass:angular-limit}. For a \(p\)-dimensional subspace
\(W\subset\mu^\perp\), AGCA uses the anchored geodesic risk
\begin{equation}
\label{eq:population-anchored-risk}
\mathcal R_\mu(W)
:=
\E[\sin^2 d_g(G,M_\mu(W))],\qquad d_g(g,M_\mu(W))
:=
\inf_{m\in M_\mu(W)}\arccos(\ip{g}{m}).
\end{equation}
Here,
\(d_g(g,M_\mu(W))\)
denotes the shortest spherical distance from \(g\in\Sphere^{d-1}_{+}\) to the anchored great-subsphere model.
The rank \(p\) is the dimension of the tangent subspace used to reconstruct the angular law.
Equivalently, it is the number of anchored geodesic component directions retained.

The sine-squared geodesic residual is a central modeling choice. In principal geodesic analysis
and principal nested spheres, reconstruction is usually measured by squared geodesic distance,
often around a Fr\'echet mean, fitted base point, or recursively fitted small sphere
\citep[e.g.,][]{fletcher2004principal,jung2012analysis,sommer2014optimization,
tabaghi2024principal}. That loss is the direct Riemannian analogue of
Euclidean least squares, but it does not generally yield a
quadratic projection problem; implementations typically require tangent-space approximations or
nonlinear optimization over geodesics, base points, or spherical submanifolds. For the anchored
positive-sphere problem we instead use \(\sin^2 d_g\): on the admissible hemisphere it is bounded,
monotone in geodesic residual distance, comparable to squared geodesic loss, and has the same
zero-risk sets; see Supplementary Proposition~\ref{prop:app-loss-comparability}. Its key advantage is exact
linearization: the spherical residual becomes an ordinary Euclidean residual of the bounded
departure \(u_\mu(g)\), which leads to the eigendecomposition below.

\begin{proposition}[Exact anchored projection identity]
\label{prop:projection-identity}
For every linear subspace \(W\subset\mu^\perp\) and every \(g\in\Sphere^{d-1}_{+}\),
\begin{equation}
\label{eq:projection-identity}
\sin^2 d_g(g,M_\mu(W))
=\norm{u_\mu(g)-\Proj_Wu_\mu(g)}_2^2.
\end{equation}
Moreover, the unique geodesic projection of \(g\) onto \(M_\mu(W)\) is
\begin{equation}
\label{eq:projection-reconstruction}
\Proj_{M_\mu(W)}(g)
=
\frac{a_\mu(g)\mu+\Proj_Wu_\mu(g)}
{(a_\mu(g)^2+\norm{\Proj_Wu_\mu(g)}_2^2)^{1/2}}.
\end{equation}
\end{proposition}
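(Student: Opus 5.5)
The plan is to reduce the spherical distance to a linear-algebra maximization over the unit vectors of the subspace \(V:=\Span(\{\mu\}\cup W)\). Since \(\arccos\) is decreasing, \(d_g(g,M_\mu(W))=\arccos\bigl(\sup_{m\in M_\mu(W)}\ip{g}{m}\bigr)\). Moreover, \(\sin^2\arccos t=1-t^2\) for \(t\in[-1,1]\). So it suffices to compute \(s:=\sup_{m\in M_\mu(W)}\ip{g}{m}\) and show that \(s\ge 0\).

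For a unit vector \(m\in V\), Cauchy--Schwarz gives \(\ip{g}{m}=\ip{\Proj_V g}{m}\le\norm{\Proj_V g}_2\). Equality holds exactly at \(m^\ast=\Proj_V g/\norm{\Proj_V g}_2\), provided \(\Proj_V g\neq 0\).

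Because \(W\subset\mu^\perp\), we have the orthogonal decomposition \(V=\Span(\mu)\oplus W\). Hence \(\Proj_V=\mu\mu^\T+\Proj_W\). Also \(\Proj_W g=\Proj_W u_\mu(g)\), since \(\Proj_W\mu=0\). This gives \(\Proj_V g=a_\mu(g)\mu+\Proj_W u_\mu(g)\), which is the numerator in \eqref{eq:projection-reconstruction}.

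The one point needing care is that the unconstrained maximizer \(m^\ast\) over the great subsphere \(V\cap\Sphere^{d-1}\) actually lies in the open hemisphere \(\mathcal H_\mu\), so that the restriction defining \(M_\mu(W)\) is harmless. Here I use \(g\in\Sphere^{d-1}_+\) and \(\mu\in\Sphere^{d-1}_{++}\), which give \(a_\mu(g)=\ip{\mu}{g}>0\). Then \(\Proj_V g\ne 0\), and \(\ip{\mu}{m^\ast}=a_\mu(g)/\norm{\Proj_V g}_2>0\), so \(m^\ast\in M_\mu(W)\). Thus the supremum over \(M_\mu(W)\) equals the maximum over the whole great subsphere and is attained there.

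Uniqueness follows from the equality case in Cauchy--Schwarz: any other maximizer would be a positive multiple of \(\Proj_V g\), hence equal to \(m^\ast\). Since \(\arccos\) is strictly decreasing, minimizing geodesic distance is the same as maximizing \(\ip{g}{m}\). This gives \eqref{eq:projection-reconstruction}, with denominator \(\norm{\Proj_V g}_2=(a_\mu(g)^2+\norm{\Proj_W u_\mu(g)}_2^2)^{1/2}\).

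For the identity \eqref{eq:projection-identity}, we have \(s=\norm{\Proj_V g}_2\ge 0\), so
\[
\sin^2 d_g(g,M_\mu(W))=1-s^2=\norm{g}_2^2-\norm{\Proj_V g}_2^2 .
\]
Using the decomposition \(\norm{g}_2^2=a_\mu(g)^2+\norm{u_\mu(g)}_2^2\) from Section~\ref{sec:population-coordinates-models}, this becomes
\[
\norm{u_\mu(g)}_2^2-\norm{\Proj_W u_\mu(g)}_2^2=\norm{u_\mu(g)-\Proj_W u_\mu(g)}_2^2,
\]
where the last equality is Pythagoras in \(\mu^\perp\).

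I expect no real obstacle. The only delicate step is the hemisphere check: it is what makes \(\arccos\) of the supremum agree with the infimum over the open set \(M_\mu(W)\), and it is what keeps the sign of \(s\) nonnegative, so that \(1-s^2\) is the correct value of \(\sin^2 d_g\).
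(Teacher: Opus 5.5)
Your proof is correct, and it takes a different route from the paper. The paper works in the gnomonic chart. It writes \(g=\phi_\mu(v)\) with \(v=v_\mu(g)\) and invokes \(M_\mu(W)=\phi_\mu(W)\) from Proposition~\ref{prop:app-coordinate-formulas}. It then maximizes \(\ip{g}{\phi_\mu(w)}=(1+\ip{v}{w})/\{(1+\norm{v}_2^2)^{1/2}(1+\norm{w}_2^2)^{1/2}\}\) over \(w\in W\) by Cauchy--Schwarz in \(\R\times W\). This gives the maximizer \(w=\Proj_Wv\) and \(\sin^2 d_g=\norm{v-\Proj_Wv}_2^2/(1+\norm{v}_2^2)\), which is then rescaled to \(u_\mu\)-coordinates. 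Every \(\phi_\mu(w)\) has inner product \((1+\norm{w}_2^2)^{-1/2}>0\) with \(\mu\), so the hemisphere restriction is absorbed into the parametrization and needs no separate check.

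You instead stay in ambient coordinates with \(V=\Span(\{\mu\}\cup W)\). You identify the optimum as the normalized orthogonal projection \(\Proj_Vg/\norm{\Proj_Vg}_2\) and verify by hand that it lies in \(\mathcal H_\mu\). Your route is more elementary. It also shows that the anchored problem is the anchor-free identity \(\sin^2 d_g(g,\Sphere^{d-1}\cap L)=\norm{g-\Proj_Lg}_2^2\) of Proposition~\ref{prop:app-anchor-free-eigensolution}, restricted to subspaces \(L\ni\mu\), with the half-sphere constraint inactive. The paper's route stays in the chart where anchored great subspheres are linear. It produces the gnomonic residual directly, which matches the \(\arctan\) distance formula in Proposition~\ref{prop:app-log-gnomonic-formulas}.

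One small correction to your closing remark: \(\sin^2(\arccos t)=1-t^2\) for every \(t\in[-1,1]\), so the sign of \(s\) plays no role in that step. The hemisphere check matters for the other reason you give. It ensures that the supremum over the open half-subsphere \(M_\mu(W)\) equals \(\norm{\Proj_Vg}_2\) and is attained.
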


Proposition~\ref{prop:projection-identity} uses standard spherical geometry to provide the exact
linearization step for our fixed-anchor extremal-angular reconstruction problem, as illustrated
in the right panel of Figure~\ref{fig:gca-geometry}. For a fixed anchor, the bounded geodesic
residual is exactly the Euclidean residual obtained by projecting \(u_\mu(g)\) onto \(W\). The
formula for \(\Proj_{M_\mu(W)}(g)\) then normalizes the retained anchored departure back to the
sphere. The identity is conditional on the anchor being held fixed: it applies after choosing
either the canonical anchor or a data-adaptive anchor, but it is not a joint optimization over the
anchor and the component space.

Applying the identity to the angular limit \(G\) from Assumption~\ref{ass:angular-limit} turns
AGCA into a second-moment problem for bounded anchored departures. Define
\begin{equation}
\label{eq:population-anchored-second-moment}
U_\mu:=u_\mu(G),
\qquad
\Sigma_\mu:=\E[U_\mu U_\mu^{\T}],
\end{equation}
where \(\Sigma_\mu\) acts on the tangent hyperplane \(\mu^\perp\). The vector \(U_\mu\) records
how the angular direction departs from the fixed anchor after radial magnitude has been removed.
Because \(U_\mu\) is bounded, \(\Sigma_\mu\) is finite even when the angular law has mass near
coordinate faces or axes, as in asymptotically independent regimes. Thus the origin of
\(\mu^\perp\) is the anchor, and AGCA approximates benchmark-relative departures by linear
subspaces through that origin. This boundary behavior is a distinguishing feature relative to
many projection, graphical, and hyperplane summaries of extremes:
Supplementary Section~\ref{app:axis-supported-agca} shows that face- and axis-supported angular
laws enter the same finite second-moment problem, with axis-dominant laws producing interpretable
loading contrasts and symmetric axis laws producing tied eigenvalues.

\begin{theorem}[Population eigensolution]
\label{thm:population-eigensolution}
For every \(p=0,\ldots,d-1\) and every \(p\)-dimensional \(W\subset\mu^\perp\),
the anchored geodesic risk satisfies
\(\mathcal R_\mu(W)=\tr((I-\Proj_W)\Sigma_\mu)\).
Write the spectral decomposition of \(\Sigma_\mu\) on \(\mu^\perp\),
with \(b_1,\ldots,b_{d-1}\) orthonormal in \(\mu^\perp\), as
\begin{equation}
\label{eq:population-spectral-decomposition}
\Sigma_\mu=\sum_{j=1}^{d-1}\lambda_j b_j b_j^{\T},
\qquad
\lambda_1\ge\cdots\ge\lambda_{d-1}\ge0.
\end{equation}
For \(p=0,\ldots,d-1\), define the population component space and spectral projector, and
for \(1\le p\le d-2\), define the associated eigengap, by
\begin{equation}
\label{eq:population-projector-eigengap}
\begin{gathered}
W_{\mu,p}:=\Span(b_1,\ldots,b_p),\qquad W_{\mu,0}:=\{0\},\\
P_{\mu,p}:=\sum_{j=1}^p b_jb_j^{\T},\qquad P_{\mu,0}:=0,\qquad
\Delta_{\mu,p}:=\lambda_p-\lambda_{p+1},\quad 1\le p\le d-2.
\end{gathered}
\end{equation}
Then
\(W_{\mu,p}\) minimizes \(\mathcal R_\mu(W)\) among all
\(p\)-dimensional anchored subspaces, and
\(\mathcal R_\mu(W_{\mu,p})=\sum_{j=p+1}^{d-1}\lambda_j\).
\end{theorem}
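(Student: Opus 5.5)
The plan is to apply Proposition~\ref{prop:projection-identity} pointwise and then take expectations. Since \(G\) takes values in \(\Sphere^{d-1}_{+}\), the identity \eqref{eq:projection-identity} holds for every realization. This gives
\[
\sin^2 d_g(G,M_\mu(W))=\norm{(I-\Proj_W)U_\mu}_2^2 .
\]
Because \(I-\Proj_W\) is symmetric and idempotent, this equals \(U_\mu^{\T}(I-\Proj_W)U_\mu=\tr((I-\Proj_W)U_\mu U_\mu^{\T})\). The quantity \(\norm{U_\mu}_2\le 1\) is bounded, so the expectation exists, and linearity of trace and expectation yields \(\mathcal R_\mu(W)=\tr((I-\Proj_W)\Sigma_\mu)\). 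This settles the first claim.

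For the optimization step, I would note that \(U_\mu\in\mu^\perp\) almost surely. Hence \(\Sigma_\mu\mu=0\), and \(\Sigma_\mu\) is a symmetric positive semidefinite operator on \(\mu^\perp\) with the spectral decomposition \eqref{eq:population-spectral-decomposition}. Completing \(b_1,\ldots,b_{d-1}\) with \(\mu\) gives an orthonormal basis of \(\R^d\). Then
\[
\tr((I-\Proj_W)\Sigma_\mu)=\tr(\Sigma_\mu)-\tr(\Proj_W\Sigma_\mu),
\qquad \tr(\Sigma_\mu)=\sum_{j=1}^{d-1}\lambda_j .
\]
Minimizing the risk is therefore equivalent to maximizing \(\tr(\Proj_W\Sigma_\mu)\) over \(p\)-dimensional \(W\subset\mu^\perp\).

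Take an orthonormal basis \(w_1,\ldots,w_p\) of \(W\), so that \(\tr(\Proj_W\Sigma_\mu)=\sum_{i} w_i^{\T}\Sigma_\mu w_i\). By Ky Fan's maximum principle, or directly by Courant--Fischer, this sum is at most \(\sum_{j\le p}\lambda_j\).

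To keep the argument self-contained, I would write
\[
\tr(\Proj_W\Sigma_\mu)=\sum_j \lambda_j c_j,
\qquad c_j:=b_j^{\T}\Proj_W b_j .
\]
The weights satisfy \(0\le c_j\le 1\). They also satisfy \(\sum_j c_j=\tr(\Proj_W)=p\), because \(\Proj_W\mu=0\). A linear function with ordered coefficients \(\lambda_1\ge\cdots\ge\lambda_{d-1}\), maximized over such weights, is maximized by putting \(c_j=1\) for \(j\le p\). This gives the bound \(\sum_{j\le p}\lambda_j\).

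The bound is attained at \(W=W_{\mu,p}\), where \(\Proj_W=P_{\mu,p}\). There the risk equals \(\sum_{j>p}\lambda_j\). The case \(p=0\) is trivial, since then \(\Proj_W=0\) and the risk is \(\tr(\Sigma_\mu)\).

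The only mildly delicate point is the bookkeeping that \(\Sigma_\mu\) "acts on \(\mu^\perp\)". One has to check that the \(\mu\)-direction contributes nothing. This holds because \(W\subset\mu^\perp\) and \(\Sigma_\mu\mu=0\), so the ambient trace over \(\R^d\) equals the trace over \(\mu^\perp\). Uniqueness is not claimed, so no eigengap argument is needed here.
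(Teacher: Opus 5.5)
Your proposal is correct and follows the paper's own proof. You apply Proposition~\ref{prop:projection-identity} pointwise, expand the squared residual into \(\tr(\Sigma_\mu)-\tr(\Proj_W\Sigma_\mu)\), and maximize the second term by the Ky Fan principle. The only differences are that you prove the Ky Fan step directly through the weights \(c_j=b_j^{\T}\Proj_Wb_j\), and you make explicit that \(\Sigma_\mu\mu=0\), so the ambient trace equals the trace on \(\mu^\perp\); the paper simply cites the principle.
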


Theorem~\ref{thm:population-eigensolution} is the population PCA theorem for bounded anchored
departures. It gives an exact spectral solution for a geodesic reconstruction problem on
extremal angular laws: the leading eigenvectors of \(\Sigma_\mu\) are the optimal anchored
component directions, and the trailing eigensum is the residual geodesic risk.

\subsection{Rank, scores, and loading interpretation}
\label{sec:population-rank-scores-loadings}

The total anchored variation is
\begin{equation}
\label{eq:population-total-anchored-variation}
\tau_\mu
:=
\tr(\Sigma_\mu)
=
\E[\norm{U_\mu}_2^2]
=
\E[1-\ip{\mu}{G}^2].
\end{equation}
It is the rank-zero risk, namely the average sine-squared distance from \(G\) to the anchor.
The eigenvalues \(\lambda_j\) decompose this variation exactly: \(\lambda_j\) is the amount of
anchored geodesic variation captured by the \(j\)th component, and the trailing eigensum is the
residual risk after retaining \(p\) components. For \(p=0,\ldots,d-1\), the population
rank-\(p\) residual risk and, when \(\tau_\mu>0\), the proportion of anchored geodesic variation
explained are
\begin{equation}
\label{eq:population-rank-diagnostics}
\rho_{\mu,p}:=\sum_{j=p+1}^{d-1}\lambda_j,
\qquad
\mathrm{AVE}_{\mu,p}
:=(\tau_\mu)^{-1}\sum_{j=1}^p\lambda_j.
\end{equation}
Here \(P_{\mu,p}\) in \eqref{eq:population-projector-eigengap} is the rank-\(p\)
population spectral projector, and \(\Delta_{\mu,p}\) is the eigengap separating the retained
and discarded component spaces.
This gives the same operational clarity as ordinary PCA, but the interpretation is tied to
geodesic approximation of spherical extremal directions through the fixed anchor.
The next corollary identifies \(r_\mu\) as the exact anchored dimension: the
smallest number of anchored component directions needed for zero population reconstruction risk.

\begin{corollary}[Exact anchored rank]
\label{cor:anchored-rank}
Let \(r_\mu:=\rank(\Sigma_\mu)\). There exists a \(p\)-dimensional subspace
\(W\subset\mu^\perp\) with \(\mathcal R_\mu(W)=0\) if and only if \(p\ge r_\mu\). At the minimal
dimension \(p=r_\mu\), the exact component space is unique and equals
\(\operatorname{range}(\Sigma_\mu)\).
\end{corollary}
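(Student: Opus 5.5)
The plan is to deduce the corollary from the trace identity in Theorem~\ref{thm:population-eigensolution}, namely \(\mathcal R_\mu(W)=\tr((I-\Proj_W)\Sigma_\mu)\) for every subspace \(W\subset\mu^\perp\). Since \(\Sigma_\mu\) is positive semidefinite and \(I-\Proj_W\) is an orthogonal projector, we can write
\[
\tr((I-\Proj_W)\Sigma_\mu)=\tr\bigl((I-\Proj_W)\Sigma_\mu(I-\Proj_W)\bigr).
\]
This quantity is the trace of a positive semidefinite matrix. Hence \(\mathcal R_\mu(W)=0\) holds if and only if \((I-\Proj_W)\Sigma_\mu^{1/2}=0\). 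That condition is equivalent to \((I-\Proj_W)\Sigma_\mu=0\), which says \(\operatorname{range}(\Sigma_\mu)\subset W\). The whole corollary then rests on this single characterization: zero risk occurs exactly when \(W\) contains the range.

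With the characterization in hand, the two directions of the ``if and only if'' are short. First note that \(\operatorname{range}(\Sigma_\mu)\subset\mu^\perp\), because \(U_\mu\in\mu^\perp\) almost surely and therefore \(\Sigma_\mu\mu=0\). Its dimension is \(r_\mu\le d-1\).

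For the forward direction, suppose a \(p\)-dimensional \(W\) has zero risk. Then \(W\) contains a subspace of dimension \(r_\mu\), so \(p\ge r_\mu\).

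For the converse, take \(r_\mu\le p\le d-1\). Extend an orthonormal basis of \(\operatorname{range}(\Sigma_\mu)\) by \(p-r_\mu\) further orthonormal vectors inside \(\mu^\perp\); this is possible because \(\dim\mu^\perp=d-1\ge p\). The resulting \(p\)-dimensional \(W\) contains the range and so has zero risk.

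Alternatively, the converse follows directly from the theorem. The minimal risk equals \(\sum_{j>p}\lambda_j\), and this sum vanishes exactly when \(p\ge r_\mu\), since exactly \(r_\mu\) eigenvalues are positive.

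For uniqueness at \(p=r_\mu\), any zero-risk \(W\) of dimension \(r_\mu\) contains \(\operatorname{range}(\Sigma_\mu)\), which has the same dimension. Therefore \(W=\operatorname{range}(\Sigma_\mu)\).

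The only step requiring care is the claim that a vanishing trace forces \((I-\Proj_W)\Sigma_\mu=0\). This comes from the positive semidefiniteness of \(\Sigma_\mu\), a second-moment matrix, via the factorization through \(\Sigma_\mu^{1/2}\) above. An equivalent route is to note that \(\E\norm{(I-\Proj_W)U_\mu}_2^2=0\) gives \(U_\mu\in W\) almost surely, and the range of \(\Sigma_\mu\) is the smallest subspace containing \(U_\mu\) almost surely. Everything else is linear-algebra bookkeeping.
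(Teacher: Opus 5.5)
Your proof is correct and has the same structure as the paper's. Both show that zero risk is equivalent to \(\operatorname{range}(\Sigma_\mu)\subset W\), then count dimensions and use equality of dimensions for uniqueness. The paper reaches that characterization at the random-vector level: zero risk holds iff \(U_\mu\in W\) almost surely, by the projection identity, and \(U_\mu\in\operatorname{range}(\Sigma_\mu)\) almost surely, via \(\E[\ip{y}{U_\mu}^2]=\ip{y}{\Sigma_\mu y}\). That is exactly your ``equivalent route,'' while your main argument reaches it at the matrix level through \(\Sigma_\mu^{1/2}\). Your explicit check that \(\operatorname{range}(\Sigma_\mu)\subset\mu^\perp\) also fills in the existence step for \(r_\mu\le p\le d-1\), which the paper leaves implicit.
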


The scores and loadings have a direct interpretation as extremal contrasts. A loading
eigenvector \(b_\ell\in\mu^\perp\) is a tangent direction through the anchor, and the score of an
angular direction \(g_i\) is its coordinate along that direction,
\begin{equation}
\label{eq:population-agca-score}
\alpha_{i\ell}:=\ip{b_\ell}{u_\mu(g_i)}.
\end{equation}
Large positive or negative scores indicate strong movement away from the anchor along the
corresponding loading, while scores near zero indicate little contribution from that contrast.
The overall sign of a loading is conventional, but relative signs within a loading are meaningful
once an orientation is fixed.

Given a rank \(p\), the retained tangent departure is
\begin{equation}
\label{eq:population-rank-p-tangent-projection}
\Proj_{W_{\mu,p}}u_\mu(g_i)
=
\sum_{\ell=1}^p\alpha_{i\ell}b_\ell.
\end{equation}
Using \eqref{eq:projection-reconstruction}, normalization back to the sphere gives
\begin{equation}
\label{eq:population-rank-p-reconstruction}
g_i^{(p)}
=
\frac{a_\mu(g_i)\mu+\sum_{\ell=1}^p\alpha_{i\ell}b_\ell}
{\left(a_\mu(g_i)^2+\norm{\sum_{\ell=1}^p\alpha_{i\ell}b_\ell}_2^2\right)^{1/2}}.
\end{equation}
This is the AGCA reconstruction of the angular profile; radial severity is not reconstructed.

\begin{proposition}[Loadings as first-order log-ratio contrasts]
\label{prop:loading-log-ratio}
Let \(\eta\in\mu^\perp\), and let \(c_\eta(t):=\Expmap_\mu(t\eta)\) be the geodesic through
\(\mu\) with initial tangent direction \(\eta\), where \(\Expmap_\mu\) is the Riemannian
exponential map on the unit sphere at \(\mu\). Then, for every pair of coordinate indices \(j,m\),
\[
\left.\frac{\mathrm d}{\mathrm dt}\right|_{t=0}
\log\frac{c_{\eta,j}(t)}{c_{\eta,m}(t)}
=
\frac{\eta_j}{\mu_j}-\frac{\eta_m}{\mu_m}.
\]
\end{proposition}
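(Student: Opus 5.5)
The plan is to use the explicit form of the spherical exponential map and then differentiate at \(t=0\). For \(\eta\in\mu^\perp\), \(\eta\neq0\), write
\[
c_\eta(t)=\Expmap_\mu(t\eta)=\cos(t\norm{\eta}_2)\,\mu+\sin(t\norm{\eta}_2)\,\frac{\eta}{\norm{\eta}_2}.
\]
If \(\eta=0\), then \(c_\eta(t)=\mu\) for all \(t\), and both sides of the identity vanish. From the formula we read off
\[
c_\eta(0)=\mu,\qquad \dot c_\eta(0)=\eta .
\]
Only these two facts are needed, so the argument does not depend on second-order behaviour of the geodesic.

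Since \(\mu\in\Sphere^{d-1}_{++}\), every coordinate \(\mu_j\) is strictly positive. By continuity, \(c_{\eta,j}(t)>0\) for all coordinates and all \(t\) in a neighbourhood of \(0\). On that neighbourhood the log-ratio is well defined and smooth, and it splits as
\[
\log\frac{c_{\eta,j}(t)}{c_{\eta,m}(t)}=\log c_{\eta,j}(t)-\log c_{\eta,m}(t).
\]

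Differentiating each term by the chain rule gives \(\dot c_{\eta,j}(t)/c_{\eta,j}(t)\) for the first and the analogous quotient for the second. Evaluating at \(t=0\) yields
\[
\frac{\eta_j}{\mu_j}-\frac{\eta_m}{\mu_m},
\]
which is the claim.

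There is no genuine obstacle here. The only point that needs care is the positivity of the coordinates near \(t=0\), since that is what makes the logarithms defined. It follows from the interior-anchor assumption \(\mu\in\Sphere^{d-1}_{++}\).

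As a remark, with the canonical anchor \(\mu_0\) every \(\mu_j\) equals \(d^{-1/2}\). The right-hand side then reduces to \(\sqrt d(\eta_j-\eta_m)\), so loading differences directly give first-order changes in pairwise log-ratios.
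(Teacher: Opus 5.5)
Your proof is correct and is essentially the paper's own argument, step for step: the explicit geodesic $\cos(t\|\eta\|_2)\mu+\sin(t\|\eta\|_2)\eta/\|\eta\|_2$, the trivial case $\eta=0$, the facts $c_\eta(0)=\mu$ and $c_\eta'(0)=\eta$, and the chain rule applied to the log-ratio. Your remark that the coordinates of $c_\eta(t)$ stay positive on a neighbourhood of $t=0$ states slightly more explicitly than the paper why the logarithms are defined; the paper only notes $\mu_j,\mu_k>0$.
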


Proposition~\ref{prop:loading-log-ratio} gives the local ratio interpretation. Along a component
direction \(\eta\), the sign of \(\eta_j/\mu_j-\eta_m/\mu_m\) determines whether positive scores
increase or decrease the pairwise angular ratio \(g_j/g_m\) relative to the anchor ratio
\(\mu_j/\mu_m\). For the canonical anchor \(\mu_0\), the baseline ratio is one and the derivative
reduces to \(d^{1/2}(\eta_j-\eta_m)\). Thus loading contrasts identify which variables drive
benchmark-relative changes in angular participation. This is why the canonical anchor is the
default in the simulations and empirical application: data-driven anchors may improve fit, but
their loadings are less clearly interpretable because they describe departures from an empirical
angular benchmark rather than from balanced complete dependence.

\subsection{Low-rank reconstruction and tail simulation}
\label{sec:population-reconstruction-simulation}

The population eigensolution also defines a natural angular reconstruction. This is useful when
AGCA is used not only as a descriptive summary, but also as a low-dimensional device for simulating
extreme directions. The reconstruction should be understood as an angular approximation: AGCA
controls how far the limiting direction moves on the sphere, while a separate radial tail model is
needed to simulate the size of the extreme event.

Using the component space \(W_{\mu,p}\) in \eqref{eq:population-rank-p-tangent-projection}, we
define the rank-\(p\) reconstructed angular direction as the geodesic projection of the angular
limit \(G\) from Assumption~\ref{ass:angular-limit} onto the AGCA model:
\begin{equation}
\label{eq:population-reconstructed-angular}
\widetilde G_{\mu,p}
:=
\Proj_{M_\mu(W_{\mu,p})}(G)=
\frac{a_\mu(G)\mu+\Proj_{W_{\mu,p}}u_\mu(G)}
{\left(a_\mu(G)^2+\norm{\Proj_{W_{\mu,p}}u_\mu(G)}_2^2\right)^{1/2}}.
\end{equation}
Thus the reconstruction retains the projection of \(u_\mu(G)\) onto the leading AGCA component
space and renormalizes the result to a spherical direction. The next proposition shows that this
unconstrained spherical reconstruction has an exact AGCA error and a
corresponding Euclidean error bound.

\begin{proposition}[AGCA angular reconstruction]
\label{prop:population-agca-reconstruction}
For \(p=0,\ldots,d-1\),
\[\sin^2 d_g(G,\widetilde G_{\mu,p})
=
\norm{u_\mu(G)-\Proj_{W_{\mu,p}}u_\mu(G)}_2^2.\]
Consequently,
\[
\E[\sin^2 d_g(G,\widetilde G_{\mu,p})]
=
\rho_{\mu,p}, \qquad
\E\left[\norm{G-\widetilde G_{\mu,p}}_2^2\right]
=
\E\left[
\frac{2\sin^2 d_g(G,\widetilde G_{\mu,p})}
{1+\cos d_g(G,\widetilde G_{\mu,p})}
\right]
\le
2\rho_{\mu,p}.
\]
\end{proposition}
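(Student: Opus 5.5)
The plan is to reduce everything to Proposition~\ref{prop:projection-identity} and Theorem~\ref{thm:population-eigensolution}, and then add one elementary chord-length identity on the sphere.

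\emph{First claim.} By Assumption~\ref{ass:angular-limit}, \(G\) takes values in \(\Sphere^{d-1}_{+}\subset\mathcal H_\mu\). Hence Proposition~\ref{prop:projection-identity} applies pointwise with \(g=G\) and \(W=W_{\mu,p}\). By \eqref{eq:population-reconstructed-angular}, \(\widetilde G_{\mu,p}\) is exactly the unique geodesic projection \(\Proj_{M_\mu(W_{\mu,p})}(G)\). Therefore
\[
d_g(G,\widetilde G_{\mu,p})=d_g(G,M_\mu(W_{\mu,p})),
\]
and \eqref{eq:projection-identity} gives the stated identity.

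I would also check directly that the infimum is attained at this point. Write \(a=a_\mu(G)>0\) and \(q=\Proj_{W}u_\mu(G)\). Then
\[
\ip{G}{a\mu+q}=a^2+\norm{q}_2^2,
\]
so
\[
\cos d_g(G,\widetilde G_{\mu,p})=(a^2+\norm{q}_2^2)^{1/2}.
\]
Using \(a^2=1-\norm{u_\mu(G)}_2^2\) and Pythagoras in \(\mu^\perp\), this yields
\[
\sin^2=1-a^2-\norm{q}_2^2=\norm{u_\mu(G)-q}_2^2 .
\]
This recovers the identity independently of the attainment argument. The degenerate case \(p=0\) gives \(\widetilde G_{\mu,0}=\mu\), and the same formula holds with \(q=0\).

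\emph{Expected risk.} Taking expectations, which is legitimate because everything is bounded by one, gives
\[
\E[\sin^2 d_g(G,\widetilde G_{\mu,p})]=\mathcal R_\mu(W_{\mu,p}).
\]
By Theorem~\ref{thm:population-eigensolution}, this equals \(\sum_{j>p}\lambda_j=\rho_{\mu,p}\), as defined in \eqref{eq:population-rank-diagnostics}.

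\emph{Euclidean bound.} For unit vectors \(x,y\) with \(\theta=d_g(x,y)\),
\[
\norm{x-y}_2^2=2-2\cos\theta=\frac{2(1-\cos^2\theta)}{1+\cos\theta}=\frac{2\sin^2\theta}{1+\cos\theta}.
\]
This holds whenever \(\cos\theta>-1\). In our case the cosine is \((a^2+\norm{q}_2^2)^{1/2}>0\), so the denominator is at least \(1\). This gives the pointwise bound
\[
\norm{G-\widetilde G_{\mu,p}}_2^2\le 2\sin^2 d_g(G,\widetilde G_{\mu,p}),
\]
and taking expectations yields \(\le 2\rho_{\mu,p}\).

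\emph{Main obstacle.} The only delicate point is making sure the reconstruction is well defined and that the cosine is nonnegative. Both follow from \(a_\mu(G)>0\) on the positive sphere, since \(\mu\in\Sphere^{d-1}_{++}\) and \(G\in\Sphere^{d-1}_{+}\).
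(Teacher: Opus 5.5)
Your proposal is correct and follows essentially the same route as the paper: apply Proposition~\ref{prop:projection-identity} with \(g=G\) and \(W=W_{\mu,p}\), take expectations to obtain \(\rho_{\mu,p}\), and finish with the chord identity \(2(1-\cos\theta)=2\sin^2\theta/(1+\cos\theta)\le 2\sin^2\theta\). The paper gets \(\cos\theta>0\) from \(\theta\le d_g(G,\mu)<\pi/2\), you get it from the explicit value \(\cos\theta=(a_\mu(G)^2+\|q\|_2^2)^{1/2}\), and your direct inner-product check is a small bonus because it proves the first identity without relying on uniqueness of the geodesic projection.
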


To turn an angular reconstruction into a tail simulator, the radial size must be generated
separately and then attached to the reconstructed direction. Let \(S>0\) denote such a radial
draw. If the same radius is attached to \(G\) and to \(\widetilde G_{\mu,p}\), the relative
Euclidean error is purely angular and is controlled by the AGCA residual risk:
\[
\E\left[
\norm{SG-S\widetilde G_{\mu,p}}_2^2/S^2
\right]
=
\E\left[\norm{G-\widetilde G_{\mu,p}}_2^2\right]
\le
2\rho_{\mu,p}.
\]
Thus AGCA controls the relative vector error once the radial size has been supplied, but it does
not supply the radial tail law itself. For simulation of extremes one must combine the reconstructed
angular law with an extreme-value model for the radius. We use the following standard
Euclidean-polar regular variation condition for this purpose.

\begin{assumption}[Euclidean-polar regular variation]
\label{ass:erv}
There exists \(\alpha>0\) such that, for a Pareto variable \(P_\alpha\) independent of \(G\) and
satisfying \(\Prob[P_\alpha>x]=x^{-\alpha}\) for \(x\ge1\),
\[
\mathcal L\!\left(
R/r,\ X/R
\,\middle|\, R>r
\right)
\rightsquigarrow
\mathcal L(P_\alpha,G),
\qquad r\to\infty.
\]
\end{assumption}

Assumption~\ref{ass:erv} strengthens Assumption~\ref{ass:angular-limit} by adding regular radial
scaling. By the continuous mapping theorem,
\(
\mathcal L\!\left(X/r\,\middle|\, R>r\right)
\rightsquigarrow
\mathcal L(P_\alpha G)\),
so the normalized rank-\(p\) AGCA tail-simulation target is
\(P_\alpha\widetilde G_{\mu,p}\). Multiplying by the threshold gives
\(rP_\alpha\widetilde G_{\mu,p}\), the corresponding simulator on the original Pareto scale.

In applications, however, the goal is often not to approximate the whole conditional law of
extremes for its own sake, but to approximate tail functionals computed from that law. Examples
include portfolio exceedance probabilities, value-at-risk, expected shortfall when the relevant
tail moment exists, and stress scores such as the maximum loss over a collection of positions. The
next two results cover complementary classes of such summaries. The first gives a uniform
transport-type bound for bounded Lipschitz functionals of the simulated extreme vector. For
probability laws on \(\R^d\), write the bounded-Lipschitz distance as
\[
d_{\mathrm{BL}}(\mathcal L(A),\mathcal L(B))
:=
\sup_{\norm{f}_\infty\le1,\ \Lip(f)\le1}
\abs{\E[f(A)]-\E[f(B)]}.
\]

\begin{proposition}[Tail simulation under regular variation]
\label{prop:population-tail-simulation-bound}
Suppose Assumption~\ref{ass:erv} holds, and let \(P_\alpha\) be independent of \(G\) with
\(\Prob[P_\alpha>x]=x^{-\alpha}\), \(x\ge1\). Put
\begin{equation}
\label{eq:population-tail-threshold-error}
\varepsilon_r
:=
d_{\mathrm{BL}}\!\left(
\mathcal L\!\left(X/r\,\middle|\, R>r\right),
\mathcal L(P_\alpha G)
\right).
\end{equation}
Then \(\varepsilon_r\to0\). Moreover, for every \(0<\beta\le1\) with \(\beta<\alpha\),
\[
d_{\mathrm{BL}}\!\left(
\mathcal L\!\left(X/r\,\middle|\, R>r\right),
\mathcal L(P_\alpha\widetilde G_{\mu,p})
\right)
\le
\varepsilon_r
+
2^{1-\beta/2}\frac{\alpha}{\alpha-\beta}\,
\rho_{\mu,p}^{\beta/2}.
\]
\end{proposition}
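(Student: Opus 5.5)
The plan is to split the claim into the convergence \(\varepsilon_r\to0\) and a triangle inequality in \(d_{\mathrm{BL}}\), with the second term bounded by a direct coupling that uses the same Pareto draw for both vectors. For the first claim, Assumption~\ref{ass:erv} gives weak convergence of \((R/r,X/R)\) given \(R>r\) to \((P_\alpha,G)\). The map \((s,g)\mapsto sg\) is continuous, so the continuous mapping theorem yields \(\mathcal L(X/r\mid R>r)\rightsquigarrow\mathcal L(P_\alpha G)\). On \(\R^d\), the bounded-Lipschitz distance metrizes weak convergence (Dudley), hence \(\varepsilon_r\to0\).

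For the bound, I will use
\[
d_{\mathrm{BL}}\bigl(\mathcal L(X/r\mid R>r),\mathcal L(P_\alpha\widetilde G_{\mu,p})\bigr)\le\varepsilon_r+d_{\mathrm{BL}}\bigl(\mathcal L(P_\alpha G),\mathcal L(P_\alpha\widetilde G_{\mu,p})\bigr).
\]
To control the last term, take \(P_\alpha\) independent of \(G\). Then \(\widetilde G_{\mu,p}\) is a measurable function of \(G\), so \((P_\alpha G,P_\alpha\widetilde G_{\mu,p})\) is a valid coupling. For any \(f\) with \(\norm{f}_\infty\le1\) and \(\Lip(f)\le1\), we have \(\abs{f(a)-f(b)}\le\min(2,\norm{a-b}_2)\le 2^{1-\beta}\norm{a-b}_2^\beta\) for \(0<\beta\le1\). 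This is the interpolation \(\min(2,x)\le 2^{1-\beta}x^\beta\): if \(x\le2\), then \(x=x^\beta x^{1-\beta}\le x^\beta 2^{1-\beta}\), and if \(x>2\), then \(2\le 2^{1-\beta}x^\beta\). Therefore
\[
\abs{\E f(P_\alpha G)-\E f(P_\alpha\widetilde G_{\mu,p})}\le 2^{1-\beta}\,\E[P_\alpha^\beta]\,\E\bigl[\norm{G-\widetilde G_{\mu,p}}_2^\beta\bigr],
\]
where the factorization uses independence. Here \(\E[P_\alpha^\beta]=\alpha/(\alpha-\beta)\), which is finite precisely because \(\beta<\alpha\).

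It remains to bound \(\E\norm{G-\widetilde G_{\mu,p}}_2^\beta\). Jensen's inequality for the concave map \(x\mapsto x^{\beta/2}\) gives \(\E\norm{G-\widetilde G}_2^\beta\le(\E\norm{G-\widetilde G}_2^2)^{\beta/2}\). Proposition~\ref{prop:population-agca-reconstruction} bounds the inner expectation by \(2\rho_{\mu,p}\), so the whole term is at most \(2^{\beta/2}\rho_{\mu,p}^{\beta/2}\). Multiplying the constants gives \(2^{1-\beta}\cdot2^{\beta/2}=2^{1-\beta/2}\), which matches the statement exactly.

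The only genuinely delicate step is the heavy tail of \(P_\alpha\). A naive Lipschitz bound would require \(\E P_\alpha<\infty\), that is \(\alpha>1\). The device of capping by \(\min(2,\cdot)\), using boundedness of \(f\), and then interpolating to a \(\beta\)-power is what allows any \(\alpha>0\). I will check carefully that independence of \(P_\alpha\) and \(G\) is legitimately used for the factorization.
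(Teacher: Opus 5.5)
Your proposal is correct and follows the paper's proof essentially step for step. Both arguments use the continuous mapping theorem plus metrization of weak convergence by \(d_{\mathrm{BL}}\), the triangle inequality, the same-\(P_\alpha\) coupling with \(\min\{2,z\}\le 2^{1-\beta}z^\beta\), independence to factor out \(\E[P_\alpha^\beta]=\alpha/(\alpha-\beta)\), and a moment inequality. The only cosmetic difference is the last step: you apply Jensen to the integrated bound \(\E\norm{G-\widetilde G_{\mu,p}}_2^2\le 2\rho_{\mu,p}\). The paper instead uses the pointwise bound \(\norm{G-\widetilde G_{\mu,p}}_2^2\le 2\norm{u_\mu(G)-\Proj_{W_{\mu,p}}u_\mu(G)}_2^2\) and then Lyapunov on the residual norm, which gives the same constant \(2^{1-\beta/2}\).
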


The proposition gives a uniform bound over bounded Lipschitz test functionals. This is the sense
in which the low-rank AGCA simulator is controlled by the sum of two errors: the threshold error
from replacing finite exceedances by the regular-variation limit, and the angular approximation
error \(\rho_{\mu,p}\). When \(\alpha>1\), one may take \(\beta=1\) and obtain the familiar
\(\sqrt{\rho_{\mu,p}}\) rate for the angular term. With exact standard Pareto marginal scaling,
the radial index is \(\alpha=1\), so the statement uses any \(\beta<1\).

The reconstruction \(\widetilde G_{\mu,p}\) is an unconstrained spherical projection. Hence
\(M_\mu(W_{\mu,p})\subset\mathcal H_\mu\) need not lie in \(\Sphere^{d-1}_{+}\), and simulated
vectors \(P_\alpha\widetilde G_{\mu,p}\) need not remain in the positive cone. For positive-cone
simulation, one should verify positivity on the relevant support, impose a constrained angular
reconstruction, or model residual variation around the unconstrained AGCA fit.
Supplementary Section~\ref{app:positive-post-projection} develops a positive post-projection
that enforces \(\Sphere^{d-1}_{+}\)-valued simulated directions while adding an explicit
correction term to the simulation bounds.

Many risk summaries are not bounded. For positively homogeneous scores, the Pareto radial excess
can instead be separated from the angular profile. Let \(h:\R^d\to[0,\infty)\) satisfy
\(h(cz)=c h(z)\) for \(c\ge0\). Suppose that, on \(\Sphere^{d-1}\), for constants
\(0<H_h<\infty\) and \(L_h<\infty\),
\[
0\le h(s)\le H_h,
\qquad
\abs{h(s)-h(t)}\le L_h\norm{s-t}_2.
\]
Set \(C_h:=\E[h(G)^\alpha]\), \(\widetilde C_{h,p}:=\E[h(\widetilde G_{\mu,p})^\alpha]\),
\(\gamma:=\min(\alpha,1)\), and define
\[
\eta_{r,h}
:=
\sup_{x\ge H_h}
\abs{
\frac{\Prob[h(X)>rx]}{\Prob[R>r]}
-
x^{-\alpha}C_h
},\qquad
K_{\alpha,h}:=
\begin{cases}
L_h^\alpha, & 0<\alpha\le1,\\
\alpha H_h^{\alpha-1}L_h, & \alpha>1.
\end{cases}
\]
The next proposition is the homogeneous-score analogue of the bounded-Lipschitz bound: the
finite-threshold error is inherited from Assumption~\ref{ass:erv}, and the additional error comes
from the AGCA angular approximation.

\begin{proposition}[Homogeneous tail scores]
\label{prop:population-homogeneous-tail-scores}
Suppose Assumption~\ref{ass:erv} holds. Then \(\eta_{r,h}\to0\) as \(r\to\infty\). Moreover, for
every \(x\ge H_h\),
\[
\Prob[h(P_\alpha G)>x]=x^{-\alpha}C_h,
\qquad
\Prob[h(P_\alpha\widetilde G_{\mu,p})>x]=x^{-\alpha}\widetilde C_{h,p}.
\]
The angular constants satisfy
\(\abs{C_h-\widetilde C_{h,p}}
\le
K_{\alpha,h}(2\rho_{\mu,p})^{\gamma/2}\).
Consequently,
\[
\sup_{x\ge H_h}
\abs{
\frac{\Prob[h(X)>rx]}{\Prob[R>r]}
-
\Prob[h(P_\alpha\widetilde G_{\mu,p})>x]
}
\le
\eta_{r,h}
+
H_h^{-\alpha}K_{\alpha,h}(2\rho_{\mu,p})^{\gamma/2}.
\]
\end{proposition}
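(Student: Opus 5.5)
The proof has four parts, taken in order: the exact tail identities, the bound on the angular constants, the convergence \(\eta_{r,h}\to0\), and the final triangle inequality.

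\textbf{Exact tail identities.} Fix \(x\ge H_h\) and let \(s\) denote either \(G\) or \(\widetilde G_{\mu,p}\); both are unit vectors. By homogeneity, \(h(P_\alpha s)=P_\alpha h(s)\). Since \(h(s)\le H_h\le x\) and \(P_\alpha\ge1\), the event \(\{P_\alpha h(s)>x\}\) forces \(h(s)>0\), and then \(x/h(s)\ge1\). Conditioning on \(s\) and using independence, the Pareto tail gives
\[
\Prob[P_\alpha h(s)>x\mid s]=(x/h(s))^{-\alpha}\mathbb I\{h(s)>0\}=x^{-\alpha}h(s)^\alpha .
\]
Taking expectations yields both identities, with constants \(C_h\) and \(\widetilde C_{h,p}\).

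\textbf{Angular constants.} Write \(a=h(G)\) and \(b=h(\widetilde G_{\mu,p})\), both in \([0,H_h]\), with \(\abs{a-b}\le L_h\norm{G-\widetilde G_{\mu,p}}_2\).
\begin{itemize}
\item For \(0<\alpha\le1\), subadditivity of \(t\mapsto t^\alpha\) gives \(\abs{a^\alpha-b^\alpha}\le\abs{a-b}^\alpha\le L_h^\alpha\norm{G-\widetilde G_{\mu,p}}_2^\alpha\). Jensen with the concave map \(t\mapsto t^{\alpha/2}\) gives \(\E\norm{\cdot}_2^\alpha\le(\E\norm{\cdot}_2^2)^{\alpha/2}\le(2\rho_{\mu,p})^{\alpha/2}\), using Proposition~\ref{prop:population-agca-reconstruction}.
\item For \(\alpha>1\), the mean value theorem gives \(\abs{a^\alpha-b^\alpha}\le\alpha H_h^{\alpha-1}\abs{a-b}\). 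Cauchy--Schwarz then gives \(\E\norm{\cdot}_2\le(2\rho_{\mu,p})^{1/2}\), again by Proposition~\ref{prop:population-agca-reconstruction}.
\end{itemize}
These two cases give \(K_{\alpha,h}(2\rho_{\mu,p})^{\gamma/2}\) with \(\gamma=\min(\alpha,1)\).

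\textbf{Convergence \(\eta_{r,h}\to0\).} This is the main obstacle, because it needs uniformity over an unbounded range of \(x\).
\begin{enumerate}
\item Since \(h\) is continuous on the sphere and homogeneous, \(h(X)=Rh(X/R)\le RH_h\). For \(x\ge H_h\), \(\{h(X)>rx\}\subset\{R>r\}\), so
\[
\frac{\Prob[h(X)>rx]}{\Prob[R>r]}=\Prob[(R/r)h(X/R)>x\mid R>r].
\]
\item By Assumption~\ref{ass:erv} and continuous mapping, \((R/r)h(X/R)\) given \(R>r\) converges weakly to \(P_\alpha h(G)\).
\item The limit distribution function is continuous on \([H_h,\infty)\), since there it equals \(1-x^{-\alpha}C_h\). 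At the left endpoint \(x=H_h\) I would use right-continuity and the fact that the limit has no atom above \(H_h\).
\item Pointwise convergence of distribution functions to a continuous limit is uniform (Pólya's theorem). Here both tails vanish at \(\infty\) and the interval is closed on the left, so the convergence is uniform on \([H_h,\infty)\).
\end{enumerate}
The delicate point is a possible atom at \(x=H_h\) itself. I would handle it by noting that the limit has no atom at \(H_h\) except possibly where \(P_\alpha=1\) and \(h(G)=H_h\), which has probability \(0\) because \(P_\alpha\) is continuous and independent of \(G\).

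\textbf{Final bound.} For \(x\ge H_h\), add and subtract \(x^{-\alpha}C_h\) and use \(x^{-\alpha}\le H_h^{-\alpha}\). The triangle inequality, together with the previous three parts, gives the stated bound \(\eta_{r,h}+H_h^{-\alpha}K_{\alpha,h}(2\rho_{\mu,p})^{\gamma/2}\).
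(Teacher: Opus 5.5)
Your proposal is correct and follows essentially the paper's own proof. It uses the inclusion \(\{h(X)>rx\}\subset\{R>r\}\) with continuous mapping, conditions on the angle for the exact Pareto tail identities, and splits the angular-constant bound into Hölder and Lipschitz cases via Proposition~\ref{prop:population-agca-reconstruction}, closing with the triangle inequality; your Pólya-type argument on \([H_h,\infty)\) is just a packaged version of the paper's ``uniform on compacts plus tightness'' step. One small cleanup: the event \(\{P_\alpha h(G)=H_h\}\) is not confined to \(\{P_\alpha=1,\ h(G)=H_h\}\), but \(\Prob[P_\alpha=H_h/h(G),\ h(G)>0]=0\) by continuity of \(P_\alpha\) and independence, so the limit has no atom in \((0,\infty)\), which gives the continuity you need at \(x=H_h\).
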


Thus the two simulation bounds have the same structure: a first-order regular-variation error,
\(\varepsilon_r\) in \eqref{eq:population-tail-threshold-error} or \(\eta_{r,h}\), plus an AGCA
angular approximation error. Assumption~\ref{ass:erv} ensures that the threshold terms vanish, but
does not quantify their rates. Rates for either term would require second-order regular variation
or another quantitative tail approximation condition.

The preceding bounds do not imply that the AGCA simulator can approximate
\(\Prob[P_\alpha G\in A]\) uniformly over all Borel sets \(A\). Such a statement would require
total-variation control of the law of \(P_\alpha G\) by the law of
\(P_\alpha\widetilde G_{\mu,p}\). Low-rank reconstruction changes the support of the angular law:
\(P_\alpha\widetilde G_{\mu,p}\) is supported on the cone generated by \(M_\mu(W_{\mu,p})\),
whereas \(P_\alpha G\) need not be.

\begin{proposition}[Total-variation obstruction]
\label{prop:population-tv-obstruction}
For \(P_\alpha\) as above, let
\[
\mathcal C_{\mu,p}
:=
\{z\in\R^d:\norm{z}_2\ge1,\ z/\norm{z}_2\in M_\mu(W_{\mu,p})\}.
\]
Then
\(\Prob[P_\alpha\widetilde G_{\mu,p}\in\mathcal C_{\mu,p}]=1\) and
\(\Prob[P_\alpha G\in\mathcal C_{\mu,p}]
=
\Prob[G\in M_\mu(W_{\mu,p})]\).
Consequently,
\[
\sup_A
\abs{\Prob[P_\alpha G\in A]-\Prob[P_\alpha\widetilde G_{\mu,p}\in A]}
\ge
1-\Prob[G\in M_\mu(W_{\mu,p})],
\]
where the supremum is over Borel \(A\subset\R^d\).
\end{proposition}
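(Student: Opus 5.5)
The plan is to check the two support statements directly from the definitions, and then take a single set \(A=\mathcal C_{\mu,p}\) in the total-variation supremum.

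\emph{First claim.} By \eqref{eq:population-reconstructed-angular}, \(\widetilde G_{\mu,p}=\Proj_{M_\mu(W_{\mu,p})}(G)\). By Proposition~\ref{prop:projection-identity}, this is a unit vector of the form \(c\,(a_\mu(G)\mu+\Proj_{W_{\mu,p}}u_\mu(G))\) with \(c>0\). It lies in \(\Span(\{\mu\}\cup W_{\mu,p})\). Its inner product with \(\mu\) is \(c\,a_\mu(G)\), and this is positive because \(G\in\Sphere^{d-1}_{+}\) and \(\mu\) is interior, so \(\ip{\mu}{G}>0\). Hence \(\widetilde G_{\mu,p}\in M_\mu(W_{\mu,p})\) almost surely. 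Since \(P_\alpha\ge1\) and \(\norm{\widetilde G_{\mu,p}}_2=1\), the vector \(z=P_\alpha\widetilde G_{\mu,p}\) has \(\norm{z}_2=P_\alpha\ge1\) and \(z/\norm{z}_2=\widetilde G_{\mu,p}\in M_\mu(W_{\mu,p})\). So \(z\in\mathcal C_{\mu,p}\) with probability one.

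\emph{Second claim.} For \(z=P_\alpha G\) we likewise have \(\norm{z}_2=P_\alpha\ge1\) always, and \(z/\norm{z}_2=G\). Therefore the event \(\{P_\alpha G\in\mathcal C_{\mu,p}\}\) coincides with \(\{G\in M_\mu(W_{\mu,p})\}\), and the two probabilities are equal. Independence of \(P_\alpha\) and \(G\) is not even needed here, only \(P_\alpha\ge1\).

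\emph{Lower bound.} Take \(A=\mathcal C_{\mu,p}\) in the supremum. The difference of probabilities is \(1-\Prob[G\in M_\mu(W_{\mu,p})]\), which is nonnegative, so the supremum is at least this quantity.

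The only point needing care is measurability: \(\mathcal C_{\mu,p}\) must be Borel so that it is an admissible set \(A\). I would argue that \(M_\mu(W_{\mu,p})\) is the intersection of a closed linear subspace with an open half-space, hence Borel in \(\Sphere^{d-1}\). The map \(z\mapsto z/\norm{z}_2\) is continuous on \(\{\norm{z}_2\ge1\}\), so \(\mathcal C_{\mu,p}\) is the preimage of a Borel set under a continuous map intersected with a closed set, and is therefore Borel. This is routine; there is no substantive obstacle.
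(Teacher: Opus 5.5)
Your proposal is correct and follows the paper's proof: it shows \(\widetilde G_{\mu,p}\in M_\mu(W_{\mu,p})\) and \(\norm{P_\alpha\widetilde G_{\mu,p}}_2=P_\alpha\ge1\), uses the event identity \(\{P_\alpha G\in\mathcal C_{\mu,p}\}=\{G\in M_\mu(W_{\mu,p})\}\), and takes \(A=\mathcal C_{\mu,p}\). Your explicit positivity check \(\ip{\mu}{\widetilde G_{\mu,p}}>0\) and your Borel-measurability remark are details the paper leaves implicit; for membership, the paper simply appeals to \(\widetilde G_{\mu,p}\) being the projection onto \(M_\mu(W_{\mu,p})\).
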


If the angular law is full-dimensional and \(p<d-1\), the probability
\(\Prob[G\in M_\mu(W_{\mu,p})]\) is typically zero, so the total-variation discrepancy is maximal
even when \(\rho_{\mu,p}\) is small. This is not a defect of the projection formula; it is the
usual support mismatch created by any exact low-dimensional representation. Total-variation
closeness is available only when the angular law is itself supported on the fitted model, or after
adding an explicit residual/noise model around \(M_\mu(W_{\mu,p})\). This low-dimensional support
mismatch is separate from the positivity issue discussed above.

\subsection{Portfolio functionals for tail simulation}
\label{sec:population-portfolio-var}

We make the preceding bounds operational through portfolio scores. Let \(z\in\R^d\) denote a vector
of standardized portfolio losses and let \(w\in\R^d\) collect portfolio positions. The portfolio
loss score is \(\pi_w(z):=w^{\T}z\), and we impose the budget constraint
\(\one^{\T}w=1\). In applications below, the leverage constraint \(\norm{w}_1\le L\),
for a positive constant \(L\), gives uniform control over a portfolio class. We highlight two
summaries that are useful for portfolio
comparison and risk control. The first is a bounded score for comparing candidate portfolios under
simulated tail scenarios. For \(t\ge0\), define the capped portfolio excess
\begin{equation}
\label{eq:population-capped-portfolio-excess-score}
f_{w,t,L}(z)
:=
\min\left\{1,L^{-1}(\pi_w(z)-t)_+\right\}.
\end{equation}
It records excess over a portfolio stress level, but caps the contribution of very large radial
draws. Since \(\norm{w}_2\le\norm{w}_1\le L\), it satisfies
\(\norm{f_{w,t,L}}_\infty\le1\) and \(\Lip(f_{w,t,L})\le1\), and is therefore covered directly by
Proposition~\ref{prop:population-tail-simulation-bound}.

\begin{corollary}[Capped portfolio excess]
\label{cor:population-capped-portfolio-excess}
Suppose Assumption~\ref{ass:erv} holds. Let \(w\in\R^d\), \(\one^{\T}w=1\),
\(\norm{w}_1\le L\) for some \(L\ge1\), \(t\ge0\), and let \(\varepsilon_r\) be as in
\eqref{eq:population-tail-threshold-error}. Then, for every \(0<\beta\le1\) with \(\beta<\alpha\),
\[
\abs{
\E\!\left[f_{w,t,L}(X/r)\mid R>r\right]
-
\E[f_{w,t,L}(P_\alpha\widetilde G_{\mu,p})]
}
\le
\varepsilon_r
+
2^{1-\beta/2}\frac{\alpha}{\alpha-\beta}\rho_{\mu,p}^{\beta/2}.
\]
\end{corollary}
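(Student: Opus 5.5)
The corollary follows directly from Proposition~\ref{prop:population-tail-simulation-bound}, applied to the single test function \(f=f_{w,t,L}\). The first step is to check that \(f_{w,t,L}\) lies in the bounded-Lipschitz unit ball used in the definition of \(d_{\mathrm{BL}}\).

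\emph{Boundedness.} Since \(0\le\min\{1,\cdot\}\le1\), we have \(\norm{f_{w,t,L}}_\infty\le1\).

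\emph{Lipschitz constant.} The map \(z\mapsto\pi_w(z)-t\) has Lipschitz constant \(\norm{w}_2\) by Cauchy--Schwarz. The positive part \(a\mapsto a_+\) and the cap \(a\mapsto\min\{1,a\}\) are both \(1\)-Lipschitz on \(\R\). Composing these with the scaling \(L^{-1}\) gives
\[
\Lip(f_{w,t,L})\le L^{-1}\norm{w}_2\le L^{-1}\norm{w}_1\le1 .
\]
Neither the budget constraint \(\one^{\T}w=1\) nor the condition \(L\ge1\) is needed here. They only make the portfolio class nonempty, since \(\norm{w}_1\ge\abs{\one^{\T}w}=1\).

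With these two facts, \(f_{w,t,L}\) is admissible in the supremum defining \(d_{\mathrm{BL}}\). Hence
\[
\abs{\E[f_{w,t,L}(X/r)\mid R>r]-\E[f_{w,t,L}(P_\alpha\widetilde G_{\mu,p})]}
\le d_{\mathrm{BL}}\!\left(\mathcal L(X/r\mid R>r),\mathcal L(P_\alpha\widetilde G_{\mu,p})\right).
\]
Proposition~\ref{prop:population-tail-simulation-bound} bounds the right-hand side by \(\varepsilon_r+2^{1-\beta/2}\frac{\alpha}{\alpha-\beta}\rho_{\mu,p}^{\beta/2}\) for every \(0<\beta\le1\) with \(\beta<\alpha\), which is the claim.

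The only step that needs care is the measurability and well-definedness of the expectations. Both are immediate because \(f_{w,t,L}\) is bounded and continuous. The Lipschitz verification is the substantive but elementary part of the argument.
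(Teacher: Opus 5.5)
Your proposal is correct and follows essentially the same route as the paper: you verify \(\norm{f_{w,t,L}}_\infty\le1\) and \(\Lip(f_{w,t,L})\le\norm{w}_2/L\le\norm{w}_1/L\le1\), then apply Proposition~\ref{prop:population-tail-simulation-bound} to this single bounded-Lipschitz test function. Your side remark is also accurate: the budget constraint and \(L\ge1\) play no role in the bound itself.
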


The second summary is value-at-risk. For a scalar portfolio loss score \(Y\), the upper-tail
value-at-risk at tail probability \(u\),
\(\operatorname{VaR}_u(Y):=\inf\{q\in\R:\Prob[Y>q]\le u\}\), is the high quantile exceeded with
probability at most \(u\). In the Pareto limit, VaR for the score \(\pi_w\) is driven by the
unbounded homogeneous score \((w^{\T}z)_+\). Define the true and reconstructed angular tail constants
\begin{equation}
\label{eq:population-portfolio-tail-constants}
C_w:=\E[(w^{\T}G)_+^\alpha],
\qquad
\widetilde C_{w,p}:=\E[(w^{\T}\widetilde G_{\mu,p})_+^\alpha].
\end{equation}
For tail probability \(u\), define the normalized Pareto-limit VaR values for the true and AGCA-reconstructed portfolio
tails
\begin{equation}
\label{eq:population-pareto-var-values}
q_w(u):=(C_w/u)^{1/\alpha},
\qquad
\widetilde q_{w,p}(u):=(\widetilde C_{w,p}/u)^{1/\alpha}.
\end{equation}
Multiplying them by a high threshold \(r\) gives approximate VaR values on the original
Pareto scale.

\begin{corollary}[Portfolio tail constants and VaR]
\label{cor:population-portfolio-var}
Suppose Assumption~\ref{ass:erv} holds. Let \(\pi_w(z):=w^{\T}z\), where
\(w\in\R^d\), \(\one^{\T}w=1\), and \(\norm{w}_1\le L\) for some \(L\ge1\). Then, for every
\(x\ge L\),
\[
\Prob[\pi_w(P_\alpha G)>x]=x^{-\alpha}C_w,
\qquad
\Prob[\pi_w(P_\alpha\widetilde G_{\mu,p})>x]=x^{-\alpha}\widetilde C_{w,p}.
\]
For \(0<u<L^{-\alpha}\min(C_w,\widetilde C_{w,p})\),
the normalized VaR ratio satisfies
\(q_w(u)/\widetilde q_{w,p}(u)=(C_w/\widetilde C_{w,p})^{1/\alpha}\). If
\(\min(C_w,\widetilde C_{w,p})\ge c>0\), then
\[
\abs{\log q_w(u)-\log\widetilde q_{w,p}(u)}
\le
\frac{1}{\alpha c}(2\rho_{\mu,p})^{\min(\alpha,1)/2}
\begin{cases}
L^\alpha, & 0<\alpha\le1,\\
\alpha L^\alpha, & \alpha>1.
\end{cases}
\]
\end{corollary}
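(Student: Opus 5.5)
The plan is to reduce the corollary to Proposition~\ref{prop:population-homogeneous-tail-scores} with the homogeneous score \(h(z):=(w^{\T}z)_+\). First, \(h\) is nonnegative and positively homogeneous. Next, on \(\Sphere^{d-1}\) it satisfies \(h(s)\le\norm{w}_2\le\norm{w}_1\le L\). Finally, because \(y\mapsto y_+\) is \(1\)-Lipschitz, \(\abs{h(s)-h(t)}\le\norm{w}_2\norm{s-t}_2\le L\norm{s-t}_2\). So we may take \(H_h=L_h=L\). This gives \(C_h=C_w\), \(\widetilde C_{h,p}=\widetilde C_{w,p}\), and \(K_{\alpha,h}=L^\alpha\) for \(\alpha\le1\), \(K_{\alpha,h}=\alpha L^{\alpha-1}L=\alpha L^\alpha\) for \(\alpha>1\), which matches the case constant in the statement.

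For the tail identities, fix \(x\ge L>0\). Since \(x>0\), the event \(\{\pi_w(P_\alpha G)>x\}\) equals \(\{(w^{\T}P_\alpha G)_+>x\}=\{h(P_\alpha G)>x\}\). The proposition then yields \(\Prob[\pi_w(P_\alpha G)>x]=x^{-\alpha}C_w\), and likewise for \(\widetilde G_{\mu,p}\). This step can also be checked directly. Condition on \(G\) and use independence of \(P_\alpha\): \(\Prob[P_\alpha h(G)>x\mid G]=\min\{1,(x/h(G))^{-\alpha}\}\) when \(h(G)>0\). Because \(h(G)\le L\le x\), the minimum is \(x^{-\alpha}h(G)^\alpha\). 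Taking expectations gives the identity. The only point needing care is that \(\norm{\widetilde G_{\mu,p}}_2=1\), so the bound \(h\le L\) applies to the reconstruction as well.

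For VaR, the map \(x\mapsto x^{-\alpha}C_w\) is continuous and strictly decreasing on \([L,\infty)\) whenever \(C_w>0\). If \(u<L^{-\alpha}C_w\), the solution \(q_w(u)=(C_w/u)^{1/\alpha}\) of \(x^{-\alpha}C_w=u\) exceeds \(L\). Hence it is the infimum defining \(\operatorname{VaR}_u\) in the Pareto limit, and the same holds for \(\widetilde q_{w,p}(u)\). The ratio formula is then immediate, and
\[
\log q_w(u)-\log\widetilde q_{w,p}(u)=\alpha^{-1}\bigl(\log C_w-\log\widetilde C_{w,p}\bigr).
\]

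The last step combines a mean-value bound for the logarithm with the angular-constant bound from Proposition~\ref{prop:population-homogeneous-tail-scores}. For \(a,b\ge c\), \(\abs{\log a-\log b}\le c^{-1}\abs{a-b}\). Together with \(\abs{C_w-\widetilde C_{w,p}}\le K_{\alpha,h}(2\rho_{\mu,p})^{\gamma/2}\), this gives exactly the displayed bound. I expect no real obstacle here. The most delicate point is bookkeeping: checking that the Lipschitz constant uses \(\norm{w}_2\le L\), so that \(H_h=L_h=L\) and the constant \(K_{\alpha,h}\) reduces to the stated cases.
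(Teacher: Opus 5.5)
Your proposal is correct and follows the paper's proof. Like the paper, you apply Proposition~\ref{prop:population-homogeneous-tail-scores} to \(h(z)=(w^{\T}z)_+\) with \(H_h=L_h=L\), solve \(x^{-\alpha}C=u\) with the condition on \(u\) placing both quantiles in \([L,\infty)\), and finish with \(\abs{\log a-\log b}\le c^{-1}\abs{a-b}\) for \(a,b\ge c\); your direct conditioning check of the tail identity and the remark that \(\{\pi_w>x\}=\{(\pi_w)_+>x\}\) for \(x>0\) make explicit steps the paper leaves implicit.
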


Thus the capped excess gives a bounded simulation diagnostic, while VaR gives a scale-normalized
tail risk summary. A related important measure is expected shortfall,
\(\operatorname{ES}_u(Y):=\E[Y\mid Y>\operatorname{VaR}_u(Y)]\), the conditional mean beyond
the corresponding high quantile. For the Pareto limit considered here, 
this mean is finite only when \(\alpha>1\). In that case,
the limiting expected shortfall equals \(\alpha/(\alpha-1)\) times the corresponding VaR, so the
same relative angular error bound applies. For exact standard Pareto marginal scaling,
\(\alpha=1\), the limiting expected shortfall is infinite, whereas VaR remains finite.

%% file: 03_estimation.tex
\section{AGCA: estimation and inference}
\label{sec:estimation}

In data analysis we observe sample vectors \(Y_i=(Y_{i1},\ldots,Y_{id})\), standardize the
margins, select observations with large radial size, and apply the population eigensolution of
Section~\ref{sec:population-theory} to the selected angular directions. The tail sample size
\(k\) plays the usual peaks-over-threshold role: it balances extremeness against estimation
variance. In asymptotic statements we use an intermediate sequence \(k_n\), with
\(k_n\to\infty\) and \(k_n/n\to0\).
We use hats with the superscript \(\mathrm{orc}\) to denote empirical quantities computed
from the true Pareto-standardized margins, while hats without the oracle superscript denote
empirical quantities computed from rank-Pareto margins.
The section separates the deterministic eigensolution, consistency under top-\(k\) selection and
rank marginal standardization, oracle distributional inference, and the remaining first-order
effect of estimating the margins.

\subsection{Empirical construction and eigensolution}
\label{sec:estimation-empirical-eigensolution}

For a fixed rank \(p\), anchor \(\mu\), and tail sample size \(k\), we use two parallel sample
constructions. For the oracle theory, let \(X_1,\ldots,X_n\) be independent copies of the
Pareto-standardized vector, set \(R_i:=\norm{X_i}_2\), \(G_i:=X_i/R_i\), and
\(U_{\mu,i}:=u_\mu(G_i)\), and let \(\mathcal I_n^{(k)}\) be the indices of the \(k\) largest
values \(R_i\). With unknown margins, a standard rank-Pareto transform is
\begin{equation}
\label{eq:pseudo-pareto-transform}
\widehat X_{ij}
:=
\frac{n+1}{n+1-\operatorname{rank}(Y_{ij})},
\qquad i=1,\ldots,n,\ j=1,\ldots,d,
\end{equation}
with ranks computed within each component. This is the standard empirical EVT device for removing
marginal tail scales before estimating extremal dependence. Set
\(\widehat R_i:=\norm{\widehat X_i}_2\), \(\widehat G_i:=\widehat X_i/\widehat R_i\), and
\(\widehat U_{\mu,i}:=u_\mu(\widehat G_i)\), and let
\(\widehat{\mathcal I}_n^{(k)}\) be the indices of the \(k\) largest values \(\widehat R_i\).
In both constructions, radii select tail observations and directions are the angular profiles
summarized by AGCA. The corresponding anchored second-moment matrices are
\begin{equation}
\label{eq:sample-anchored-second-moments}
\widehat\Sigma_{\mu,n}^{(k),\mathrm{orc}}
:=
\frac1{k}\sum_{i\in\mathcal I_n^{(k)}}U_{\mu,i}U_{\mu,i}^{\T},
\qquad
\widehat\Sigma_{\mu,n}^{(k),\mathrm{emp}}
:=
\frac1{k}\sum_{i\in\widehat{\mathcal I}_n^{(k)}}
\widehat U_{\mu,i}\widehat U_{\mu,i}^{\T}.
\end{equation}
Their eigenvectors represent departures from the fixed benchmark \(\mu\). The rank-Pareto
eigensolution diagonalizes \(\widehat\Sigma_{\mu,n}^{(k),\mathrm{emp}}\) in the tangent hyperplane
\(\mu^\perp\). Explicitly, let
\(Q_\mu\in\R^{d\times(d-1)}\) have orthonormal columns spanning \(\mu^\perp\), so that
\(Q_\mu^{\T}Q_\mu=I_{d-1}\) and \(Q_\mu Q_\mu^{\T}=I-\mu\mu^{\T}\).
Diagonalize the ordinary \((d-1)\times(d-1)\) coordinate matrix
\[
Q_\mu^{\T}
\widehat\Sigma_{\mu,n}^{(k),\mathrm{emp}}
Q_\mu
=
H\widehat\Lambda H^{\T},
\qquad
\widehat\Lambda=\diag(\widehat\lambda_1,\ldots,\widehat\lambda_{d-1}),
\]
with \(\widehat\lambda_1\ge\cdots\ge\widehat\lambda_{d-1}\ge0\). If \(h_\ell\) is the
\(\ell\)th column of \(H\), the corresponding AGCA loading in the original coordinates is
\(\widehat b_\ell:=Q_\mu h_\ell\in\mu^\perp\). For later use, write
\(\widehat P_{\mu,p}:=\sum_{\ell=1}^p\widehat b_\ell\widehat b_\ell^{\T}\), with
\(\widehat P_{\mu,0}:=0\), for the empirical rank-\(p\) spectral projector. The empirical scores are the plug-in analogues
of \eqref{eq:population-agca-score},
\(\widehat\alpha_{i\ell}:=\ip{\widehat b_\ell}{\widehat U_{\mu,i}}\),
and the plug-in version of \eqref{eq:population-rank-p-reconstruction}, equivalently
\eqref{eq:projection-reconstruction}, gives the rank-\(p\) reconstructions. The choice of
orthonormal basis \(Q_\mu\) only changes the coordinates of the calculation; the resulting
subspaces in \(\mu^\perp\) are the same, up to the usual sign and eigenspace
nonuniqueness.

The next statement treats the selected directions as fixed, so the corresponding second moment
and risk are written without hats.

\begin{proposition}[Finite-sample eigensolution for selected directions]
\label{prop:empirical-eigensolution}
Let \(\mathcal I\subset\mathbb N\) be a nonempty finite index set and let
\(\{g_i:i\in\mathcal I\}\subset\Sphere^{d-1}_{+}\) be selected angular directions. Define
\[
\Sigma_{\mu,\mathcal I}
:=
\frac1{\abs{\mathcal I}}\sum_{i\in\mathcal I}
u_\mu(g_i)u_\mu(g_i)^{\T},\qquad \mathcal R_{\mu,\mathcal I}(W)
:=
\frac1{\abs{\mathcal I}}\sum_{i\in\mathcal I}
\sin^2 d_g(g_i,M_\mu(W)),\quad W\subset\mu^\perp.
\]
Then, for every linear subspace \(W\subset\mu^\perp\),
\(
\mathcal R_{\mu,\mathcal I}(W)
=
\tr((I-\Proj_W)\Sigma_{\mu,\mathcal I}).
\)
Consequently, if \(b_{\mathcal I,1},\ldots,b_{\mathcal I,d-1}\) are orthonormal eigenvectors of
\(\Sigma_{\mu,\mathcal I}\) in \(\mu^\perp\), with eigenvalues
\(\lambda_{\mathcal I,1}\ge\cdots\ge\lambda_{\mathcal I,d-1}\), then
\(
W_{\mu,\mathcal I,p}:=\Span(b_{\mathcal I,1},\ldots,b_{\mathcal I,p})
\)
minimizes \(\mathcal R_{\mu,\mathcal I}(W)\) over all \(p\)-dimensional
\(W\subset\mu^\perp\), and the corresponding rank-\(p\) residual risk is
\(
\mathcal R_{\mu,\mathcal I}(W_{\mu,\mathcal I,p})
=
\sum_{j=p+1}^{d-1}\lambda_{\mathcal I,j}.
\)
\end{proposition}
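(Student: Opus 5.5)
The plan is to reduce everything to Proposition~\ref{prop:projection-identity} applied pointwise, followed by the Ky Fan/Eckart--Young argument already used for Theorem~\ref{thm:population-eigensolution}. In fact, the finite-sample statement is the population theorem applied to the empirical law \(\abs{\mathcal I}^{-1}\sum_{i\in\mathcal I}\delta_{g_i}\) on \(\Sphere^{d-1}_{+}\). The proof of Theorem~\ref{thm:population-eigensolution} used only that the angular law lives on \(\Sphere^{d-1}_{+}\), and never the limit in Assumption~\ref{ass:angular-limit}, so one could simply invoke it. For a self-contained argument, I would instead proceed directly as follows.

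First, fix \(W\subset\mu^\perp\). Since each \(g_i\in\Sphere^{d-1}_{+}\), Proposition~\ref{prop:projection-identity} gives
\(\sin^2 d_g(g_i,M_\mu(W))=\norm{(I-\Proj_W)u_\mu(g_i)}_2^2\). Because \(I-\Proj_W\) is a symmetric idempotent, this equals \(u_\mu(g_i)^{\T}(I-\Proj_W)u_\mu(g_i)=\tr((I-\Proj_W)u_\mu(g_i)u_\mu(g_i)^{\T})\). Averaging over \(i\in\mathcal I\) and using linearity of the trace yields \(\mathcal R_{\mu,\mathcal I}(W)=\tr((I-\Proj_W)\Sigma_{\mu,\mathcal I})\).

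Second, I would handle the optimization. Each \(u_\mu(g_i)\) lies in \(\mu^\perp\), so \(\Sigma_{\mu,\mathcal I}\) is symmetric positive semidefinite and vanishes on \(\mu\). It therefore admits the stated orthonormal eigenbasis \(b_{\mathcal I,1},\dots,b_{\mathcal I,d-1}\) of \(\mu^\perp\), completed by the null eigenvector \(\mu\) in \(\R^d\). The risk can be rewritten as \(\tr(\Sigma_{\mu,\mathcal I})-\tr(\Proj_W\Sigma_{\mu,\mathcal I})\), so minimizing the risk is equivalent to maximizing \(\tr(\Proj_W\Sigma_{\mu,\mathcal I})\) over rank-\(p\) projectors with range in \(\mu^\perp\). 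Ky Fan's maximum principle bounds this quantity by \(\sum_{j\le p}\lambda_{\mathcal I,j}\). Equivalently, write \(\tr(\Proj_W\Sigma)=\sum_j\lambda_{\mathcal I,j}\,c_j\) with \(c_j=\norm{\Proj_Wb_{\mathcal I,j}}_2^2\in[0,1]\) and \(\sum_jc_j=p\). The bound follows because \(\Proj_W\) is supported in \(\mu^\perp\), which gives \(\sum_j c_j=\tr\Proj_W=p\), and a linear functional with weights in \([0,1]\) summing to \(p\) is maximized by putting weight one on the top \(p\) eigenvalues.

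The bound is attained by \(W_{\mu,\mathcal I,p}\). Then \(\tr(\Sigma_{\mu,\mathcal I})=\sum_{j=1}^{d-1}\lambda_{\mathcal I,j}\) gives residual risk \(\sum_{j>p}\lambda_{\mathcal I,j}\). The only mild care needed is the bookkeeping that \(W\subset\mu^\perp\) keeps \(\mu\) out of the \(c_j\) sum, so that the count \(\sum_j c_j=p\) runs over the \(d-1\) tangent eigenvectors only. No step is a genuine obstacle.
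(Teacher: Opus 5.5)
Your proposal is correct and follows the paper's proof essentially line for line. The paper applies Proposition~\ref{prop:projection-identity} to each \(g_i\), expands using symmetry and idempotence of \(\Proj_W\) to get \(\tr(\Sigma_{\mu,\mathcal I})-\tr(\Proj_W\Sigma_{\mu,\mathcal I})\), and concludes by the Ky Fan maximum principle. Your weight argument with \(c_j=\norm{\Proj_W b_{\mathcal I,j}}_2^2\) just makes Ky Fan explicit; since all \(\lambda_{\mathcal I,j}\ge0\), the bookkeeping that keeps \(\mu\) out of the count is not actually needed for the upper bound.
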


The selected set \(\mathcal I\) in Proposition~\ref{prop:empirical-eigensolution} can be the
top-\(k\) radial set or a rank-margin version of that set. The proposition is deterministic:
conditional on those directions, empirical AGCA is the leading eigenspace of the second-moment
operator formed from the selected directions, with only the usual eigenspace nonuniqueness under
ties.

\subsection{\texorpdfstring{Top-\(k\)}{Top-k} consistency and marginal standardization}
\label{sec:estimation-topk-margins}

The following theorem gives the oracle top-\(k\) justification for
\(\widehat\Sigma_{\mu,n}^{(k_n),\mathrm{orc}}\) in
\eqref{eq:sample-anchored-second-moments}. The next theorem gives the corresponding primitive
rank-Pareto consistency result.

\begin{theorem}[Consistency under empirical intermediate thresholds]
\label{thm:top-k-consistency}
Assume that \(R_1\) has a continuous distribution and that
Assumption~\ref{ass:angular-limit} holds. Let \(k_n\to\infty\) and \(k_n/n\to0\).
Then \(\norm{\widehat\Sigma_{\mu,n}^{(k_n),\mathrm{orc}}-\Sigma_\mu}_{\mathrm{op}}
\to0\) in probability.
Consequently, the eigenvalues, anchored residual risks, and, under an eigengap, the rank-\(p\)
AGCA projector obtained from the top \(k_n\) radial observations are consistent for their
population counterparts.
\end{theorem}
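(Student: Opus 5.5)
The plan is to compare the top-\(k_n\) empirical measure of directions with the conditional law of \(G_1\) given that \(R_1\) exceeds a deterministic threshold. Since \(u\mapsto uu^{\T}\) composed with \(u_\mu\) is a bounded continuous map from \(\Sphere^{d-1}_{+}\) to symmetric matrices, each entry of \(\Sigma_\mu\) is \(\E[f(G)]\) for a bounded continuous \(f\). Operator-norm convergence therefore follows from entrywise convergence in probability, because the dimension is fixed.

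\emph{Step 1: fix a deterministic threshold.} Let \(F_R\) be the continuous distribution function of \(R_1\), and choose \(r_n\) with \(\Prob[R_1>r_n]=k_n/n\). This is possible by continuity. Since \(k_n/n\to0\), we have \(r_n\to\infty\).

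\emph{Step 2: show the order statistic tracks \(r_n\) in probability.} Let \(R_{(k_n+1)}\) be the \((k_n+1)\)th largest radius. The variables \(F_R(R_i)\) are uniform and \(k_n\to\infty\), so standard order-statistic asymptotics give
\[
\frac{n\,\Prob[R_1>R_{(k_n+1)}]}{k_n}\to1
\]
in probability.

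\emph{Step 3: write the top-\(k\) matrix as a thresholded sum.} Almost surely there are no ties, so
\[
\widehat\Sigma_{\mu,n}^{(k_n),\mathrm{orc}}=\frac1{k_n}\sum_i U_{\mu,i}U_{\mu,i}^{\T}\,\mathbb I\{R_i>R_{(k_n+1)}\}.
\]
For fixed \(s>0\), define the deterministic-threshold analogue
\[
T_n(s):=\frac1{k_n}\sum_i f(G_i)\,\mathbb I\{R_i>r_n(s)\},
\]
where \(r_n(s)\) is chosen so that \(\Prob[R_1>r_n(s)]=sk_n/n\).

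\emph{Step 4: treat the deterministic-threshold sum.} Its mean is
\[
s\,\E[f(G_1)\mid R_1>r_n(s)]\to s\,\E[f(G)]
\]
by Assumption~\ref{ass:angular-limit}, since \(r_n(s)\to\infty\). Its variance is at most \(\norm{f}_\infty^2 s/k_n\to0\). Chebyshev's inequality then gives \(T_n(s)\to s\,\E[f(G)]\) in probability.

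\emph{Step 5: sandwich the random threshold.} Fix \(\epsilon\in(0,1)\). By Step 2, with probability tending to one,
\[
r_n(1+\epsilon)\le R_{(k_n+1)}\le r_n(1-\epsilon).
\]
Entries of \(uu^{\T}\) are not of one sign, so split \(f=f^+-f^-\) with \(f^\pm\ge0\) bounded and continuous. Each nonnegative piece is monotone in the threshold. This gives
\[
(1-\epsilon)\E[f^\pm(G)]-o_p(1)\le\frac1{k_n}\sum_i f^\pm(G_i)\,\mathbb I\{R_i>R_{(k_n+1)}\}\le(1+\epsilon)\E[f^\pm(G)]+o_p(1).
\]
Let \(\epsilon\downarrow0\) and take the difference of the two pieces. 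This yields entrywise consistency, and hence
\[
\norm{\widehat\Sigma_{\mu,n}^{(k_n),\mathrm{orc}}-\Sigma_\mu}_{\mathrm{op}}\to0
\]
in probability.

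I expect Steps 2 and 5, the replacement of the random order-statistic threshold, to be the main technical point. The monotone sandwich combined with the \(f^\pm\) split is the device I would use first.

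\emph{Step 6: the consequences.} These are deterministic perturbation facts applied to \(Q_\mu^{\T}\widehat\Sigma Q_\mu\) versus \(Q_\mu^{\T}\Sigma_\mu Q_\mu\):
\begin{itemize}
\item Weyl's inequality gives \(\max_j\abs{\widehat\lambda_j-\lambda_j}\le\norm{\widehat\Sigma-\Sigma_\mu}_{\mathrm{op}}\).
\item The residual risks \(\sum_{j>p}\widehat\lambda_j\) and the total variation \(\tr(\widehat\Sigma)\) are then consistent, with the trace error at most \((d-1)\) times the operator-norm error.
\item The ratio \(\mathrm{AVE}\) is consistent whenever \(\tau_\mu>0\).
\item Under \(\Delta_{\mu,p}>0\), the Davis--Kahan \(\sin\Theta\) theorem gives
\[
\norm{\widehat P_{\mu,p}-P_{\mu,p}}_{\mathrm{op}}\le\frac{2\norm{\widehat\Sigma-\Sigma_\mu}_{\mathrm{op}}}{\Delta_{\mu,p}}\to0
\]
in probability.
\end{itemize}
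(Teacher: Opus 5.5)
Your proposal is correct and follows essentially the paper's route. You trap the random top-$k_n$ threshold, with probability tending to one, between deterministic thresholds whose exceedance probabilities are $(1\pm\epsilon)k_n/n$. At those fixed thresholds you use Assumption~\ref{ass:angular-limit} with a variance/LLN argument, then let $\epsilon\downarrow0$, and the spectral consequences follow from standard eigenvalue and projector perturbation bounds. The only cosmetic difference is in handling the signed matrix-valued summands. The paper bounds the contribution of the indices in the set differences between the top-$k_n$ set and the two exceedance sets by $2B$ times their relative size, where $B$ bounds $\norm{h}_{\mathrm{op}}$. You instead use the monotone $f^\pm$ split, and both work.
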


The main tail condition in Theorem~\ref{thm:top-k-consistency} is
Assumption~\ref{ass:angular-limit}: conditional on being far enough in the tail, the Euclidean
spherical direction must approach a limiting angular law. This is a standard condition that
underlies empirical angular-measure estimation in multivariate EVT\@.
The continuity assumption on \(R_1\) ensures that the top-\(k_n\)
radial set is unambiguously defined: with probability one, no positive number of observations
has the same radius as the empirical threshold. It is mild for continuous measurements
and for continuous Pareto-standardized variables, and it does not exclude discrete angular
regimes such as mass on coordinate axes.\footnote{The assumption could be relaxed by specifying a
deterministic or randomized convention for observations with equal radii, or by replacing exact
top-\(k_n\) selection with a threshold exceedance set and assuming that the number of observations
at the boundary is \(o_p(k_n)\). Such extensions are useful for heavily rounded, count-valued, or
otherwise coarse data.}
The intermediate sequence \(k_n\) only requires increasingly many selected extremes with vanishing
tail fraction \(k_n/n\); finite-sample choice remains a threshold-selection problem.
In applications, this is normally handled through threshold paths and stability diagnostics. For
AGCA the natural diagnostics are eigenvalues, anchored variation explained, residual risk,
loadings, scores, and projector distances across a range of \(k\).

Replacing the unknown marginal transforms by ranks affects the estimator in two ways: the
direction of a fixed observation changes from \(G_i\) to \(\widehat G_i\), and the top-\(k_n\) set based on
\(\widehat R_i\) need not equal the oracle top-\(k_n\) set. The next theorem
shows that both effects are negligible at the consistency scale under primitive tail conditions.
Rank-based estimation of limiting angular (spectral) measures goes back to
\citet{einmahl1997estimating,einmahl2001nonparametric}; see also \citet{einmahl2009maximum}
and, for the multivariate rank-tail empirical process, \citet{einmahl2012m}.

\begin{theorem}[Rank-based consistency]
\label{thm:rank-consistency}
Assume that the margins \(F_1,\ldots,F_d\) and \(R_1\) are continuous, and that
Assumption~\ref{ass:erv} holds. Let \(k_n\to\infty\) and \(k_n/n\to0\). Then
\(\norm{\widehat\Sigma_{\mu,n}^{(k_n),\mathrm{emp}}-\Sigma_\mu}_{\mathrm{op}}\to0\)
in probability. In particular, the eigenvalues, anchored residual risks, anchored variation
explained when \(\tau_\mu>0\), and, under an eigengap, the rank-\(p\) AGCA projector obtained from
the rank-Pareto top-\(k_n\) observations are consistent for their population counterparts.
\end{theorem}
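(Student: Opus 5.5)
The plan is to reduce the rank-Pareto statement to Theorem~\ref{thm:top-k-consistency} through a sandwich comparison. Write \(\psi(g):=u_\mu(g)u_\mu(g)^{\T}\). This map is bounded by one in operator norm on \(\Sphere^{d-1}_{+}\) and Lipschitz there, since \(u_\mu\) is linear with norm at most one and \(\norm{\psi(g)-\psi(h)}_{\mathrm{op}}\le2\norm{g-h}_2\). We then split
\[
\widehat\Sigma^{(k_n),\mathrm{emp}}_{\mu,n}-\widehat\Sigma^{(k_n),\mathrm{orc}}_{\mu,n}
=\frac1{k_n}\sum_{i\in\widehat{\mathcal I}}\bigl(\psi(\widehat G_i)-\psi(G_i)\bigr)
+\frac1{k_n}\Bigl(\sum_{i\in\widehat{\mathcal I}}-\sum_{i\in\mathcal I}\Bigr)\psi(G_i).
\]
The second term is bounded in operator norm by \(\abs{\widehat{\mathcal I}\,\triangle\,\mathcal I}/k_n\). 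It therefore suffices to prove two facts: (a) \(\abs{\widehat{\mathcal I}\triangle\mathcal I}=o_p(k_n)\), and (b) \(k_n^{-1}\sum_{i\in\widehat{\mathcal I}}\norm{\widehat G_i-G_i}_2\to0\) in probability. Combining these with Theorem~\ref{thm:top-k-consistency}, and noting that Assumption~\ref{ass:erv} implies Assumption~\ref{ass:angular-limit}, gives the operator-norm statement. The remaining consequences follow as in that theorem. Weyl's inequality gives the eigenvalues and hence the trailing eigensums. The AVE follows as a continuous function of these when \(\tau_\mu>0\). The projector follows from Davis--Kahan under \(\Delta_{\mu,p}>0\).

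For both (a) and (b) the key tool is uniform relative closeness of rank-Pareto to true Pareto coordinates on the tail region. We use the standard uniform tail empirical process result: with \(U_{j}:=1-F_j(Y_{ij})\), we have \(\sup\{\abs{\widehat X_{ij}/X_{ij}-1}: X_{ij}\ge \delta n/k_n\}\to0\) in probability for each fixed \(\delta>0\). This holds because \((n+1-\mathrm{rank})/(n+1)\) is the empirical tail function at \(U_j\), and \(\sup_{t\ge\delta k/n}\abs{\Gamma_n(t)/t-1}\to0\) for uniform empirical distribution functions when \(k_n\to\infty\).

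Next we need the radial scale. By Assumption~\ref{ass:erv} with \(\alpha=1\) for Euclidean Pareto radii, the oracle threshold \(R_{(k_n)}\) is of order \(n/k_n\), so selected points have \(R_i\gtrsim n/k_n\). Coordinates below \(\delta n/k_n\) are small relative to \(R_i\). The ranks only shrink or keep such coordinates within the same scale, since \(\widehat X_{ij}\le\) order \(n/k_n\cdot O(\delta)\) on that set, again by the empirical process bound. Hence for selected \(i\),
\[
\norm{\widehat X_i-X_i}_2/R_i\le \epsilon_n+C\delta,
\]
with \(\epsilon_n=o_p(1)\) uniform over \(i\). Since \(\norm{\widehat G_i-G_i}_2\le2\norm{\widehat X_i-X_i}_2/R_i\), letting \(\delta\downarrow0\) gives (b) uniformly. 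The same estimate applied to \(\widehat R_i/R_i\) shows \(\widehat R_i\in[(1-\eta)R_i,(1+\eta)R_i]\) for all points with \(R_i\ge c\,n/k_n\), with probability tending to one.

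For (a), the symmetric difference is then contained in the set of \(i\) with \(R_i\) in a window \([(1-\eta')R_{(k_n)},(1+\eta')R_{(k_n)}]\). Indeed, points far above the oracle threshold are selected by both rules, and points far below by neither. The count in this window divided by \(k_n\) converges to \((1-\eta')^{-1}-(1+\eta')^{-1}\), using regular variation of \(R\) with index \(1\) and tail empirical process convergence. This tends to zero as \(\eta'\downarrow0\).

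The main obstacle is the uniform ratio control near coordinate faces and axes, where some coordinates of extreme points are not themselves extreme. There the relative error of \(\widehat X_{ij}\) is not small, and we must instead use that such coordinates are negligible relative to \(R_i\). Making this rigorous requires the \(\delta\)-truncation argument together with a bound on the rank-transformed value of a non-extreme coordinate, namely that the empirical tail function stays comparable to \(t\) down to \(t\approx\delta k/n\). We would also want, for safety, control of how many selected points have all coordinates near the truncation level, which regular variation of \(R\) supplies.
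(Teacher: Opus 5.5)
Your top-level reduction is the paper's own. The split into an angular term and a selection term bounded by \(\abs{\widehat{\mathcal I}\triangle\mathcal I}/k_n\), followed by Theorem~\ref{thm:top-k-consistency} and Weyl/Davis--Kahan, is exactly Proposition~\ref{prop:rank-margin-stability}.

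The gap is in the tool you use to verify (a) and (b). The claim \(\sup\{\abs{\widehat X_{ij}/X_{ij}-1}:X_{ij}\ge\delta n/k_n\}\to0\) is false. Write \(V_{ij}=1-F_j(Y_{ij})\) and let \(\widehat F_{j,n}\) be the empirical distribution function of \(V_{1j},\ldots,V_{nj}\). Then exactly \(\widehat X_{ij}/X_{ij}=(n+1)V_{ij}/(n\widehat F_{j,n}(V_{ij}))\). For the largest observation of margin \(j\) this equals \((n+1)V_{(1),j}\), which converges weakly to a standard exponential variable, not to one. The same constant-order distortion affects the top few order statistics of every margin.

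The ratio result you cite, \(\sup_{t\ge\delta k/n}\abs{\Gamma_n(t)/t-1}\to0\), controls the complementary region \(V_{ij}\ge\delta k_n/n\), i.e.\ \(X_{ij}\le n/(\delta k_n)\), not the extreme one. These top observations have radius of order \(n\gg n/k_n\), so they are selected. Under asymptotic dependence their other coordinates are typically also among the top order statistics of their margins, with different constant-order ratio errors. Hence the direction itself is distorted at constant order, and \(\norm{\widehat G_i-G_i}_2\) does not vanish for them. Two of your statements therefore fail: that (b) holds ``uniformly'' over selected \(i\), and the bound \(\widehat R_i\in[(1-\eta)R_i,(1+\eta)R_i]\) for all \(R_i\ge cn/k_n\) used in (a). This is precisely why the paper states the angular condition only in average form.

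You have also placed the obstacle on the wrong side. Non-extreme coordinates near faces and axes are harmless. The ratio lemma (Lemma~\ref{lem:ratio-gc}) gives \(\sup_{v\ge c_n}\abs{\widehat F_{j,n}(v)/v-1}\to0\) whenever \(nc_n\to\infty\), so small coordinates have small \emph{relative} error and your \(\delta\)-truncation is unnecessary.

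The missing idea is a bad set at the top of each margin. With \(c_n=\sqrt{k_n}/n\), let \(J_n\) be the indices with \(\min_jV_{ij}<c_n\). Let \(\varepsilon_n^\ast\) be the maximum over \(j\) and \(v\ge c_n\) of \(\abs{n\widehat F_{j,n}(v)/((n+1)v)-1}\), which is \(o_p(1)\).
\begin{itemize}
\item \(\E\abs{J_n}\le d\sqrt{k_n}=o(k_n)\).
\item Since \(\norm{\widehat G_i-G_i}_2\le2\), these indices contribute at most \(2\abs{J_n}/k_n\) to the average in (b).
\item Off \(J_n\), every coordinate satisfies \(\abs{\widehat X_{ij}/X_{ij}-1}\le2\varepsilon_n^\ast\), giving \(\norm{\widehat G_i-G_i}_2\le4\varepsilon_n^\ast\).
\end{itemize}
For (a), discard \(J_n\) at cost \(O(\abs{J_n})\) in the count. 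Then run your window argument on good indices only, where the radial relative error is uniformly \(O(\varepsilon_n^\ast)\). The band count follows from regular variation of \(\bar F_R\) (Lemma~\ref{lem:radial-band}). With this repair your outline becomes the paper's proof.
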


The proof, in Supplementary Section~\ref{app:primitive-margin-conditions}, shows that the average
rank-induced angular error over the relevant selected observations is \(o_p(1)\), and that the
symmetric difference between the oracle and rank-based selected sets is \(o_p(k_n)\). The latter
step is why Theorem~\ref{thm:rank-consistency} uses Assumption~\ref{ass:erv}: angular convergence
alone does not control how many radii lie in a shrinking relative band around the top-\(k_n\)
threshold. The average form of the angular control is essential, because the rank transform
distorts the top few order statistics of each margin at constant order.

\subsection{Oracle distributional inference and stability diagnostics}
\label{sec:stability-diagnostics}
\label{sec:oracle-inference}

The consistency result above is enough to justify the AGCA target, but it does not quantify
sampling variation. For the oracle estimator, distributional inference is particularly simple
because the anchored departure \(u_\mu(G)\) is bounded. The only additional condition needed here
is a second-order version of the angular limit at the \(k_n/n\) tail level.

With the oracle notation above, define 
\[
\bar F_R(r):=\Prob[R_1>r],\qquad
H_\mu(r)
:=
\E\!\left[u_\mu(G_1)u_\mu(G_1)^{\T}\mid R_1>r\right],
\]
and the angular-bias envelope
\[
\eta_\mu(t)
:=
\sup\left\{
\norm{H_\mu(r)-\Sigma_\mu}_{\mathrm{op}}
:\bar F_R(r)\le t
\right\},
\qquad 0<t<1.
\]

\begin{assumption}[Second-order angular bias]
\label{ass:second-order-angular-bias}
As \(n\to\infty\), \(\sqrt{k_n}\,\eta_\mu(2k_n/n)\to0\) for the intermediate sequence
\(k_n\to\infty\) with \(k_n/n\to0\).
\end{assumption}
Assumption~\ref{ass:second-order-angular-bias} assumes away the asymptotic bias. Such a condition is often assumed in classical extreme value statistics; see, e.g.
\citep[Chapter~3]{de2006extreme}. It places an upper bound for the intermediate sequence $k_n$.
\begin{theorem}[Oracle CLT for top-\(k\) AGCA]
\label{thm:oracle-agca-clt}
Assume the conditions of Theorem~\ref{thm:top-k-consistency} and
Assumption~\ref{ass:second-order-angular-bias}. Then
\[
\sqrt{k_n}
\left(
\widehat\Sigma_{\mu,n}^{(k_n),\mathrm{orc}}-\Sigma_\mu
\right)
\rightsquigarrow Z,
\]
where \(Z\) is a centered symmetric Gaussian matrix supported on
\(\mu^\perp\times\mu^\perp\). Moreover, for all symmetric \(A,B\in\R^{d\times d}\),
\(\Cov\!\left[\tr(AZ),\tr(BZ)\right]
=
\Cov\!\left[U_\mu^{\T}AU_\mu,U_\mu^{\T}BU_\mu\right]\).
Equivalently, given \(U_{\mu,1},\ldots,U_{\mu,m}\) independent copies of \(U_\mu\), \(Z\)
has the same law as the weak limit of
\[
m^{-1/2}\sum_{\ell=1}^m
\left(U_{\mu,\ell}U_{\mu,\ell}^{\T}-\Sigma_\mu\right).
\]
\end{theorem}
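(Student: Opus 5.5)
The plan is to condition on the random threshold given by the \((k_n+1)\)th largest radius. Write \(R_{(1)}\ge\cdots\ge R_{(n)}\) for the ordered radii and set \(T_n:=R_{(k_n+1)}\). Since \(R_1\) is continuous, the top-\(k_n\) set \(\mathcal I_n^{(k)}\) is almost surely well defined. The standard order-statistic fact for i.i.d. samples (e.g. via the Rényi representation) then gives the following. Conditionally on \(T_n=t\), the pairs \((R_i,G_i)\), \(i\in\mathcal I_n^{(k)}\), are i.i.d. with the law of \((R_1,G_1)\) given \(R_1>t\). Consequently, conditionally on \(T_n=t\), the \(k_n\) vectors \(U_{\mu,i}\) are i.i.d. copies of \(u_\mu(G_1)\) given \(R_1>t\), and their second moment is \(H_\mu(t)\).

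We then decompose
\[
\sqrt{k_n}\bigl(\widehat\Sigma_{\mu,n}^{(k_n),\mathrm{orc}}-\Sigma_\mu\bigr)
=
\sqrt{k_n}\bigl(\widehat\Sigma_{\mu,n}^{(k_n),\mathrm{orc}}-H_\mu(T_n)\bigr)
+\sqrt{k_n}\bigl(H_\mu(T_n)-\Sigma_\mu\bigr).
\]
\textbf{Bias term.} We have \(\bar F_R(T_n)=U_{(k_n+1)}\), the uniform order statistic, and \(n\bar F_R(T_n)/k_n\to1\) in probability because \(k_n\to\infty\). On the event \(\bar F_R(T_n)\le 2k_n/n\), whose probability tends to one, the definition of the envelope gives \(\norm{H_\mu(T_n)-\Sigma_\mu}_{\mathrm{op}}\le\eta_\mu(2k_n/n)\). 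Assumption~\ref{ass:second-order-angular-bias} then makes the bias term \(o_p(1)\).

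\textbf{Fluctuation term.} Conditionally on \(T_n=t\), this term is a normalized sum of \(k_n\) i.i.d. centered symmetric matrices \(U_{\mu,i}U_{\mu,i}^{\T}-H_\mu(t)\), each bounded in operator norm by \(2\), since \(\norm{u_\mu}_2\le1\). Fix symmetric \(A\) and apply the Cramér--Wold device to \(\tr(A\,\cdot)\). A Lindeberg (or Berry--Esseen) CLT for triangular arrays with bounded summands shows that the conditional characteristic function converges to \(\exp(-\tfrac12 s^2\sigma_A^2)\). This requires the conditional variance
\[
\Var[U^{\T}AU\mid R_1>T_n]
\]
to converge in probability to \(\sigma_A^2:=\Var[U_\mu^{\T}AU_\mu]\).

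That convergence follows from Assumption~\ref{ass:angular-limit}. The map \(g\mapsto (u_\mu(g)^{\T}Au_\mu(g))^j\) is bounded and continuous on \(\Sphere^{d-1}_+\), so its conditional mean given \(R_1>r\) converges to the mean under \(G\) as \(r\to\infty\). Since \(T_n\to\infty\) in probability (because \(k_n/n\to0\)), the conditional first and second moments converge in probability. Bounded convergence of the conditional characteristic functions then yields unconditional convergence. This gives the joint Gaussian limit \(Z\), with \(\Cov[\tr(AZ),\tr(BZ)]=\Cov[U_\mu^{\T}AU_\mu,U_\mu^{\T}BU_\mu]\), the covariance obtained by polarization.

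Two properties of \(Z\) remain. Every summand satisfies \(\mu^{\T}(\cdot)=0\) and \((\cdot)\mu=0\), so \(Z\) is supported on \(\mu^\perp\times\mu^\perp\). The identification with the i.i.d. limit \(m^{-1/2}\sum_{\ell}(U_{\mu,\ell}U_{\mu,\ell}^{\T}-\Sigma_\mu)\) is the ordinary multivariate CLT, which gives the same covariance.

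The main obstacle is justifying the conditional i.i.d. representation at the random threshold \(T_n\). We need it rigorously, including that the selected indices form an exchangeable i.i.d. sample from the exceedance law given \(T_n\), as well as the measurability needed to pass from conditional to unconditional weak convergence. We plan to cite the standard result (de Haan and Ferreira, or a direct computation of the joint density of order statistics) and then handle the passage by dominated convergence of the bounded conditional characteristic functions.
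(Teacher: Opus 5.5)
Your proposal is correct, but it takes a different route from the paper's proof.

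\textbf{The paper's route.} The paper never conditions on an order statistic. It builds the tail empirical process $\mathbb G_n(\psi,x)$ at deterministic thresholds $r_n(x)$, indexed by the finite angular class and the level $x\in[1/2,2]$. It proves a functional CLT for this process, including tightness in $x$ by bracketing. It then writes the top-$k_n$ set as the exceedance set at the random level $r_n(\widehat x_n)$ and linearizes $\sqrt{k_n}(\widehat x_n-1)=-\mathbb G_n(h_0,1)+o_p(1)$. The limit is $\mathbb G_n(h_\mu,1)-\mathbb G_n(h_0,1)\Sigma_\mu$. The i.i.d. covariance appears only after an explicit cancellation between the exceedance-count fluctuation and the angular fluctuation.

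\textbf{Your route.} Conditioning on $T_n=R_{(k_n+1)}$ makes the same phenomenon structural. Given $T_n$, the selected sample is exactly $k_n$ i.i.d.\ draws from the exceedance law, so the fluctuation term has the i.i.d.\ covariance from the start. Threshold randomness only moves the centering $H_\mu(T_n)$, and the envelope $\eta_\mu$ absorbs that.

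Your argument is more elementary. The concomitant representation needs only continuity of $R_1$: a direct computation with the weight $\bar F_R(t)^{k}F_R(t)^{n-k-1}\,dF_R(t)$ gives it, with no density needed. You also avoid tightness in the threshold index and the caveat about identifying the top-$k_n$ set with $\{R_i>r_n(\widehat x_n)\}$.

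\textbf{One point to state explicitly.} When you pass from conditional to unconditional convergence, say that uniformity holds. The centered summands are bounded by $2$ in operator norm, so the Lindeberg replacement bound on the conditional characteristic function is $O(|s|^3k_n^{-1/2})$ uniformly in the conditioning level. Dominated convergence then finishes the argument; a subsequence argument along which $T_n\to\infty$ almost surely also works.

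\textbf{What the paper's approach buys.} The process-based route gives a joint expansion that covers deterministic-threshold versions of the estimator alongside the top-$k_n$ one. It also makes visible exactly how the threshold fluctuation enters. It is the template the paper reuses for the rank-Pareto setting, where the transformed observations are no longer independent and your conditional i.i.d.\ representation is unavailable.
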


Theorem~\ref{thm:oracle-agca-clt} says that top-\(k_n\) selection is first-order free for the
bounded anchored second moment: the random radial threshold has the same first-order covariance
effect as the empirical tail count and cancels from the final covariance. This is the main
inferential simplification created by using the bounded spherical departure \(u_\mu(G)\).
If the second-order bias is not negligible and
\(\sqrt{k_n}(H_\mu(r_n)-\Sigma_\mu)\) converges along the relevant deterministic threshold
sequence, the same CLT has the corresponding deterministic bias shift.

Using the population projector and eigengap in
\eqref{eq:population-projector-eigengap}, let
\(\widehat\lambda_j^{\mathrm{orc}}\) and \(\widehat b_j^{\mathrm{orc}}\) denote the ordered
eigenvalues and associated eigenvectors of
\(\widehat\Sigma_{\mu,n}^{(k_n),\mathrm{orc}}\), and define the corresponding oracle estimators
\[
\widehat\tau_\mu^{\mathrm{orc}}:=\tr(\widehat\Sigma_{\mu,n}^{(k_n),\mathrm{orc}}),
\qquad
\widehat{\mathrm{AVE}}_{\mu,p}^{\mathrm{orc}}
:=
\frac{\sum_{j=1}^p\widehat\lambda_j^{\mathrm{orc}}}
{\widehat\tau_\mu^{\mathrm{orc}}},
\qquad
\widehat P_{\mu,p}^{\mathrm{orc}}
:=
\sum_{j=1}^p\widehat b_j^{\mathrm{orc}}
(\widehat b_j^{\mathrm{orc}})^{\T}.
\]

\begin{corollary}[Oracle scalar and spectral inference]
\label{cor:oracle-agca-spectral-clt}
Under the assumptions of Theorem~\ref{thm:oracle-agca-clt}, the following limits hold.
\begin{enumerate}
\item
\(\sqrt{k_n}\left(\widehat\tau_\mu^{\mathrm{orc}}-\tau_\mu\right)
\rightsquigarrow
N\!\left(0,\Var\!\left[\norm{U_\mu}_2^2\right]\right)\).
\item If \(\lambda_j\) is a simple eigenvalue, then
\[
\sqrt{k_n}\left(\widehat\lambda_j^{\mathrm{orc}}-\lambda_j\right)
\rightsquigarrow
N\!\left(0,\Var\!\left[(b_j^{\T}U_\mu)^2\right]\right).
\]
The corresponding limits are jointly Gaussian over any fixed set of simple eigenvalues, with
covariances
\(\Cov\!\left[(b_j^{\T}U_\mu)^2,(b_\ell^{\T}U_\mu)^2\right]\).
\item If \(1\le p\le d-2\), \(\tau_\mu>0\), and \(\Delta_{\mu,p}>0\), then
\[
\sqrt{k_n}
\left(
\widehat{\mathrm{AVE}}_{\mu,p}^{\mathrm{orc}}
-
\mathrm{AVE}_{\mu,p}
\right)
\rightsquigarrow
N(0,\sigma_{\mu,p}^2),
\]
where
\(\sigma_{\mu,p}^2
:=
\Var\!\left[\psi_{\mu,p}(U_\mu)\right]\) and
\(\psi_{\mu,p}(u)
:=
\tau_\mu^{-1}
\left(
u^{\T}P_{\mu,p}u
-
\mathrm{AVE}_{\mu,p}\norm{u}_2^2
\right)\).
\item If \(1\le p\le d-2\) and \(\Delta_{\mu,p}>0\), then
\[
\sqrt{k_n}
\left(\widehat P_{\mu,p}^{\mathrm{orc}}-P_{\mu,p}\right)
\rightsquigarrow
\mathcal D_{\mu,p}(Z),\quad \mathcal D_{\mu,p}(E)
:=
\sum_{j=1}^p\sum_{m=p+1}^{d-1}
\frac{b_m b_m^{\T} E b_j b_j^{\T}
+b_j b_j^{\T} E b_m b_m^{\T}}
{\lambda_j-\lambda_m},\quad E=E^{\T}.
\]
\end{enumerate}
\end{corollary}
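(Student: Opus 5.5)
The plan is to derive every part from Theorem~\ref{thm:oracle-agca-clt} by the delta method. The theorem gives \(\sqrt{k_n}(\widehat\Sigma_{\mu,n}^{(k_n),\mathrm{orc}}-\Sigma_\mu)\rightsquigarrow Z\), and each target statistic is a functional of the sample matrix. That functional is either linear or Hadamard-differentiable at \(\Sigma_\mu\) under the stated simplicity or eigengap conditions. Once the derivative is identified as a linear map \(E\mapsto\tr(AE)\) for a symmetric \(A\), the limiting variance follows from the covariance identity \(\Var[\tr(AZ)]=\Var[U_\mu^{\T}AU_\mu]\) in the theorem. Joint statements follow the same way from \(\Cov[\tr(AZ),\tr(BZ)]\).

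\emph{Part 1.} The trace is linear, so \(\sqrt{k_n}(\widehat\tau_\mu^{\mathrm{orc}}-\tau_\mu)=\tr(\sqrt{k_n}(\widehat\Sigma-\Sigma_\mu))\rightsquigarrow\tr(IZ)\) by continuous mapping. Its variance is \(\Var[U_\mu^{\T}U_\mu]=\Var[\norm{U_\mu}_2^2]\).

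\emph{Part 2.} For a simple eigenvalue \(\lambda_j\) of the symmetric matrix \(\Sigma_\mu\), the map \(\Sigma\mapsto\lambda_j(\Sigma)\) is differentiable with derivative \(E\mapsto b_j^{\T}Eb_j=\tr(b_jb_j^{\T}E)\), by standard first-order perturbation theory. Two points need a short check.
\begin{itemize}
\item The sample matrix lives in \(\mu^\perp\times\mu^\perp\), and its extra zero eigenvalue along \(\mu\) must not interfere with the ordering. I would handle this by working with \(Q_\mu^{\T}\widehat\Sigma Q_\mu\), where the index \(j\) refers to the \((d-1)\)-dimensional spectrum.
\item Ordering is locally preserved near a simple eigenvalue, which follows from Weyl's inequality together with the consistency in Theorem~\ref{thm:top-k-consistency}.
\end{itemize}
The delta method then gives the variance \(\Var[(b_j^{\T}U_\mu)^2]\), and joint Gaussianity with the stated covariances comes from applying it to the vector of such maps.

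\emph{Part 3.} The partial sum \(\sum_{j\le p}\lambda_j(\Sigma)=\tr(P_p(\Sigma)\Sigma)\) is differentiable at \(\Sigma_\mu\) whenever \(\Delta_{\mu,p}>0\), even if eigenvalues inside the top block are tied. Its derivative is \(E\mapsto\tr(P_{\mu,p}E)\). This follows from Ky Fan's variational characterization, or directly from Part 4's projector expansion, because \(\tr(\mathcal D(E)\Sigma_\mu)=0\).

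Combining this with the trace in Part 1 via the quotient rule at \(\tau_\mu>0\), the derivative of \(\mathrm{AVE}\) is
\[
E\mapsto\tau_\mu^{-1}\bigl(\tr(P_{\mu,p}E)-\mathrm{AVE}_{\mu,p}\tr(E)\bigr)=\tr(AE),
\qquad A=\tau_\mu^{-1}(P_{\mu,p}-\mathrm{AVE}_{\mu,p}I).
\]
Since \(U_\mu^{\T}AU_\mu=\psi_{\mu,p}(U_\mu)\), this gives \(\sigma_{\mu,p}^2\).

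\emph{Part 4.} Under the eigengap, the Riesz projector \(P_p(\Sigma)=\frac{1}{2\pi i}\oint_\Gamma(zI-\Sigma)^{-1}\,dz\) is defined for \(\Sigma\) near \(\Sigma_\mu\), with \(\Gamma\) enclosing \(\lambda_1,\ldots,\lambda_p\) and excluding the rest, including the artificial zero eigenvalue along \(\mu\) when \(\lambda_p>0\). It is analytic there, and its Fréchet derivative is the standard \(\mathcal D_{\mu,p}(E)\). To see this, expand \(\Sigma_\mu\) spectrally and evaluate the residues of \((z-\lambda_j)^{-1}(z-\lambda_m)^{-1}\). The continuous-mapping and delta-method arguments then give \(\mathcal D_{\mu,p}(Z)\), because \(Z\) is supported on \(\mu^\perp\) and only cross terms between the retained and discarded bases \(b_j,b_m\) survive.

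The main obstacle is the low-rank edge case \(\lambda_p>0=\lambda_{p+1}\) together with the extra zero eigenvalue along \(\mu\). This is settled by restricting all computations to \(\mu^\perp\) through \(Q_\mu\). There, \(\widehat P^{\mathrm{orc}}_{\mu,p}\) is exactly the Riesz projector of the \((d-1)\times(d-1)\) matrix once \(\widehat\Sigma\) is close enough, an event with probability tending to one. Along the way I would also check that the eigenvector ordering and sign ambiguities disappear, since only the projector and the eigenvalues enter the results.
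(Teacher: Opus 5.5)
Your proposal is correct and follows essentially the same route as the paper. Both arguments use linearity of the trace, the first-order simple-eigenvalue expansion \(b_j^{\T}Eb_j\), differentiability of the top-\(p\) eigenvalue sum with derivative \(E\mapsto\tr(P_{\mu,p}E)\) followed by the quotient rule for \(\mathrm{AVE}_{\mu,p}\), and the resolvent (Riesz) derivative of the spectral projector, each combined with the covariance identity of Theorem~\ref{thm:oracle-agca-clt}. Your explicit restriction to \(\mu^\perp\) via \(Q_\mu\) and your check that \(\tr(\mathcal D_{\mu,p}(E)\Sigma_\mu)=0\) add a little rigor that the paper leaves implicit.
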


The eigengap condition is an identifiability condition: if \(\lambda_p=\lambda_{p+1}\), the
rank-\(p\) space is not unique, and small gaps allow large rotations of individual eigenvectors.
Thus unstable loading vectors need not imply an unstable low-dimensional fitted subspace. For this
reason, comparisons across thresholds, bootstrap samples, or anchors use spectral projectors
whenever individual component orientation is not the target.

\begin{corollary}[Plug-in variance for oracle explained variation]
\label{cor:oracle-agca-plugin-ci}
Under the assumptions of Corollary~\ref{cor:oracle-agca-spectral-clt}, with \(1\le p\le d-2\),
\(\tau_\mu>0\), and \(\Delta_{\mu,p}>0\), define
\[
\widehat\psi_{\mu,p,i}^{\mathrm{orc}}
:=
(\widehat\tau_\mu^{\mathrm{orc}})^{-1}
\left[
U_{\mu,i}^{\T}
\widehat P_{\mu,p}^{\mathrm{orc}}
U_{\mu,i}
-
\widehat{\mathrm{AVE}}_{\mu,p}^{\mathrm{orc}}
\norm{U_{\mu,i}}_2^2
\right]
\]
and let \(\widehat\sigma_{\mu,p}^{2,\mathrm{orc}}\) be the empirical variance of
\(\widehat\psi_{\mu,p,i}^{\mathrm{orc}}\) over \(i\in\mathcal I_n^{(k_n)}\). Then
\(\widehat\sigma_{\mu,p}^{2,\mathrm{orc}}
\to
\sigma_{\mu,p}^2\) in probability.
Consequently,
\[
\widehat{\mathrm{AVE}}_{\mu,p}^{\mathrm{orc}}
\pm
z_{1-\gamma/2}
\frac{\widehat\sigma_{\mu,p}^{\mathrm{orc}}}{\sqrt{k_n}}
\]
is an asymptotically valid oracle \((1-\gamma)\) confidence interval for
\(\mathrm{AVE}_{\mu,p}\), with \(z_q\) the \(q\)-quantile of \(N(0,1)\). The same plug-in principle estimates the asymptotic variances for
\(\widehat\tau_\mu^{\mathrm{orc}}\) and simple oracle eigenvalues.
\end{corollary}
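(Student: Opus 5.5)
The plan is to prove consistency of the plug-in variance by writing \(\widehat\sigma_{\mu,p}^{2,\mathrm{orc}}\) as a smooth function of a few top-\(k_n\) empirical moments of \(U_\mu\), and then to get the interval from Slutsky's lemma together with part~3 of Corollary~\ref{cor:oracle-agca-spectral-clt}.

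\textbf{Moment decomposition.} Let \(\widehat A:=\widehat P_{\mu,p}^{\mathrm{orc}}-\widehat{\mathrm{AVE}}_{\mu,p}^{\mathrm{orc}}(I-\mu\mu^{\T})\) and \(A:=P_{\mu,p}-\mathrm{AVE}_{\mu,p}(I-\mu\mu^{\T})\). Since \(U_{\mu,i}\in\mu^\perp\), we have \(\widehat\psi_{\mu,p,i}^{\mathrm{orc}}=(\widehat\tau_\mu^{\mathrm{orc}})^{-1}U_{\mu,i}^{\T}\widehat AU_{\mu,i}\), and likewise \(\psi_{\mu,p}(u)=\tau_\mu^{-1}u^{\T}Au\). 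The empirical variance is therefore
\[
(\widehat\tau_\mu^{\mathrm{orc}})^{-2}\Bigl[\tfrac1{k_n}\textstyle\sum_{i\in\mathcal I_n^{(k_n)}}(U_{\mu,i}^{\T}\widehat AU_{\mu,i})^2-\bigl(\tr(\widehat A\,\widehat\Sigma_{\mu,n}^{(k_n),\mathrm{orc}})\bigr)^2\Bigr].
\]
The first sum is a linear functional of the fourth-moment tensor \(\widehat T_n:=k_n^{-1}\sum_{i\in\mathcal I_n^{(k_n)}}U_{\mu,i}^{\otimes4}\), namely \(\langle \widehat T_n,\widehat A\otimes\widehat A\rangle\), which is jointly continuous in \((\widehat T_n,\widehat A)\). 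So it suffices to show three things: \(\widehat T_n\to T:=\E[U_\mu^{\otimes4}]\) in probability, \(\widehat A\to A\) in probability, and \(\widehat\tau_\mu^{\mathrm{orc}}\to\tau_\mu>0\).

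\textbf{The plug-in pieces.} The last two come directly from Theorem~\ref{thm:top-k-consistency}. That theorem gives \(\widehat\Sigma\to\Sigma_\mu\), hence \(\widehat\tau\to\tau_\mu\) and convergence of the eigenvalue sums. Under \(\Delta_{\mu,p}>0\), the Davis--Kahan argument gives \(\widehat P_{\mu,p}^{\mathrm{orc}}\to P_{\mu,p}\). Therefore \(\widehat{\mathrm{AVE}}\to\mathrm{AVE}\) and \(\widehat A\to A\). The continuous mapping theorem then finishes the plug-in part.

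\textbf{Main obstacle: the fourth-moment tensor.} The key step is \(\widehat T_n\to T\). I would redo the proof of Theorem~\ref{thm:top-k-consistency} with the bounded continuous function \(g\mapsto u_\mu(g)^{\otimes4}\) in place of \(u_\mu(g)u_\mu(g)^{\T}\). That proof only uses three ingredients:
\begin{itemize}
\item[(i)] for a deterministic threshold \(r_n\) with \(\bar F_R(r_n)=k_n/n\), the exceedance average of a bounded continuous angular function converges by Assumption~\ref{ass:angular-limit} plus a weak law for the binomial-size exceedance sample;
\item[(ii)] the random order-statistic threshold satisfies \(R_{(k_n+1)}/r_n\)-type control, namely \(n\bar F_R(R_{(k_n+1)})/k_n\to1\);
\item[(iii)] the discrepancy between the random and deterministic selected sets is \(o_p(k_n)\) points, each contributing a bounded term.
\end{itemize}
Each ingredient depends only on boundedness and continuity of the angular function, not on its specific form. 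Alternatively, one can apply Theorem~\ref{thm:top-k-consistency} coordinatewise through a polarization trick: each entry of \(U^{\otimes4}\) is a linear combination of \((c^{\T}U)^4\), and \((c^{\T}U)^4=\bigl((c^{\T}U)^2\bigr)^2\) is again bounded and continuous, so the general version of that theorem's argument applies.

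\textbf{Confidence interval and the other scalars.} Once \(\widehat\sigma^2\to\sigma^2\) is established, assume \(\sigma_{\mu,p}^2>0\), which is implicit for a nondegenerate interval. Slutsky's lemma combined with part~3 of Corollary~\ref{cor:oracle-agca-spectral-clt} gives \(\sqrt{k_n}(\widehat{\mathrm{AVE}}-\mathrm{AVE})/\widehat\sigma\rightsquigarrow N(0,1)\), and the coverage statement follows. The claims for \(\widehat\tau_\mu^{\mathrm{orc}}\) and for simple eigenvalues are identical with \(A=I-\mu\mu^{\T}\) and \(A=b_jb_j^{\T}\), respectively. For a simple eigenvalue, \(\widehat b_j\widehat b_j^{\T}\to b_jb_j^{\T}\) again follows from the isolated-eigenvalue perturbation argument.
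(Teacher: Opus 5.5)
Your proposal is correct and follows essentially the same route as the paper: both rest on the top-\(k_n\) law of large numbers for bounded continuous angular functions from the proof of Theorem~\ref{thm:top-k-consistency}, on consistency of \(\widehat\tau_\mu^{\mathrm{orc}}\), \(\widehat{\mathrm{AVE}}_{\mu,p}^{\mathrm{orc}}\) and \(\widehat P_{\mu,p}^{\mathrm{orc}}\) under the eigengap, and on Slutsky's theorem. The only difference is bookkeeping: the paper absorbs the plug-in parameters through a uniform Lipschitz bound on \((\tau,\mathrm{AVE},P,u)\mapsto\tau^{-1}(u^{\T}Pu-\mathrm{AVE}\norm{u}_2^2)\) over \(\norm{u}_2\le1\), whereas you linearize them through the fourth-moment tensor; your explicit caveat \(\sigma_{\mu,p}^2>0\) for the interval is a sensible addition that the paper only addresses in its later remark on boundary cases.
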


\subsection{Rank standardization beyond consistency}
\label{sec:rank-pareto-clt-scale}

The preceding CLT applies to the oracle sample. The proof of
Theorem~\ref{thm:rank-consistency} shows that rank-Pareto marginal standardization is negligible
at the consistency scale, but this \(o_p(1)\) control does not imply distributional equivalence at
the \(k_n^{-1/2}\) scale. With \(h_\mu(g):=u_\mu(g)u_\mu(g)^{\T}\),
\[
\sqrt{k_n}
\left(
\widehat\Sigma_{\mu,n}^{(k_n),\mathrm{emp}}
-
\widehat\Sigma_{\mu,n}^{(k_n),\mathrm{orc}}
\right)
=
\frac1{\sqrt{k_n}}
\sum_{i\in\mathcal I_n^{(k_n)}}
\{h_\mu(\widehat G_i)-h_\mu(G_i)\}+
\frac1{\sqrt{k_n}}
\left(
\sum_{i\in\widehat{\mathcal I}_n^{(k_n)}}
-
\sum_{i\in\mathcal I_n^{(k_n)}}
\right)
h_\mu(\widehat G_i).
\]
The first term is the angular perturbation induced by estimating the marginal Pareto scales; the
second is the selection perturbation from recomputing the top-\(k_n\) set after rank
standardization. Weighted empirical-tail approximations indicate that both terms can enter at the
same \(k_n^{-1/2}\) scale as the oracle fluctuation. Thus the oracle interval in
Corollary~\ref{cor:oracle-agca-plugin-ci} is a formal interval for the oracle estimator, not
automatically for the rank-Pareto estimator.

A full rank-based CLT would add a Gaussian marginal-estimation correction to
Theorem~\ref{thm:oracle-agca-clt}, under Euclidean-polar regular variation
(Assumption~\ref{ass:erv}), second-order marginal and joint-tail conditions, and differentiability
of the limiting tail-measure functional under componentwise marginal scalings. Supplementary
Section~\ref{app:conditional-rank-margin-clt} records the corresponding conditional delta-method
theorem, but we do not derive primitive rank-tail empirical-process conditions here. Supplementary
Section~\ref{app:simulations-10d-oracle-coverage} illustrates the issue: oracle intervals are
reasonably calibrated with oracle margins, whereas applying the same formula after rank-Pareto
standardization is not uniformly reliable. The rank-standardized simulations and empirical
analyses below therefore use bootstrap resampling, threshold paths, anchor checks, and projector
distances as finite-sample stability diagnostics.

\subsection{Plug-in reconstruction for tail simulation}
\label{sec:estimation-plugin-simulation}

The population simulation results in Section~\ref{sec:population-reconstruction-simulation} use
the rank-\(p\) component space through its orthogonal projector. For an angular direction
\(g\in\Sphere^{d-1}_{+}\), the plug-in AGCA reconstruction uses the empirical projector
\(\widehat P_{\mu,p}\) and is
\[
\widehat{\widetilde g}_{\mu,p}
:=
\frac{a_\mu(g)\mu+\widehat P_{\mu,p}u_\mu(g)}
{\sqrt{a_\mu(g)^2+\norm{\widehat P_{\mu,p}u_\mu(g)}_2^2}}.
\]
Applied to the selected empirical directions \(\widehat G_i\), this gives reconstructed angular
profiles that can be combined with a fitted radial tail model to simulate extremes. For example,
one draws an independent Pareto radial multiplier and attaches it to
\(\widehat{\widetilde G}_{\mu,p,i}\), or to a positive post-projected version if positivity is
enforced as in Supplementary Section~\ref{app:positive-post-projection}.

This plug-in step is controlled by the preceding consistency and oracle inference results.
Theorem~\ref{thm:top-k-consistency} gives consistency of the anchored second-moment
matrix in the oracle case, and Theorem~\ref{thm:rank-consistency} gives the corresponding
consistency when margins are rank-standardized. These results also give consistency of the
eigenvalue residual-risk curve. Under an eigengap, the leading spectral projector is a
continuous function of that matrix, so \(\widehat P_{\mu,p}\) is consistent; in the oracle case,
Corollary~\ref{cor:oracle-agca-spectral-clt}
also gives its first-order distribution. The reconstruction map above is continuous in the
projector on the anchored hemisphere. Remaining errors, such as radial tail-index estimation,
threshold selection, and Monte Carlo error, belong to the radial EVT and simulation layers rather
than to the AGCA eigensolution.

%% file: 04_empirics_portfolios.tex
\section{Empirical study: portfolio-loss extremes}
\label{sec:empirics-portfolios}

The empirical study examines whether daily equity-portfolio loss extremes have a low-dimensional
angular structure and whether that structure is useful for portfolio tail summaries. The object
is the angular law of tail directions obtained after marginal rank-Pareto standardization. Each
extreme day is represented by a radius and a direction on the positive sphere; AGCA is then fit
to the largest radii. A low-rank fit means that, conditional on a large joint portfolio-loss
event, the cross-sectional pattern of standardized losses is well approximated by a small number
of anchored geodesic coordinates. Supplementary Section~\ref{app:empirics-portfolios} gives
data-construction details and secondary diagnostics.

\subsection{Data and tail construction}
\label{sec:emp-port-data}

The data set consists of daily value-weighted Fama--French \(2\times 3\) portfolios
\citep{kenFrenchDataLibrary}. We use four size-based bivariate sorts: Size--Book-to-Market,
Size--Operating Profitability, Size--Investment, and Size--Momentum. Each sort contributes six
portfolios, giving \(d=24\) daily return series. We interpret losses as negative returns, transform
each loss margin to the pseudo-Pareto scale, and form Euclidean radii and directions. The main
AGCA fit uses the largest \(5\%\) radii from the complete daily panel beginning on July 2, 1973.
A modern anomaly-sorted portfolio panel from Open Source Asset Pricing
\citep{openSourceAssetPricing} is used as a robustness check and is reported in
Supplementary Section~\ref{app:emp-port-osap-robustness}.

The analysis uses the canonical anchor \(\mu_0=d^{-1/2}\one\), the balanced
complete-dependence direction. Loadings are therefore departures from a day on which all
standardized portfolio losses participate equally in the extreme direction. This fixed anchor is
used for the main interpretation; the Supplementary Material contains data-adaptive anchor checks.
Whenever shaded bootstrap 95\% diagnostic intervals (DIs) are shown below, they are conditional
stability summaries: after the rank-Pareto transform and the top-\(5\%\) radial sample are fixed,
we resample the selected directions with replacement, refit AGCA, and recompute the displayed
summaries. These DIs do not
include uncertainty from marginal rank estimation, threshold selection, or serial dependence in
daily losses; they are used to assess stability of the fitted angular cloud, while threshold and
anchor sensitivity are checked separately.

\subsection{Effective dimension and portfolio tail summaries}
\label{sec:emp-port-dimension-functionals}

The portfolio-functional check follows Section~\ref{sec:population-portfolio-var}. We use the
capped portfolio excess in \eqref{eq:population-capped-portfolio-excess-score}, evaluated as
\(f_{w,2\norm{w}_2,\norm{w}_2}\), and the normalized Pareto-limit VaR summary in
Corollary~\ref{cor:population-portfolio-var}. With rank-Pareto margins the empirical Pareto index
is \(\alpha=1\), so the VaR comparison is reported as the relative normalized VaR error
\(\exp\{|\log \widetilde C_{w,p}-\log C_w|\}-1\). We evaluate both summaries on the equal-weight
portfolio, all block-equal portfolios, 250 random long-only portfolios, and 250 random
limited-leverage portfolios with \(\norm{w}_1\le1.5\). The rank-level curves average over this
fixed evaluation set; Supplementary Figure~\ref{fig:app-emp-port-ff-class} reports the
class-level breakdown.

\begin{figure}[tbp]
\centering
\appinclude[width=0.95\linewidth]{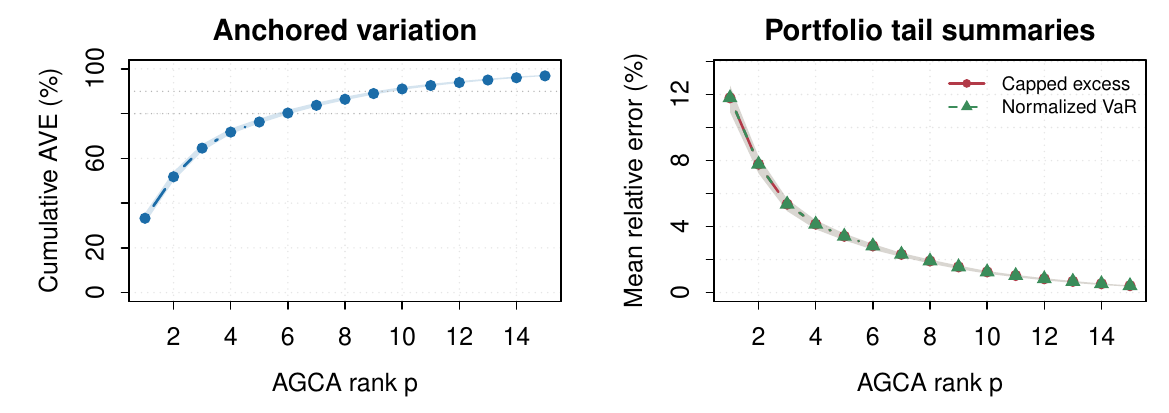}
\caption{Fama--French empirical summaries. Left: cumulative anchored variation explained at the
main \(5\%\) radial threshold. Right: average relative errors for capped excess and normalized
VaR over the portfolio evaluation class. Shaded bands are bootstrap 95\% diagnostic intervals, computed as pointwise \(2.5\%\) and \(97.5\%\) conditional bootstrap percentiles.}
\label{fig:emp-port-spectrum-functionals}
\end{figure}

Figure~\ref{fig:emp-port-spectrum-functionals} shows that the Fama--French portfolio tails have a
concentrated anchored spectrum. The first three AGCs explain \(64.6\%\) of anchored variation,
the first five explain \(76.3\%\), the first eight explain \(86.4\%\), and the first ten explain
\(91.1\%\). The same low-rank structure preserves the portfolio summaries in financially
interpretable units: rank five gives mean relative error \(3.4\%\) for both capped excess and
normalized VaR, rank eight reduces this to \(1.9\%\), rank ten to \(1.25\%\), and rank twelve to
\(0.84\%\). A \(1.25\%\) normalized VaR error means that the AGCA approximation changes the
Pareto-limit VaR level by about \(1.25\%\). The bootstrap DIs are narrow; for rank ten, the
pointwise DI is approximately \(1.14\%\)--\(1.30\%\). Supplementary
Figures~\ref{fig:app-emp-port-anchor} and~\ref{fig:app-emp-port-threshold} show that the
effective-dimension conclusion is stable under data-adaptive anchor choices and over tail
fractions from \(0.5\%\) to \(10\%\).

\subsection{Loading structure}
\label{sec:emp-port-loadings}

\begin{figure}[tbp]
\centering
\appinclude[width=0.95\linewidth]{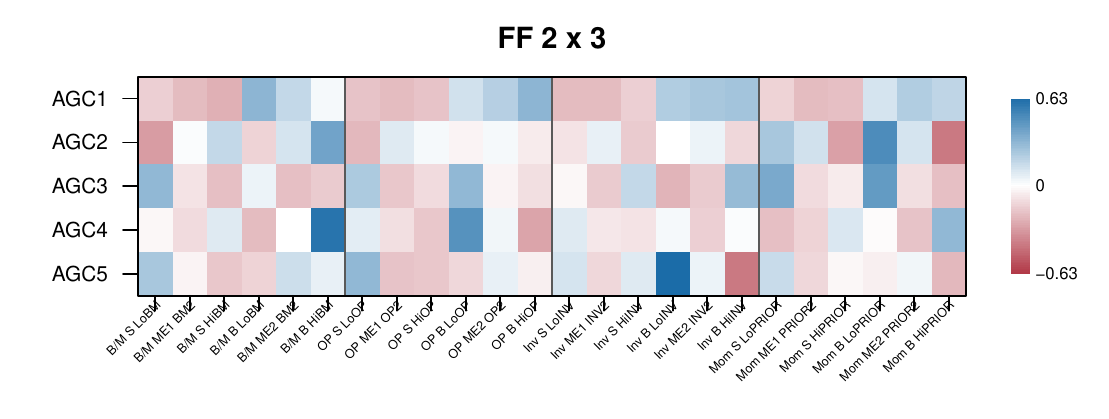}
\caption{Canonical-anchor AGCA loadings for the Fama--French panel. Columns are portfolios,
grouped by bivariate sort, and rows are the first five AGCA loading vectors. Red and blue denote
opposite signed departures from the balanced complete-dependence anchor. Loading signs are
oriented deterministically so that the largest absolute entry in each loading vector is positive.}
\label{fig:emp-port-ff-loadings}
\end{figure}

Figure~\ref{fig:emp-port-ff-loadings} gives the main loading display. The horizontal axis is the
portfolio universe and the rows are AGCA coordinates. This orientation emphasizes that a loading
is a cross-sectional contrast: it identifies which sorted portfolios are above or below the
balanced extreme direction when the corresponding AGCA score is positive. The first few loadings
are not single-asset spikes. They are structured contrasts across the four economically related
\(2\times 3\) blocks, so the low-rank approximation is not simply discarding all but a few
portfolios. Rather, AGCA is capturing repeated cross-sectional patterns in the directions of
joint portfolio losses. In AGC1, the main pattern is a size contrast repeated across the
bivariate sorts: the small-size portfolios tend to load negatively, while the large-size
portfolios tend to load positively. A positive AGC1 score therefore corresponds to an extreme
loss direction tilted toward relatively larger standardized losses among the large-stock
portfolios, after centering at balanced complete dependence.

The sign of a loading is arbitrary up to orientation. The canonical anchor fixes the
interpretation of the coordinate origin, while the deterministic sign convention fixes display
orientation. Data-driven anchors give similar scalar dimension diagnostics, but their loading
vectors are less directly interpretable because the coordinate origin is then a sample-dependent
tail direction rather than the balanced complete-dependence direction. Consequently, the
empirical content is in the blockwise relative signs and magnitudes.
Supplementary Figure~\ref{fig:app-emp-port-agc1-bootstrap-loadings} adds pointwise conditional
bootstrap DIs for AGC1; most displayed signs are stable in this diagnostic sense, with the
least stable entries concentrated near zero.

Because the unconstrained anchored subsphere may leave the positive orthant, the Supplementary
Material also reports the positive post-projection. In this application its effect on the scalar
functionals is small: the raw and post-projected curves are close, and the post-projection does
not change the rank-level conclusion. Supplementary
Section~\ref{app:emp-port-osap-robustness} reports the anomaly-panel robustness exercise.

The empirical conclusion is therefore twofold. First, the directions of daily portfolio-loss
extremes have a concentrated angular structure: in the \(d=24\) panel, ten AGCs explain about
\(91\%\) of anchored variation. Second, this angular reduction has practical content beyond the
descriptive spectrum. It preserves bounded capped-excess summaries and normalized VaR tail
constants across a broad set of portfolio weights, giving a direct empirical role for the
reconstruction theory in Section~\ref{sec:population-reconstruction-simulation}.

\FloatBarrier

%% file: 05_discussion.tex
\section{Discussion}
\label{sec:discussion}

Relative to the extremal-PCA \citep{Drees2021principal,drees2025asymptotic} and spherical-PCA
literatures \citep{fletcher2004principal,jung2012analysis}, AGCA occupies a deliberate niche. It
trades some geometric flexibility for a fixed, interpretable benchmark and for an oracle CLT based
on bounded angular departures. The benchmark makes scores and loadings read as departures from
balanced complete dependence, and the bounded tangent geometry makes the oracle AGCA CLT depend
on angular convergence and a second-order angular-bias condition rather than radial regular
variation or tail-moment assumptions. These features are hard to obtain jointly in the more
flexible manifold or multivariate regular variation approaches.
The restriction can be less parsimonious when the angular law has
curved geometry, strong face structure, or several active regimes, but it gives AGCA its
benchmark-relative interpretation, exact spectral solution, and clean inferential target.

%% file: appendix.tex
\section{Complementary theory}
\label{app:additional-theory}
\label{app:complementary-theory}

This supplementary section collects theoretical material that supports the construction of
AGCA. It records the geometry of the anchored coordinates, studies face- and axis-supported
angular laws, and discusses the canonical anchor, data-adaptive anchors, and anchor-free
spherical variants.

\subsection{Geometry of the anchored coordinates}
\label{app:anchored-geometry}

This subsection collects auxiliary coordinate identities used in the proofs. It also records that
the anchored coordinates agree with the Riemannian log map to first order near the anchor, while
\(u_\mu(g)\) remains bounded near the edge of the hemisphere.

\begin{lemma}[Positive directions lie in every interior hemisphere]
\label{lem:app-positive-hemisphere}
If \(\mu\in\Sphere^{d-1}_{++}\), then
\[
\Sphere^{d-1}_{+}\subset\mathcal H_\mu,
\qquad
\mathcal H_\mu:=\{g\in\Sphere^{d-1}:\ip{\mu}{g}>0\}.
\]
\end{lemma}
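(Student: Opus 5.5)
The argument is a direct sign computation on the inner product. Fix \(\mu\in\Sphere^{d-1}_{++}\) and \(g\in\Sphere^{d-1}_{+}\). Since \(g\) lies on the unit sphere it already belongs to \(\Sphere^{d-1}\), so the only thing to verify is \(\ip{\mu}{g}>0\).

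First write \(\ip{\mu}{g}=\sum_{j=1}^d\mu_jg_j\). Every term is nonnegative, because \(\mu_j>0\) and \(g_j\ge0\). So the sum is at least zero, and it equals zero only if \(g_j=0\) for every \(j\); here we use that each \(\mu_j\) is strictly positive.

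Next, \(g\) cannot vanish identically, since \(\norm{g}_2=1\). So some coordinate \(g_m\) is strictly positive. That gives the strict bound
\[
\ip{\mu}{g}\ge\mu_mg_m>0,
\]
and hence \(g\in\mathcal H_\mu\).

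A quantitative version is also available if needed later: \(\ip{\mu}{g}\ge\min_j\mu_j\sum_jg_j\ge\min_j\mu_j\). This follows from \(\sum_jg_j=\norm{g}_1\ge\norm{g}_2=1\). For the canonical anchor \(\mu_0\) the bound becomes \(d^{-1/2}\), which ties in with the minimax property of \(\mu_0\) in Supplementary Proposition~\ref{prop:app-canonical-anchor}.

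No step here is a real obstacle. The one point that needs care is that strict positivity of all coordinates of \(\mu\) is what rules out a zero inner product. Faces and axes of \(\Sphere^{d-1}_{+}\) are allowed, because they still have at least one positive coordinate.
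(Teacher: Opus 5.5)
Your proof is correct and is the paper's argument: the coordinates of \(g\) are nonnegative with at least one strictly positive (since \(\norm{g}_2=1\)), and the coordinates of \(\mu\) are strictly positive, so \(\ip{\mu}{g}>0\). Your quantitative bound \(\ip{\mu}{g}\ge\min_j\mu_j\) is the same estimate the paper uses later in the proof of Proposition~\ref{prop:app-canonical-anchor}.
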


\begin{proof}
For \(g\in\Sphere^{d-1}_{+}\), all coordinates are nonnegative and at least one coordinate is
positive. Since every coordinate of \(\mu\) is positive, \(\ip{\mu}{g}>0\).
\end{proof}

The lemma is the reason the anchored coordinates are globally valid on the positive angular sample
space. For
\(g\in\mathcal H_\mu\), recall
\[
a_\mu(g):=\ip{\mu}{g},\qquad
u_\mu(g):=(I-\mu\mu^{\T})g,\qquad
v_\mu(g):=g/\ip{\mu}{g}-\mu.
\]
The inverse gnomonic map is
\[
\phi_\mu(v):=\frac{\mu+v}{(1+\norm{v}_2^2)^{1/2}},\qquad v\in\mu^\perp.
\]

\begin{proposition}[Anchored coordinate and membership identities]
\label{prop:app-coordinate-formulas}
For every \(g\in\Sphere^{d-1}_{+}\),
\[
g=\phi_\mu(v_\mu(g)),
\qquad
d_g(g,\mu)
=
\arccos(a_\mu(g))
=
\arcsin\norm{u_\mu(g)}_2
=
\arctan\norm{v_\mu(g)}_2.
\]
For any linear \(W\subset\mu^\perp\),
\[
M_\mu(W)=\phi_\mu(W),
\]
where \(\phi_\mu(W):=\{\phi_\mu(w):w\in W\}\). Consequently,
\[
g\in M_\mu(W)
\iff
v_\mu(g)\in W
\iff
u_\mu(g)\in W.
\]
\end{proposition}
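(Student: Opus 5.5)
The argument is a direct computation in the anchored chart, using only the decomposition $g=a_\mu(g)\mu+u_\mu(g)$ and Lemma~\ref{lem:app-positive-hemisphere}. The lemma gives $a:=a_\mu(g)=\ip{\mu}{g}>0$ for every $g\in\Sphere^{d-1}_{+}$, so $v_\mu(g)$ is well defined. Write $u:=u_\mu(g)$ and $v:=v_\mu(g)=u/a$.

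\emph{Inverse map identity.} Since $\norm{g}_2=1$ and $u\perp\mu$, we have $a^2+\norm{u}_2^2=1$. Hence
\[
1+\norm{v}_2^2=(a^2+\norm{u}_2^2)/a^2=a^{-2},
\]
and
\[
\phi_\mu(v)=a(\mu+u/a)=a\mu+u=g .
\]
This step also needs $a>0$, so that $(1+\norm{v}_2^2)^{1/2}=1/a$ carries the correct sign.

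\emph{Distance identities.} By definition $d_g(g,\mu)=\arccos a$ with $a\in(0,1]$, so the angle $\theta$ lies in $[0,\pi/2)$. Then $\sin\theta=(1-a^2)^{1/2}=\norm{u}_2$ and $\tan\theta=\norm{u}_2/a=\norm{v}_2$. Because $\theta$ is in $[0,\pi/2)$, arcsin and arctan return $\theta$ itself, which gives the arcsin and arctan forms.

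\emph{Identity $M_\mu(W)=\phi_\mu(W)$.} For $w\in W$, the point $\phi_\mu(w)$ is a unit vector in $\Span(\{\mu\}\cup W)$. Its inner product with $\mu$ is $(1+\norm{w}_2^2)^{-1/2}>0$, so it lies in $\mathcal H_\mu$. Conversely, take $m\in M_\mu(W)$. Write $m=c\mu+w$ with $c\in\R$ and $w\in W\subset\mu^\perp$; this splitting is unique because $W\perp\mu$. Then $c=\ip{\mu}{m}>0$ and $\norm{m}_2=1$. By the same computation as before (applied to any $m\in\mathcal H_\mu$), $m=\phi_\mu(w/c)$ with $w/c\in W$.

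\emph{Membership chain.} If $v_\mu(g)\in W$, then $g=\phi_\mu(v_\mu(g))\in\phi_\mu(W)=M_\mu(W)$. Conversely, if $g\in M_\mu(W)$, the converse computation writes $g=\phi_\mu(w')$ for some $w'\in W$. Applying $v_\mu$ to this gives $v_\mu(g)=w'$, because $v_\mu(\phi_\mu(w'))=w'$: the normalization cancels in $g/\ip{\mu}{g}$. So $v_\mu(g)\in W$. Finally, $u=a v$ with $a>0$ and $W$ is a linear subspace, so $v\in W$ if and only if $u\in W$.

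There is no real obstacle. The only care needed is to use $a_\mu(g)>0$ in two places: when taking the square root in $\phi_\mu$, and when passing between $u$ and $v$. The converse inclusion for $M_\mu(W)$ must also use the hemisphere restriction, so that the coefficient $c$ of $\mu$ is positive.
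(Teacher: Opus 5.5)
Your proposal is correct and follows essentially the same route as the paper. Both use the decomposition $g=a_\mu(g)\mu+u_\mu(g)$ with $a_\mu(g)>0$ from Lemma~\ref{lem:app-positive-hemisphere}, evaluate $\phi_\mu(v_\mu(g))$ directly, place the angle in $[0,\pi/2)$ for the distance forms, and write $m\in M_\mu(W)$ as a positive multiple of $\mu+w$ for the reverse inclusion. Your explicit check that $v_\mu(\phi_\mu(w'))=w'$ simply fills in a step the paper states without comment in the membership chain.
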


\begin{proof}
For \(g\in\Sphere^{d-1}_{+}\), Lemma~\ref{lem:app-positive-hemisphere} gives \(a_\mu(g)>0\), and
\[
g=a_\mu(g)\mu+u_\mu(g),\qquad
a_\mu(g)^2+\norm{u_\mu(g)}_2^2=1.
\]
Since \(d_g(g,\mu)=\arccos(a_\mu(g))\), this gives the first three distance identities. The last
one follows from \(v_\mu(g)=u_\mu(g)/a_\mu(g)\) and
\(\tan(d_g(g,\mu))=\norm{u_\mu(g)}_2/a_\mu(g)\).
Since \(v_\mu(g)=u_\mu(g)/a_\mu(g)\), the identity \(g=\phi_\mu(v_\mu(g))\) follows from
\[
\phi_\mu(v_\mu(g))
=
\frac{\mu+u_\mu(g)/a_\mu(g)}
{(1+\norm{u_\mu(g)}_2^2/a_\mu(g)^2)^{1/2}}
=
a_\mu(g)\mu+u_\mu(g)
=g.
\]
If \(w\in W\), then \(\phi_\mu(w)\in\Sphere^{d-1}\), has positive inner product with \(\mu\),
and belongs to \(\Span(\{\mu\}\cup W)\), so \(\phi_\mu(w)\in M_\mu(W)\). Conversely, if
\(h\in M_\mu(W)\), then \(h=a(\mu+w)\) for some \(a>0\) and \(w\in W\). Since
\(h\in\Sphere^{d-1}\) and \(w\perp\mu\), \(a=(1+\norm{w}_2^2)^{-1/2}\), so
\(h=\phi_\mu(w)\). Hence \(M_\mu(W)=\phi_\mu(W)\). The membership equivalences follow because
\(g\in M_\mu(W)\) if and only if \(v_\mu(g)\in W\), and
\(u_\mu(g)=a_\mu(g)v_\mu(g)\).
\end{proof}

These identities are the coordinate facts used by the AGCA criterion. The membership equivalence
says that anchored great subspheres become linear subspaces in either \(u_\mu\)- or
\(v_\mu\)-coordinates. Proposition~\ref{prop:projection-identity} gives the additional residual
identity that turns spherical reconstruction into the spectral second-moment problem in the main
text.

\begin{proposition}[Log-map and gnomonic coordinate formulas]
\label{prop:app-log-gnomonic-formulas}
For \(v\in\mu^\perp\),
\[
a_\mu(\phi_\mu(v))=\frac{1}{(1+\norm{v}_2^2)^{1/2}},
\qquad
d_g(\phi_\mu(v),\mu)=\arctan\norm{v}_2.
\]
For \(g\in\Sphere^{d-1}_{+}\), let \(\Logmap_\mu(g)\) denote the Riemannian logarithm map on the
unit sphere at \(\mu\). It satisfies
\[
\Logmap_\mu(g)
=
\frac{\arctan\norm{v_\mu(g)}_2}{\norm{v_\mu(g)}_2}\,v_\mu(g)
=
\frac{\arcsin\norm{u_\mu(g)}_2}{\norm{u_\mu(g)}_2}\,u_\mu(g),
\]
with the prefactors interpreted by continuity as one at \(g=\mu\). Hence
\[
\Logmap_\mu(g)=v_\mu(g)+O\{\norm{v_\mu(g)}_2^3\}
=u_\mu(g)+O\{\norm{u_\mu(g)}_2^3\}
\quad\text{as }g\to\mu.
\]
Finally, for any linear \(W\subset\mu^\perp\),
\[
d_g(g,M_\mu(W))
=
\arctan\!\left[
\frac{\norm{v_\mu(g)-\Proj_Wv_\mu(g)}_2}
{(1+\norm{\Proj_Wv_\mu(g)}_2^2)^{1/2}}
\right].
\]
\end{proposition}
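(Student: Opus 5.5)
The statement to prove is Proposition~\ref{prop:app-log-gnomonic-formulas}, which has four parts. I would prove them in order, leaning on the coordinate identities of Proposition~\ref{prop:app-coordinate-formulas}.

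\emph{First part.} For \(v\in\mu^\perp\), we have \(\ip{\mu}{\mu+v}=1\). Dividing by \((1+\norm{v}_2^2)^{1/2}\) gives \(a_\mu(\phi_\mu(v))\). The gnomonic coordinate of \(\phi_\mu(v)\) is \(v\) itself, because \(\phi_\mu(v)/a_\mu(\phi_\mu(v))=\mu+v\). So the distance identity \(d_g=\arctan\norm{v_\mu}_2\) from Proposition~\ref{prop:app-coordinate-formulas} gives \(\arctan\norm{v}_2\).

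\emph{Log map.} I would use the standard spherical formula
\[
\Logmap_\mu(g)=\frac{\theta}{\sin\theta}(g-\cos\theta\,\mu),\qquad \theta=d_g(g,\mu)<\pi/2 .
\]
Here \(g-\cos\theta\,\mu=u_\mu(g)\) and \(\sin\theta=\norm{u_\mu(g)}_2\). With \(\theta=\arcsin\norm{u_\mu(g)}_2\), this is the \(u\)-form. The \(v\)-form follows because \(v_\mu=u_\mu/a_\mu\) is a positive rescaling of \(u_\mu\), so \(u_\mu/\norm{u_\mu}_2=v_\mu/\norm{v_\mu}_2\), and \(\theta=\arctan\norm{v_\mu}_2\).

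\emph{Expansions.} These come from Taylor expanding \(\arctan(s)/s=1-s^2/3+O(s^4)\) and \(\arcsin(s)/s=1+s^2/6+O(s^4)\). The prefactor equals one at \(s=0\) by continuity.

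\emph{Distance formula.} This is the step needing the most care. By Proposition~\ref{prop:projection-identity}, \(\sin^2 d_g(g,M_\mu(W))=\norm{(I-\Proj_W)u}_2^2\) with \(u=u_\mu(g)\). The distance lies in \([0,\pi/2)\): the projection point has positive inner product with \(g\), namely \((a^2+\norm{\Proj_Wu}_2^2)^{1/2}\). Hence
\[
\cos^2 d_g=1-\norm{(I-\Proj_W)u}_2^2=a^2+\norm{\Proj_Wu}_2^2 .
\]
This uses \(a^2+\norm{u}_2^2=1\) and the orthogonal decomposition of \(u\). Then
\[
\tan d_g=\frac{\norm{(I-\Proj_W)u}_2}{(a^2+\norm{\Proj_Wu}_2^2)^{1/2}} .
\]
Dividing numerator and denominator by \(a>0\), which holds by Lemma~\ref{lem:app-positive-hemisphere}, and using linearity of \(\Proj_W\) with \(v=u/a\), gives exactly
\[
\norm{v-\Proj_Wv}_2\big/(1+\norm{\Proj_Wv}_2^2)^{1/2} .
\]
Applying \(\arctan\) completes the proof. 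The only delicate point is confirming that the distance lies in \([0,\pi/2)\), so that \(\arctan\) inverts \(\tan\) there.
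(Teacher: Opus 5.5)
Your proposal is correct and follows essentially the same route as the paper: the standard spherical log-map formula, Taylor expansion of the prefactors, and the projection identity together with \(d_g<\pi/2\) for the residual. The only differences are cosmetic. In the last step you form \(\tan d_g\) from \(\sin d_g\) and \(\cos d_g\) in \(u\)-coordinates and then divide by \(a_\mu(g)\), whereas the paper rewrites \(\sin d_g\) in \(v\)-coordinates and applies \(\arcsin x=\arctan(x/(1-x^2)^{1/2})\). In the first part, \(\phi_\mu(v)\) need not lie in \(\Sphere^{d-1}_{+}\), so you should either note that the coordinate identities hold on all of \(\mathcal H_\mu\), or compute \(\arccos((1+\norm{v}_2^2)^{-1/2})=\arctan\norm{v}_2\) directly as the paper does.
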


\begin{proof}
The identities for \(a_\mu(\phi_\mu(v))\) and \(d_g(\phi_\mu(v),\mu)\) follow from
\(\ip{\mu}{v}=0\), which gives
\[
\ip{\mu}{\phi_\mu(v)}
=
(1+\norm{v}_2^2)^{-1/2}.
\]
The distance formula follows because
\(\arccos((1+\norm{v}_2^2)^{-1/2})=\arctan\norm{v}_2\).
The log-map identity is the standard unit-sphere formula
\[
\Logmap_\mu(g)=\frac{\theta}{\sin\theta}(g-\cos(\theta)\mu),
\qquad
\theta=d_g(g,\mu),
\]
together with \(\cos\theta=a_\mu(g)\), \(\sin\theta=\norm{u_\mu(g)}_2\), and
Proposition~\ref{prop:app-coordinate-formulas}. The expansion follows from the Taylor series of
\(\arctan x/x\) and \(\arcsin x/x\).

For the gnomonic residual, write \(v=v_\mu(g)\), \(v_W=\Proj_Wv\), and
\(v_\perp=v-v_W\). Since \(u_\mu(g)=v/(1+\norm{v}_2^2)^{1/2}\) and
\(\Proj_Wu_\mu(g)=v_W/(1+\norm{v}_2^2)^{1/2}\),
Proposition~\ref{prop:projection-identity} gives
\[
\sin d_g(g,M_\mu(W))
=
\frac{\norm{v_\perp}_2}
{(1+\norm{v_W}_2^2+\norm{v_\perp}_2^2)^{1/2}}.
\]
Since \(d_g(g,M_\mu(W))<\pi/2\),
\[
d_g(g,M_\mu(W))
=
\arcsin\left(\frac{\norm{v_\perp}_2}
{(1+\norm{v_W}_2^2+\norm{v_\perp}_2^2)^{1/2}}\right).
\]
Using \(\arcsin x=\arctan(x/(1-x^2)^{1/2})\) gives the stated formula.
\end{proof}

These formulas place the anchored coordinates relative to standard spherical geometry. The
gnomonic map \(\phi_\mu\) gives a global coordinate map for the visible hemisphere, while the
log-map expansion explains why both anchored coordinates behave like usual tangent coordinates
near the anchor. Away from the anchor, the two coordinates differ: \(v_\mu(g)\) keeps the exact
gnomonic linearization, whereas \(u_\mu(g)\) remains bounded and is therefore the coordinate used
for the AGCA second moment.

The bounded loss \(\sin^2 d_g\) is chosen to preserve this exact projection geometry, not as a
generic replacement for squared geodesic distance. On the relevant hemisphere it is monotone in
the geodesic residual, comparable to \(d_g^2\), and has the same zero-risk sets. Its advantage is
that the residual becomes an ordinary Euclidean squared residual in the bounded coordinates
\(u_\mu(g)\), which is what leads to the eigensolution in the main text.

\begin{proposition}[Squared geodesic loss and bounded geodesic loss]
\label{prop:app-loss-comparability}
For any linear subspace \(W\subset\mu^\perp\) and any random \(G\in\Sphere^{d-1}_{+}\),
\[
\left(2/\pi\right)^2
d_g(G,M_\mu(W))^2
\le
\sin^2 d_g(G,M_\mu(W))
\le
d_g(G,M_\mu(W))^2
\]
almost surely. Consequently, zero risk under \(\sin^2 d_g\) is equivalent to exact geodesic
representation in \(M_\mu(W)\).
\end{proposition}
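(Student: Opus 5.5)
The plan is to reduce the statement to a pointwise inequality on a single angle. Fix \(g\in\Sphere^{d-1}_{+}\) and set \(\theta:=d_g(g,M_\mu(W))\). Once a bound on \(\theta\) is in hand, the comparison of \(\sin^2\theta\) with \(\theta^2\) is the elementary Jordan-type inequality.

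\emph{Locating \(\theta\) in \([0,\pi/2)\).} By Proposition~\ref{prop:projection-identity}, \(\sin^2\theta=\norm{u_\mu(g)-\Proj_Wu_\mu(g)}_2^2\le\norm{u_\mu(g)}_2^2\). I will argue directly rather than through that identity. Since \(\mu\in M_\mu(W)\), we have \(\theta\le d_g(g,\mu)=\arccos a_\mu(g)\). Lemma~\ref{lem:app-positive-hemisphere} gives \(a_\mu(g)>0\), so \(\theta<\pi/2\). This step is the only place where positivity of \(g\) and interiority of \(\mu\) enter, and it is the point that needs care. Without it, \(\theta\) could approach \(\pi\) and the lower bound would fail.

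\emph{Comparing \(\sin^2\theta\) with \(\theta^2\).} On \([0,\pi/2]\), \(\sin\) is concave with \(\sin 0=0\) and \(\sin(\pi/2)=1\). Concavity gives \(\sin\theta\ge(2/\pi)\theta\), and the standard bound gives \(\sin\theta\le\theta\). Both sides are nonnegative, so squaring yields
\[
(2/\pi)^2\theta^2\le\sin^2\theta\le\theta^2 .
\]
Substituting \(g=G(\omega)\) gives the almost-sure statement, because \(G\in\Sphere^{d-1}_{+}\) almost surely.

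\emph{Zero risk.} The lower bound shows that \(\E[\sin^2 d_g(G,M_\mu(W))]=0\) forces \(d_g(G,M_\mu(W))=0\) almost surely; the upper bound gives the converse. It remains to pass from distance zero to membership. Write \(\Proj_{M_\mu(W)}(g)\) as in \eqref{eq:projection-reconstruction}. This projection is attained and lies in \(M_\mu(W)\), and the infimum is achieved there. Distance zero then means \(g=\Proj_{M_\mu(W)}(g)\in M_\mu(W)\).

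Alternatively, membership follows from the identity \eqref{eq:projection-identity}. Distance zero gives \(u_\mu(g)=\Proj_Wu_\mu(g)\in W\), and Proposition~\ref{prop:app-coordinate-formulas} then yields \(g\in M_\mu(W)\). I prefer this second route, since it avoids separately justifying that the infimum is attained. It gives exact geodesic representation, \(G\in M_\mu(W)\) almost surely, and conversely.
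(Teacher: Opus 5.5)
Your proof is correct and follows the paper's argument: you bound the residual angle by $d_g(G,\mu)<\pi/2$ using $\mu\in M_\mu(W)$ and $G\in\mathcal H_\mu$, then apply $(2/\pi)t\le\sin t\le t$ on $[0,\pi/2]$. Your extra step from zero distance to actual membership $G\in M_\mu(W)$, via \eqref{eq:projection-identity} and Proposition~\ref{prop:app-coordinate-formulas}, tightens the paper's one-line zero-risk remark, which implicitly treats exact representation as zero distance.
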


\begin{proof}
Because \(\mu\in M_\mu(W)\) and \(G\in\mathcal H_\mu\), \(d_g(G,M_\mu(W))<\pi/2\). The
inequality follows from \((2/\pi)t\le\sin t\le t\) on \([0,\pi/2]\). The zero-risk statement
follows because \(\sin t=0\) if and only if \(t=0\) on this interval.
\end{proof}

\subsection{Face and axis-supported angular laws}
\label{app:axis-supported-agca}

This subsection studies how AGCA behaves when some variables have asymptotically independent
extremes. At the angular level, such episodes may appear as mass on coordinate faces or near
coordinate axes: one subset of variables dominates the extreme direction while the remaining
coordinates are comparatively small. The goal here is to show what the AGCA population objects do
once such angular directions are part of the limiting law \(G\).

The results below make three simple points. First, face-supported angular laws enter AGCA through
bounded anchored departures, whose span determines the component rank. Second, if one variable
dominates the extreme direction, AGCA returns the corresponding contrast from the anchor toward
that variable, and this conclusion is stable when the mass is only near the axis. Third, if all
axes are represented symmetrically,
there is no preferred one-dimensional contrast, so the spectrum is tied rather than artificially
low-dimensional.

The first result makes precise the statement that coordinate faces do not cause a singularity for
AGCA. The bounded coordinate \(u_\mu(G)\) converts face-supported angular laws into ordinary
bounded tangent vectors, and the population component space is exactly the span of those tangent
departures. Thus asymptotically independent face support is compatible with the same
eigendecomposition used for jointly extreme angular laws.

\begin{proposition}[Anchored span of face-supported laws]
\label{prop:app-face-supported-span}
Let \(G\in\Sphere^{d-1}_{+}\), let \(U_\mu:=u_\mu(G)\), and let
\(\Sigma_\mu:=\E[U_\mu U_\mu^{\T}]\). Then
\(\operatorname{range}(\Sigma_\mu)
=
\Span\{u_\mu(g):g\in\operatorname{supp}(G)\}\).
Consequently, if \(G\) is supported on the coordinate face
\[
F_S:=\{g\in\Sphere^{d-1}_{+}:g_j=0\text{ for }j\notin S\},
\qquad S\subset\{1,\ldots,d\},
\]
then \(r_\mu=\rank(\Sigma_\mu)\le \min\{|S|,d-1\}\).
\end{proposition}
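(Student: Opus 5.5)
The plan is to work with the null space of \(\Sigma_\mu\). For \(x\in\R^d\) we have \(x^{\T}\Sigma_\mu x=\E[\ip{x}{U_\mu}^2]\), and since \(\Sigma_\mu\) is symmetric positive semidefinite, \(\Sigma_\mu x=0\) holds if and only if \(\ip{x}{U_\mu}=0\) almost surely. Let \(V:=\Span\{u_\mu(g):g\in\operatorname{supp}(G)\}\). Because \(\operatorname{range}(\Sigma_\mu)=\ker(\Sigma_\mu)^\perp\), it suffices to show \(\ker(\Sigma_\mu)=V^\perp\).

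The two inclusions go as follows.

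\emph{\(V^\perp\subset\ker\Sigma_\mu\).} Take \(x\in V^\perp\). Then \(\ip{x}{u_\mu(g)}=0\) for every \(g\in\operatorname{supp}(G)\). Since \(\Prob[G\in\operatorname{supp}(G)]=1\) (the support of a Borel law on the separable space \(\Sphere^{d-1}\) has full mass), we get \(\ip{x}{U_\mu}=0\) almost surely.

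\emph{\(\ker\Sigma_\mu\subset V^\perp\).} Take \(x\) with \(\ip{x}{U_\mu}=0\) almost surely. The map \(g\mapsto\ip{x}{u_\mu(g)}\) is continuous, so \(Z:=\{g:\ip{x}{u_\mu(g)}=0\}\) is closed and has full \(G\)-mass. The support is the smallest closed set of full mass, so \(\operatorname{supp}(G)\subset Z\). Hence \(x\perp u_\mu(g)\) for every support point, and therefore \(x\perp V\). This is the one step needing care: one must use closedness together with the minimality of the support, not merely almost-sure statements.

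For the face bound, suppose \(\operatorname{supp}(G)\subset F_S\). The face \(F_S\) is closed, and for \(g\in F_S\), \(u_\mu(g)=(I-\mu\mu^{\T})g\) lies in \((I-\mu\mu^{\T})\Span\{e_j:j\in S\}\), whose dimension is at most \(|S|\). Also \(u_\mu(g)\in\mu^\perp\), which has dimension \(d-1\). So \(V\) lies in a subspace of dimension at most \(\min\{|S|,d-1\}\), and by the range identity \(r_\mu=\dim V\le\min\{|S|,d-1\}\).
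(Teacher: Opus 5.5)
Your proof is correct and follows the paper's route: you identify \(\ker(\Sigma_\mu)\) with the orthogonal complement of the support span via \(x^{\T}\Sigma_\mu x=\E[\ip{x}{U_\mu}^2]\) and continuity of \(u_\mu\), and you bound the face case through \((I-\mu\mu^{\T})\Span\{e_j:j\in S\}\). You spell out the closed-set and minimal-support step that the paper compresses into ``since \(u_\mu\) is continuous''; working in all of \(\R^d\) rather than inside \(\mu^\perp\) is harmless, because \(\mu\) lies in both \(\ker(\Sigma_\mu)\) and \(V^\perp\).
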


\begin{proof}
For any \(x\in\mu^\perp\),
\[
x^{\T}\Sigma_\mu x=\E[(x^{\T}U_\mu)^2].
\]
Hence \(x\in\ker(\Sigma_\mu)\) if and only if \(x^{\T}U_\mu=0\) almost surely. Since
\(u_\mu\) is continuous, this is equivalent to \(x\) being orthogonal to
\(u_\mu(g)\) for every \(g\in\operatorname{supp}(G)\). Thus
\(\ker(\Sigma_\mu)=\Span\{u_\mu(g):g\in\operatorname{supp}(G)\}^{\perp}\) inside
\(\mu^\perp\). Since \(\Sigma_\mu\) is symmetric on \(\mu^\perp\), this gives the range identity.
If \(G\) is supported on \(F_S\), then \(G\in E_S:=\Span\{e_j:j\in S\}\) almost surely, and
\(u_\mu(G)=(I-\mu\mu^{\T})G\) lies in \((I-\mu\mu^{\T})E_S\). This space has dimension at most
\(\min\{|S|,d-1\}\), so the same bound holds for \(\rank(\Sigma_\mu)\).
\end{proof}

The next result gives the population loading for the simplest asymptotically independent
episode: all angular mass is on one coordinate axis. In that case AGCA has a single nonzero
population component. With the canonical anchor, the loading has one positive coordinate for the
dominant variable and equal negative coordinates for the others, so it is exactly the contrast
from balanced co-movement toward that axis. The same conclusion is stable for laws concentrated
near, rather than exactly on, the axis.

\begin{proposition}[Axis loading and near-axis stability]
\label{prop:app-axis-loading}
Let \(e_j\) be the \(j\)th coordinate vector and let
\[
u_j:=u_\mu(e_j)=e_j-\mu_j\mu,\qquad \gamma_j:=\norm{u_j}_2^2=1-\mu_j^2.
\]
If \(G=e_j\) almost surely, then \(\Sigma_\mu=u_j u_j^{\T}\), \(r_\mu=1\),
and the unique rank-one loading, up to sign, is
\(b_j^\star:=u_j/\norm{u_j}_2\).
For the canonical anchor \(\mu_0=d^{-1/2}\one\),
\[
b_j^\star
=
\frac{e_j-d^{-1}\one}{\sqrt{1-1/d}}.
\]
More generally, suppose \(\E[\norm{G-e_j}_2^2]\le \delta^2\). Then
\[
\norm{\Sigma_\mu-u_j u_j^{\T}}_{\mathrm{op}}
\le
2\sqrt{\gamma_j}\,\delta+\delta^2.
\]
If \(\varepsilon_j:=2\sqrt{\gamma_j}\,\delta+\delta^2< \gamma_j/2\), the leading
rank-one population projector \(P_{\mu,1}\) of \(\Sigma_\mu\) is unique and satisfies
\[
\norm{P_{\mu,1}-b_j^\star b_j^{\star\T}}_{\mathrm{op}}
\le
\frac{2\varepsilon_j}{\gamma_j}.
\]
\end{proposition}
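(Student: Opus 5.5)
The exact-axis case is a direct computation, and the near-axis case is a perturbation bound followed by a Davis--Kahan type projector estimate. If \(G=e_j\) almost surely, then \(U_\mu=u_\mu(e_j)=(I-\mu\mu^{\T})e_j=e_j-\mu_j\mu=u_j\) deterministically. Hence \(\Sigma_\mu=u_ju_j^{\T}\) and \(\gamma_j=\norm{u_j}_2^2=1-\mu_j^2\). Since \(\mu\in\Sphere^{d-1}_{++}\) and \(d\ge2\), we have \(\mu_j<1\), so \(\gamma_j>0\) and \(r_\mu=1\). The range is \(\Span(u_j)\), and Corollary~\ref{cor:anchored-rank} identifies it as the unique exact component space, so the loading is \(u_j/\norm{u_j}_2\) up to sign. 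For \(\mu_0\), we have \(\mu_j\mu=d^{-1}\one\) and \(\gamma_j=1-1/d\), which gives the displayed formula.

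For the perturbation bound, I will write \(U_\mu=u_j+\Delta\) with \(\Delta:=(I-\mu\mu^{\T})(G-e_j)\). Since \(I-\mu\mu^{\T}\) is a projector, \(\norm{\Delta}_2\le\norm{G-e_j}_2\), so \(\E\norm{\Delta}_2^2\le\delta^2\). Expanding gives
\[
\Sigma_\mu-u_ju_j^{\T}=\E[u_j\Delta^{\T}+\Delta u_j^{\T}]+\E[\Delta\Delta^{\T}].
\]
The first term is \(u_j m^{\T}+m u_j^{\T}\) with \(m:=\E\Delta\). By Jensen, \(\norm{m}_2\le\delta\), so its operator norm is at most \(2\norm{u_j}_2\norm{m}_2\le2\sqrt{\gamma_j}\,\delta\). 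The second term is positive semidefinite with operator norm at most \(\tr\E[\Delta\Delta^{\T}]=\E\norm{\Delta}_2^2\le\delta^2\). Summing the two gives \(\varepsilon_j\).

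For the projector bound, the unperturbed matrix \(u_ju_j^{\T}\) has eigenvalues \(\gamma_j,0,\ldots,0\) on \(\mu^\perp\), so its gap is \(\gamma_j\). By Weyl's inequality, \(\lambda_1(\Sigma_\mu)\ge\gamma_j-\varepsilon_j\) and \(\lambda_2(\Sigma_\mu)\le\varepsilon_j\). The condition \(\varepsilon_j<\gamma_j/2\) then gives \(\lambda_1>\lambda_2\), so \(P_{\mu,1}\) is unique. I will apply the Davis--Kahan \(\sin\Theta\) theorem in the form \(\norm{P-P'}_{\mathrm{op}}\le \norm{E}_{\mathrm{op}}/\mathrm{gap}\), with the gap taken between the top eigenvalue of one matrix and the rest of the spectrum of the other. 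Separating \(\gamma_j\) from \(\operatorname{spec}(\Sigma_\mu)\setminus\{\lambda_1\}\subset(-\infty,\varepsilon_j]\) gives a gap of at least \(\gamma_j-\varepsilon_j>\gamma_j/2\). Therefore \(\norm{P_{\mu,1}-b_j^\star b_j^{\star\T}}_{\mathrm{op}}\le\varepsilon_j/(\gamma_j/2)=2\varepsilon_j/\gamma_j\). For rank-one projectors the operator-norm difference equals \(\sin\Theta\), so the \(\sin\Theta\) form gives exactly this bound.

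The main obstacle is choosing the Davis--Kahan variant with the right constant. In the version with the gap measured against the perturbed spectrum, the bound is \(\varepsilon_j/(\gamma_j-\varepsilon_j)\le 2\varepsilon_j/\gamma_j\) under the assumption. If that version proves awkward, a direct fallback is available. Take \(\Sigma_\mu v=\lambda_1 v\) with \(v\) a unit vector, write \(v=\cos\theta\, b_j^\star+\sin\theta\, w\) with \(w\perp b_j^\star\), and bound \(\sin\theta\) by applying \((I-b_j^\star b_j^{\star\T})\) to the eigen-equation.
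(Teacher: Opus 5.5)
Your proposal is correct and follows the paper's proof essentially step for step: the exact-axis case is a direct computation of \(\Sigma_\mu=u_ju_j^{\T}\), the same expansion of \(\Sigma_\mu\) around \(u_ju_j^{\T}\) uses \(\norm{\E\Delta}_2\le\delta\) and \(\E\norm{\Delta}_2^2\le\delta^2\), and a Davis--Kahan bound controls the projector. Two small differences: your Weyl step makes explicit the uniqueness of \(P_{\mu,1}\), which the paper leaves implicit, and measuring the gap against the perturbed spectrum gives the slightly sharper bound \(\varepsilon_j/(\gamma_j-\varepsilon_j)\), which the hypothesis \(\varepsilon_j<\gamma_j/2\) relaxes to the stated \(2\varepsilon_j/\gamma_j\).
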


\begin{proof}
If \(G=e_j\) almost surely, then \(U_\mu=u_j\) almost surely, so
\(\Sigma_\mu=u_j u_j^{\T}\). Since \(\mu\in\Sphere^{d-1}_{++}\), \(\mu_j<1\) for \(d\ge2\), and
therefore \(\gamma_j>0\). The rank-one eigensolution and the canonical-anchor formula follow
immediately from \(u_{\mu_0}(e_j)=e_j-d^{-1}\one\).
For the near-axis statement, write \(V:=u_\mu(G)-u_j=(I-\mu\mu^{\T})(G-e_j)\). Since
\(I-\mu\mu^{\T}\) is an orthogonal projector,
\(\E[\norm{V}_2^2]\le\delta^2\). Also
\(\norm{\E V}_2\le\E\norm{V}_2\le\delta\). Expanding
\[
\Sigma_\mu
=
\E[(u_j+V)(u_j+V)^{\T}]
=
u_j u_j^{\T}+u_j(\E V)^{\T}+(\E V)u_j^{\T}+\E[V V^{\T}]
\]
gives
\[
\norm{\Sigma_\mu-u_j u_j^{\T}}_{\mathrm{op}}
\le
2\norm{u_j}_2\norm{\E V}_2+\E[\norm{V}_2^2]
\le
2\sqrt{\gamma_j}\,\delta+\delta^2.
\]
The rank-one projector bound is the Davis--Kahan bound applied to the perturbation of
\(u_j u_j^{\T}\), whose leading eigengap is \(\gamma_j\).
\end{proof}

The final result records an important negative case. AGCA can represent axis-supported angular
laws, but it does not force them to be low-dimensional. If a canonical-anchor angular law puts
equal mass on every coordinate axis, every axis contrast is equally important. The population
spectrum is then completely tied on \(\mu_0^\perp\), so individual loadings are not identified.
This is the appropriate diagnostic: there is finite axis-supported angular structure, but no
preferred one-dimensional component.

\begin{proposition}[Symmetric axis support has no preferred loading]
\label{prop:app-symmetric-axis-support}
Let \(\mu_0=d^{-1/2}\one\), and suppose
\(\Prob[G=e_j]=1/d\) for \(j=1,\ldots,d\). Then
\(\Sigma_{\mu_0}
=
\frac1d\left(I-\frac{\one\one^{\T}}{d}\right)\).
Hence all \(d-1\) nonzero eigenvalues are equal to \(1/d\), any orthonormal basis of
\(\mu_0^\perp\) is a population loading basis, and
\[
\mathrm{AVE}_{\mu_0,p}=\frac{p}{d-1},
\qquad p=0,\ldots,d-1.
\]
\end{proposition}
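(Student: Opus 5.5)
The plan is to compute \(\Sigma_{\mu_0}\) directly from the discrete law. Then I read off the spectrum, the loading non-identifiability, and the explained variation from Theorem~\ref{thm:population-eigensolution} and the definitions in \eqref{eq:population-rank-diagnostics}.

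\emph{Computing the matrix.} By Proposition~\ref{prop:app-axis-loading} with \(\mu=\mu_0\), we have \(u_{\mu_0}(e_j)=e_j-d^{-1}\one\). With \(P_0:=I-d^{-1}\one\one^{\T}\), this can be written \(u_{\mu_0}(e_j)=P_0e_j\), since \(P_0=I-\mu_0\mu_0^{\T}\). Therefore
\[
\Sigma_{\mu_0}=\frac1d\sum_{j=1}^d P_0e_je_j^{\T}P_0=\frac1d P_0\Big(\sum_{j=1}^d e_je_j^{\T}\Big)P_0=\frac1d P_0IP_0=\frac1d P_0 ,
\]
using that \(P_0\) is symmetric and idempotent. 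This gives the stated formula.

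\emph{Spectrum and loadings.} The matrix \(P_0\) is the orthogonal projector onto \(\mu_0^\perp\), a space of dimension \(d-1\). Restricted to \(\mu_0^\perp\), the matrix \(\Sigma_{\mu_0}\) therefore equals \(d^{-1}\) times the identity, so \(\lambda_1=\cdots=\lambda_{d-1}=1/d\). Every unit vector in \(\mu_0^\perp\) is an eigenvector with this eigenvalue. Hence any orthonormal basis \(b_1,\ldots,b_{d-1}\) of \(\mu_0^\perp\) realizes the spectral decomposition \eqref{eq:population-spectral-decomposition}. By Theorem~\ref{thm:population-eigensolution}, each \(\Span(b_1,\ldots,b_p)\) is then a minimizer. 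I would make this explicit through the risk identity: \(\mathcal R_{\mu_0}(W)=\tr((I-\Proj_W)\Sigma_{\mu_0})=(d-1-p)/d\) for every \(p\)-dimensional \(W\subset\mu_0^\perp\). So every \(p\)-dimensional subspace is optimal, and no particular loading is preferred.

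\emph{Explained variation.} The total variation is \(\tau_{\mu_0}=\tr(\Sigma_{\mu_0})=(d-1)/d\). As a check, \(\E[1-\ip{\mu_0}{G}^2]=1-1/d\), which agrees. Then \(\mathrm{AVE}_{\mu_0,p}=(p/d)/((d-1)/d)=p/(d-1)\) for \(p=0,\ldots,d-1\), including the empty-sum case \(p=0\).

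There is no real obstacle here. The only point needing care is the identification of \(\sum_j e_je_j^{\T}=I\) together with the projector identity \(P_0^2=P_0\).
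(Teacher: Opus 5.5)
Your proposal is correct and matches the paper's proof: compute \(\Sigma_{\mu_0}\) directly from \(u_{\mu_0}(e_j)=e_j-d^{-1}\one\), recognize \(I-\one\one^{\T}/d\) as the orthogonal projector onto \(\mu_0^\perp\), and read off the spectrum and \(\mathrm{AVE}_{\mu_0,p}\). Your factorization \(P_0\bigl(\sum_j e_je_j^{\T}\bigr)P_0\) and the explicit risk identity \(\mathcal R_{\mu_0}(W)=(d-1-p)/d\) for every \(p\)-dimensional \(W\subset\mu_0^\perp\) simply spell out steps the paper states as following directly.
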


\begin{proof}
For the canonical anchor, \(u_{\mu_0}(e_j)=e_j-d^{-1}\one\). Therefore
\[
\Sigma_{\mu_0}
=
\frac1d\sum_{j=1}^d
(e_j-d^{-1}\one)(e_j-d^{-1}\one)^{\T}
=
\frac1d\left(I-\frac{\one\one^{\T}}{d}\right).
\]
The matrix in parentheses is the orthogonal projector onto \(\mu_0^\perp\). The eigenvalue and
anchored-variation statements follow directly.
\end{proof}

\subsection{Anchor choice and anchor-free variants}
\label{app:alternative-routes}

Once the anchor is fixed, Theorem~\ref{thm:population-eigensolution} identifies the optimal AGCA
component spaces by a single eigendecomposition. The substantive modeling choice is therefore
upstream: which benchmark should define the origin of the score system, and should one work in
the anchored great-subsphere family at all?

The anchor has two roles. First, it defines the zero-departure angular profile: a rank-zero
anchored approximation replaces every angular observation by \(\mu\). Second, it fixes the origin
of the component scores, so the loadings are contrasts describing how an extreme profile moves
away from that benchmark. A data-adaptive anchor may reduce reconstruction loss, but the fitted
contrasts then become relative to an estimated center or axis, which can weaken fixed-benchmark
interpretation and add anchor-estimation variability.

\begin{proposition}[Anchor-risk identities]
\label{prop:app-anchor-identities}
Let \(G\in\Sphere^{d-1}_{+}\), let \(M:=\E[GG^{\T}]\), and set
\(P_\mu:=I-\mu\mu^{\T}\). Then
\[
\Sigma_\mu=\E[u_\mu(G)u_\mu(G)^{\T}]=P_\mu M P_\mu
\]
on \(\mu^\perp\), and
\[
\tr(\Sigma_\mu)
=
1-\ip{\mu}{M\mu}
=
\mathcal R_\mu(\{0\})
=
\E[\sin^2 d_g(G,\mu)].
\]
\end{proposition}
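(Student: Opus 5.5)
The identities follow by direct linear algebra on the definition \(u_\mu(g)=P_\mu g\), where \(P_\mu:=I-\mu\mu^{\T}\) is symmetric and idempotent.

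\textbf{Step 1: the second-moment identity.} Since \(U_\mu=P_\mu G\), linearity of expectation gives
\[
\Sigma_\mu=\E[P_\mu GG^{\T}P_\mu]=P_\mu M P_\mu .
\]
All moments are finite because \(\norm{G}_2=1\). This matrix maps into \(\mu^\perp\) and annihilates \(\mu\), so it is the operator on \(\mu^\perp\) used in Theorem~\ref{thm:population-eigensolution}.

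\textbf{Step 2: the trace.} By cyclicity and idempotence,
\[
\tr(\Sigma_\mu)=\tr(P_\mu M)=\tr(M)-\mu^{\T}M\mu .
\]
Here \(\tr(M)=\E[\norm{G}_2^2]=1\), and \(\mu^{\T}M\mu=\ip{\mu}{M\mu}\). This gives the first equality.

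\textbf{Step 3: the rank-zero risk.} For \(W=\{0\}\), the model is \(M_\mu(\{0\})=\Span\{\mu\}\cap\mathcal H_\mu=\{\mu\}\). Hence \(\mathcal R_\mu(\{0\})=\E[\sin^2 d_g(G,\mu)]\) by the definition \eqref{eq:population-anchored-risk}. Proposition~\ref{prop:projection-identity} with \(\Proj_{\{0\}}=0\) gives
\[
\sin^2 d_g(G,\mu)=\norm{u_\mu(G)}_2^2 .
\]
Taking expectations yields \(\E\norm{U_\mu}_2^2=\tr(\Sigma_\mu)\). As a direct check, Lemma~\ref{lem:app-positive-hemisphere} and Proposition~\ref{prop:app-coordinate-formulas} give \(\sin^2 d_g(G,\mu)=1-\ip{\mu}{G}^2\). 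Its expectation is \(1-\mu^{\T}M\mu\), which closes the chain.

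\textbf{Expected obstacle.} Nothing here is a real obstacle. The only care needed is to state that "\(\Sigma_\mu=P_\mu MP_\mu\) on \(\mu^\perp\)" means equality as \(d\times d\) matrices, with both sides vanishing on \(\mu\).
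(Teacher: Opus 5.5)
Your proposal is correct and follows essentially the same route as the paper. Both write \(U_\mu=P_\mu G\), take expectations, and take traces using \(\tr(M)=1\). For the risk identities, the paper cites the rank-zero case of Theorem~\ref{thm:population-eigensolution}, which is exactly your use of Proposition~\ref{prop:projection-identity} with \(W=\{0\}\); your direct check \(\sin^2 d_g(G,\mu)=1-\ip{\mu}{G}^2\) is a harmless extra confirmation.
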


\begin{proof}
Since \(u_\mu(G)=P_\mu G\), the first identity follows by taking expectations. Taking traces and
using \(\tr(M)=\E[\norm{G}_2^2]=1\) gives
\(\tr(\Sigma_\mu)=\tr(P_\mu M)=1-\ip{\mu}{M\mu}\). The remaining identities are the rank-zero case
of the risk decomposition in Theorem~\ref{thm:population-eigensolution}.
\end{proof}

These identities clarify what an empirical anchor can optimize. At rank zero, choosing an anchor
to minimize AGCA risk is equivalent to choosing \(\mu\) to maximize \(\ip{\mu}{M\mu}\), so the
anchor moves toward the dominant angular direction of the law.

\begin{proposition}[Anchor-free spherical eigensolution]
\label{prop:app-anchor-free-eigensolution}
Let \(G\in\Sphere^{d-1}_{+}\), let \(M:=\E[GG^{\T}]\), and let
\(L\subset\R^d\) be a linear subspace with \(\dim(L)=q\). Define the anchor-free bounded
geodesic risk
\[
\mathcal R_{\mathrm{af}}(L)
:=
\E[\sin^2 d_g(G,\Sphere^{d-1}\cap L)].
\]
If \(c_1,\ldots,c_d\) are orthonormal eigenvectors of \(M\), with eigenvalues
\(\alpha_1\ge\cdots\ge\alpha_d\), then
\(L_q^*:=\Span(c_1,\ldots,c_q)\)
minimizes \(\mathcal R_{\mathrm{af}}(L)\) over all \(q\)-dimensional linear subspaces of \(\R^d\), and
\(\mathcal R_{\mathrm{af}}(L_q^*)=\sum_{j=q+1}^d\alpha_j\).
\end{proposition}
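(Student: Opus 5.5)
The plan is to reduce the anchor-free problem to ordinary Euclidean PCA of \(G\), using the same exact-linearization idea as Proposition~\ref{prop:projection-identity}, and then apply Ky Fan's maximum principle.

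\textbf{Step 1: pointwise identity.} For a unit vector \(g\) and a linear subspace \(L\) with \(\dim L=q\ge1\), I will show
\[
\sin^2 d_g(g,\Sphere^{d-1}\cap L)=\norm{g-\Proj_Lg}_2^2 .
\]
For any unit \(m\in L\), Cauchy--Schwarz gives \(\ip{g}{m}=\ip{\Proj_Lg}{m}\le\norm{\Proj_Lg}_2\). If \(\Proj_Lg\neq0\), equality holds at \(m=\Proj_Lg/\norm{\Proj_Lg}_2\). Hence the maximal cosine is \(\norm{\Proj_Lg}_2\), and the geodesic distance is \(\arccos\norm{\Proj_Lg}_2\in[0,\pi/2]\). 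If \(\Proj_Lg=0\), every \(m\in\Sphere^{d-1}\cap L\) is at distance exactly \(\pi/2\), and the formula still holds. Because \(\cos\) is decreasing on \([0,\pi]\), the infimum of distances corresponds to the supremum of cosines, so
\[
\sin^2 d_g=1-\norm{\Proj_Lg}_2^2=\norm{(I-\Proj_L)g}_2^2 .
\]
The case \(q=0\) has \(\Sphere^{d-1}\cap L=\emptyset\). I treat it by convention as risk \(1=\tr M\), which is consistent with the formula \(\sum_{j=1}^d\alpha_j\).

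\textbf{Step 2: risk as a trace.} Taking expectations of the Step 1 identity gives
\[
\mathcal R_{\mathrm{af}}(L)=\E[G^{\T}(I-\Proj_L)G]=\tr((I-\Proj_L)M)=1-\tr(\Proj_LM),
\]
using \(\tr M=\E\norm{G}_2^2=1\). The expectation is finite because \(G\) is bounded. Note that positivity of \(G\) plays no role in this argument.

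\textbf{Step 3: optimization.} The problem is now to maximize \(\tr(\Proj_LM)\) over rank-\(q\) orthogonal projectors. By Ky Fan's maximum principle, or by the same argument used for Theorem~\ref{thm:population-eigensolution}, the maximum equals \(\sum_{j\le q}\alpha_j\) and is attained at \(L_q^*=\Span(c_1,\dots,c_q)\). This yields
\[
\mathcal R_{\mathrm{af}}(L_q^*)=\sum_{j>q}\alpha_j .
\]
If a direct proof of Ky Fan is wanted, write \(\tr(\Proj_LM)=\sum_j\alpha_j\norm{\Proj_Lc_j}_2^2\). The weights \(\norm{\Proj_Lc_j}_2^2\) lie in \([0,1]\) and sum to \(q\), so the sum is at most \(\sum_{j\le q}\alpha_j\) because the \(\alpha_j\) are decreasing.

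The only delicate point is Step 1: correctly handling the infimum when \(\Proj_Lg=0\), and the degenerate case \(q=0\). Everything else is routine.
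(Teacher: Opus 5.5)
Your proposal is correct and follows the paper's proof essentially step for step: the same pointwise identity \(\sin^2 d_g(g,\Sphere^{d-1}\cap L)=\norm{g-\Proj_Lg}_2^2\), the same trace reformulation \(\mathcal R_{\mathrm{af}}(L)=1-\tr(\Proj_LM)\), and the Ky Fan maximum principle. Three of your steps go slightly beyond the paper without changing the route: you handle \(\Proj_Lg=0\) directly (every point of the subsphere is then at distance exactly \(\pi/2\)), which is cleaner than the paper's appeal to continuity; you state a convention for \(q=0\); and you give a short weight argument in place of citing Ky Fan.
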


\begin{proof}
For any \(g\in\Sphere^{d-1}\) and any linear subspace \(L\),
\[
\sin^2 d_g(g,\Sphere^{d-1}\cap L)
=
\norm{g-\Proj_L g}_2^2
=
1-\ip{g}{\Proj_Lg},
\]
where the projection onto the subsphere is \(\Proj_Lg/\norm{\Proj_Lg}_2\) when
\(\Proj_Lg\ne0\); the identity also holds by continuity if \(\Proj_Lg=0\). Taking expectations,
\[
\mathcal R_{\mathrm{af}}(L)=1-\tr(\Proj_LM).
\]
Thus minimizing \(\mathcal R_{\mathrm{af}}(L)\) is equivalent to maximizing \(\tr(\Proj_LM)\) over
\(q\)-dimensional \(L\). The Ky Fan maximum principle gives the span of the leading
\(q\) eigenvectors, and the minimized risk is the trailing eigensum.
\end{proof}

Proposition~\ref{prop:app-anchor-free-eigensolution} shows that the anchor-free problem is an
exact benchmark-free alternative, not an approximation to fixed-anchor AGCA. The \(q=1\) version
identifies the leading angular axis; when that axis is unique and can be oriented with positive
coordinates, it also gives a loss-matched self-anchored variant.

\begin{proposition}[Canonical anchor as maximal hemisphere margin]
\label{prop:app-canonical-anchor}
For \(\mu\in\Sphere^{d-1}_{++}\), define the hemisphere margin, the smallest
inner product between \(\mu\) and any positive-sphere direction, by
\[
m_+(\mu):=\inf_{g\in\Sphere^{d-1}_{+}}\ip{\mu}{g}.
\]
Then \(m_+(\mu)=\min_j\mu_j\), and
\(\max_{\mu\in\Sphere^{d-1}_{++}}m_+(\mu)=d^{-1/2}\),
with unique maximizer \(\mu_0\) in \eqref{eq:canonical-anchor}. Moreover, the same anchor
uniquely minimizes both the largest geodesic distance to the positive sphere and the largest
gnomonic coordinate norm:
\[
\argmin_{\mu\in\Sphere^{d-1}_{++}}
\sup_{g\in\Sphere^{d-1}_{+}}d_g(g,\mu)
=
\{\mu_0\},
\qquad
\argmin_{\mu\in\Sphere^{d-1}_{++}}
\sup_{g\in\Sphere^{d-1}_{+}}\norm{v_\mu(g)}_2
=
\{\mu_0\}.
\]
\end{proposition}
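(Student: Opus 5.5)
The plan is to compute the margin \(m_+(\mu)\) explicitly, maximize it over the interior sphere, and then convert the two remaining minimax statements into the same problem through monotone reparametrizations.

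\emph{Step 1: the margin.} Fix \(\mu\in\Sphere^{d-1}_{++}\). The map \(g\mapsto\ip{\mu}{g}\) is linear and \(\Sphere^{d-1}_{+}\) is compact, so the infimum is attained. For \(g\in\Sphere^{d-1}_{+}\) we have \(\norm{g}_1\ge\norm{g}_2=1\), hence
\[
\ip{\mu}{g}=\sum_j\mu_jg_j\ge(\min_j\mu_j)\norm{g}_1\ge\min_j\mu_j .
\]
Equality holds at \(g=e_{j^*}\) with \(j^*\in\argmin_j\mu_j\). This gives \(m_+(\mu)=\min_j\mu_j\).

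\emph{Step 2: maximizing the margin.} Since \(\norm{\mu}_2=1\), we have \(d(\min_j\mu_j)^2\le\sum_j\mu_j^2=1\), so \(\min_j\mu_j\le d^{-1/2}\). Equality forces \(\mu_j^2=(\min_j\mu_j)^2\) for every \(j\). Because all coordinates are positive, this means every \(\mu_j\) equals \(d^{-1/2}\), i.e.\ \(\mu=\mu_0\). So the maximum value is \(d^{-1/2}\), attained uniquely at \(\mu_0\).

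\emph{Step 3: the two minimax statements.} These reduce to Step 2 by monotone transformations. By Lemma~\ref{lem:app-positive-hemisphere} and Proposition~\ref{prop:app-coordinate-formulas}, for \(g\in\Sphere^{d-1}_{+}\):
\begin{itemize}
\item \(d_g(g,\mu)=\arccos\ip{\mu}{g}\), with \(\ip{\mu}{g}\in(0,1]\);
\item \(\norm{v_\mu(g)}_2=\tan d_g(g,\mu)=(1-\ip{\mu}{g}^2)^{1/2}/\ip{\mu}{g}\).
\end{itemize}
Both right-hand sides are continuous, strictly decreasing functions of \(\ip{\mu}{g}\in(0,1]\). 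Therefore the supremum over \(g\) equals that function evaluated at the infimum of \(\ip{\mu}{g}\), namely at \(m_+(\mu)\):
\[
\sup_g d_g(g,\mu)=\arccos m_+(\mu),
\qquad
\sup_g\norm{v_\mu(g)}_2=\frac{(1-m_+(\mu)^2)^{1/2}}{m_+(\mu)}.
\]
Minimizing a strictly decreasing function of \(m_+(\mu)\) is the same as maximizing \(m_+(\mu)\). Step 2 then gives the unique minimizer \(\mu_0\) in both cases.

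There is no real obstacle here. The only point needing care is the interchange of supremum and monotone transformation in Step 3. This is justified by attainment of the infimum in Step 1, or by continuity together with strict monotonicity.
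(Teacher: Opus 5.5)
Your proposal is correct and follows the paper's proof essentially step for step. The shared steps are: the bound \(\ip{\mu}{g}\ge(\min_j\mu_j)\norm{g}_1\ge\min_j\mu_j\) with equality at a coordinate vector; the inequality \(1=\sum_j\mu_j^2\ge d(\min_j\mu_j)^2\) with equality only at \(\mu_0\); and the reduction of both minimax claims to maximizing \(m_+(\mu)\) through the strictly decreasing maps \(a\mapsto\arccos a\) and \(a\mapsto(a^{-2}-1)^{1/2}\). Your version also spells out the monotone-interchange and uniqueness-transfer details that the paper leaves implicit.
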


\begin{proof}
For any \(g\in\Sphere^{d-1}_{+}\),
\[
\ip{\mu}{g}
=\sum_{j=1}^d \mu_j g_j
\ge \left(\min_j\mu_j\right)\sum_{j=1}^d g_j
=\left(\min_j\mu_j\right)\norm{g}_1
\ge \min_j\mu_j,
\]
where the last inequality uses \(g\in\Sphere^{d-1}_{+}\), so
\(\norm{g}_1\ge\norm{g}_2=1\). Equality is attained at \(e_j\) for an index minimizing
\(\mu_j\). Hence \(m_+(\mu)=\min_j\mu_j\). If
\(m=\min_j\mu_j\), then \(1=\sum_{j=1}^d\mu_j^2\ge dm^2\), so
\(m\le d^{-1/2}\), with equality only when all coordinates are equal. The distance and gnomonic
claims follow because
\[
\sup_{g\in\Sphere^{d-1}_{+}}d_g(g,\mu)=\arccos(m_+(\mu)),\qquad
\sup_{g\in\Sphere^{d-1}_{+}}\norm{v_\mu(g)}_2
=
(m_+(\mu)^{-2}-1)^{1/2}.
\]
\end{proof}

The canonical anchor is therefore a nonrandom, permutation-symmetric benchmark with maximal
uniform angular margin over the positive sphere, not a sample center chosen to minimize empirical
loss. Loss-matched and spherical Fr\'echet anchors are useful sensitivity analyses when no
benchmark is privileged, but in samples they replace the fixed reference by an estimated one.

\section{Proofs}
\label{app:proofs}

This section gives the proofs of the mathematical statements in the main text, in the order in
which those statements appear.

\subsection{Anchored projection and population eigensolution}
\label{app:anchored-projection-population-eigensolution}

\begin{proof}[Proof of Proposition~\ref{prop:projection-identity}]
Let \(g\in\Sphere^{d-1}_{+}\) and write \(v:=v_\mu(g)\in\mu^\perp\). By
Proposition~\ref{prop:app-coordinate-formulas}, \(g=\phi_\mu(v)\) and
\(M_\mu(W)=\phi_\mu(W)\). Because \(\arccos\) is decreasing, geodesic projection onto
\(M_\mu(W)\) is equivalent to maximizing \(\ip{g}{\phi_\mu(w)}\) over \(w\in W\). Since
\(g=\phi_\mu(v)\),
\[
\ip{g}{\phi_\mu(w)}
=
\frac{1+\ip{v}{w}}
{(1+\norm{v}_2^2)^{1/2}(1+\norm{w}_2^2)^{1/2}}.
\]
Let \(v_W:=\Proj_Wv\). Since \(v-v_W\perp W\), \(\ip{v}{w}=\ip{v_W}{w}\). By
Cauchy--Schwarz in \(\R\times W\),
\[
1+\ip{v_W}{w}
\le
(1+\norm{v_W}_2^2)^{1/2}(1+\norm{w}_2^2)^{1/2},
\]
with equality if and only if \(w=v_W\). Dividing by the denominator in the preceding display
shows that \(\ip{g}{\phi_\mu(w)}\) is uniquely maximized over \(w\in W\) at
\(w=v_W=\Proj_Wv\). Hence the geodesic projection of \(g\) onto \(M_\mu(W)\) is
\[
\Proj_{M_\mu(W)}(g)=\phi_\mu(\Proj_Wv).
\]
At this maximizing point,
\[
\cos d_g(g,M_\mu(W))
=
\left(\frac{1+\norm{\Proj_Wv}_2^2}{1+\norm{v}_2^2}\right)^{1/2}.
\]
Therefore
\[
\sin^2 d_g(g,M_\mu(W))
=
\frac{\norm{v-\Proj_Wv}_2^2}{1+\norm{v}_2^2}.
\]
Since
\[
u_\mu(g)=\frac{v}{(1+\norm{v}_2^2)^{1/2}},
\qquad
\Proj_Wu_\mu(g)=\frac{\Proj_Wv}{(1+\norm{v}_2^2)^{1/2}},
\]
the identity \eqref{eq:projection-identity} follows.

Finally,
\[
\phi_\mu(\Proj_Wv)
=
\frac{\mu+\Proj_Wv}{(1+\norm{\Proj_Wv}_2^2)^{1/2}}.
\]
Substituting \(v=u_\mu(g)/a_\mu(g)\) gives
\[
\phi_\mu(\Proj_Wv)
=
\frac{a_\mu(g)\mu+\Proj_Wu_\mu(g)}
{(a_\mu(g)^2+\norm{\Proj_Wu_\mu(g)}_2^2)^{1/2}},
\]
which is \eqref{eq:projection-reconstruction}.
\end{proof}

\begin{proof}[Proof of Theorem~\ref{thm:population-eigensolution}]
By Proposition~\ref{prop:projection-identity},
\(\mathcal R_\mu(W)
=
\E[\norm{U_\mu-\Proj_WU_\mu}_2^2]\).
Expanding the squared norm and using symmetry and idempotence of \(\Proj_W\),
\[
\mathcal R_\mu(W)
=
\E[\norm{U_\mu}_2^2]-\E[\ip{U_\mu}{\Proj_WU_\mu}].
\]
The second term is
\[
\E[\ip{U_\mu}{\Proj_WU_\mu}]
=
\tr(\Proj_W\E[U_\mu U_\mu^{\T}])
=
\tr(\Proj_W\Sigma_\mu).
\]
Since \(\E[\norm{U_\mu}_2^2]=\tr(\Sigma_\mu)\), this proves
\(\mathcal R_\mu(W)
=
\tr((I-\Proj_W)\Sigma_\mu)\).

The first term \(\tr(\Sigma_\mu)\) does not depend on \(W\). Minimizing
\(\mathcal R_\mu(W)\) over \(p\)-dimensional subspaces is therefore equivalent to maximizing
\(\tr(\Proj_W\Sigma_\mu)\). By the Ky Fan maximum principle
\citep[see, e.g.,][]{hornJohnson2013matrix}, the maximum is attained by the span of the eigenvectors
corresponding to the largest \(p\) eigenvalues of \(\Sigma_\mu\).
For \(W_{\mu,p}:=\Span(b_1,\ldots,b_p)\),
\[
\mathcal R_\mu(W_{\mu,p})
=
\tr(\Sigma_\mu)-\sum_{j=1}^p\lambda_j
=
\sum_{j=p+1}^{d-1}\lambda_j.
\]
\end{proof}

\subsection{Rank, scores, and loading interpretation}
\label{app:population-rank-scores-loadings}

\begin{proof}[Proof of Corollary~\ref{cor:anchored-rank}]
By Proposition~\ref{prop:projection-identity}, \(\mathcal R_\mu(W)=0\) if and only if
\(U_\mu\in W\) almost surely. If this holds, then \(\operatorname{range}(\Sigma_\mu)\subset W\),
so \(\rank(\Sigma_\mu)\le\dim(W)\). Conversely, if \(y\) is orthogonal to
\(\operatorname{range}(\Sigma_\mu)\), then
\(0=\ip{y}{\Sigma_\mu y}=\E[\ip{y}{U_\mu}^2]\),
so \(\ip{y}{U_\mu}=0\) almost surely. Hence \(U_\mu\in\operatorname{range}(\Sigma_\mu)\) almost
surely, and any \(W\supset\operatorname{range}(\Sigma_\mu)\) has zero risk. At the minimal
dimension \(p=\rank(\Sigma_\mu)\), \(W\) must equal \(\operatorname{range}(\Sigma_\mu)\).
\end{proof}

\begin{proof}[Proof of Proposition~\ref{prop:loading-log-ratio}]
If \(\eta=0\), then \(c_\eta(t)\equiv\mu\), and the derivative is zero. Suppose \(\eta\neq0\)
and write \(r:=\norm{\eta}_2\). On the unit sphere,
\[
c_\eta(t)
=
\Expmap_\mu(t\eta)
=
\cos(tr)\mu+\sin(tr)\eta/r.
\]
Thus \(c_\eta(0)=\mu\) and \(c_\eta'(0)=\eta\). Since \(\mu_j,\mu_k>0\), the map
\(t\mapsto \log\{c_{\eta,j}(t)/c_{\eta,k}(t)\}\)
is differentiable at \(t=0\), and the chain rule gives
\[
\left.\frac{\mathrm d}{\mathrm dt}\right|_{t=0}
\log\frac{c_{\eta,j}(t)}{c_{\eta,k}(t)}
=
\frac{c_{\eta,j}'(0)}{c_{\eta,j}(0)}
-
\frac{c_{\eta,k}'(0)}{c_{\eta,k}(0)}
=
\frac{\eta_j}{\mu_j}-\frac{\eta_k}{\mu_k}.
\]
\end{proof}

\subsection{Low-rank reconstruction and tail simulation}
\label{app:population-reconstruction-tail-simulation}

\begin{proof}[Proof of Proposition~\ref{prop:population-agca-reconstruction}]
By definition, \(\widetilde G_{\mu,p}=\Proj_{M_\mu(W_{\mu,p})}(G)\). Applying
Proposition~\ref{prop:projection-identity} with \(g=G\) and \(W=W_{\mu,p}\) gives
\[
\sin^2 d_g(G,\widetilde G_{\mu,p})
=
\sin^2 d_g(G,M_\mu(W_{\mu,p}))
=
\norm{u_\mu(G)-\Proj_{W_{\mu,p}}u_\mu(G)}_2^2.
\]
Taking expectations and using \eqref{eq:population-rank-diagnostics} gives
\[
\E[\sin^2 d_g(G,\widetilde G_{\mu,p})]
=
\rho_{\mu,p}
=
\sum_{j=p+1}^{d-1}\lambda_j.
\]
Finally, let \(\theta=d_g(G,\widetilde G_{\mu,p})\). Since
\(\mu\in M_\mu(W_{\mu,p})\) and \(G\in\mathcal H_\mu\), the projection distance satisfies
\(\theta\le d_g(G,\mu)<\pi/2\). Hence
\[
\norm{G-\widetilde G_{\mu,p}}_2^2
=
2(1-\cos\theta)
=
\frac{2}{1+\cos\theta}\sin^2\theta
\le
2\sin^2\theta.
\]
Taking expectations gives the displayed equality for
\(\E[\norm{G-\widetilde G_{\mu,p}}_2^2]\), and the inequality follows from
\(\E[\sin^2 d_g(G,\widetilde G_{\mu,p})]=\rho_{\mu,p}\).
\end{proof}

\begin{proof}[Proof of Proposition~\ref{prop:population-tail-simulation-bound}]
Assumption~\ref{ass:erv} implies
\[
\mathcal L\!\left(X/r\,\middle|\,R>r\right)
\rightsquigarrow
\mathcal L(P_\alpha G),
\]
where \(P_\alpha\) is independent of \(G\) and
\(\Prob[P_\alpha>x]=x^{-\alpha}\), \(x\ge1\). Since bounded-Lipschitz distance metrizes weak
convergence on \(\R^d\), \(\varepsilon_r\to0\).

It remains to bound
\(d_{\mathrm{BL}}(\mathcal L(P_\alpha G),\mathcal L(P_\alpha\widetilde G_{\mu,p}))\). Let
\[
D_{\mu,p}
:=
\norm{u_\mu(G)-\Proj_{W_{\mu,p}}u_\mu(G)}_2.
\]
By Proposition~\ref{prop:population-agca-reconstruction},
\(\E[D_{\mu,p}^2]=\rho_{\mu,p}\). If \(f\) satisfies
\(\norm{f}_\infty\le1\) and \(\Lip(f)\le1\), then
\[
\abs{f(P_\alpha G)-f(P_\alpha\widetilde G_{\mu,p})}
\le
\min\{2,\ P_\alpha\norm{G-\widetilde G_{\mu,p}}_2\}.
\]
For \(0<\beta\le1\), \(\min\{2,z\}\le2^{1-\beta}z^\beta\) for \(z\ge0\). Also, because
\(d_g(G,\widetilde G_{\mu,p})\le d_g(G,\mu)<\pi/2\),
\[
\norm{G-\widetilde G_{\mu,p}}_2^2
=
2\{1-\cos d_g(G,\widetilde G_{\mu,p})\}
\le
2\sin^2 d_g(G,\widetilde G_{\mu,p})
=
2D_{\mu,p}^2.
\]
Hence
\[
\abs{\E[f(P_\alpha G)]-\E[f(P_\alpha\widetilde G_{\mu,p})]}
\le
2^{1-\beta/2}
\E[P_\alpha^\beta]\E[D_{\mu,p}^\beta].
\]
The independence of \(P_\alpha\) and \(G\) gives the factorization. Since \(\beta<\alpha\),
\[
\E[P_\alpha^\beta]
=
\frac{\alpha}{\alpha-\beta},
\]
and Lyapunov's inequality gives
\[
\E[D_{\mu,p}^\beta]
\le
\{\E[D_{\mu,p}^2]\}^{\beta/2}
=
\rho_{\mu,p}^{\beta/2}.
\]
Taking the supremum over the bounded-Lipschitz class yields
\[
d_{\mathrm{BL}}(\mathcal L(P_\alpha G),\mathcal L(P_\alpha\widetilde G_{\mu,p}))
\le
2^{1-\beta/2}\frac{\alpha}{\alpha-\beta}\rho_{\mu,p}^{\beta/2}.
\]
The stated inequality follows by the triangle inequality with \(\varepsilon_r\).
\end{proof}

\begin{proof}[Proof of Proposition~\ref{prop:population-homogeneous-tail-scores}]
Write \(W:=X/R\). Since \(h(W)\le H_h\), the event \(\{h(X)>rx\}\) is contained in
\(\{R>r\}\) whenever \(x\ge H_h\). Therefore
\[
\frac{\Prob[h(X)>rx]}{\Prob[R>r]}
=
\Prob[(R/r)h(W)>x\mid R>r],
\qquad x\ge H_h.
\]
By Assumption~\ref{ass:erv} and the continuous mapping theorem,
\[
(R/r)h(W)\mid R>r
\rightsquigarrow
P_\alpha h(G).
\]
Writing \(A_h:=h(G)\), independence of \(P_\alpha\) and \(G\) gives, for \(x\ge H_h\),
\[
\Prob[P_\alpha h(G)>x]
=
\E[\Prob[P_\alpha>x/A_h\mid A_h]]
=
x^{-\alpha}\E[A_h^\alpha],
\]
with the conditional probability interpreted as zero when \(A_h=0\). The limiting survival
function is continuous on \([H_h,\infty)\). Uniform convergence on compact subintervals follows
from weak convergence, and the remaining tail is controlled by tightness; hence
\(\eta_{r,h}\to0\).

Similarly, with \(\widetilde A_{h,p}:=h(\widetilde G_{\mu,p})\),
\[
\Prob[h(P_\alpha\widetilde G_{\mu,p})>x]
=
\E[\Prob[P_\alpha>x/\widetilde A_{h,p}\mid \widetilde A_{h,p}]]
=
x^{-\alpha}\E[\widetilde A_{h,p}^\alpha],
\qquad x\ge H_h.
\]

It remains to bound the angular constants. If \(0<\alpha\le1\), the map
\(a\mapsto a^\alpha\) is \(\alpha\)-Hölder on \([0,\infty)\), so
\[
\abs{C_h-\widetilde C_{h,p}}
\le
\E[\abs{h(G)-h(\widetilde G_{\mu,p})}^{\alpha}]
\le
L_h^\alpha
\{\E[\norm{G-\widetilde G_{\mu,p}}_2^2]\}^{\alpha/2}.
\]
If \(\alpha>1\), the map \(a\mapsto a^\alpha\) is Lipschitz on \([0,H_h]\) with constant
\(\alpha H_h^{\alpha-1}\), and hence
\[
\abs{C_h-\widetilde C_{h,p}}
\le
\alpha H_h^{\alpha-1}L_h
\E[\norm{G-\widetilde G_{\mu,p}}_2].
\]
Using Proposition~\ref{prop:population-agca-reconstruction} in both cases gives
\[
\abs{C_h-\widetilde C_{h,p}}
\le
K_{\alpha,h}(2\rho_{\mu,p})^{\gamma/2}.
\]
Finally, for \(x\ge H_h\),
\[
\abs{
\frac{\Prob[h(X)>rx]}{\Prob[R>r]}
-
\Prob[h(P_\alpha\widetilde G_{\mu,p})>x]
}
\le
\eta_{r,h}+x^{-\alpha}\abs{C_h-\widetilde C_{h,p}}.
\]
Taking the supremum over \(x\ge H_h\) gives the stated bound.
\end{proof}

\begin{proof}[Proof of Proposition~\ref{prop:population-tv-obstruction}]
By Proposition~\ref{prop:population-agca-reconstruction},
\(\widetilde G_{\mu,p}\in M_\mu(W_{\mu,p})\) almost surely, and
\(\norm{P_\alpha\widetilde G_{\mu,p}}_2=P_\alpha\ge1\). Therefore
\(\Prob[P_\alpha\widetilde G_{\mu,p}\in\mathcal C_{\mu,p}]=1\). Similarly,
\(P_\alpha G\in\mathcal C_{\mu,p}\) if and only if
\(G\in M_\mu(W_{\mu,p})\), because \(P_\alpha\ge1\) and
\((P_\alpha G)/\norm{P_\alpha G}_2=G\). Thus
\[
\Prob[P_\alpha G\in\mathcal C_{\mu,p}]
=
\Prob[G\in M_\mu(W_{\mu,p})].
\]
Taking \(A=\mathcal C_{\mu,p}\) in the supremum over Borel sets gives the lower bound.
\end{proof}

\subsection{Portfolio tail functionals}
\label{app:population-portfolio-tail-functionals}

\begin{proof}[Proof of Corollary~\ref{cor:population-capped-portfolio-excess}]
The map \(s\mapsto \min\{1,s_+\}\) is bounded by one and one-Lipschitz. Therefore
\[
\norm{f_{w,t,L}}_\infty\le1,
\qquad
\Lip(f_{w,t,L})
\le
\norm{w}_2/L
\le
\norm{w}_1/L
\le1.
\]
The claim follows by applying
Proposition~\ref{prop:population-tail-simulation-bound} with \(f=f_{w,t,L}\).
\end{proof}

\begin{proof}[Proof of Corollary~\ref{cor:population-portfolio-var}]
Apply Proposition~\ref{prop:population-homogeneous-tail-scores} to
\(h_w(z):=(w^{\T}z)_+\). This score is nonnegative and positively homogeneous. On
\(\Sphere^{d-1}\),
\[
h_w(s)\le\norm{w}_2\le\norm{w}_1\le L,
\qquad
\abs{h_w(s)-h_w(t)}
\le
\abs{w^{\T}(s-t)}
\le
L\norm{s-t}_2.
\]
The tail identities for all \(x\ge L\) and the bound on
\(\abs{C_w-\widetilde C_{w,p}}\) therefore follow from the proposition, with
\[
K_{\alpha,h_w}
\le
\begin{cases}
L^\alpha, & 0<\alpha\le1,\\
\alpha L^\alpha, & \alpha>1.
\end{cases}
\]
The VaR formulas follow by solving
\(x^{-\alpha}C=u\); the condition
\(u<L^{-\alpha}\min(C_w,\widetilde C_{w,p})\) ensures that both resulting quantiles are at least
\(L\), where the tail identities apply. The ratio identity is immediate from the formulas. Finally,
if \(\min(C_w,\widetilde C_{w,p})\ge c\), then
\[
\abs{\log q_w(u)-\log\widetilde q_{w,p}(u)}
=
\alpha^{-1}\abs{\log C_w-\log\widetilde C_{w,p}}
\le
(\alpha c)^{-1}\abs{C_w-\widetilde C_{w,p}},
\]
and the angular-constant bound gives the stated logarithmic VaR bound.
\end{proof}

\subsection{Positive post-projection for tail simulation}
\label{app:positive-post-projection}

The population reconstruction \(\widetilde G_{\mu,p}\), defined in
\eqref{eq:population-reconstructed-angular}, is a unit vector in the fitted anchored
great-subsphere model
\[
M_\mu(W_{\mu,p})
=
\Span(\{\mu\}\cup W_{\mu,p})\cap\mathcal H_\mu,
\]
but this model itself need not be contained in the positive orthant. For simulation on the
positive cone, one can enforce positivity after the AGCA reconstruction by applying a
deterministic positive post-projection. This correction preserves the low-rank AGCA step as the
primary approximation, but it no longer preserves exact membership in \(M_\mu(W_{\mu,p})\). The
point of the following results is to quantify precisely the additional price paid for this
positivity correction.

For \(z\in\R^d\), let \(z_+\) denote the coordinatewise positive part. When
\(z_+\ne0\), define \(\mathcal P_+(z):=z_+/\norm{z_+}_2\).
Since \(\widetilde G_{\mu,p}\in\mathcal H_\mu\) and \(\mu\in\Sphere^{d-1}_{++}\),
\((\widetilde G_{\mu,p})_+\ne0\) almost surely. Indeed, if all coordinates of
\(\widetilde G_{\mu,p}\) were nonpositive, then
\(\ip{\mu}{\widetilde G_{\mu,p}}\le0\), contradicting
\(\widetilde G_{\mu,p}\in\mathcal H_\mu\). Thus the positive post-projected reconstruction
\[
\widetilde G_{\mu,p}^+
:=
\mathcal P_+(\widetilde G_{\mu,p})
=
(\widetilde G_{\mu,p})_+/\norm{(\widetilde G_{\mu,p})_+}_2
\]
is well defined and takes values in \(\Sphere^{d-1}_{+}\). Define the positivity-correction cost
\[
\delta_{\mu,p}^+
:=
\E\left[\norm{\widetilde G_{\mu,p}-\widetilde G_{\mu,p}^+}_2^2\right],
\qquad
\omega_{\mu,p}^+
:=
(2\rho_{\mu,p})^{1/2}+(\delta_{\mu,p}^+)^{1/2}.
\]
The quantity \(\omega_{\mu,p}^+\) is the \(L^2\) angular error budget for positive simulation:
the first term is the AGCA reconstruction error from
Proposition~\ref{prop:population-agca-reconstruction}, and the second term is the extra movement
needed to return the reconstructed direction to the positive sphere.

\begin{proposition}[Positive post-projected reconstruction]
\label{prop:app-positive-post-projection-reconstruction}
For \(p=0,\ldots,d-1\), \(\widetilde G_{\mu,p}^+\in\Sphere^{d-1}_{+}\) almost surely, and
\[
\E\left[\norm{G-\widetilde G_{\mu,p}^+}_2^2\right]
\le
(\omega_{\mu,p}^+)^2.
\]
\end{proposition}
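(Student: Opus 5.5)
The plan has two parts: positivity comes from the well-definedness argument just given, and the error bound comes from the triangle inequality in \(L^2\).

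\emph{Positivity.} As noted before the statement, \(\widetilde G_{\mu,p}\in M_\mu(W_{\mu,p})\subset\mathcal H_\mu\) almost surely by Proposition~\ref{prop:projection-identity}. Since \(\mu\in\Sphere^{d-1}_{++}\), a vector \(z\) with \(\ip{\mu}{z}>0\) cannot have all coordinates nonpositive. Hence \((\widetilde G_{\mu,p})_+\ne0\). Then \(\widetilde G_{\mu,p}^+=(\widetilde G_{\mu,p})_+/\norm{(\widetilde G_{\mu,p})_+}_2\) is a unit vector with nonnegative coordinates, so it lies in \(\Sphere^{d-1}_{+}\) almost surely. One point to record is measurability: \(z\mapsto\mathcal P_+(z)\) is continuous on \(\{z_+\ne0\}\), so \(\widetilde G_{\mu,p}^+\) is a random vector and the expectations involved are finite. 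Indeed, all vectors involved are unit vectors, so every squared distance is at most \(4\).

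\emph{Error bound.} Pointwise, the triangle inequality gives
\[
\norm{G-\widetilde G_{\mu,p}^+}_2\le\norm{G-\widetilde G_{\mu,p}}_2+\norm{\widetilde G_{\mu,p}-\widetilde G_{\mu,p}^+}_2 .
\]
Taking \(L^2(\Prob)\) norms and using Minkowski's inequality yields
\[
\bigl(\E[\norm{G-\widetilde G_{\mu,p}^+}_2^2]\bigr)^{1/2}\le\bigl(\E[\norm{G-\widetilde G_{\mu,p}}_2^2]\bigr)^{1/2}+(\delta_{\mu,p}^+)^{1/2}.
\]
Proposition~\ref{prop:population-agca-reconstruction} bounds the first term on the right by \((2\rho_{\mu,p})^{1/2}\). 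The right-hand side is therefore at most \(\omega_{\mu,p}^+\), and squaring gives the claim.

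\emph{Expected difficulty.} There is no real obstacle here. The only care needed is that \(\mathcal P_+\) is applied to a vector whose positive part is almost surely nonzero, and the hemisphere argument handles exactly that. No sharper bound is required, since the statement is deliberately the crude Minkowski combination. For example, it does not attempt to use that \(\mathcal P_+\) is a nearest-point-type map on the positive sphere, which would give the bound \(\norm{\widetilde G_{\mu,p}-\widetilde G_{\mu,p}^+}_2\le\norm{\widetilde G_{\mu,p}-G}_2\).
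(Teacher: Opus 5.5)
Your proposal is correct and follows the paper's proof essentially verbatim: positivity from the hemisphere argument preceding the statement, then Minkowski's inequality in \(L^2\) combined with the bound \(\E[\norm{G-\widetilde G_{\mu,p}}_2^2]\le2\rho_{\mu,p}\) from Proposition~\ref{prop:population-agca-reconstruction}. Your closing aside is also right: \(\mathcal P_+\) is the nearest-point map onto \(\Sphere^{d-1}_{+}\). That observation would give the sharper \(L^2\) bound \(2(2\rho_{\mu,p})^{1/2}\), but neither you nor the paper needs it.
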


\begin{proof}
The preceding paragraph proves that \(\widetilde G_{\mu,p}^+\) is well defined and belongs to
\(\Sphere^{d-1}_{+}\). By Minkowski's inequality,
\[
\sqrt{\E\left[\norm{G-\widetilde G_{\mu,p}^+}_2^2\right]}
\le
\sqrt{\E\left[\norm{G-\widetilde G_{\mu,p}}_2^2\right]}
+
\sqrt{\E\left[\norm{\widetilde G_{\mu,p}-\widetilde G_{\mu,p}^+}_2^2\right]}.
\]
The first term is bounded by \((2\rho_{\mu,p})^{1/2}\) by
Proposition~\ref{prop:population-agca-reconstruction}, and the second term is
\((\delta_{\mu,p}^+)^{1/2}\) by definition. Squaring gives the final display.
\end{proof}

\begin{proposition}[Positive post-projected tail simulation]
\label{prop:app-positive-tail-simulation-bound}
Suppose Assumption~\ref{ass:erv} holds, and let \(\varepsilon_r\) be defined by
\eqref{eq:population-tail-threshold-error}. Then \(\varepsilon_r\to0\). Moreover, for every
\(0<\beta\le1\) with \(\beta<\alpha\),
\[
d_{\mathrm{BL}}\!\left(
\mathcal L\!\left(X/r\,\middle|\, R>r\right),
\mathcal L(P_\alpha\widetilde G_{\mu,p}^+)
\right)
\le
\varepsilon_r
+
2^{1-\beta}\frac{\alpha}{\alpha-\beta}\,
(\omega_{\mu,p}^+)^\beta.
\]
\end{proposition}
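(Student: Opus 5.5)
The bound follows the proof of Proposition~\ref{prop:population-tail-simulation-bound}, with the exact AGCA identity replaced by the $L^2$ budget of Proposition~\ref{prop:app-positive-post-projection-reconstruction}. Convergence $\varepsilon_r\to0$ is already part of Proposition~\ref{prop:population-tail-simulation-bound}: under Assumption~\ref{ass:erv} and the continuous mapping theorem, $\mathcal L(X/r\mid R>r)\rightsquigarrow\mathcal L(P_\alpha G)$, and the bounded-Lipschitz distance metrizes weak convergence on $\R^d$. By the triangle inequality for $d_{\mathrm{BL}}$, it therefore suffices to show
\[
d_{\mathrm{BL}}\bigl(\mathcal L(P_\alpha G),\mathcal L(P_\alpha\widetilde G_{\mu,p}^+)\bigr)\le 2^{1-\beta}\frac{\alpha}{\alpha-\beta}(\omega_{\mu,p}^+)^\beta .
\]

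Fix $f$ with $\norm{f}_\infty\le1$ and $\Lip(f)\le1$. Couple the two laws on the same probability space, with $P_\alpha$ independent of $G$ and $\widetilde G_{\mu,p}^+$ a deterministic function of $G$. Then pointwise
\[
\abs{f(P_\alpha G)-f(P_\alpha\widetilde G_{\mu,p}^+)}\le\min\{2,\,P_\alpha D^+\},\qquad D^+:=\norm{G-\widetilde G_{\mu,p}^+}_2 .
\]
For $0<\beta\le1$ and $z\ge0$ we have $\min\{2,z\}\le 2^{1-\beta}z^\beta$. Since $P_\alpha$ is independent of $D^+$, taking expectations gives the bound $2^{1-\beta}\E[P_\alpha^\beta]\,\E[(D^+)^\beta]$. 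Because $\beta<\alpha$, the Pareto moment is $\E[P_\alpha^\beta]=\alpha/(\alpha-\beta)$.

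It remains to bound $\E[(D^+)^\beta]$, which I expect to be the only step needing care. By Lyapunov's inequality with $\beta/2\le1/2$,
\[
\E[(D^+)^\beta]\le\{\E[(D^+)^2]\}^{\beta/2}.
\]
Proposition~\ref{prop:app-positive-post-projection-reconstruction} gives $\E[(D^+)^2]\le(\omega_{\mu,p}^+)^2$, so $\E[(D^+)^\beta]\le(\omega_{\mu,p}^+)^\beta$. Taking the supremum over $f$ then gives the displayed bound.

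The constant is $2^{1-\beta}$ rather than the $2^{1-\beta/2}$ of the unconstrained case. The factor $2^{\beta/2}$ from $\norm{G-\widetilde G_{\mu,p}}_2^2\le 2\rho_{\mu,p}$ is already absorbed into $\omega_{\mu,p}^+=(2\rho_{\mu,p})^{1/2}+(\delta_{\mu,p}^+)^{1/2}$, so no sine-squared identity is needed. This also explains why the argument goes through: positivity breaks membership in $M_\mu(W_{\mu,p})$, so the exact identity of Proposition~\ref{prop:population-agca-reconstruction} is unavailable, and only the Minkowski $L^2$ bound is used.
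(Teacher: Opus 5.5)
Your proposal is correct and follows the paper's proof essentially step for step. Both use the pointwise bound \(\min\{2,P_\alpha\norm{G-\widetilde G_{\mu,p}^+}_2\}\), the inequality \(\min\{2,z\}\le2^{1-\beta}z^\beta\), independence with \(\E[P_\alpha^\beta]=\alpha/(\alpha-\beta)\), Lyapunov's inequality combined with Proposition~\ref{prop:app-positive-post-projection-reconstruction}, and the triangle inequality with \(\varepsilon_r\); your remark that the factor \(2^{\beta/2}\) is absorbed into \(\omega_{\mu,p}^+\) also correctly reconciles the constant with the unconstrained bound.
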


\begin{proof}
As in the proof of Proposition~\ref{prop:population-tail-simulation-bound},
\(\varepsilon_r\to0\), and it remains to bound the distance between
\(\mathcal L(P_\alpha G)\) and \(\mathcal L(P_\alpha\widetilde G_{\mu,p}^+)\). If
\(\norm{f}_\infty\le1\) and \(\Lip(f)\le1\), then
\[
\abs{f(P_\alpha G)-f(P_\alpha\widetilde G_{\mu,p}^+)}
\le
\min\{2,\ P_\alpha\norm{G-\widetilde G_{\mu,p}^+}_2\}.
\]
For \(0<\beta\le1\), \(\min\{2,z\}\le2^{1-\beta}z^\beta\). Since \(P_\alpha\) is independent of
the angular variables and \(\beta<\alpha\),
\[
\abs{\E[f(P_\alpha G)]-\E[f(P_\alpha\widetilde G_{\mu,p}^+)]}
\le
2^{1-\beta}\E[P_\alpha^\beta]\E[\norm{G-\widetilde G_{\mu,p}^+}_2^\beta]
\le
2^{1-\beta}\frac{\alpha}{\alpha-\beta}(\omega_{\mu,p}^+)^\beta,
\]
where the last step uses Lyapunov's inequality and
Proposition~\ref{prop:app-positive-post-projection-reconstruction}. Taking the supremum over the
bounded-Lipschitz class and applying the triangle inequality with \(\varepsilon_r\) gives the
claim.
\end{proof}

\begin{proposition}[Positive post-projected homogeneous tail scores]
\label{prop:app-positive-homogeneous-tail-scores}
Suppose Assumption~\ref{ass:erv} holds, and let \(h:\R^d\to[0,\infty)\) satisfy the conditions
used in Proposition~\ref{prop:population-homogeneous-tail-scores}. Let \(C_h\), \(\eta_{r,h}\),
\(\gamma\), and \(K_{\alpha,h}\) be the corresponding quantities from that setup, and define
\(\widetilde C_{h,p}^+
:=
\E[h(\widetilde G_{\mu,p}^+)^\alpha]\).
Then \(\eta_{r,h}\to0\) as \(r\to\infty\). Moreover, for every \(x\ge H_h\),
\[
\Prob[h(P_\alpha\widetilde G_{\mu,p}^+)>x]
=
x^{-\alpha}\widetilde C_{h,p}^+,\qquad
\sup_{x\ge H_h}
\abs{
\frac{\Prob[h(X)>rx]}{\Prob[R>r]}
-
\Prob[h(P_\alpha\widetilde G_{\mu,p}^+)>x]
}
\le
\eta_{r,h}
+
H_h^{-\alpha}K_{\alpha,h}(\omega_{\mu,p}^+)^\gamma.
\]
\end{proposition}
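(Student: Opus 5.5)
The proof follows Proposition~\ref{prop:population-homogeneous-tail-scores} almost line by line, with \(\widetilde G_{\mu,p}\) replaced by \(\widetilde G_{\mu,p}^+\). The only new input is the \(L^2\) bound of Proposition~\ref{prop:app-positive-post-projection-reconstruction} in place of Proposition~\ref{prop:population-agca-reconstruction}.

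\emph{Step 1: convergence and the tail identity.} The statement \(\eta_{r,h}\to0\) involves only \(X\), \(h\) and the true angular law. It is therefore exactly the claim already proved in Proposition~\ref{prop:population-homogeneous-tail-scores}, and I will quote it. For the tail identity, set \(\widetilde A^+:=h(\widetilde G_{\mu,p}^+)\). Since \(\widetilde G_{\mu,p}^+\in\Sphere^{d-1}_{+}\subset\Sphere^{d-1}\), we have \(0\le\widetilde A^+\le H_h\). Because \(P_\alpha\) is independent of the angular variables and \(h\) is positively homogeneous, for \(x\ge H_h\) we get \(x/\widetilde A^+\ge1\). Conditioning on \(\widetilde A^+\) then gives
\[
\Prob[h(P_\alpha\widetilde G_{\mu,p}^+)>x]=\E[(x/\widetilde A^+)^{-\alpha}]=x^{-\alpha}\widetilde C_{h,p}^+,
\]
with the conditional probability taken as zero when \(\widetilde A^+=0\).

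\emph{Step 2: bounding the angular constants.} If \(0<\alpha\le1\), the map \(a\mapsto a^\alpha\) is \(\alpha\)-Hölder. Combining this with Lipschitz continuity of \(h\) on the sphere and Jensen (Lyapunov) gives
\[
\abs{C_h-\widetilde C_{h,p}^+}\le L_h^\alpha\E[\norm{G-\widetilde G_{\mu,p}^+}_2^\alpha]\le L_h^\alpha\{\E\norm{G-\widetilde G_{\mu,p}^+}_2^2\}^{\alpha/2}.
\]
If \(\alpha>1\), the map \(a\mapsto a^\alpha\) is Lipschitz on \([0,H_h]\) with constant \(\alpha H_h^{\alpha-1}\). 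This gives the bound \(\alpha H_h^{\alpha-1}L_h\,\E\norm{G-\widetilde G_{\mu,p}^+}_2\le\alpha H_h^{\alpha-1}L_h\{\E\norm{\cdot}_2^2\}^{1/2}\). In both cases Proposition~\ref{prop:app-positive-post-projection-reconstruction} bounds the \(L^2\) norm by \(\omega_{\mu,p}^+\). Hence
\[
\abs{C_h-\widetilde C_{h,p}^+}\le K_{\alpha,h}(\omega_{\mu,p}^+)^\gamma .
\]

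\emph{Step 3: conclusion.} For \(x\ge H_h\), the triangle inequality with the limiting law \(x^{-\alpha}C_h\) gives
\[
\abs{\frac{\Prob[h(X)>rx]}{\Prob[R>r]}-\Prob[h(P_\alpha\widetilde G_{\mu,p}^+)>x]}\le\eta_{r,h}+x^{-\alpha}\abs{C_h-\widetilde C_{h,p}^+}.
\]
Using \(x^{-\alpha}\le H_h^{-\alpha}\) and taking the supremum over \(x\ge H_h\) gives the claim.

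There is no real obstacle in this argument. The one point to check is that \(h(\widetilde G_{\mu,p}^+)\le H_h\), so that the tail identity holds on the whole range \(x\ge H_h\). This holds because the post-projected direction is a unit vector.
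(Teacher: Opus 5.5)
Your proposal is correct and matches the paper's proof step for step. You quote \(\eta_{r,h}\to0\) from Proposition~\ref{prop:population-homogeneous-tail-scores}, get the tail identity by conditioning on \(h(\widetilde G_{\mu,p}^+)\le H_h\) using independence of \(P_\alpha\), bound \(\abs{C_h-\widetilde C_{h,p}^+}\) via the Hölder/Lipschitz split in \(\alpha\) together with Proposition~\ref{prop:app-positive-post-projection-reconstruction}, and finish with the triangle inequality and \(x^{-\alpha}\le H_h^{-\alpha}\); the only difference is that you spell out the Lyapunov step to the \(L^2\) bound, which the paper leaves implicit.
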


\begin{proof}
The convergence \(\eta_{r,h}\to0\) is the threshold part of
Proposition~\ref{prop:population-homogeneous-tail-scores}. The tail identity follows from the
independence of \(P_\alpha\) and \(G\), since
\(\widetilde G_{\mu,p}^+\) is a measurable function of \(G\):
\[
\Prob[h(P_\alpha\widetilde G_{\mu,p}^+)>x]
=
\E\!\left[\Prob[P_\alpha>x/h(\widetilde G_{\mu,p}^+)\mid \widetilde G_{\mu,p}^+]\right]
=
x^{-\alpha}\E[h(\widetilde G_{\mu,p}^+)^\alpha],
\]
with the conditional probability interpreted as zero when \(h(\widetilde G_{\mu,p}^+)=0\).
It remains to compare the angular constants. If \(0<\alpha\le1\), then
\[
\abs{C_h-\widetilde C_{h,p}^+}
\le
L_h^\alpha
\E[\norm{G-\widetilde G_{\mu,p}^+}_2^\alpha]
\le
L_h^\alpha(\omega_{\mu,p}^+)^\alpha.
\]
If \(\alpha>1\), the map \(a\mapsto a^\alpha\) is Lipschitz on \([0,H_h]\) with constant
\(\alpha H_h^{\alpha-1}\), and hence
\[
\abs{C_h-\widetilde C_{h,p}^+}
\le
\alpha H_h^{\alpha-1}L_h
\E[\norm{G-\widetilde G_{\mu,p}^+}_2]
\le
\alpha H_h^{\alpha-1}L_h\,\omega_{\mu,p}^+.
\]
Together these bounds give
\[
\abs{C_h-\widetilde C_{h,p}^+}
\le
K_{\alpha,h}(\omega_{\mu,p}^+)^\gamma.
\]
The final display follows from the definition of \(\eta_{r,h}\), the tail identity above, and
\(x^{-\alpha}\le H_h^{-\alpha}\) for \(x\ge H_h\).
\end{proof}

The positive post-projection therefore solves the cone-membership problem without changing the
radial model or the threshold errors \(\varepsilon_r\) and \(\eta_{r,h}\). Its cost is explicit:
the angular terms in the main simulation bounds are replaced by the combined budget
\(\omega_{\mu,p}^+\). It does not, however, remove the support mismatch in total variation. The
simulator remains supported on a transformed low-dimensional set, so full-law approximation over
all Borel events still requires either a correctly supported angular law or an additional residual
model around the low-rank reconstruction.

\subsection{Empirical eigensolution and oracle-based consistency}
\label{app:empirical-eigensolution-oracle-consistency}

\begin{proof}[Proof of Proposition~\ref{prop:empirical-eigensolution}]
By Proposition~\ref{prop:projection-identity},
\[
\mathcal R_{\mu,\mathcal I}(W)
=
\frac1{\abs{\mathcal I}}\sum_{i\in\mathcal I}
\norm{u_\mu(g_i)-\Proj_Wu_\mu(g_i)}_2^2.
\]
Expanding the squared norm and using symmetry and idempotence of \(\Proj_W\) gives
\[
\mathcal R_{\mu,\mathcal I}(W)
=
\tr(\Sigma_{\mu,\mathcal I})
-
\tr(\Proj_W\Sigma_{\mu,\mathcal I})
=
\tr((I-\Proj_W)\Sigma_{\mu,\mathcal I}).
\]
The first term is fixed in \(W\), so minimizing the empirical risk over \(p\)-dimensional
subspaces is equivalent to maximizing
\(\tr(\Proj_W\Sigma_{\mu,\mathcal I})\). The Ky Fan maximum principle gives a
maximizer as the span of the leading \(p\) eigenvectors, and substituting that maximizer into the
risk identity yields the trailing eigensum.
\end{proof}

\begin{proof}[Proof of Theorem~\ref{thm:top-k-consistency}]
The proof uses a deterministic-threshold sandwich. The random top-\(k_n\) set is trapped, with
probability tending to one, between exceedance sets over two deterministic high thresholds whose
expected sizes are slightly below and above \(k_n\). Angular convergence for deterministic
threshold exceedances can then be transferred to the random top-\(k_n\) selection.

We first prove this transfer at the level of empirical angular averages. Specifically, for every
bounded continuous matrix-valued function \(h\) on \(\Sphere^{d-1}_{+}\), we show that
\[
\frac1{k_n}\sum_{i\in\mathcal I_n^{(k_n)}}h(G_i)
\to
\E[h(G)]
\qquad\text{in probability}.
\]
Taking \(h(g)=u_\mu(g)u_\mu(g)^{\T}\), which is bounded and continuous because \(u_\mu\) is
linear and \(g\in\Sphere^{d-1}_{+}\), gives the consistency of
\(\widehat\Sigma_{\mu,n}^{(k_n),\mathrm{orc}}\), and the remaining claims follow from standard
eigenvalue and spectral-projector perturbation bounds.

Let \(B\) bound \(\norm{h(g)}_{\mathrm{op}}\). For \(0<\varepsilon<1\), choose deterministic
thresholds \(u_n^-\) and \(u_n^+\) such that
\[
\Prob[R_1>u_n^-]=(1+\varepsilon)k_n/n,
\qquad
\Prob[R_1>u_n^+]=(1-\varepsilon)k_n/n.
\]
Because \(k_n/n\to0\), \(u_n^\pm\to\infty\). Let
\(\mathcal E_n^\pm:=\{i:R_i>u_n^\pm\}\). The binomial law of large numbers gives
\(|\mathcal E_n^-|/k_n\to1+\varepsilon\) and
\(|\mathcal E_n^+|/k_n\to1-\varepsilon\). With probability tending to one,
\(\mathcal E_n^+\subset\mathcal I_n^{(k_n)}\subset\mathcal E_n^-\).
By the angular convergence in Assumption~\ref{ass:angular-limit} and a triangular-array law of
large numbers for bounded variables,
\[
\frac1{|\mathcal E_n^\pm|}\sum_{i\in\mathcal E_n^\pm}h(G_i)
\to
\E[h(G)]
\qquad\text{in probability}.
\]
On the event \(\mathcal E_n^+\subset\mathcal I_n^{(k_n)}\subset\mathcal E_n^-\), write
\(A_I=k_n^{-1}\sum_{i\in\mathcal I_n^{(k_n)}}h(G_i)\) and
\(A_\pm=|\mathcal E_n^\pm|^{-1}\sum_{i\in\mathcal E_n^\pm}h(G_i)\). Since
\(\norm{h}_{\mathrm{op}}\le B\),
\[
\norm{A_I-A_-}_{\mathrm{op}}
\le
2B\,\frac{|\mathcal E_n^-|-k_n}{|\mathcal E_n^-|},
\qquad
\norm{A_I-A_+}_{\mathrm{op}}
\le
2B\,\frac{k_n-|\mathcal E_n^+|}{k_n}.
\]
The two fractions converge in probability to \(\varepsilon/(1+\varepsilon)\) and
\(\varepsilon\), respectively. Hence the top-\(k_n\) average has all its subsequential
probability limits within an \(O(\varepsilon)\) neighborhood of \(\E[h(G)]\).
Letting \(\varepsilon\downarrow0\) proves the claim.
\end{proof}

\subsection{Rank-based consistency}
\label{app:primitive-margin-conditions}

This subsection contains the rank-margin part of Section~\ref{sec:estimation-topk-margins}.
Proposition~\ref{prop:rank-margin-stability} first transfers average angular accuracy and
selected-set stability to covariance consistency. We then verify those two stability requirements
from primitive assumptions and prove Theorem~\ref{thm:rank-consistency}.

\begin{proposition}[Stability under marginal standardization error]
\label{prop:rank-margin-stability}
With the oracle and rank-Pareto quantities in \eqref{eq:sample-anchored-second-moments}, suppose
that
\[
\frac1{k_n}
\sum_{i\in\mathcal I_n^{(k_n)}\cup\widehat{\mathcal I}_n^{(k_n)}}
\norm{\widehat G_i-G_i}_2
\to0
\qquad\text{in probability}
\]
and
\(
\abs{\mathcal I_n^{(k_n)}\triangle\widehat{\mathcal I}_n^{(k_n)}}/k_n
\to0\)
in probability, where \(\triangle\) denotes symmetric difference.
Then
\[
\norm{
\widehat\Sigma_{\mu,n}^{(k_n),\mathrm{emp}}
-
\widehat\Sigma_{\mu,n}^{(k_n),\mathrm{orc}}
}_{\mathrm{op}}
\to0
\qquad\text{in probability}.
\]
Hence, whenever the oracle top-\(k_n\) estimator is consistent, the rank-based estimator is
consistent for the same \(\Sigma_\mu\).
\end{proposition}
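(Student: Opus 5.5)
The plan is to decompose the difference exactly as in Section~\ref{sec:rank-pareto-clt-scale}, writing \(h_\mu(g):=u_\mu(g)u_\mu(g)^{\T}\), and to bound each piece in operator norm. Since both selected sets have exactly \(k_n\) elements,
\[
\widehat\Sigma_{\mu,n}^{(k_n),\mathrm{emp}}-\widehat\Sigma_{\mu,n}^{(k_n),\mathrm{orc}}
=
\frac1{k_n}\sum_{i\in\mathcal I_n^{(k_n)}}\{h_\mu(\widehat G_i)-h_\mu(G_i)\}
+\frac1{k_n}\Bigl(\sum_{i\in\widehat{\mathcal I}_n^{(k_n)}\setminus\mathcal I_n^{(k_n)}}-\sum_{i\in\mathcal I_n^{(k_n)}\setminus\widehat{\mathcal I}_n^{(k_n)}}\Bigr)h_\mu(\widehat G_i).
\]
The first term is the angular perturbation on a fixed index set. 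The second is the selection perturbation, restricted to the symmetric difference.

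For the angular term, the key fact is that \(h_\mu\) is Lipschitz on the unit sphere. Since \(u_\mu=(I-\mu\mu^{\T})\cdot\) is linear with operator norm one, \(\norm{u_\mu(g)}_2\le1\) for unit \(g\). Hence
\[
\norm{h_\mu(g)-h_\mu(g')}_{\mathrm{op}}
\le\norm{u_\mu(g)-u_\mu(g')}_2\bigl(\norm{u_\mu(g)}_2+\norm{u_\mu(g')}_2\bigr)
\le2\norm{g-g'}_2,
\]
using \(xx^{\T}-yy^{\T}=(x-y)x^{\T}+y(x-y)^{\T}\). The first term is therefore bounded by
\[
\frac2{k_n}\sum_{i\in\mathcal I_n^{(k_n)}}\norm{\widehat G_i-G_i}_2,
\]
which is dominated by the sum over \(\mathcal I_n^{(k_n)}\cup\widehat{\mathcal I}_n^{(k_n)}\) and so tends to zero in probability by the first hypothesis. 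The rank-Pareto vectors \(\widehat X_i\) have positive coordinates, so \(\widehat G_i\in\Sphere^{d-1}_{+}\) is a well-defined unit vector and the bound applies.

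For the selection term, I use boundedness: \(\norm{h_\mu(g)}_{\mathrm{op}}=\norm{u_\mu(g)}_2^2\le1\). The second term then has operator norm at most \(\abs{\mathcal I_n^{(k_n)}\triangle\widehat{\mathcal I}_n^{(k_n)}}/k_n\), which tends to zero in probability by the second hypothesis. Combining the two bounds with the triangle inequality gives the displayed convergence.

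The final sentence of the proposition follows from the triangle inequality: \(\norm{\widehat\Sigma^{\mathrm{emp}}-\Sigma_\mu}_{\mathrm{op}}\le\norm{\widehat\Sigma^{\mathrm{emp}}-\widehat\Sigma^{\mathrm{orc}}}_{\mathrm{op}}+\norm{\widehat\Sigma^{\mathrm{orc}}-\Sigma_\mu}_{\mathrm{op}}\), with the last term handled by Theorem~\ref{thm:top-k-consistency} when it applies. There is no real obstacle here: once the hypotheses are granted, the argument is deterministic. The substantive difficulty, verifying these two hypotheses from primitive assumptions, lies outside this proposition.
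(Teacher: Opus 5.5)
Your proof is correct and follows the paper's argument. You use the $2$-Lipschitz bound for $g\mapsto u_\mu(g)u_\mu(g)^{\T}$ with the averaged angular hypothesis, and the bound $\norm{u_\mu(g)u_\mu(g)^{\T}}_{\mathrm{op}}\le1$ with the symmetric-difference hypothesis; the only cosmetic difference is that the paper applies the Lipschitz bound on the common set $\mathcal I_n^{(k_n)}\cap\widehat{\mathcal I}_n^{(k_n)}$ rather than on all of $\mathcal I_n^{(k_n)}$, which changes nothing.
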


\begin{proof}[Proof of Proposition~\ref{prop:rank-margin-stability}]
The map \(g\mapsto u_\mu(g)\) is nonexpansive and \(\norm{u_\mu(g)}_2\le1\). Hence
\[
\norm{u_\mu(\widehat g)u_\mu(\widehat g)^{\T}-u_\mu(g)u_\mu(g)^{\T}}_{\mathrm{op}}
\le
2\norm{\widehat g-g}_2.
\]
On the common selected set, the assumed average directional error gives
\[
\frac1{k_n}\sum_{i\in\mathcal I_n^{(k_n)}\cap\widehat{\mathcal I}_n^{(k_n)}}
\norm{u_\mu(\widehat G_i)u_\mu(\widehat G_i)^{\T}-u_\mu(G_i)u_\mu(G_i)^{\T}}_{\mathrm{op}}
\le
\frac2{k_n}
\sum_{i\in\mathcal I_n^{(k_n)}\cup\widehat{\mathcal I}_n^{(k_n)}}
\norm{\widehat G_i-G_i}_2
=o_p(1).
\]
The symmetric difference of the selected sets contributes at most
\(2\abs{\mathcal I_n^{(k_n)}\triangle\widehat{\mathcal I}_n^{(k_n)}}/k_n\),
because each summand has operator norm at most one. This proves the covariance transfer.
\end{proof}

We now verify the two stability requirements under primitive assumptions. They are handled
separately and under different assumptions. The average angular condition holds under continuity
of the margins alone
(Proposition~\ref{prop:average-angular}); the argument is a weighted Glivenko--Cantelli ratio
bound for the uniform empirical distribution function in the spirit of
\citet{wellner1978limit} and \citet[Chapter~10]{shorack1986empirical}. The selected-set
condition is where Assumption~\ref{ass:erv} enters, through an anti-concentration bound for the
radii near the top-\(k_n\) threshold (Proposition~\ref{prop:selected-set}).

Throughout this subsection the conditions of Theorem~\ref{thm:rank-consistency} are in
force: the margins \(F_1,\ldots,F_d\) and the distribution of \(R_1\) are continuous,
Assumption~\ref{ass:erv} holds, \(k_n\to\infty\), and \(k_n/n\to0\). Write
\(\bar F_R(r):=\Prob[R_1>r]\). Under Assumption~\ref{ass:erv}, \(\bar F_R\) is regularly
varying with index \(-\alpha\); beyond the angular limit already used in
Theorem~\ref{thm:top-k-consistency}, only this consequence of the assumption is used below.
With the standard Pareto margins of Section~\ref{sec:population-angular-anchor} the index is
necessarily \(\alpha=1\), since \(R_1\ge X_{11}\) gives \(\bar F_R(r)\ge r^{-1}\) while
\(R_1>r\) forces \(\max_jX_{1j}>r/\sqrt d\) and hence \(\bar F_R(r)\le d^{3/2}r^{-1}\); we
keep the general index \(\alpha\) in the statements. Because the margins are continuous, the
\(R_i\) are almost surely distinct, so \(\mathcal I_n^{(k_n)}\) is almost surely unique. The
\(\widehat R_i\) take finitely many values and can tie with positive probability; fix any
measurable tie-breaking rule under which \(|\widehat{\mathcal I}_n^{(k_n)}|=k_n\) and
\(\widehat{\mathcal I}_n^{(k_n)}\) contains every index with \(\widehat R_i\) strictly above
its \(k_n\)-th largest value. The results below hold for every such rule. Write
\(\mathcal U_n:=\mathcal I_n^{(k_n)}\cup\widehat{\mathcal I}_n^{(k_n)}\) for the union of the
oracle and rank-based selected sets.

We use an exact representation of the rank transform. Fix a margin \(j\) and set
\(V_{ij}:=1-F_j(Y_{ij})\), so that \(V_{1j},\ldots,V_{nj}\) are i.i.d.\ standard uniform and
\(X_{ij}=1/V_{ij}\), with large \(Y_{ij}\) corresponding to small \(V_{ij}\). Let
\[
\widehat F_{j,n}(v):=\frac1n\sum_{m=1}^n\mathbb I\{V_{mj}\le v\}
\]
be the empirical distribution function of these uniforms. Since ranks are almost surely unique,
\[
n+1-\operatorname{rank}(Y_{ij})
=1+\#\{m:Y_{mj}>Y_{ij}\}
=1+\#\{m:V_{mj}<V_{ij}\}
=n\widehat F_{j,n}(V_{ij}),
\]
and therefore, exactly,
\begin{equation}
\label{eq:rank-ratio}
\frac{\widehat X_{ij}}{X_{ij}}
=\frac{(n+1)\,V_{ij}}{n\,\widehat F_{j,n}(V_{ij})}.
\end{equation}
The componentwise relative error of the rank transform is thus the ratio of a uniform empirical
distribution function to the identity, evaluated at the observation's own uniform. The marginal
step controls this ratio above a slowly shrinking level.

\begin{lemma}[Ratio uniformity above a shrinking level]
\label{lem:ratio-gc}
Let \(V_1,\ldots,V_n\) be i.i.d.\ standard uniform with empirical distribution function
\(\widehat F_n\), and let \(c_n\in(0,1)\) satisfy \(nc_n\to\infty\). Then
\[
\sup_{c_n\le v\le1}\abs{\frac{\widehat F_n(v)}{v}-1}\to0
\qquad\text{in probability.}
\]
\end{lemma}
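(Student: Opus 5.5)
We prove the lemma by a geometric peeling argument, reducing the supremum over \([c_n,1]\) to finitely many grid points per dyadic-type block and controlling each grid point by a binomial concentration bound whose relative accuracy improves as \(v\) grows. Fix \(\varepsilon\in(0,1/2)\) and let \(\theta:=1+\varepsilon\). Build the grid \(v_0:=c_n<v_1<\cdots<v_{J_n}=1\) with \(v_{m+1}=\min(\theta v_m,1)\), so that \(J_n\le 1+\log(1/c_n)/\log\theta\). Since \(\widehat F_n\) and the identity are both nondecreasing, for \(v\in[v_m,v_{m+1}]\) we have
\[
\frac{\widehat F_n(v_m)}{\theta\,v_m}\le\frac{\widehat F_n(v)}{v}\le\frac{\theta\,\widehat F_n(v_{m+1})}{v_{m+1}},
\]
so it suffices to show \(\max_{0\le m\le J_n}\abs{\widehat F_n(v_m)/v_m-1}\le\varepsilon\) with probability tending to one. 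Then the supremum is at most of order \(3\varepsilon\), and \(\varepsilon\downarrow0\) concludes.

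For a single grid point, \(n\widehat F_n(v_m)\sim\mathrm{Bin}(n,v_m)\). The multiplicative Chernoff bound gives
\[
\Prob\bigl[\abs{\widehat F_n(v_m)/v_m-1}>\varepsilon\bigr]\le2\exp(-\varepsilon^2nv_m/3).
\]
A union bound over the grid, using \(nv_m\ge nc_n\theta^m\) for \(m<J_n\), bounds the failure probability by
\[
2\sum_{m\ge0}\exp\bigl(-\varepsilon^2nc_n\theta^m/3\bigr)+2\exp(-\varepsilon^2n/3).
\]
The key point is that the exponents grow geometrically in \(m\). Split the sum at \(m_0\), the first index with \(\theta^{m_0}\ge 2\); the tail \(m\ge m_0\) is then dominated by a sum behaving like \(\sum_j\exp(-\varepsilon^2nc_n2^{j}/3)\) times \(m_0\), which is at most a constant depending only on \(\varepsilon\) times \(\exp(-\varepsilon^2nc_n/3)\). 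The first \(m_0\) terms are each at most \(\exp(-\varepsilon^2nc_n/3)\), and \(m_0\) depends only on \(\varepsilon\). Since \(nc_n\to\infty\), the total tends to zero.

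The main obstacle is that the number of grid points, of order \(\log(1/c_n)\), can diverge. A naive union bound at the worst level \(c_n\) would require \(nc_n\gg\log(1/c_n)\), which is not assumed. The geometric growth of \(nv_m\) along the grid is exactly what removes this logarithmic factor, so the care goes into the summation in the previous step. As a fallback, one could invoke the classical Wellner (1978) ratio result, which states that \(\sup_{v\ge c_n}\abs{\widehat F_n(v)/v-1}\to0\) in probability iff \(nc_n\to\infty\), or the Shorack--Wellner Chapter 10 inequality, consistent with the citation given just before the lemma.
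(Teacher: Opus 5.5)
Your proposal is correct and follows essentially the same route as the paper: a geometric grid \(c_n(1+\varepsilon)^m\wedge 1\), the monotonicity sandwich that reduces the supremum to grid points at a cost of \((1+\varepsilon)^2-1\le3\varepsilon\), an exponential binomial tail bound at each grid point, and a union bound whose exponents grow geometrically, so the \(\log(1/c_n)\) grid size does no harm. The differences are cosmetic. The paper uses Bernstein's inequality (constant \(3/8\)) rather than the multiplicative Chernoff bound. It also sums the union bound using \((1+\theta)^\ell\ge1+\ell\theta\), which gives the closed form \(2e^{-\beta_n}/(1-e^{-\beta_n\theta})\) more directly than your block-splitting at \(m_0\).
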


\begin{proof}
This is classical \citep[Chapter~10]{shorack1986empirical}; we include the short argument for
completeness. Fix \(\theta\in(0,1)\) and put \(v_\ell:=c_n(1+\theta)^\ell\wedge1\) for
\(\ell=0,1,\ldots,L_n\) with \(L_n:=\lceil\log(1/c_n)/\log(1+\theta)\rceil\), so that the grid
reaches \(1\). For \(v\in[v_\ell,v_{\ell+1}]\), monotonicity gives
\[
\frac{\widehat F_n(v)}{v}\le\frac{\widehat F_n(v_{\ell+1})}{v_\ell}
\le(1+\theta)\frac{\widehat F_n(v_{\ell+1})}{v_{\ell+1}},
\qquad
\frac{\widehat F_n(v)}{v}\ge\frac1{1+\theta}\,\frac{\widehat F_n(v_\ell)}{v_\ell},
\]
so if \(\abs{\widehat F_n(v_\ell)/v_\ell-1}\le\theta\) for every \(\ell\), the supremum in the
statement is at most \((1+\theta)^2-1\le3\theta\). At a fixed \(v\), \(n\widehat F_n(v)\) is
binomial with parameters \(n\) and \(v\), and Bernstein's inequality gives, for
\(0<\theta\le1\),
\[
\Prob\left[\abs{\widehat F_n(v)-v}>\theta v\right]
\le2\exp\left(-\frac{(\theta nv)^2}{2nv+\tfrac23\theta nv}\right)
\le2\exp\left(-\tfrac38\,\theta^2nv\right).
\]
With \(\beta_n:=\tfrac38\theta^2nc_n\to\infty\) and \((1+\theta)^\ell\ge1+\ell\theta\),
\[
\sum_{\ell=0}^{L_n}\Prob\left[\abs{\widehat F_n(v_\ell)/v_\ell-1}>\theta\right]
\le2\sum_{\ell\ge0}e^{-\beta_n(1+\ell\theta)}
=\frac{2e^{-\beta_n}}{1-e^{-\beta_n\theta}}\to0.
\]
Hence the supremum exceeds \(3\theta\) with probability tending to zero, for every
\(\theta\in(0,1)\).
\end{proof}

Fix the level \(c_n:=\sqrt{k_n}/n\), so that \(nc_n=\sqrt{k_n}\to\infty\) and
\(nc_n=o(k_n)\), and define the bad set
\[
J_n:=\left\{i\le n:\min_{1\le j\le d}V_{ij}<c_n\right\},
\]
the observations having at least one coordinate among the roughly \(\sqrt{k_n}\) most extreme
of its margin. Set
\[
\varepsilon_n^\ast
:=\max_{1\le j\le d}\ \sup_{c_n\le v\le1}
\abs{\frac{n\,\widehat F_{j,n}(v)}{(n+1)\,v}-1}.
\]
Since
\(n\widehat F_{j,n}(v)/((n+1)v)-1
=\tfrac n{n+1}\bigl(\widehat F_{j,n}(v)/v-1\bigr)-\tfrac1{n+1}\),
Lemma~\ref{lem:ratio-gc} applied to each margin and a union bound over the \(d\) margins give
\(\varepsilon_n^\ast\to0\) in probability; each margin's uniforms are i.i.d., and no
independence across margins is needed.

\begin{lemma}[Componentwise accuracy off the bad set]
\label{lem:componentwise}
On the event \(\{\varepsilon_n^\ast\le\tfrac12\}\), every \(i\notin J_n\) satisfies
\[
\max_{1\le j\le d}\abs{\frac{\widehat X_{ij}}{X_{ij}}-1}\le2\varepsilon_n^\ast,
\qquad
\abs{\widehat R_i-R_i}\le2\varepsilon_n^\ast R_i,
\qquad
\norm{\widehat G_i-G_i}_2\le4\varepsilon_n^\ast.
\]
Moreover \(\E\abs{J_n}\le d\sqrt{k_n}\), so \(\abs{J_n}/k_n\to0\) in probability.
\end{lemma}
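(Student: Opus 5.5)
The plan is to combine the exact ratio identity \eqref{eq:rank-ratio} with the definition of \(\varepsilon_n^\ast\). After that, the radius and direction bounds follow from elementary norm inequalities, and the size of \(J_n\) is a first-moment computation.

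\emph{Step 1 (componentwise ratio).} Fix \(i\notin J_n\). Then \(V_{ij}\ge c_n\) for every \(j\). By \eqref{eq:rank-ratio},
\[
\widehat X_{ij}/X_{ij}=\bigl(n\widehat F_{j,n}(V_{ij})/((n+1)V_{ij})\bigr)^{-1}.
\]
Write \(x:=n\widehat F_{j,n}(V_{ij})/((n+1)V_{ij})\). Since \(V_{ij}\ge c_n\), the definition of \(\varepsilon_n^\ast\) gives \(\abs{x-1}\le\varepsilon_n^\ast\le\tfrac12\). Therefore
\[
\abs{1/x-1}=\abs{x-1}/x\le\varepsilon_n^\ast/(1-\varepsilon_n^\ast)\le2\varepsilon_n^\ast.
\]
This is the first bound. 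Note also that \(\widehat F_{j,n}(V_{ij})>0\) automatically, because \(V_{ij}\) counts itself, so the ratio is well defined.

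\emph{Step 2 (radius and direction).} Write \(\widehat X_i=X_i+\Delta_i\), where \(\abs{\Delta_{ij}}\le2\varepsilon_n^\ast X_{ij}\) coordinatewise. Then \(\norm{\Delta_i}_2\le2\varepsilon_n^\ast R_i\), and the reverse triangle inequality gives
\[
\abs{\widehat R_i-R_i}\le\norm{\Delta_i}_2\le2\varepsilon_n^\ast R_i.
\]
For the directions I will use the standard bound \(\norm{x/\norm{x}_2-y/\norm{y}_2}_2\le2\norm{x-y}_2/\norm{y}_2\) for nonzero \(x,y\). It follows by adding and subtracting \(x/\norm{y}_2\):
\[
\norm{x/\norm{x}-x/\norm{y}}=\abs{\norm{y}-\norm{x}}/\norm{y}\le\norm{x-y}/\norm{y},
\]
and the remaining term is \(\norm{x-y}/\norm{y}\). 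Applying this with \(x=\widehat X_i\) and \(y=X_i\) gives
\[
\norm{\widehat G_i-G_i}_2\le2\cdot2\varepsilon_n^\ast R_i/R_i=4\varepsilon_n^\ast.
\]

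\emph{Step 3 (bad set).} Since \(V_{ij}\) is standard uniform, \(\Prob[V_{ij}<c_n]=c_n\). A union bound over margins gives \(\Prob[i\in J_n]\le dc_n\). Hence
\[
\E\abs{J_n}\le n\,d\,c_n=d\sqrt{k_n}.
\]
By Markov's inequality, \(\Prob[\abs{J_n}/k_n>\delta]\le d/(\delta\sqrt{k_n})\to0\) for every \(\delta>0\).

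No step is genuinely hard. The only points requiring care are the inversion of the ratio in Step 1, which is where the restriction \(\varepsilon_n^\ast\le\tfrac12\) is used, and the fact that the definition of \(\varepsilon_n^\ast\) covers exactly the range \(v\ge c_n\) that is guaranteed off \(J_n\).
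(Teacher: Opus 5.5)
Your proposal is correct and follows essentially the same route as the paper's proof. You invert the ratio in \eqref{eq:rank-ratio} using $\varepsilon_n^\ast\le\tfrac12$, pass the coordinatewise bound to the radius and then, via the normalization inequality with the oracle norm $R_i$ in the denominator, to the direction, and bound $\E\abs{J_n}$ by a union bound over margins followed by Markov's inequality.
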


\begin{proof}
For \(i\notin J_n\) and every \(j\), \(V_{ij}\ge c_n\), so by the definition of
\(\varepsilon_n^\ast\) the ratio \(n\widehat F_{j,n}(V_{ij})/((n+1)V_{ij})\) lies in
\([1-\varepsilon_n^\ast,1+\varepsilon_n^\ast]\). By \eqref{eq:rank-ratio},
\(\widehat X_{ij}/X_{ij}\) is the reciprocal of this ratio, so, using
\(\varepsilon_n^\ast\le\tfrac12\),
\[
\abs{\frac{\widehat X_{ij}}{X_{ij}}-1}
\le\frac{\varepsilon_n^\ast}{1-\varepsilon_n^\ast}
\le2\varepsilon_n^\ast.
\]
Consequently
\[
\norm{\widehat X_i-X_i}_2^2
=\sum_{j=1}^dX_{ij}^2\left(\frac{\widehat X_{ij}}{X_{ij}}-1\right)^{\!2}
\le\left(2\varepsilon_n^\ast\right)^2R_i^2,
\]
which gives \(\abs{\widehat R_i-R_i}\le\norm{\widehat X_i-X_i}_2\le2\varepsilon_n^\ast R_i\)
and, by the normalization inequality
\(\norm{a/\norm a_2-b/\norm b_2}_2\le2\norm{a-b}_2/\norm a_2\), the bound
\(\norm{\widehat G_i-G_i}_2\le4\varepsilon_n^\ast\). Finally
\(\E\abs{J_n}\le\sum_{j=1}^dn\,\Prob[V_{1j}<c_n]=dnc_n=d\sqrt{k_n}\), and Markov's inequality
gives \(\abs{J_n}/k_n\to0\) in probability since \(\sqrt{k_n}/k_n\to0\).
\end{proof}

\begin{proposition}[Average angular accuracy from continuity alone]
\label{prop:average-angular}
Let \(F_1,\ldots,F_d\) be continuous and \(k_n\to\infty\), \(k_n/n\to0\). Then
\[
\frac1{k_n}\sum_{i\in\mathcal U_n}\norm{\widehat G_i-G_i}_2\to0
\qquad\text{in probability.}
\]
No assumption on the joint tail is used.
\end{proposition}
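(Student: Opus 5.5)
The plan is to split the sum over \(\mathcal U_n\) into the good indices \(i\notin J_n\) and the bad indices \(i\in J_n\). On the good part, Lemma~\ref{lem:componentwise} gives a uniform bound. On the bad part, the trivial bound \(\norm{\widehat G_i-G_i}_2\le2\) suffices, because \(J_n\) is small relative to \(k_n\). Only the continuity of the margins enters, through Lemma~\ref{lem:ratio-gc} and the exact rank representation \eqref{eq:rank-ratio}. No joint-tail condition is needed, since neither the counting nor the bounds below refer to the law of \(R_1\) or of \(G\).

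\textbf{Step 1.} Note that \(\abs{\mathcal U_n}\le2k_n\), because each of \(\mathcal I_n^{(k_n)}\) and \(\widehat{\mathcal I}_n^{(k_n)}\) has exactly \(k_n\) elements. This holds under the fixed tie-breaking rule for the rank-based set, and for the oracle set because its radii are almost surely distinct.

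\textbf{Step 2.} Work on the event \(\{\varepsilon_n^\ast\le\tfrac12\}\). Its probability tends to one, since \(\varepsilon_n^\ast\to0\) in probability; this was derived from Lemma~\ref{lem:ratio-gc} with \(c_n=\sqrt{k_n}/n\), so that \(nc_n=\sqrt{k_n}\to\infty\). On this event, Lemma~\ref{lem:componentwise} gives \(\norm{\widehat G_i-G_i}_2\le4\varepsilon_n^\ast\) for every \(i\notin J_n\), regardless of whether \(i\) was selected. Hence
\[
\frac1{k_n}\sum_{i\in\mathcal U_n\setminus J_n}\norm{\widehat G_i-G_i}_2\le\frac{\abs{\mathcal U_n}}{k_n}\,4\varepsilon_n^\ast\le8\varepsilon_n^\ast=o_p(1).
\]

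\textbf{Step 3.} Both \(\widehat G_i\) and \(G_i\) are unit vectors, so \(\norm{\widehat G_i-G_i}_2\le2\). Therefore
\[
\frac1{k_n}\sum_{i\in\mathcal U_n\cap J_n}\norm{\widehat G_i-G_i}_2\le\frac{2\abs{J_n}}{k_n}.
\]
By Lemma~\ref{lem:componentwise}, \(\E\abs{J_n}\le d\sqrt{k_n}\), and Markov's inequality then makes the right-hand side \(o_p(1)\).

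\textbf{Step 4.} Combine Steps 2 and 3. The event in Step 2 has probability tending to one, and the complementary event contributes nothing to convergence in probability, so the full average is \(o_p(1)\).

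The only point needing care is that \(\mathcal U_n\) is random and depends on the rank transform itself, so one might worry about selection bias. Step 2 avoids this because the bound from Lemma~\ref{lem:componentwise} is uniform over all \(i\notin J_n\), and Step 3 bounds the bad indices by all of \(J_n\) rather than only the selected ones. The genuinely delicate inputs, the ratio uniformity down to the level \(c_n\) and the count of \(J_n\), are already supplied by Lemma~\ref{lem:ratio-gc} and Lemma~\ref{lem:componentwise}.
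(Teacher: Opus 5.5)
Your proposal is correct and follows the paper's proof essentially verbatim. You split the sum over \(\mathcal U_n\) at \(J_n\), bound the good indices by \(4\varepsilon_n^\ast\) via Lemma~\ref{lem:componentwise} on \(\{\varepsilon_n^\ast\le\tfrac12\}\) together with \(\abs{\mathcal U_n}\le2k_n\), and bound the bad indices by \(2\abs{J_n}/k_n\), which gives \(2\abs{J_n}/k_n+8\varepsilon_n^\ast=o_p(1)\). Your observation that the bound holds uniformly over all \(i\notin J_n\), so that selection through the random set \(\mathcal U_n\) plays no role, is exactly why no joint-tail assumption is needed.
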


\begin{proof}
\(\abs{\mathcal U_n}\le2k_n\), and \(\norm{\widehat G_i-G_i}_2\le2\) always, being a
difference of unit vectors. Splitting the sum over \(\mathcal U_n\) at \(J_n\) and applying
Lemma~\ref{lem:componentwise} on \(\{\varepsilon_n^\ast\le\tfrac12\}\),
\[
\frac1{k_n}\sum_{i\in\mathcal U_n}\norm{\widehat G_i-G_i}_2
\le\frac{2\abs{J_n}}{k_n}+\frac{2k_n}{k_n}\cdot4\varepsilon_n^\ast
=\frac{2\abs{J_n}}{k_n}+8\varepsilon_n^\ast
\to0
\qquad\text{in probability.}
\qedhere
\]
\end{proof}

The supremum analogue of the average angular condition fails. For the largest observation of
margin \(j\), \(\operatorname{rank}(Y_{ij})=n\) gives \(\widehat X_{ij}=n+1\), while
\(X_{ij}\) is the reciprocal of the minimal uniform \(V_{(1),j}\), and \(nV_{(1),j}\)
converges weakly to a standard exponential variable; hence \(\widehat X_{ij}/X_{ij}\)
converges weakly to that exponential variable rather than to \(1\). The rank transform thus
distorts the top few order statistics of each margin at constant order, and such observations
are typically selected. Under the level \(c_n=\sqrt{k_n}/n\) there are only
\(O_p(\sqrt{k_n})\) of them, which is why the averaged condition of
Proposition~\ref{prop:rank-margin-stability} absorbs them while a uniform condition cannot.
This is the reason Proposition~\ref{prop:rank-margin-stability} is stated in average form.

\begin{lemma}[Threshold location and band count]
\label{lem:radial-band}
Define \(r_n\) by \(\bar F_R(r_n)=k_n/n\), let \(t_n:=R_{n-k_n+1:n}\) be the \(k_n\)-th
largest radius, and for \(\delta\in(0,\tfrac12)\) let
\[
B_n(\delta):=\#\left\{i\le n:t_n(1-\delta)\le R_i<t_n\right\}.
\]
Then \(t_n/r_n\to1\) in probability, and for each fixed \(\delta\in(0,\tfrac12)\),
\[
\frac{B_n(\delta)}{k_n}\le(1-2\delta)^{-\alpha}-1+o_p(1).
\]
\end{lemma}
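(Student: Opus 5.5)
The proof works with the empirical tail count \(N_n(r):=\#\{i\le n:R_i>r\}\). It rests on two facts: the regular variation of \(\bar F_R\) with index \(-\alpha\) noted above, and the binomial law of large numbers at intermediate levels.

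\emph{Threshold location.} Fix \(\varepsilon\in(0,1)\) and use deterministic thresholds \(r_n^\pm:=r_n(1\pm\varepsilon)\). Regular variation gives
\[
n\bar F_R(r_n^\pm)/k_n\to(1\pm\varepsilon)^{-\alpha}.
\]
Since \(N_n(r_n^\pm)\) is binomial with mean of order \(k_n\to\infty\), Chebyshev's inequality gives
\[
N_n(r_n^\pm)/k_n\to(1\pm\varepsilon)^{-\alpha}
\]
in probability. We have \((1+\varepsilon)^{-\alpha}<1<(1-\varepsilon)^{-\alpha}\), and continuity of \(R_1\) means \(N_n(t_n)=k_n-1\) almost surely. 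Hence, with probability tending to one, \(N_n(r_n^+)<k_n\le N_n(r_n^-)\), which forces \(r_n^-\le t_n\le r_n^+\). Letting \(\varepsilon\downarrow0\) yields \(t_n/r_n\to1\). This is the same sandwich device as in the proof of Theorem~\ref{thm:top-k-consistency}.

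\emph{Band count.} Write the band count as a difference of tail counts:
\[
B_n(\delta)=N_n(t_n(1-\delta)^-)-N_n(t_n^-).
\]
Here \(N_n(t_n^-)=k_n\) almost surely. Since \(N_n\) is nonincreasing, the lower count can be bounded by a count at a deterministic threshold. Fix a small \(\varepsilon>0\) chosen so that \((1-\delta)(1-\varepsilon)>1-2\delta\), for instance any \(\varepsilon<\delta/(1-\delta)\). On the event \(t_n\ge r_n(1-\varepsilon)\), which has probability tending to one by the first part,
\[
t_n(1-\delta)\ge r_n(1-\delta)(1-\varepsilon)\ge r_n(1-2\delta).
\]
It follows that
\[
\#\{i:R_i\ge t_n(1-\delta)\}\le N_n\bigl(r_n(1-2\delta)^-\bigr).
\]
Because \(R_1\) is continuous, this last count equals \(N_n(r_n(1-2\delta))\) almost surely. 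By the binomial law of large numbers and regular variation it equals
\[
k_n(1-2\delta)^{-\alpha}(1+o_p(1)).
\]
Subtracting \(k_n\) and dividing by \(k_n\) gives
\[
B_n(\delta)/k_n\le(1-2\delta)^{-\alpha}-1+o_p(1),
\]
as claimed. The slack from \(1-\delta\) to \(1-2\delta\) is exactly what absorbs the random location of \(t_n\).

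\emph{Main obstacle.} The only delicate point is handling the random threshold \(t_n\) inside the count. Everything else is the standard fact that \(N_n(cr_n)/k_n\to c^{-\alpha}\) in probability for each fixed \(c>0\). This fact needs \(k_n\to\infty\), so that the binomial variance \(\le k_n c^{-\alpha}(1+o(1))\) is \(o(k_n^2)\), and it needs regular variation, which Assumption~\ref{ass:erv} supplies.

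The random threshold is dealt with by the monotone sandwich on a high-probability event. No uniform-in-\(c\) convergence is required, because only finitely many deterministic levels are used: \(r_n(1\pm\varepsilon)\) and \(r_n(1-2\delta)\).
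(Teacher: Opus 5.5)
Your proof is correct and follows essentially the same route as the paper: on a high-probability event for \(t_n\), bound \(B_n(\delta)\) by the deterministic-level count \(N_n(r_n(1-2\delta))-k_n\), and obtain \(N_n(r_ns)/k_n\to s^{-\alpha}\) from regular variation plus Chebyshev. The differences are minor: you prove \(t_n/r_n\to1\) directly with the deterministic-threshold sandwich, where the paper cites the classical consistency of intermediate order statistics, and you work on \(\{t_n\ge r_n(1-\varepsilon)\}\) with \(\varepsilon<\delta/(1-\delta)\), where the paper takes \(\varepsilon=\delta\) and uses \((1-\delta)^2\ge1-2\delta\).
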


\begin{proof}
The first statement is the classical consistency of intermediate order statistics under
regular variation; see \citet[Chapter~2]{de2006extreme}. For the band count, let
\(N_n(r):=\#\{i\le n:R_i>r\}\). For fixed \(s>0\),
\(\E[N_n(r_ns)]/k_n=\bar F_R(r_ns)/\bar F_R(r_n)\to s^{-\alpha}\) by the uniform convergence
theorem for regularly varying functions \citep[Theorem~1.5.2]{bingham1987regular}, and
\(\Var[N_n(r_ns)]\le\E[N_n(r_ns)]\), so Chebyshev's inequality gives
\(N_n(r_ns)/k_n\to s^{-\alpha}\) in probability. Since the distribution of \(R_1\) is
continuous, the \(R_i\) are almost surely distinct, so \(\#\{i:R_i\ge t_n\}=k_n\). On the
event \(\{t_n\ge r_n(1-\delta)\}\), whose probability tends to one by the first statement,
every index counted by \(B_n(\delta)\) has
\(R_i\ge t_n(1-\delta)\ge r_n(1-\delta)^2\ge r_n(1-2\delta)\), and the same holds for the
\(k_n\) indices with \(R_i\ge t_n\); hence
\[
B_n(\delta)\le\#\{i:R_i\ge r_n(1-2\delta)\}-k_n
=N_n\bigl(r_n(1-2\delta)\bigr)-k_n
=k_n\left[(1-2\delta)^{-\alpha}-1+o_p(1)\right],
\]
where the middle equality holds almost surely because the threshold \(r_n(1-2\delta)\) is
deterministic and the distribution of \(R_1\) is continuous.
\end{proof}

\begin{proposition}[Selected-set stability under radial regular variation]
\label{prop:selected-set}
Under the conditions of Theorem~\ref{thm:rank-consistency},
\(\abs{\mathcal I_n^{(k_n)}\triangle\widehat{\mathcal I}_n^{(k_n)}}/k_n\to0\) in probability.
\end{proposition}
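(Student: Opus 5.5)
The argument compares the two rankings through the radii. Off the bad set \(J_n\), Lemma~\ref{lem:componentwise} makes \(\widehat R_i\) and \(R_i\) uniformly close in relative terms, and Lemma~\ref{lem:radial-band} shows that only a small fraction of radii lie near the top-\(k_n\) threshold \(t_n\). The only indices that can switch between \(\mathcal I_n^{(k_n)}\) and \(\widehat{\mathcal I}_n^{(k_n)}\) are those in \(J_n\) or in a thin radial band around \(t_n\). Both groups are \(o_p(k_n)\) once the band width is sent to zero.

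Fix \(\delta\in(0,\tfrac14)\) and work on the event \(E_n:=\{2\varepsilon_n^\ast\le\delta/4\}\), whose probability tends to one. Since both selected sets have cardinality \(k_n\), it suffices to bound \(|\widehat{\mathcal I}_n^{(k_n)}\setminus\mathcal I_n^{(k_n)}|\), because \(|\mathcal I\triangle\widehat{\mathcal I}|=2|\widehat{\mathcal I}\setminus\mathcal I|\). Take \(i\in\widehat{\mathcal I}\setminus\mathcal I\) with \(i\notin J_n\); then \(R_i<t_n\). There are two cases.

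\emph{Case 1: \(R_i\ge t_n(1-\delta)\).} Then \(i\) is counted by \(B_n(\delta)\).

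\emph{Case 2: \(R_i<t_n(1-\delta)\).} Then \(\widehat R_i\le(1+\delta/4)(1-\delta)t_n<(1-\delta/2)t_n\). Since \(i\) was nevertheless selected, \(\widehat R_i\) is at least the \(k_n\)-th largest rank radius, so at least \(k_n\) indices satisfy \(\widehat R_m\ge\widehat R_i\). Among the \(k_n\) oracle indices with \(R_m\ge t_n\), every one outside \(J_n\) has \(\widehat R_m\ge(1-\delta/4)t_n>\widehat R_i\). So at least \(k_n-|J_n|\) indices strictly beat \(\widehat R_i\). If \(|J_n|=0\) this would already exceed the allowed count. In general it shows that at most \(|J_n|\) selected indices can have \(\widehat R\) strictly below \(\min_{m\in\mathcal I\setminus J_n}\widehat R_m\).

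To make Case 2 rigorous, I will count as follows. Every \(i\in\widehat{\mathcal I}\setminus(\mathcal I\cup J_n)\) falling in Case 2 has \(\widehat R_i\) strictly below all \(\widehat R_m\) with \(m\in\mathcal I\setminus J_n\). There are at least \(k_n-|J_n|\) such \(m\). Because \(\widehat{\mathcal I}\) contains every index strictly above its \(k_n\)-th largest value, and has only \(k_n\) slots, all but \(|J_n|\) of those \(m\) lie in \(\widehat{\mathcal I}\). That leaves at most \(|J_n|\) slots for the Case 2 indices.

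Putting the pieces together,
\[
|\widehat{\mathcal I}\setminus\mathcal I|\le |J_n|+B_n(\delta)+|J_n|
\]
on \(E_n\). By Lemma~\ref{lem:componentwise}, \(|J_n|/k_n\to0\) in probability. By Lemma~\ref{lem:radial-band}, \(B_n(\delta)/k_n\le(1-2\delta)^{-\alpha}-1+o_p(1)\). Hence
\[
\limsup_n \Prob\!\left[\,|\mathcal I\triangle\widehat{\mathcal I}|/k_n>3\{(1-2\delta)^{-\alpha}-1\}\right]=0,
\]
and letting \(\delta\downarrow0\) proves the claim.

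The main obstacle I expect is Case 2. Its counting must handle ties among the \(\widehat R_i\) under the arbitrary tie-breaking rule, and it must handle the bad indices in \(J_n\), whose rank radii are uncontrolled and can displace good ones. The argument above uses only the stated property of the rule, namely that every index strictly above the \(k_n\)-th largest value is included. Each bad index can displace at most one good index, which is why \(|J_n|\) appears in the bound with a bounded multiplicity.
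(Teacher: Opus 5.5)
Your proof is correct and is essentially the paper's argument run from the other side. The paper bounds $|\mathcal I_n^{(k_n)}\setminus\widehat{\mathcal I}_n^{(k_n)}|$ by counting indices with $\widehat R_i\ge(1-2\varepsilon_n^\ast)t_n$ against $|J_n|+k_n+B_n(4\varepsilon_n^\ast)$. You instead bound $|\widehat{\mathcal I}_n^{(k_n)}\setminus\mathcal I_n^{(k_n)}|$ using a fixed band $B_n(\delta)$ plus a slot-counting step. Both rely on the same Lemmas~\ref{lem:componentwise} and~\ref{lem:radial-band} and arrive at the same bound $2\{2|J_n|+B_n(\cdot)\}$.

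One small tightening in Case~2: once a single far index is selected, every $m\in\mathcal I_n^{(k_n)}\setminus J_n$ lies strictly above the $k_n$-th largest rank radius and is therefore selected. The far indices then fit into at most $|\mathcal I_n^{(k_n)}\cap J_n|\le|J_n|$ slots. Read literally, your phrase ``all but $|J_n|$ of those $m$'' would give only $2|J_n|$, which is equally harmless.
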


\begin{proof}
Both sets have cardinality \(k_n\), so
\(\abs{\mathcal I_n^{(k_n)}\triangle\widehat{\mathcal I}_n^{(k_n)}}
=2\abs{\mathcal I_n^{(k_n)}\setminus\widehat{\mathcal I}_n^{(k_n)}}\). Fix \(\eta>0\) and
choose \(\delta_0\in(0,\tfrac12)\) with \(2[(1-2\delta_0)^{-\alpha}-1]<\eta/2\). Since
\(\varepsilon_n^\ast\to0\) in probability, the event \(\{4\varepsilon_n^\ast\le\delta_0\}\)
has probability tending to one; work on this event, on which in particular
\(\varepsilon_n^\ast\le\tfrac12\), so Lemma~\ref{lem:componentwise} applies. Call \(i\)
\emph{good} if \(i\notin J_n\).

Every good \(i\in\mathcal I_n^{(k_n)}\) has \(R_i\ge t_n\) and hence
\(\widehat R_i\ge(1-2\varepsilon_n^\ast)R_i\ge(1-2\varepsilon_n^\ast)t_n\). Therefore
\[
\abs{\mathcal I_n^{(k_n)}\setminus\widehat{\mathcal I}_n^{(k_n)}}
\le\abs{J_n}
+\#\left(\{i:\widehat R_i\ge(1-2\varepsilon_n^\ast)t_n\}
\setminus\widehat{\mathcal I}_n^{(k_n)}\right).
\]
Because \(\widehat{\mathcal I}_n^{(k_n)}\) consists of \(k_n\) indices with the largest
\(\widehat R_i\), under any tie-breaking rule the number of indices at level
\(\ge(1-2\varepsilon_n^\ast)t_n\) that it fails to contain is at most
\(\widehat N_n-k_n\) when
\(\widehat N_n:=\#\{i:\widehat R_i\ge(1-2\varepsilon_n^\ast)t_n\}\) exceeds \(k_n\), and zero
otherwise. A good \(i\) with \(\widehat R_i\ge(1-2\varepsilon_n^\ast)t_n\) has
\[
R_i\ge\frac{\widehat R_i}{1+2\varepsilon_n^\ast}
\ge t_n\,\frac{1-2\varepsilon_n^\ast}{1+2\varepsilon_n^\ast}
\ge t_n(1-4\varepsilon_n^\ast),
\]
so, using \(\#\{i:R_i\ge t_n\}=k_n\) almost surely,
\[
\widehat N_n\le\abs{J_n}+\#\{i:R_i\ge t_n(1-4\varepsilon_n^\ast)\}
=\abs{J_n}+k_n+B_n(4\varepsilon_n^\ast).
\]
Combining the two displays,
\[
\abs{\mathcal I_n^{(k_n)}\triangle\widehat{\mathcal I}_n^{(k_n)}}
\le2\left(2\abs{J_n}+B_n(4\varepsilon_n^\ast)\right),
\]
and on the working event \(B_n(4\varepsilon_n^\ast)\le B_n(\delta_0)\) by monotonicity of the
band in its width. Lemma~\ref{lem:radial-band} and the choice of \(\delta_0\) give
\(\limsup_n\Prob[2B_n(4\varepsilon_n^\ast)/k_n>\eta/2]=0\), while
\(4\abs{J_n}/k_n\to0\) in probability by Lemma~\ref{lem:componentwise}. Hence
\(\Prob[\abs{\mathcal I_n^{(k_n)}\triangle\widehat{\mathcal I}_n^{(k_n)}}/k_n>\eta]\to0\).
\end{proof}

\begin{proof}[Proof of Theorem~\ref{thm:rank-consistency}]
Propositions~\ref{prop:average-angular} and~\ref{prop:selected-set} verify the two conditions
of Proposition~\ref{prop:rank-margin-stability}, which yields
\(\norm{\widehat\Sigma_{\mu,n}^{(k_n),\mathrm{emp}}
-\widehat\Sigma_{\mu,n}^{(k_n),\mathrm{orc}}}_{\mathrm{op}}\to0\) in probability. Since
Assumption~\ref{ass:erv} implies Assumption~\ref{ass:angular-limit},
Theorem~\ref{thm:top-k-consistency} gives
\(\norm{\widehat\Sigma_{\mu,n}^{(k_n),\mathrm{orc}}-\Sigma_\mu}_{\mathrm{op}}\to0\) in
probability, and the triangle inequality gives consistency of
\(\widehat\Sigma_{\mu,n}^{(k_n),\mathrm{emp}}\). The spectral consequences follow from the
same eigenvalue and spectral-projector perturbation bounds as in the proof of
Theorem~\ref{thm:top-k-consistency}.
\end{proof}

The radial step uses regular variation of \(\bar F_R\) only through the vanishing band ratio
\[
\lim_{\delta\downarrow0}\limsup_{r\to\infty}
\frac{\bar F_R(r(1-\delta))}{\bar F_R(r)}=1.
\]
Under Assumption~\ref{ass:angular-limit} alone, the standard Pareto margins give only the
two-sided bound \(r\bar F_R(r)\in[1,d^{3/2}]\), which permits \(\bar F_R\) to fall by a fixed
factor over relative bands of vanishing width; a fixed fraction of the top-\(k_n\) radii could
then sit in such a band, and a rank perturbation of vanishing relative size could reshuffle a
non-vanishing fraction of the selected set. We do not claim that the selected-set condition fails
under Assumption~\ref{ass:angular-limit}, only that its verification requires some radial
anti-concentration, for which Assumption~\ref{ass:erv} is the natural primitive condition.

The proof gives more than \(o_p(1)\): since \(g\mapsto u_\mu(g)u_\mu(g)^{\T}\) is bounded by
one in operator norm and \(2\)-Lipschitz on \(\Sphere^{d-1}_{+}\), the bounds above combine to
\[
\norm{\widehat\Sigma_{\mu,n}^{(k_n),\mathrm{emp}}
-\widehat\Sigma_{\mu,n}^{(k_n),\mathrm{orc}}}_{\mathrm{op}}
=O_p\left(\varepsilon_n^\ast+k_n^{-1/2}
+\frac{B_n(4\varepsilon_n^\ast)}{k_n}\right),
\]
with sharper weighted-approximation bounds for \(\varepsilon_n^\ast\), of order
\((nc_n)^{-1/2}=k_n^{-1/4}\) up to logarithmic factors, available from
\citet{shorack1986empirical}. We do not pursue rates here because the \(k_n^{-1/2}\) scale is
governed by the separate program of Section~\ref{app:conditional-rank-margin-clt}.

\subsection{Oracle CLT and plug-in inference}
\label{app:oracle-clt-plugin-inference}

\begin{lemma}[Oracle tail empirical process]
\label{lem:oracle-tail-empirical-process}
Let \(\mathcal H\) be the finite class consisting of the constant function
\(h_0(g):=1\) and the coordinate functions of
\(h_\mu(g):=u_\mu(g)u_\mu(g)^{\T}\).
For \(x\in[1/2,2]\), let \(r_n(x)\) be a deterministic threshold with
\(\bar F_R(r_n(x))=xk_n/n\), and define, for scalar \(\psi\in\mathcal H\),
\[
\mathbb G_n(\psi,x)
:=
\frac1{\sqrt{k_n}}\sum_{i=1}^n
\left\{
\psi(G_i)\mathbb I\{R_i>r_n(x)\}
-
\E[\psi(G_1)\mathbb I\{R_1>r_n(x)\}]
\right\}.
\]
Under the assumptions of Theorem~\ref{thm:top-k-consistency},
\(\mathbb G_n\rightsquigarrow W\) in
\(\ell^\infty(\mathcal H\times[1/2,2])\), where \(W\) is centered Gaussian with
covariance
\[
\Cov[W(\psi,x),W(\varphi,x')]
=
\min(x,x')\E[\psi(G)\varphi(G)].
\]
\end{lemma}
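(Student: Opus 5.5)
Lemma~\ref{lem:oracle-tail-empirical-process} is a functional CLT for a triangular array of i.i.d.\ bounded summands indexed by a finite class times a compact interval. I would establish it in the standard two steps: convergence of finite-dimensional distributions, then asymptotic tightness in \(x\). For fixed \(n\), set \(\xi_{n,i}(\psi,x):=k_n^{-1/2}\{\psi(G_i)\mathbb I\{R_i>r_n(x)\}-\E[\psi(G_1)\mathbb I\{R_1>r_n(x)\}]\}\). These are i.i.d.\ over \(i\) and bounded by \(2B/\sqrt{k_n}\), where \(B:=\max_{\psi\in\mathcal H}\norm{\psi}_\infty\le1\) because \(\norm{u_\mu(g)}_2\le1\).

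For the covariance, note that \(r_n(x)\ge r_n(x')\) when \(x\le x'\). Hence \(\mathbb I\{R>r_n(x)\}\mathbb I\{R>r_n(x')\}=\mathbb I\{R>r_n(\min(x,x'))\}\), and
\[
\Cov[\mathbb G_n(\psi,x),\mathbb G_n(\varphi,x')]
=\frac{n}{k_n}\E[\psi\varphi(G_1)\mathbb I\{R_1>r_n(x\wedge x')\}]-\frac{n}{k_n}\E[\psi(G_1)\mathbb I\{R_1>r_n(x)\}]\,\E[\varphi(G_1)\mathbb I\{R_1>r_n(x')\}].
\]
The second term is \(O(k_n/n)\to0\). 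In the first term, \((n/k_n)\Prob[R_1>r_n(x\wedge x')]=x\wedge x'\). Moreover \(\E[\psi\varphi(G_1)\mid R_1>r]\to\E[\psi\varphi(G)]\) by Assumption~\ref{ass:angular-limit}, since \(\psi\varphi\) is bounded and continuous on \(\Sphere^{d-1}_+\) (the maps \(u_\mu\) are linear). Because \(r_n(x)\to\infty\), the limit is \(\min(x,x')\E[\psi(G)\varphi(G)]\). Existence of \(r_n(x)\) uses continuity of \(R_1\). Finite-dimensional convergence then follows by Cram\'er--Wold and the Lindeberg--Feller theorem: the summands are uniformly \(O(k_n^{-1/2})\), so the Lindeberg condition holds trivially.

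The main step is tightness in \(x\in[1/2,2]\). Since \(\mathcal H\) is finite, it suffices to treat each \(\psi\) separately. I would use a bracketing/chaining argument for monotone indicators, splitting \(\psi=\psi^+-\psi^-\) into nonnegative parts. For each nonnegative part \(\phi\), the map \(x\mapsto\sum_i\phi(G_i)\mathbb I\{R_i>r_n(x)\}\) is nondecreasing, and so is its mean. Take a grid \(x_0<\dots<x_M\) of mesh \(\delta\). On each cell, monotonicity sandwiches the process between its values at the endpoints, up to the mean increment \(\sqrt{k_n}\,(n/k_n)\E[\phi\,\mathbb I\{r_n(x_{l+1})<R\le r_n(x_l)\}]/\sqrt{k_n}\cdot\sqrt{k_n}\)\,—\,more simply, the sandwich error is controlled by \(\sup_l|\mathbb G_n(\phi,x_{l+1})-\mathbb G_n(\phi,x_l)|\) plus a deterministic term.

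This deterministic term is the delicate point: in the unscaled form, the mean increment over a cell is \(\sqrt{k_n}\,\delta\), which is not small. The remedy is the standard one for tail empirical processes. I would instead bound the modulus directly using an increment moment bound: for \(x<x'\), \(\E|\mathbb G_n(\phi,x')-\mathbb G_n(\phi,x)|^4\lesssim(x'-x)^2+(x'-x)/k_n\), a binomial-type fourth-moment calculation with bounded \(\phi\). Combined with monotonicity of the uncentered partial sums, this gives tightness via a Billingsley-type chaining criterion for processes with jumps of size \(O(k_n^{-1/2})\); equivalently, one can cite the tail empirical process results for VC classes of indicators (e.g.\ Einmahl 1992/Drees--Rootz\'en 2010). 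The \((x'-x)/k_n\) term is the obstacle, since it prevents a direct Kolmogorov criterion. I would handle it by restricting chaining to grid mesh \(\ge k_n^{-1}\) and controlling within-mesh oscillations through monotonicity, which costs \(O(k_n^{-1/2})\).
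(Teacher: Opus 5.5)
Your proposal is correct. The finite-dimensional step matches the paper's: nested exceedance events intersect in $\{R_1>r_n(x\wedge x')\}$, the centering product is $O(k_n/n)$, angular convergence handles $\E[\psi\varphi(G_1)\mid R_1>r]$, and Lindeberg--Feller with Cram\'er--Wold finishes the step.

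The routes differ at tightness in $x$. The paper invokes the triangular-array bracketing CLT (e.g.\ Theorem~2.11.23 of van der Vaart and Wellner). For $\psi\ge0$, the brackets $[\psi\,\mathbb I\{R>r_n(x_l)\},\psi\,\mathbb I\{R>r_n(x_{l+1})\}]$ from a regular $x$-grid have summed squared $L_2$-size at most $B^2(x_{l+1}-x_l)$ after the $k_n^{-1/2}$ row scaling. The bracketing integral is therefore finite uniformly in $n$, and the truncation-and-Bernstein chaining inside that theorem absorbs the $\sqrt{k_n}\,\delta$ mean-increment problem you correctly flag.

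You do that chaining by hand instead. Your bound $\E|\mathbb G_n(\phi,x')-\mathbb G_n(\phi,x)|^4\le C\{(x'-x)^2+(x'-x)/k_n\}$ is right, since each summand $Y$ of the increment satisfies $0\le Y\le B$ and $\Prob[Y\ne0]\le(x'-x)k_n/n$. On dyadic levels with mesh at least $k_n^{-1}$, this is a Kolmogorov--Chentsov condition with exponent $2>1$, so the level-wise union bounds are summable uniformly in $n$. Monotonicity of the uncentered sums for $\psi^\pm$ then bounds oscillation inside the finest cells by a grid increment plus $Bk_n^{-1/2}$.

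Your route is complete and elementary: it needs only Markov's inequality and the one-dimensional monotone structure of the index. The paper's citation is shorter and would extend unchanged to infinite angular classes with finite bracketing entropy.

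One cleanup: your garbled expression for the cell's mean increment should read $k_n^{-1/2}\,n\,\E[\phi(G_1)\mathbb I\{r_n(x_{l+1})<R_1\le r_n(x_l)\}]\le B\sqrt{k_n}\,\delta$.
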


\begin{proof}
For fixed \((\psi,x)\) and \((\varphi,x')\), put \(y:=\min(x,x')\). Since the two
exceedance events intersect in \(\{R_1>r_n(y)\}\),
\[
\frac n{k_n}
\E[\psi(G_1)\varphi(G_1)\mathbb I\{R_1>r_n(y)\}]
=
y\,\E[\psi(G_1)\varphi(G_1)\mid R_1>r_n(y)]
\to
y\,\E[\psi(G)\varphi(G)].
\]
The product of the centering terms is \(O(k_n/n)\) after multiplication by \(n/k_n\).
Because all functions in \(\mathcal H\) are bounded, the row-wise Lindeberg condition is
immediate, and the finite-dimensional convergence follows from the Lindeberg--Feller theorem
and the Cramer--Wold device.

It remains only to control the index \(x\). The class
\(\{(r,g)\mapsto \psi(g)\mathbb I\{r>r_n(x)\}:x\in[1/2,2],\psi\in\mathcal H\}\)
has a bounded envelope after the \(n/k_n\) row scaling and is generated by a finite class of
bounded angular functions multiplied by a monotone one-dimensional threshold class. A regular
grid in \(x\) gives brackets whose squared \(L_2\)-size is proportional to the grid width,
uniformly in \(n\), and the envelope Lindeberg condition follows from \(k_n/n\to0\). The
triangular-array bracketing central limit theorem therefore gives tightness. The limiting
semimetric is proportional to \(\abs{x-x'}^{1/2}\) on each fixed angular coordinate, so the
limit admits continuous sample paths in \(x\).
\end{proof}

\begin{lemma}[Oracle random threshold expansion]
\label{lem:oracle-random-threshold-expansion}
Let \(R_{(1)}\ge\cdots\ge R_{(n)}\) be the descending order statistics of
\(R_1,\ldots,R_n\), and set
\(\widehat x_n:=\frac n{k_n}\bar F_R(R_{(k_n+1)})\).
Under the assumptions of Theorem~\ref{thm:top-k-consistency},
\(\widehat x_n\to1\) in probability and
\[
\sqrt{k_n}(\widehat x_n-1)
=
-\mathbb G_n(h_0,1)+o_p(1).
\]
Moreover, with probability tending to one, the oracle top-\(k_n\) set equals
\(\{i:R_i>r_n(\widehat x_n)\}\).
\end{lemma}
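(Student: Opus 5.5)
The proof has three steps: reduce to uniform order statistics, linearize the inverse of the count process using Lemma~\ref{lem:oracle-tail-empirical-process}, and identify the top-$k_n$ set.

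\emph{Reduction to uniforms.} Since the distribution of $R_1$ is continuous, $W_i:=\bar F_R(R_i)$ are i.i.d.\ standard uniform. The map $\bar F_R$ is nonincreasing, so $\bar F_R(R_{(k_n+1)})$ is the $(k_n+1)$-th smallest $W_i$, almost surely. Hence $\widehat x_n=(n/k_n)W_{(k_n+1):n}$ is the rescaled intermediate uniform order statistic.

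\emph{Linearization.} Define the count process $N_n(x):=\#\{i:R_i>r_n(x)\}=\#\{i:W_i<xk_n/n\}$ almost surely, with $\E N_n(x)=xk_n$. Then $\mathbb G_n(h_0,x)=k_n^{-1/2}(N_n(x)-xk_n)$. By Lemma~\ref{lem:oracle-tail-empirical-process}, these processes are tight in $\ell^\infty([1/2,2])$ with asymptotically continuous paths.

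By definition of the order statistic, $N_n(\widehat x_n)=k_n$ exactly: there are exactly $k_n$ uniforms strictly below $W_{(k_n+1)}$, with no ties almost surely.

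Consistency follows by monotonicity. Uniform convergence $\sup_x|N_n(x)/k_n-x|=O_p(k_n^{-1/2})$ on $[1/2,2]$ gives $N_n(1\pm\varepsilon)/k_n\to1\pm\varepsilon$. Since $N_n$ is nondecreasing, with probability tending to one $1-\varepsilon<\widehat x_n<1+\varepsilon$. So $\widehat x_n\to1$, and $\widehat x_n\in[1/2,2]$ with high probability.

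For the expansion, evaluate at $x=\widehat x_n$:
\[
0=k_n^{-1/2}\bigl(N_n(\widehat x_n)-k_n\bigr)=\mathbb G_n(h_0,\widehat x_n)+\sqrt{k_n}(\widehat x_n-1).
\]
It remains to replace $\mathbb G_n(h_0,\widehat x_n)$ by $\mathbb G_n(h_0,1)$. This is the main step. I would use asymptotic equicontinuity: tightness in $\ell^\infty$ together with a limit whose covariance gives the semimetric $\propto|x-x'|^{1/2}$ yields
\[
\sup_{|x-1|\le\delta_n}\bigl|\mathbb G_n(h_0,x)-\mathbb G_n(h_0,1)\bigr|=o_p(1)\quad\text{for every }\delta_n\to0.
\]
Combining this with $\widehat x_n\to1$ in probability gives $\sqrt{k_n}(\widehat x_n-1)=-\mathbb G_n(h_0,1)+o_p(1)$.

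If one prefers to avoid the process machinery, a direct alternative is available. Write $W_{(k_n+1)}=\Gamma_{k_n+1}/\Gamma_{n+1}$ with $\Gamma_j$ sums of standard exponentials; the Rényi representation gives $\sqrt{k_n}(\widehat x_n-1)=O_p(1)$. One can then bound $N_n(x)-N_n(1)-k_n(x-1)$ over $|x-1|\le Mk_n^{-1/2}$ by Chebyshev on a finite grid plus monotonicity. I expect that to be routine.

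\emph{Set identity.} With probability tending to one, $\widehat x_n\in[1/2,2]$, so $r_n(\widehat x_n)$ is defined. Continuity of $\bar F_R$ near $r_n$ gives $\bar F_R(r_n(\widehat x_n))=\bar F_R(R_{(k_n+1)})$, so $R_{(k_n+1)}$ and $r_n(\widehat x_n)$ lie in the same level set of $\bar F_R$. Such a level set is an interval carrying zero probability, so almost surely no other $R_i$ falls between them. Hence $\{i:R_i>r_n(\widehat x_n)\}$ coincides with $\{i:R_i>R_{(k_n+1)}\}$. By distinctness of the radii, this is exactly the oracle top-$k_n$ set $\mathcal I_n^{(k_n)}$.

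The one delicate point is whether the level set contains $R_{(k_n+1)}$ itself. I would handle it by choosing $r_n(x)$ as the right endpoint of the level set, i.e.\ the generalized inverse $\inf\{r:\bar F_R(r)\le xk_n/n\}$. Then $R_{(k_n+1)}\le r_n(\widehat x_n)$, and the strict inequality $R_i>r_n(\widehat x_n)$ selects exactly the top $k_n$ radii.
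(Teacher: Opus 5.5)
Your proposal is correct and follows essentially the paper's proof. Both use the identity $N_n(\widehat x_n)=k_n$ to get $\sqrt{k_n}(1-\widehat x_n)=\mathbb G_n(h_0,\widehat x_n)$, obtain consistency from exceedance counts at the deterministic thresholds $r_n(1\pm\varepsilon)$, and finish with the asymptotic equicontinuity in Lemma~\ref{lem:oracle-tail-empirical-process}; your set-identity argument is a more explicit version of the paper's remark ``up to null events caused by flat parts of the survival function.''

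One small slip in that refinement: $\inf\{r:\bar F_R(r)\le xk_n/n\}$ is the \emph{left} endpoint of the level set, not the right one. This is harmless. $\bar F_R$ has at most countably many flat intervals, each of probability zero, so almost surely $R_{(k_n+1)}$ lies in none of them. Hence $r_n(\widehat x_n)=R_{(k_n+1)}$ for every admissible choice of $r_n$.
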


\begin{proof}
The continuity of \(R_1\) implies that the radii are almost surely distinct. Hence the set
\(\{i:R_i>R_{(k_n+1)}\}\) is exactly the top-\(k_n\) set. By the definition of
\(\widehat x_n\), this set can be written as \(\{i:R_i>r_n(\widehat x_n)\}\), up to null
events caused by flat parts of the survival function.

For fixed \(\delta\in(0,1/2)\), the event \(\widehat x_n>1+\delta\) is the event that at most
\(k_n\) observations exceed the deterministic threshold \(r_n(1+\delta)\). The exceedance count
is binomial with mean \((1+\delta)k_n\), so Chebyshev's inequality sends this probability to
zero. The event \(\widehat x_n<1-\delta\) is handled in the same way. Thus
\(\widehat x_n\to1\) in probability.
Let \(N_n(x):=\sum_{i=1}^n\mathbb I\{R_i>r_n(x)\}\). On the event above,
\(N_n(\widehat x_n)=k_n\), while
\(\mathbb G_n(h_0,x)=[N_n(x)-xk_n]/\sqrt{k_n}\).
Therefore
\[
\sqrt{k_n}(1-\widehat x_n)=\mathbb G_n(h_0,\widehat x_n).
\]
Lemma~\ref{lem:oracle-tail-empirical-process}, its asymptotic equicontinuity in \(x\), and
\(\widehat x_n\to1\) give
\(\mathbb G_n(h_0,\widehat x_n)=\mathbb G_n(h_0,1)+o_p(1)\), which proves the expansion.
\end{proof}

\begin{proof}[Proof of Theorem~\ref{thm:oracle-agca-clt}]
Let
\[
S_n(x):=\frac1{k_n}\sum_{i=1}^n h_\mu(G_i)\mathbb I\{R_i>r_n(x)\}.
\]
By Lemma~\ref{lem:oracle-random-threshold-expansion},
\(\widehat\Sigma_{\mu,n}^{(k_n),\mathrm{orc}}=S_n(\widehat x_n)\) with probability tending to
one. Also
\(\E[S_n(x)]
=
x\,H_\mu(r_n(x))\).
On \(\{\widehat x_n\in[1/2,2]\}\),
\[
\begin{aligned}
\sqrt{k_n}\left(S_n(\widehat x_n)-\Sigma_\mu\right)
={}&
\mathbb G_n(h_\mu,\widehat x_n)
+
\sqrt{k_n}(\widehat x_n-1)H_\mu(r_n(\widehat x_n))\\
&+
\sqrt{k_n}\{H_\mu(r_n(\widehat x_n))-\Sigma_\mu\}.
\end{aligned}
\]
Since \(\bar F_R(r_n(\widehat x_n))\le2k_n/n\) with probability tending to one, the last term is
bounded in operator norm by
\(\sqrt{k_n}\eta_\mu(2k_n/n)=o(1)\). The same bound gives
\(H_\mu(r_n(\widehat x_n))\to\Sigma_\mu\). Lemma~\ref{lem:oracle-tail-empirical-process} and
asymptotic equicontinuity yield
\(\mathbb G_n(h_\mu,\widehat x_n)=\mathbb G_n(h_\mu,1)+o_p(1)\), while
Lemma~\ref{lem:oracle-random-threshold-expansion} gives
\[
\sqrt{k_n}(\widehat x_n-1)H_\mu(r_n(\widehat x_n))
=
-\mathbb G_n(h_0,1)\Sigma_\mu+o_p(1).
\]
Thus, entrywise,
\[
\sqrt{k_n}
\left(
\widehat\Sigma_{\mu,n}^{(k_n),\mathrm{orc}}-\Sigma_\mu
\right)
=
\mathbb G_n(h_\mu,1)-\mathbb G_n(h_0,1)\Sigma_\mu+o_p(1)
\rightsquigarrow
Z.
\]

It remains to identify the covariance. For symmetric \(A,B\), write
\(W_A:=\tr(AW(h_\mu,1))\). Lemma~\ref{lem:oracle-tail-empirical-process} gives
\[
\Cov[W_A,W_B]
=
\E[(U_\mu^{\T}AU_\mu)(U_\mu^{\T}BU_\mu)],
\qquad
\Cov[W_A,W(h_0,1)]
=
\E[U_\mu^{\T}AU_\mu]
=
\tr(A\Sigma_\mu),
\]
and \(\Var[W(h_0,1)]=1\). Therefore
\[
\Cov[\tr(AZ),\tr(BZ)]
=
\Cov[U_\mu^{\T}AU_\mu,U_\mu^{\T}BU_\mu].
\]
This is exactly the covariance of
\(\tr(A(U_\mu U_\mu^{\T}-\Sigma_\mu))\) and
\(\tr(B(U_\mu U_\mu^{\T}-\Sigma_\mu))\) for a single draw of \(U_\mu\), which gives the
i.i.d.\ weak-limit representation. Since \(U_\mu\in\mu^\perp\) almost surely, \(Z\) is supported
on symmetric matrices acting on \(\mu^\perp\).
\end{proof}

\begin{proof}[Proof of Corollary~\ref{cor:oracle-agca-spectral-clt}]
Write
\(E_n:=\widehat\Sigma_{\mu,n}^{(k_n),\mathrm{orc}}-\Sigma_\mu\). By
Theorem~\ref{thm:oracle-agca-clt} and fixed dimension,
\(\norm{E_n}_{\mathrm{op}}=O_p(k_n^{-1/2})\).
The total anchored variation is linear:
\[
\sqrt{k_n}(\widehat\tau_\mu^{\mathrm{orc}}-\tau_\mu)
=
\tr(\sqrt{k_n}E_n),
\]
so the first limit follows from Theorem~\ref{thm:oracle-agca-clt} with \(A=I\), since
\(U_\mu^{\T}IU_\mu=\norm{U_\mu}_2^2\).

For a simple eigenvalue \(\lambda_j\), standard symmetric-matrix perturbation theory gives
\[
\widehat\lambda_j^{\mathrm{orc}}-\lambda_j
=
b_j^{\T}E_nb_j+O_p(\norm{E_n}_{\mathrm{op}}^2),
\]
where the constant depends on the isolation gap around \(\lambda_j\). Multiplying by
\(\sqrt{k_n}\) makes the remainder negligible, and the stated normal limit follows from
Theorem~\ref{thm:oracle-agca-clt} with \(A=b_jb_j^{\T}\). Joint convergence over any fixed set
of simple eigenvalues follows from joint convergence of the corresponding linear forms.

Assume now that \(\Delta_{\mu,p}>0\). Let
\(S_{\mu,p}:=\sum_{j=1}^p\lambda_j\) and
\(\widehat S_{\mu,p}^{\mathrm{orc}}:=\sum_{j=1}^p\widehat\lambda_j^{\mathrm{orc}}\). The map
taking a symmetric matrix near \(\Sigma_\mu\) to the sum of its leading \(p\) eigenvalues is
differentiable at \(\Sigma_\mu\), with derivative \(E\mapsto\tr(P_{\mu,p}E)\). Hence
\[
\sqrt{k_n}\left(\widehat S_{\mu,p}^{\mathrm{orc}}-S_{\mu,p}\right)
=
\tr(P_{\mu,p}\sqrt{k_n}E_n)+o_p(1).
\]
Applying the delta method to
\(\mathrm{AVE}_{\mu,p}=S_{\mu,p}/\tau_\mu\) gives the influence function
\[
\tau_\mu^{-1}
\left(
U_\mu^{\T}P_{\mu,p}U_\mu
-
\mathrm{AVE}_{\mu,p}\norm{U_\mu}_2^2
\right),
\]
which proves the AVE limit and variance formula.

Finally, the spectral projector map is differentiable under the eigengap. Its derivative at
\(\Sigma_\mu\) in the direction of a symmetric matrix \(E\) is
\[
\mathcal D_{\mu,p}(E)
=
\sum_{j=1}^p\sum_{m=p+1}^{d-1}
\frac{
b_m b_m^{\T} E b_j b_j^{\T}
+
b_j b_j^{\T} E b_m b_m^{\T}
}
{\lambda_j-\lambda_m}.
\]
For completeness, this follows from the resolvent representation of the spectral projector and a
first-order resolvent expansion around a contour separating
\(\{\lambda_1,\ldots,\lambda_p\}\) from the rest of the spectrum. Applying the continuous
mapping theorem to \(\sqrt{k_n}E_n\rightsquigarrow Z\) gives the projector limit.
\end{proof}

\begin{proof}[Proof of Corollary~\ref{cor:oracle-agca-plugin-ci}]
The proof of Theorem~\ref{thm:top-k-consistency} shows that oracle top-\(k_n\) averages of any
fixed bounded continuous function of \(G_i\) converge in probability to the corresponding
expectation under the limiting angular law. This applies to the functions needed for the first
and second moments of
\[
\psi_{\mu,p}(u)
=
\tau_\mu^{-1}
\left(
u^{\T}P_{\mu,p}u-\mathrm{AVE}_{\mu,p}\norm{u}_2^2
\right).
\]
By Theorem~\ref{thm:top-k-consistency} and the eigengap assumption,
\(\widehat\tau_\mu^{\mathrm{orc}}\to\tau_\mu\),
\(\widehat{\mathrm{AVE}}_{\mu,p}^{\mathrm{orc}}\to\mathrm{AVE}_{\mu,p}\), and
\(\widehat P_{\mu,p}^{\mathrm{orc}}\to P_{\mu,p}\) in probability. Since \(\tau_\mu>0\) and
\(\norm{u}_2\le1\), the map from
\((\tau,\mathrm{AVE},P,u)\) to
\(\tau^{-1}(u^{\T}Pu-\mathrm{AVE}\norm{u}_2^2)\) is uniformly Lipschitz on a neighborhood of
the truth. The empirical first and second moments of
\(\widehat\psi_{\mu,p,i}^{\mathrm{orc}}\) therefore converge to the corresponding moments of
\(\psi_{\mu,p}(U_\mu)\), and the empirical variance converges to
\(\sigma_{\mu,p}^2\). The confidence interval follows from
Corollary~\ref{cor:oracle-agca-spectral-clt} and Slutsky's theorem. The same argument applies
to the plug-in variances for \(\widehat\tau_\mu^{\mathrm{orc}}\) and simple oracle eigenvalues.
\end{proof}

Two boundary cases are worth recording. If the angular law is exactly rank \(p\), so that
\(U_\mu\in W_{\mu,p}\) almost surely, then \(P_{\mu,p}U_\mu=U_\mu\),
\(\mathrm{AVE}_{\mu,p}=1\), and the influence function in
Corollary~\ref{cor:oracle-agca-spectral-clt} is identically zero. Hence
\(\sigma_{\mu,p}^2=0\), and the first-order normal interval for
\(\mathrm{AVE}_{\mu,p}\) degenerates. Likewise, if \(b_j^{\T}U_\mu=0\) almost surely for an
eigen-direction \(b_j\), then the simple-eigenvalue variance
\(\Var[(b_j^{\T}U_\mu)^2]\) is zero. These are boundary cases of the population angular law, not
failures of the oracle CLT; they only say that the corresponding first-order fluctuation vanishes.

The same proof also gives explicit plug-in variance estimators for the other scalar limits in
Corollary~\ref{cor:oracle-agca-spectral-clt}. If
\(\widehat v_\tau^{\mathrm{orc}}\) is the empirical variance of
\(\norm{U_{\mu,i}}_2^2\) over \(i\in\mathcal I_n^{(k_n)}\), then
\(\widehat v_\tau^{\mathrm{orc}}\to\Var[\norm{U_\mu}_2^2]\) in probability. If
\(\lambda_j\) is simple and \(\widehat v_{\lambda_j}^{\mathrm{orc}}\) is the empirical variance
of \(\{(\widehat b_j^{\mathrm{orc}})^{\T}U_{\mu,i}\}^2\) over
\(i\in\mathcal I_n^{(k_n)}\), then
\(\widehat v_{\lambda_j}^{\mathrm{orc}}\to\Var[(b_j^{\T}U_\mu)^2]\) in probability. Joint
covariances over any fixed set of simple eigenvalues are estimated analogously by the empirical
covariances of these plug-in squared scores.

\subsection{A conditional rank-margin CLT}
\label{app:conditional-rank-margin-clt}

Rank-tail empirical-process expansions of this kind underlie rank-based inference for
extreme-value copulas and stable tail-dependence functions
\citep[e.g.,][]{genest2009rank,einmahl2012m}. The following result states the corresponding AGCA
consequence in a high-level form. It isolates the part specific to AGCA: once the rank-Pareto tail
empirical measure has a Gaussian expansion and the induced top-\(k_n\) angular functional is
differentiable, the same spectral delta method used for the oracle CLT applies.

Let \(E_+:=\R^d_+\setminus\{0\}\), \(s(x):=x/\norm{x}_2\), and
\(\varphi_\mu(x):=u_\mu(s(x))u_\mu(s(x))^{\T}\). For \(q>0\), write
\(A_q:=\{x\in E_+:\norm{x}_2>q\}\). If \(m\) is a Radon measure on \(E_+\) and a threshold
\(q(m)\) can be chosen so that \(m(A_{q(m)})=1\), define
\[
\Phi_\mu(m):=\int_{A_{q(m)}}\varphi_\mu(x)\,dm(x).
\]

\begin{theorem}[Conditional rank-margin CLT]
\label{thm:conditional-rank-margin-clt}
Assume Assumption~\ref{ass:erv}. Let \(a_n:=n/k_n\) and define the rank-tail empirical measure
\[
\widehat\nu_n^{\mathrm{rank}}
:=
\frac1{k_n}\sum_{i=1}^n\delta_{\widehat X_i/a_n}.
\]
Suppose that there exist a Radon tail measure \(\nu\), a normed linear space \(\mathbb D\) of
signed measures, a linear subspace \(\mathbb D_0\subset\mathbb D\), and a centered Gaussian
element \(\mathbb W_\nu^{\mathrm{rank}}\in\mathbb D\), with
\(\mathbb W_\nu^{\mathrm{rank}}\in\mathbb D_0\) almost surely, such that
\[
\sqrt{k_n}
\left(
\widehat\nu_n^{\mathrm{rank}}-\nu
\right)
\rightsquigarrow
\mathbb W_\nu^{\mathrm{rank}}
\qquad\text{in }\mathbb D.
\]
Assume further that \(q(\widehat\nu_n^{\mathrm{rank}})\) can be chosen so that
\(A_{q(\widehat\nu_n^{\mathrm{rank}})}\) contains the rank-Pareto top-\(k_n\) observations with
probability tending to one, that \(q(\nu)\) satisfies \(\nu(A_{q(\nu)})=1\) and
\(\nu(\partial A_{q(\nu)})=0\), and that \(\Phi_\mu\) is Hadamard differentiable at \(\nu\)
tangentially to \(\mathbb D_0\), with continuous linear derivative
\(\dot\Phi_{\mu,\nu}\). Then
\[
\sqrt{k_n}
\left(
\widehat\Sigma_{\mu,n}^{(k_n),\mathrm{emp}}-\Sigma_\mu
\right)
\rightsquigarrow
Z_\mu^{\mathrm{rank}}
:=
\dot\Phi_{\mu,\nu}(\mathbb W_\nu^{\mathrm{rank}}),
\]
where \(Z_\mu^{\mathrm{rank}}\) is a centered Gaussian symmetric matrix. Consequently, if
\(1\le p\le d-2\), \(\tau_\mu>0\), and \(\Delta_{\mu,p}>0\), then
\(\widehat{\mathrm{AVE}}_{\mu,p}\) and
\(\widehat P_{\mu,p}\), formed from
\(\widehat\Sigma_{\mu,n}^{(k_n),\mathrm{emp}}\), satisfy
\[
\sqrt{k_n}
\left(
\widehat{\mathrm{AVE}}_{\mu,p}-\mathrm{AVE}_{\mu,p}
\right)
\rightsquigarrow
\tau_\mu^{-1}
\left\{
\tr(P_{\mu,p}Z_\mu^{\mathrm{rank}})
-
\mathrm{AVE}_{\mu,p}\tr(Z_\mu^{\mathrm{rank}})
\right\},
\]
and
\[
\sqrt{k_n}
\left(
\widehat P_{\mu,p}-P_{\mu,p}
\right)
\rightsquigarrow
\mathcal D_{\mu,p}(Z_\mu^{\mathrm{rank}}),
\]
with \(\mathcal D_{\mu,p}\) as in Corollary~\ref{cor:oracle-agca-spectral-clt}. The same
replacement of \(Z\) by \(Z_\mu^{\mathrm{rank}}\) gives the limits for total anchored variation
and simple eigenvalues.
\end{theorem}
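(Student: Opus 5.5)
The plan is to reduce the statement to the functional delta method, then compose with the spectral maps already used in Corollary~\ref{cor:oracle-agca-spectral-clt}.

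\emph{Step 1: write the estimator as the functional.} Express \(\widehat\Sigma_{\mu,n}^{(k_n),\mathrm{emp}}\) as \(\Phi_\mu(\widehat\nu_n^{\mathrm{rank}})\). Since \(s(x/a_n)=s(x)\), the angular function \(\varphi_\mu\) is scale invariant, so the rescaling by \(a_n\) does not affect directions. Choose \(q(\widehat\nu_n^{\mathrm{rank}})\) so that \(A_{q}\) contains exactly the \(k_n\) selected rank-Pareto points. This is possible with probability tending to one by hypothesis; ties among the \(\widehat R_i\) are handled by the tie-breaking convention of Section~\ref{app:primitive-margin-conditions}. On that event \(\widehat\nu_n^{\mathrm{rank}}(A_q)=1\), and
\[
\Phi_\mu(\widehat\nu_n^{\mathrm{rank}})=\frac1{k_n}\sum_{i\in\widehat{\mathcal I}_n^{(k_n)}}\varphi_\mu(\widehat X_i)=\widehat\Sigma_{\mu,n}^{(k_n),\mathrm{emp}}.
\]
For the population side, use Assumption~\ref{ass:erv}: \(\nu\) restricted to \(A_{q(\nu)}\) and normalized is the law of \(P_\alpha G\) scaled by \(q(\nu)\). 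Because \(\varphi_\mu\) depends only on direction, this gives \(\Phi_\mu(\nu)=\E[u_\mu(G)u_\mu(G)^{\T}]=\Sigma_\mu\). The condition \(\nu(\partial A_{q(\nu)})=0\) makes the threshold well defined.

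\emph{Step 2: delta method.} Apply the functional delta method (van der Vaart--Wellner, Theorem~3.9.4) with the assumed convergence \(\sqrt{k_n}(\widehat\nu_n^{\mathrm{rank}}-\nu)\rightsquigarrow\mathbb W_\nu^{\mathrm{rank}}\), the tangential Hadamard differentiability of \(\Phi_\mu\) on \(\mathbb D_0\), and \(\mathbb W_\nu^{\mathrm{rank}}\in\mathbb D_0\) almost surely. Since the equality of Step~1 holds only with probability tending to one, the remaining event is absorbed by a Slutsky argument. This yields
\[
\sqrt{k_n}\bigl(\widehat\Sigma_{\mu,n}^{(k_n),\mathrm{emp}}-\Sigma_\mu\bigr)\rightsquigarrow\dot\Phi_{\mu,\nu}(\mathbb W_\nu^{\mathrm{rank}}).
\]
The limit is centered Gaussian because \(\dot\Phi_{\mu,\nu}\) is continuous linear and \(\mathbb W_\nu^{\mathrm{rank}}\) is centered Gaussian. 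It is symmetric because \(\varphi_\mu\) is symmetric-matrix valued, so \(\Phi_\mu\) maps into the closed subspace of symmetric matrices, and so does its derivative.

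\emph{Step 3: spectral consequences.} These follow exactly as in the proof of Corollary~\ref{cor:oracle-agca-spectral-clt}, with \(Z\) replaced by \(Z_\mu^{\mathrm{rank}}\). That proof only used \(\sqrt{k_n}E_n\rightsquigarrow Z\) together with the differentiability of four maps: the trace, simple eigenvalues (derivative \(E\mapsto b_j^{\T}Eb_j\)), the leading \(p\)-eigenvalue sum (derivative \(E\mapsto\tr(P_{\mu,p}E)\) under \(\Delta_{\mu,p}>0\)), and the spectral projector (derivative \(\mathcal D_{\mu,p}\), via the resolvent). For AVE, apply the delta method to the ratio of the leading eigenvalue sum and the trace. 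With \(\tau_\mu>0\) this gives \(\tau_\mu^{-1}\{\tr(P_{\mu,p}Z)-\mathrm{AVE}_{\mu,p}\tr(Z)\}\).

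The main obstacle is Step~1, identifying \(\widehat\Sigma^{\mathrm{emp}}\) with \(\Phi_\mu(\widehat\nu_n^{\mathrm{rank}})\) at a data-dependent threshold. The definition of \(\Phi_\mu\) depends on how \(q(m)\) is selected, and \(\widehat\nu_n^{\mathrm{rank}}\) is atomic with possible ties. The argument should therefore insist that the same selection rule defines \(\Phi_\mu\) on the perturbed measures for which Hadamard differentiability is assumed, and that the stated containment hypothesis pins down the selected set. Everything after that is the formal delta-method chain.
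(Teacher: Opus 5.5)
Your proposal matches the paper's proof essentially step for step: identify \(\widehat\Sigma_{\mu,n}^{(k_n),\mathrm{emp}}\) with \(\Phi_\mu(\widehat\nu_n^{\mathrm{rank}})\) on an event of probability tending to one, obtain \(\Phi_\mu(\nu)=\Sigma_\mu\) from the polar representation under Assumption~\ref{ass:erv}, apply the functional delta method, and reuse the spectral derivatives from Corollary~\ref{cor:oracle-agca-spectral-clt}. One small correction: a tie-breaking rule cannot make a radial exceedance set \(A_q\) separate tied radii, so the identification does not come from tie-breaking; it rests entirely on the stated hypothesis on \(q(\widehat\nu_n^{\mathrm{rank}})\), which on that event excludes a tie at the \(k_n\)-th radius, exactly as the paper uses it.
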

The theorem makes two high-level assumptions. For tail empirical measures with estimated Pareto margins, Gaussian expansions of the assumed
type have been studied in the literature
\citep{einmahl1997estimating,einmahl2001nonparametric,einmahl2009maximum}. The expansion
assumption should be attainable along that route under suitable second-order marginal and joint-tail
conditions. The differentiability assumption is more delicate: when the limiting
angular law places mass on the axes, \(\Phi_\mu\) can respond non-smoothly to componentwise
marginal scalings. The two assumptions together take care of the face and near-axis regimes that the bounded departures \(u_\mu\)
accommodate at the oracle level.
\begin{proof}
Put \(\widehat m_n:=\widehat\nu_n^{\mathrm{rank}}\). By the assumed choice of
\(q(\widehat m_n)\), the set \(A_{q(\widehat m_n)}\) contains exactly the observations selected
by the rank-Pareto top-\(k_n\) rule, with probability tending to one. Therefore
\[
\Phi_\mu(\widehat m_n)
=
\frac1{k_n}
\sum_{i\in\widehat{\mathcal I}_n^{(k_n)}}
u_\mu(\widehat G_i)u_\mu(\widehat G_i)^{\T}
=
\widehat\Sigma_{\mu,n}^{(k_n),\mathrm{emp}}
\]
with probability tending to one. Under Assumption~\ref{ass:erv}, the limiting tail measure has a
polar representation whose conditional angular law on any radial exceedance set \(A_q\) is the
angular law of \(G\). Since \(\nu(A_{q(\nu)})=1\), this gives
\[
\Phi_\mu(\nu)
=
\int_{A_{q(\nu)}}\varphi_\mu(x)\,d\nu(x)
=
\E[u_\mu(G)u_\mu(G)^{\T}]
=
\Sigma_\mu.
\]
The functional delta method applied to the assumed weak convergence of
\(\widehat m_n\) and to the Hadamard differentiability of \(\Phi_\mu\) gives
\[
\sqrt{k_n}
\left(
\Phi_\mu(\widehat m_n)-\Phi_\mu(\nu)
\right)
\rightsquigarrow
\dot\Phi_{\mu,\nu}(\mathbb W_\nu^{\mathrm{rank}}).
\]
The first two displayed identities identify the left-hand side with
\(\sqrt{k_n}(\widehat\Sigma_{\mu,n}^{(k_n),\mathrm{emp}}-\Sigma_\mu)\), except on an event
whose probability tends to zero. Because \(\dot\Phi_{\mu,\nu}\) is continuous and linear and
\(\mathbb W_\nu^{\mathrm{rank}}\) is centered Gaussian, the limit
\(Z_\mu^{\mathrm{rank}}\) is centered Gaussian.

The scalar and spectral conclusions are the same perturbation steps used in the proof of
Corollary~\ref{cor:oracle-agca-spectral-clt}. The trace map is linear, simple eigenvalues are
first-order differentiable with derivative \(E\mapsto b_j^{\T}Eb_j\), the explained-variation map
has derivative
\[
E\mapsto
\tau_\mu^{-1}
\left\{
\tr(P_{\mu,p}E)-\mathrm{AVE}_{\mu,p}\tr(E)
\right\},
\]
and the rank-\(p\) spectral projector has derivative
\(\mathcal D_{\mu,p}\) under \(\Delta_{\mu,p}>0\). Applying the continuous mapping theorem to
\(\sqrt{k_n}(\widehat\Sigma_{\mu,n}^{(k_n),\mathrm{emp}}-\Sigma_\mu)
\rightsquigarrow Z_\mu^{\mathrm{rank}}\) proves all displayed limits.
\end{proof}

%% file: appendix_simulations.tex
\section{Simulation analysis}
\label{app:simulations}

This supplementary section provides a simulation analysis used to illustrate AGCA. To make the
geometry visible, we begin with two three-dimensional designs. Model~1 is a low-dimensional angular
law whose selected directions form a slightly bent cloud on the positive sphere. Model~2 adds a
variable-specific near-axis regime, so that one coordinate can dominate the angular direction
while remaining a finite point on \(\Sphere^2_+\). We then add a ten-dimensional design in which
eight variables share a lower-rank angular mechanism and two variables have asymptotically
independent near-axis regimes. The diagnostics below focus on visualization, explained variation,
scores, loadings, anchor choice, bootstrap stability, finite-sample sensitivity, and robustness of
the canonical-anchor summary beyond the visible three-dimensional case. For the ten-dimensional
design, we also include a repeated-sampling coverage check for the oracle plug-in intervals,
comparing the formal oracle-margin setting with the same interval formula applied after
rank-Pareto marginal standardization.

Throughout, the canonical anchor is \(\mu_0=d^{-1/2}\one_d\) in the relevant dimension. We compare
it with the empirical Fr\'echet and principal anchors in \eqref{eq:data-adaptive-anchors}, computed
from the selected directions. Bootstrap 95\% diagnostic intervals (DIs) are conditional
summaries: after fixing the selected angular directions, we resample those directions with
replacement, refit AGCA at the same rank and anchor, and align loading signs to the main-sample
loadings.

\subsection{Data-generating mechanisms}
\label{app:simulations-dgp-details}

Let \(Y_{\theta,q}\) be a \(q\)-variate logistic max-stable vector with unit Fr\'echet margins,
\begin{equation}
\label{eq:app-sim-logistic-frechet}
\Prob[Y_{\theta,q,1}\le y_1,\ldots,Y_{\theta,q,q}\le y_q]
=
\exp\left[-\left(\sum_{j=1}^q y_j^{-1/\theta}\right)^{\theta}\right],
\qquad y_j>0,\quad 0<\theta<1.
\end{equation}
The corresponding logistic-Pareto block is
\begin{equation}
\label{eq:app-sim-logistic-pareto}
Z_{\theta,q,j}:=(1-\exp(-1/Y_{\theta,q,j}))^{-1},
\qquad j=1,\ldots,q,
\end{equation}
so each margin is standard Pareto. For this specific model, \(\eta_\mu(t)=O(t^{\kappa})\) with
\(\kappa=\min\{1,(1-\theta)/\theta\}\). Therefore, Assumption \ref{ass:second-order-angular-bias} holds whenever
\(k_n=o(n^{2\kappa/(2\kappa+1)})\); for \(\theta\le1/2\) this is \(k_n=o(n^{2/3})\), up to a
logarithmic refinement at \(\theta=1/2\).

Let \(P\) denote an independent standard Pareto variable and
let \(E_i\) collect independent standard exponential noise variables. The light-tailed noise
keeps finite samples nonsingular, but is asymptotically negligible relative to the Pareto
signals.

\begin{table}[tbp]
\centering
\small
\begin{tabular}{@{}p{0.15\linewidth}p{0.12\linewidth}p{0.43\linewidth}p{0.20\linewidth}@{}}
\toprule
Design & \(n,k\) & Extremal mechanism & Near-axis part \\
\midrule
Model 1 & \(2400,120\) &
Trivariate logistic-Pareto block embedded through three positive rays. Two rays generate the
dominant \(X_1\)-versus-\(X_2\) contrast; a weaker third ray bends the cloud through \(X_3\). &
None by design. \\
Model 2 & \(2400,120\) &
Bivariate logistic-Pareto block for \(X_1,X_2\), plus an independent Pareto source for \(X_3\). &
\(X_3\) near-axis regime. \\
10D model & \(10000,500\) &
Five-ray logistic-Pareto block on \(X_1,\ldots,X_8\) lying in a two-dimensional anchored
surface, plus independent Pareto sources for \(X_9\) and \(X_{10}\). Margins are rank-Pareto
standardized before thresholding. &
\(X_9\) and \(X_{10}\) near-axis regimes. \\
\bottomrule
\end{tabular}
\caption{Simulation designs. Selected extremes are the \(k\) largest Euclidean radii after the
stated preprocessing, with directions normalized to the positive sphere.}
\label{tab:app-sim-design}
\end{table}

For the two three-dimensional designs, thresholding and angular normalization are applied directly
to the generated vectors \(X_i\); only the ten-dimensional design uses the rank-Pareto
preprocessing in \eqref{eq:app-sim-rank-pareto}.

For Model~1, with \(\theta_1=0.50\) and \(\tau_1=0.30\),
\begin{equation}
\label{eq:app-sim-3d-model1-dgp}
X_i=A_1Z_i+\tau_1E_i,
\qquad Z_i=(Z_{i1},Z_{i2},Z_{i3})^{\T}\sim Z_{\theta_1,3}.
\end{equation}
Let
\[
b_1=(1,-1,0)^{\T}/\sqrt2,\qquad
b_2=(1,1,-2)^{\T}/\sqrt6,\qquad
\mu_3=3^{-1/2}\one_3.
\]
With \(\rho=0.25\), \(\gamma=(1+\rho)/2\), \(\eta=0.45\), and \(\kappa=0.45\), the ray matrix is
\[
A_1=(a_1,a_2,a_3),\qquad
a_1=
\begin{pmatrix}1\\ \rho\\ \gamma\end{pmatrix},
\quad
a_2=
\begin{pmatrix}\rho\\ 1\\ \gamma\end{pmatrix},
\quad
a_3=\kappa(\mu_3-\eta b_2).
\]
The first two rays create the dominant \(b_1\) contrast and sit slightly below the canonical
anchor in the \(X_3\) direction. The third ray has larger relative \(X_3\) participation, so the
selected angular cloud is bent rather than horizontal, with \(\mu_0\) visually inside the main
cloud.

For Model~2, with \(\theta_2=0.25\) and \(\tau_2=0.30\),
\begin{equation}
\label{eq:app-sim-3d-model2-dgp}
X_i
=
\left(Z_{i1},Z_{i2},P_{i3}\right)^{\T}+\tau_2E_i,
\qquad Z_i=(Z_{i1},Z_{i2})^{\T}\sim Z_{\theta_2,2}.
\end{equation}
The independent Pareto source \(P_{i3}\) creates the variable-specific near-axis regime. In the
displayed realization, \(72\) selected extremes are shared \(X_1\)-\(X_2\) episodes and \(48\)
are \(X_3\)-axis episodes.

For the ten-dimensional model, with \(\theta_{10}=0.45\) and \(\tau_{10}=0.25\), let
\(\mu_8=8^{-1/2}\one_8\) and
\[
h_1=\left(-1,-\frac57,-\frac37,-\frac17,\frac17,\frac37,\frac57,1\right)^{\T},
\qquad
h_2=(1,1,1,1,-1,-1,-1,-1)^{\T}.
\]
Let \(B=(b_1,b_2)\) be the Gram--Schmidt orthonormalization of \((h_1,h_2)\). For the rows
\(s_\ell^{\T}\), \(\ell=1,\ldots,5\), of
\[
S=
\begin{pmatrix}
-0.42&-0.30&0.00&0.28&0.48\\
-0.26&0.30&-0.36&0.24&-0.02
\end{pmatrix}^{\T},
\]
define shared rays \(a_\ell:=(\mu_8+Bs_\ell)/\norm{\mu_8+Bs_\ell}_2\in\Sphere^7_+\),
\(\ell=1,\ldots,5\), and \(A_8=(a_1,\ldots,a_5)\). With
\(Z_i=(Z_{i1},\ldots,Z_{i5})^{\T}\sim Z_{\theta_{10},5}\), the raw vector is
\begin{equation}
\label{eq:app-sim-10d-dgp}
X_i^\star
=
\left(A_8Z_i,\; P_{i9},\; P_{i,10}\right)+\tau_{10}E_i
\in\R^{10}.
\end{equation}
The shared \(X_1,\ldots,X_8\) mechanism lies in a two-dimensional anchored surface before
normalization, whereas \(P_{i9}\) and \(P_{i,10}\) create two variable-specific near-axis regimes.
As in empirical work with unknown margins, the AGCA fit is applied after the marginal rank-Pareto
transform
\begin{equation}
\label{eq:app-sim-rank-pareto}
\widetilde X_{ij}
=
\frac{n+1}{n+1-\operatorname{rank}(X_{ij}^\star;X_{1j}^\star,\ldots,X_{nj}^\star)},
\qquad j=1,\ldots,10,
\end{equation}
where ranks are increasing within each margin. The selected extremes are the \(k=500\) largest
radii of the transformed vectors \(\widetilde X_i\).

\subsection{Sphere geometry}
\label{app:simulations-sphere-geometry}

Figure~\ref{fig:app-sim-3d-sphere-overlay} shows the geometric object fitted by AGCA. The
points are the selected extreme directions. The canonical fit uses the fixed balanced anchor;
the Fr\'echet fit recenters the geometry at the sample spherical center. Only the first AGC is
shown for each anchor.
In Model~1, the selected directions form a bent but still essentially one-dimensional angular
cloud. The canonical and Fr\'echet anchors are close, and their first AGCs trace similar
directions through the cloud. In Model~2, the first AGC runs from the balanced interior toward
the \(X_3\)-axis part of the sphere. This is the geometric sense in which AGCA handles
asymptotic independence: a near-axis episode is a large but bounded anchored displacement, not a
point outside the coordinate system.

\begin{figure}[tbp]
\centering
\begin{minipage}[t]{0.48\linewidth}
\centering
\appinclude[width=\linewidth]{fig_sim3d_model1_sphere_overlay.pdf}
\end{minipage}\hfill
\begin{minipage}[t]{0.48\linewidth}
\centering
\appinclude[width=\linewidth]{fig_sim3d_model2_sphere_overlay.pdf}
\end{minipage}
\caption{Sphere-level geometry in the two three-dimensional simulations. The left panel shows
Model~1 and the right panel shows Model~2. Each panel displays selected extreme directions, the
canonical anchor and first AGC, and the Fr\'echet anchor and first AGC.}
\label{fig:app-sim-3d-sphere-overlay}
\end{figure}

\subsection{Explained variation and scores}
\label{app:simulations-variation-scores}

Figure~\ref{fig:app-sim-3d-variation} reports cumulative anchored variation explained. In
Model~1, the first canonical AGC explains \(82.0\%\) of anchored variation and the rank-one
residual risk is \(0.0085\). The lower percentage relative to an exactly rank-one construction
is intentional: the cloud is bent by the third ray, but the leading direction is still sharply
identified, with absolute alignment \(0.999996\) with the designed \(b_1\) contrast. In Model~2,
the first canonical AGC explains \(93.3\%\) of anchored variation and has absolute alignment
\(0.999977\) with the \(X_3\)-axis contrast.

Figure~\ref{fig:app-sim-3d-scores} gives the score representation of the same geometry. In
Model~1, the selected extremes spread primarily along AGC~1, with a smaller second-coordinate
spread caused by the \(X_3\) bend. In Model~2, the first score separates shared \(X_1\)-\(X_2\)
episodes from \(X_3\)-axis episodes, while the second score records residual variation inside
the shared block.

\begin{figure}[tbp]
\centering
\begin{minipage}[t]{0.48\linewidth}
\centering
\textbf{(A) Model 1}\par\medskip
\appinclude[width=\linewidth]{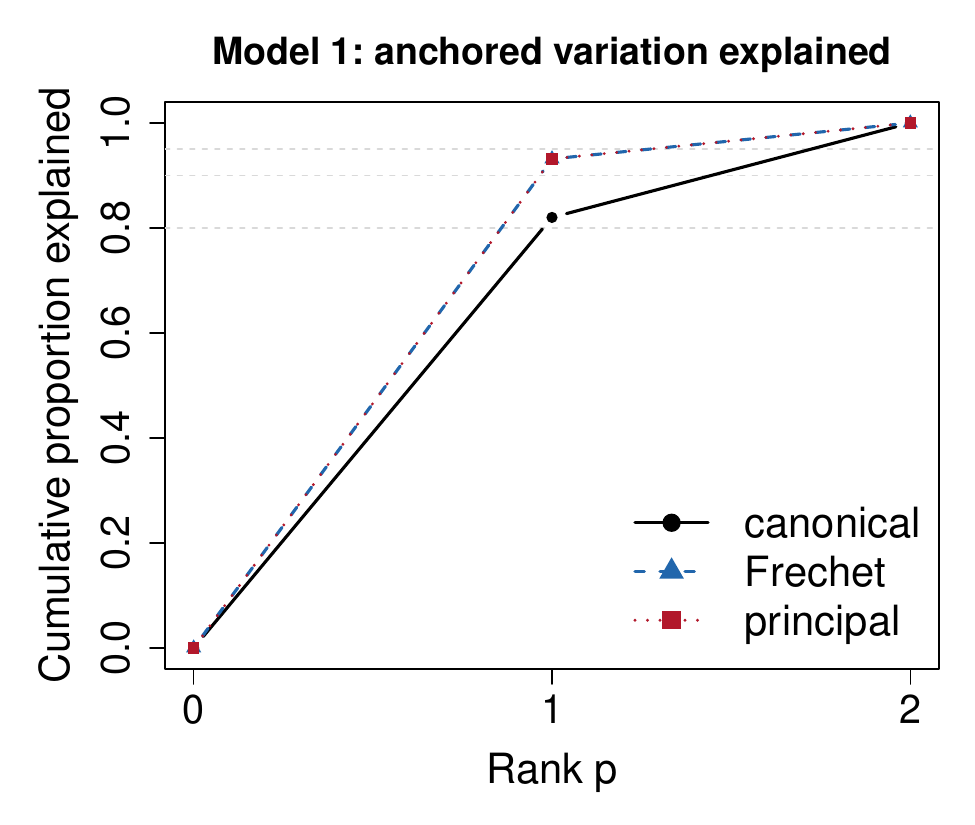}
\end{minipage}\hfill
\begin{minipage}[t]{0.48\linewidth}
\centering
\textbf{(B) Model 2}\par\medskip
\appinclude[width=\linewidth]{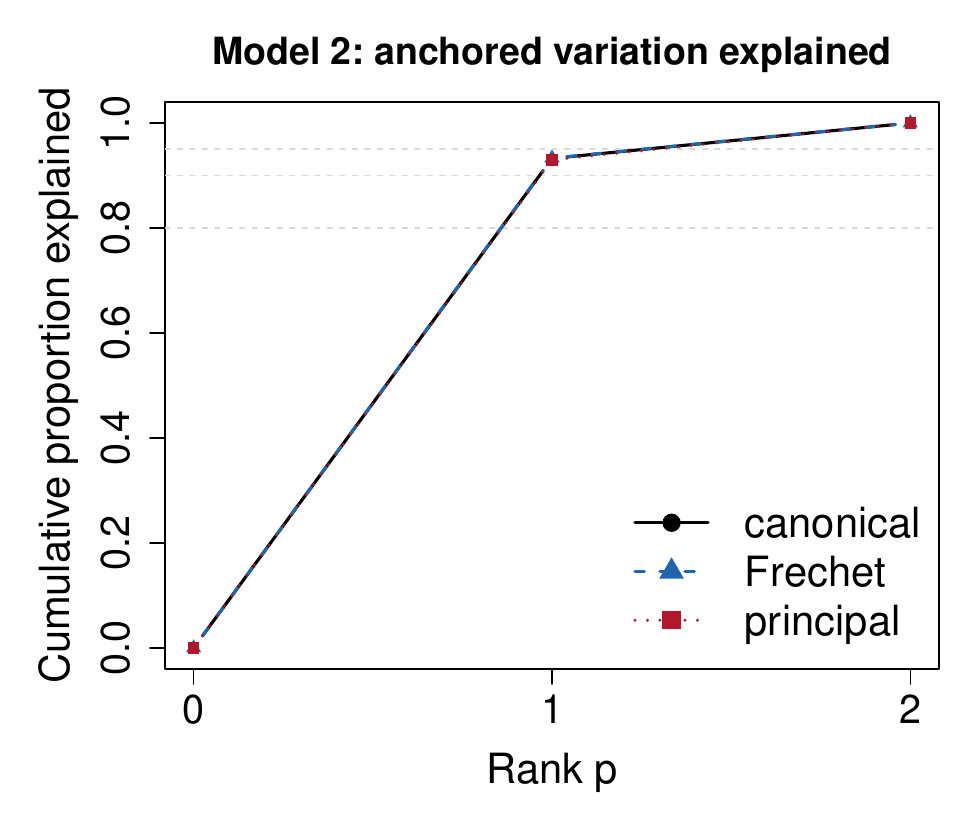}
\end{minipage}
\caption{Anchored variation explained in the three-dimensional simulations. Each panel compares
the canonical, Fr\'echet, and principal anchors.}
\label{fig:app-sim-3d-variation}
\end{figure}

\begin{figure}[tbp]
\centering
\begin{minipage}[t]{0.48\linewidth}
\centering
\textbf{(A) Model 1}\par\medskip
\appinclude[width=\linewidth]{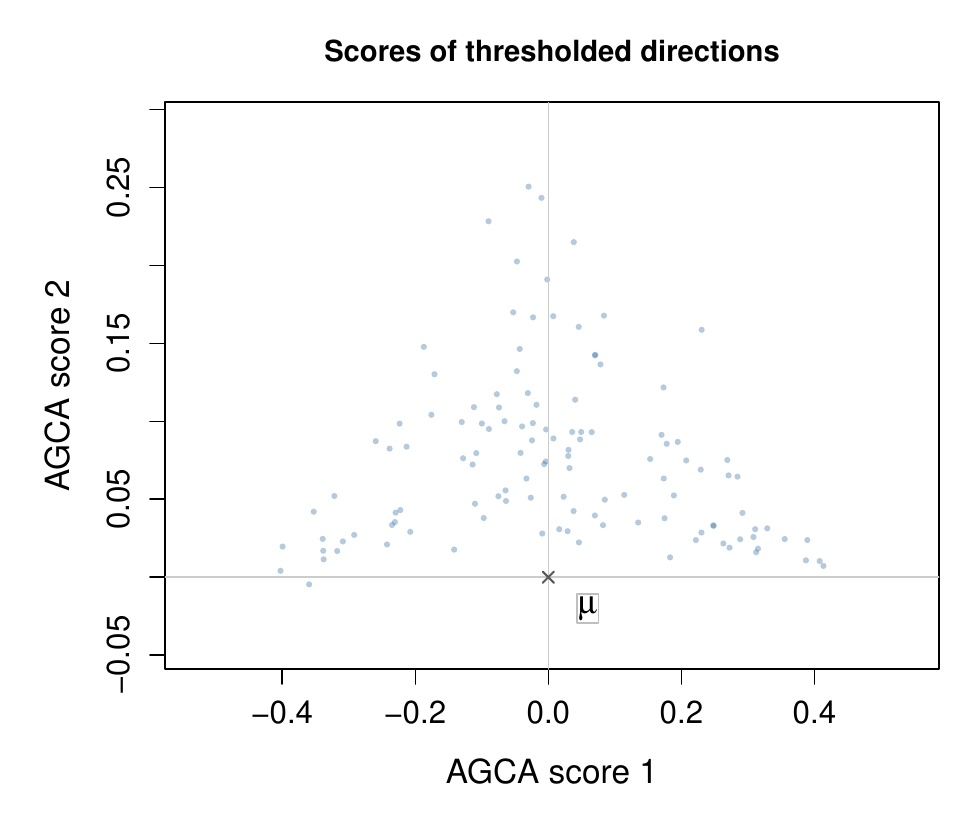}
\end{minipage}\hfill
\begin{minipage}[t]{0.48\linewidth}
\centering
\textbf{(B) Model 2}\par\medskip
\appinclude[width=\linewidth]{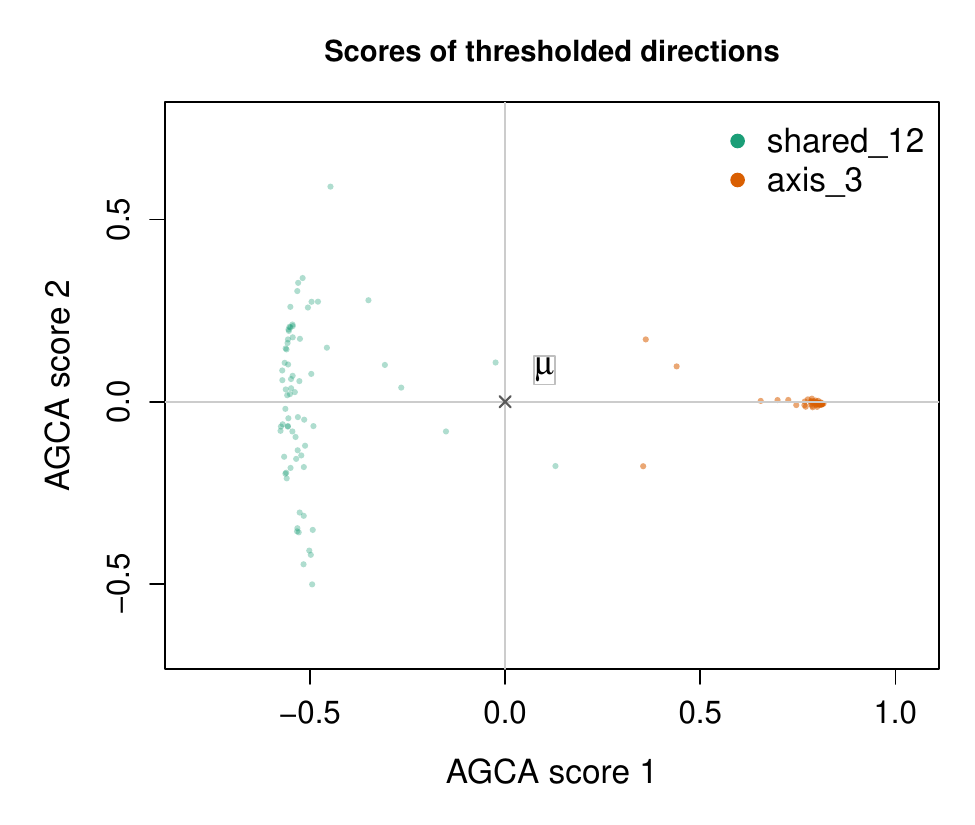}
\end{minipage}
\caption{Canonical-anchor scores for the selected extremes.}
\label{fig:app-sim-3d-scores}
\end{figure}

\subsection{Loadings}
\label{app:simulations-loadings}

Figure~\ref{fig:app-sim-3d-loadings} shows the canonical-anchor loadings. In Model~1, AGC~1 is
the \(X_1\)-versus-\(X_2\) contrast, while AGC~2 captures the weaker \(X_3\) bending direction.
In Model~2, AGC~1 is positive for \(X_3\) and negative for \(X_1,X_2\); it separates the
variable-specific near-axis extremes from the shared block. AGC~2 then describes the residual
\(X_1\)-versus-\(X_2\) contrast. The bootstrap diagnostic intervals (DIs) are narrow relative to the leading
loading magnitudes, so the displayed contrasts are stable in the selected samples.

\begin{figure}[tbp]
\centering
\begin{minipage}[t]{0.235\linewidth}
\centering
\appinclude[width=\linewidth]{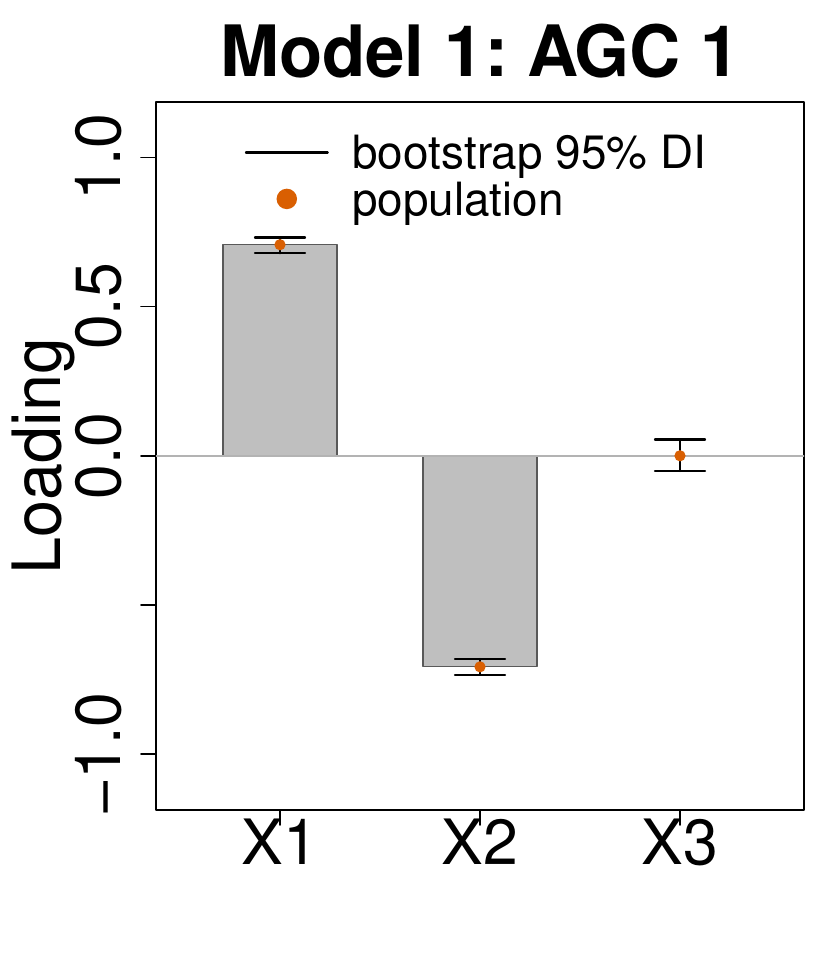}
\end{minipage}\hfill
\begin{minipage}[t]{0.235\linewidth}
\centering
\appinclude[width=\linewidth]{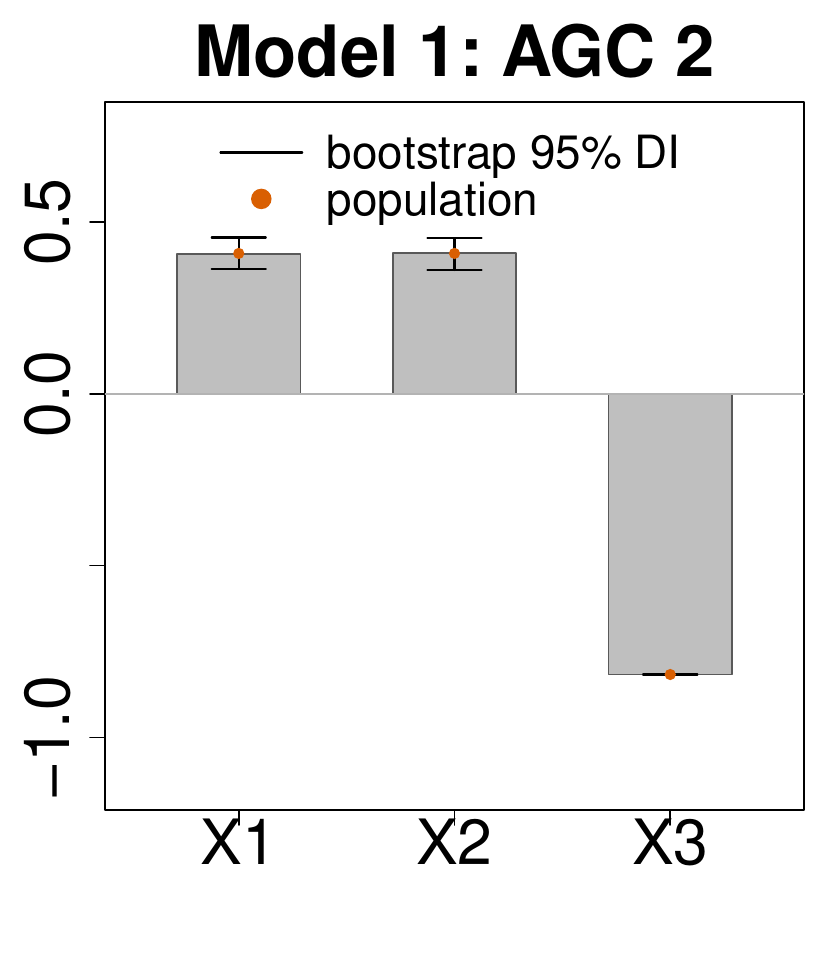}
\end{minipage}\hfill
\begin{minipage}[t]{0.235\linewidth}
\centering
\appinclude[width=\linewidth]{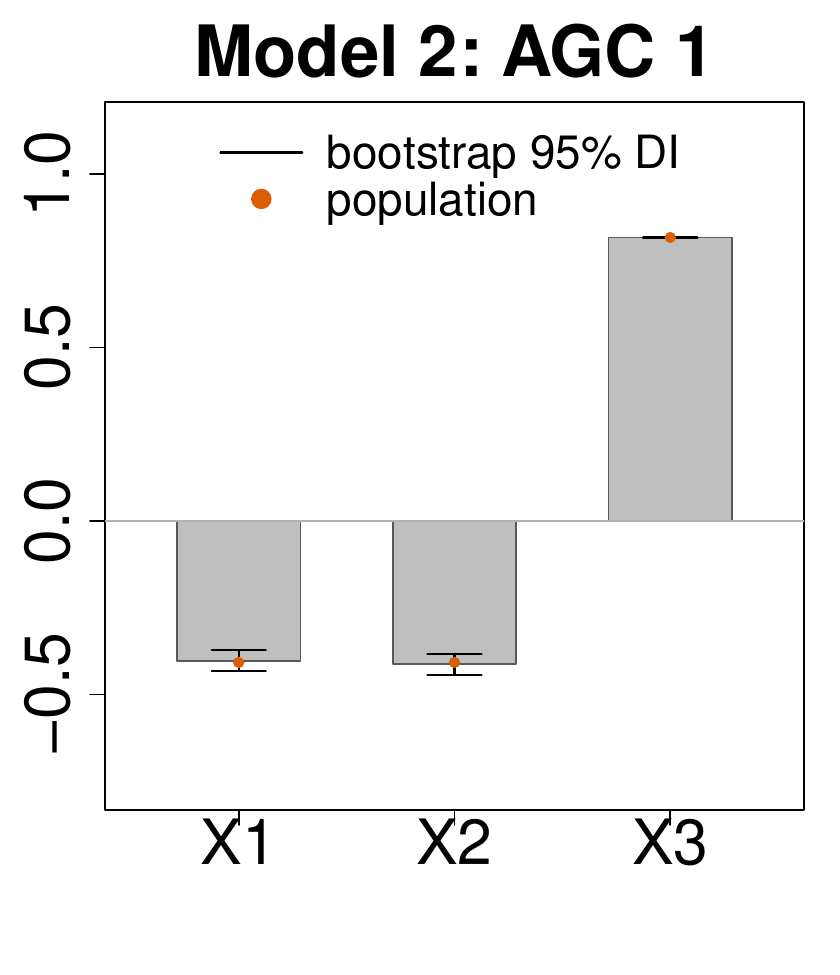}
\end{minipage}\hfill
\begin{minipage}[t]{0.235\linewidth}
\centering
\appinclude[width=\linewidth]{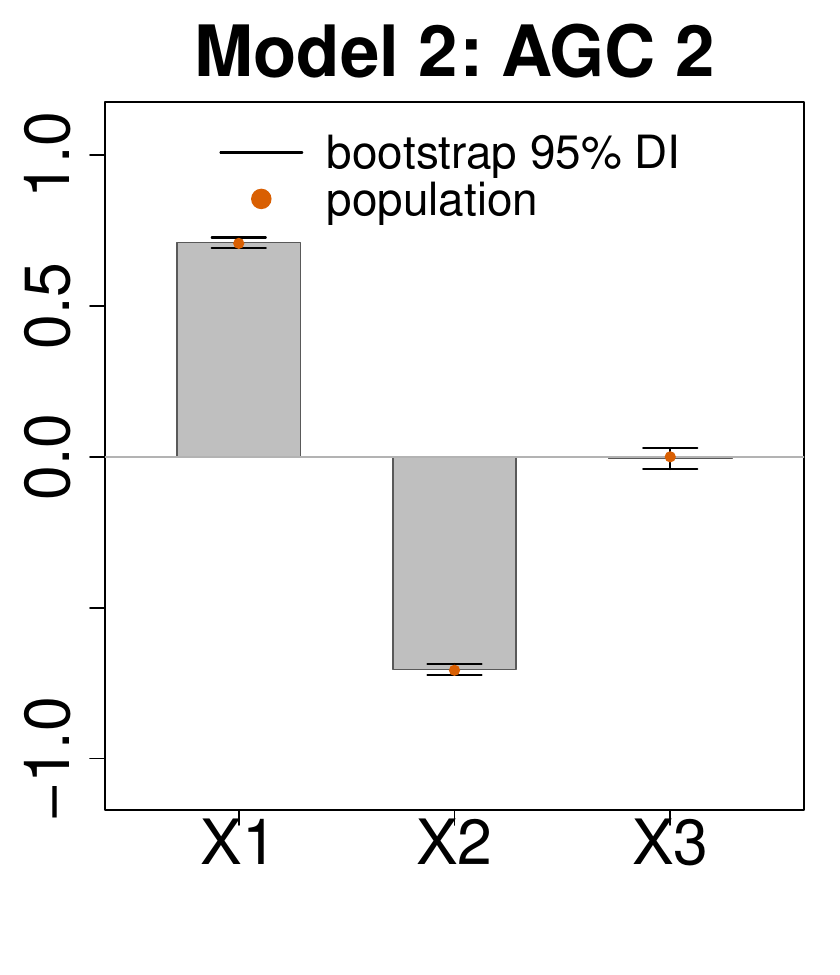}
\end{minipage}
\caption{Canonical-anchor loadings. From left to right: Model~1 AGC~1, Model~1 AGC~2,
Model~2 AGC~1, and Model~2 AGC~2. Bars show sample loadings, vertical intervals show bootstrap
95\% diagnostic intervals (DIs), and dots show population loadings matched to the sample AGCs.}
\label{fig:app-sim-3d-loadings}
\end{figure}

\subsection{Anchor choice}
\label{app:simulations-anchor-choice}

The canonical anchor fixes the interpretation as departure from balanced complete dependence.
Data-adaptive anchors can give similar scalar fits, but they change the zero point of the loading
and score coordinates. Figure~\ref{fig:app-sim-3d-anchor-distance} reports the geodesic distances
from the canonical anchor. In Model~1, the Fr\'echet and principal anchors are both about
\(0.078\) radians from \(\mu_0\). In Model~2, the Fr\'echet anchor remains close \((0.029)\), while
the principal anchor moves farther \((0.296)\) because the leading angular direction is pulled
toward the \(X_3\) axis. The sphere-level Fr\'echet fit is already displayed in
Figure~\ref{fig:app-sim-3d-sphere-overlay}; Figure~\ref{fig:app-sim-3d-frechet-loadings} gives
the corresponding Fr\'echet-anchor loadings. The dominant directions are comparable to the
canonical directions, but their interpretation is sample-centered rather than benchmark-relative.
This distinction is the main reason the canonical anchor remains the default in the paper.

\begin{figure}[tbp]
\centering
\begin{minipage}[t]{0.36\linewidth}
\centering
\textbf{(A) Model 1}\par\medskip
\appinclude[width=\linewidth]{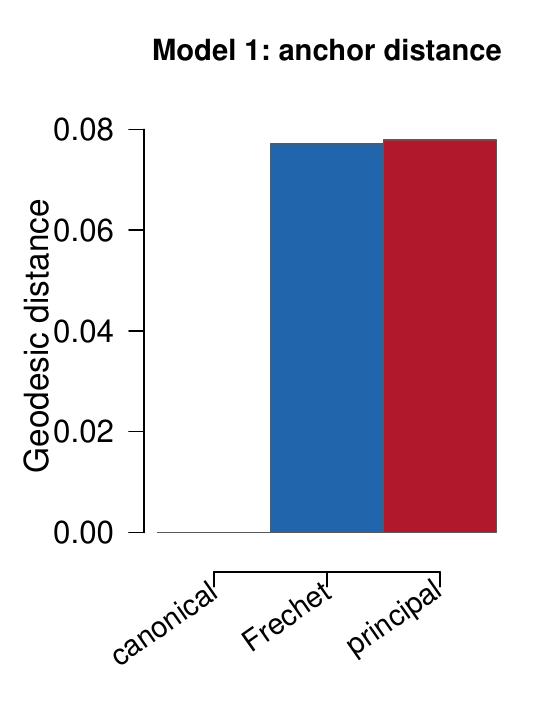}
\end{minipage}\hfill
\begin{minipage}[t]{0.36\linewidth}
\centering
\textbf{(B) Model 2}\par\medskip
\appinclude[width=\linewidth]{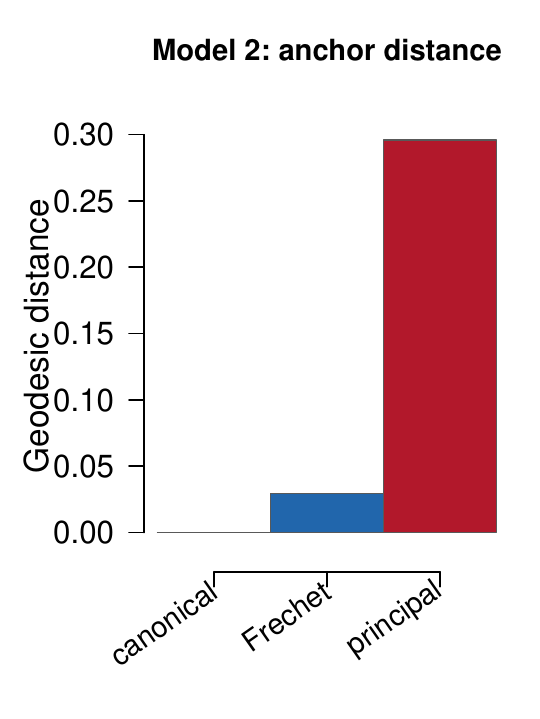}
\end{minipage}
\caption{Anchor sensitivity distances. Distances are spherical geodesic distances from the
canonical anchor.}
\label{fig:app-sim-3d-anchor-distance}
\end{figure}

\begin{figure}[tbp]
\centering
\begin{minipage}[t]{0.235\linewidth}
\centering
\appinclude[width=\linewidth]{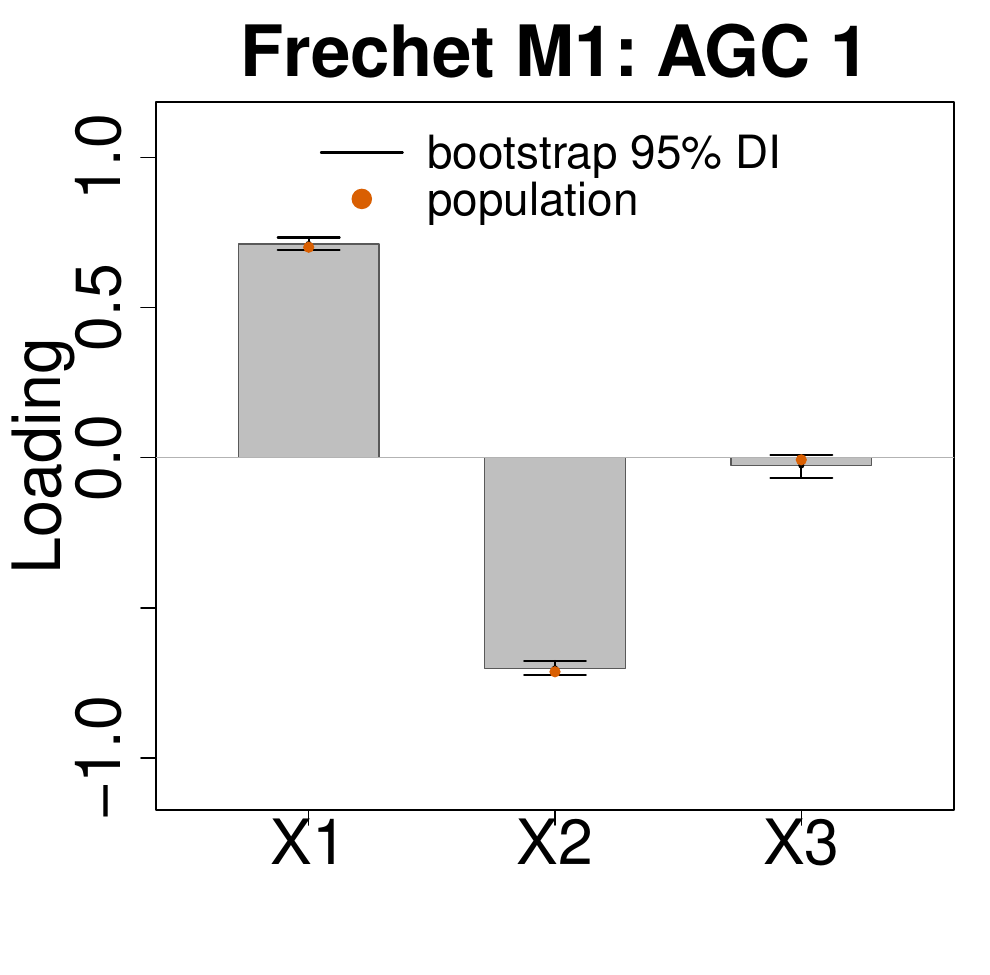}
\end{minipage}\hfill
\begin{minipage}[t]{0.235\linewidth}
\centering
\appinclude[width=\linewidth]{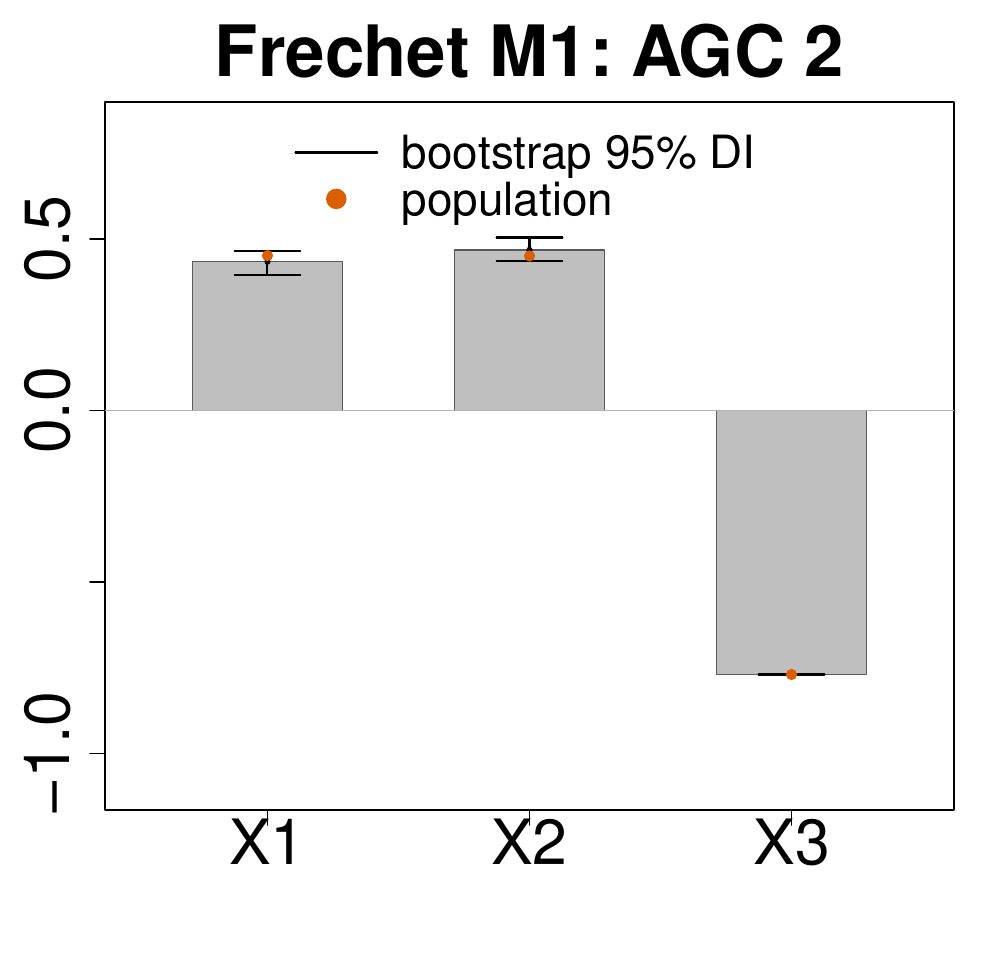}
\end{minipage}\hfill
\begin{minipage}[t]{0.235\linewidth}
\centering
\appinclude[width=\linewidth]{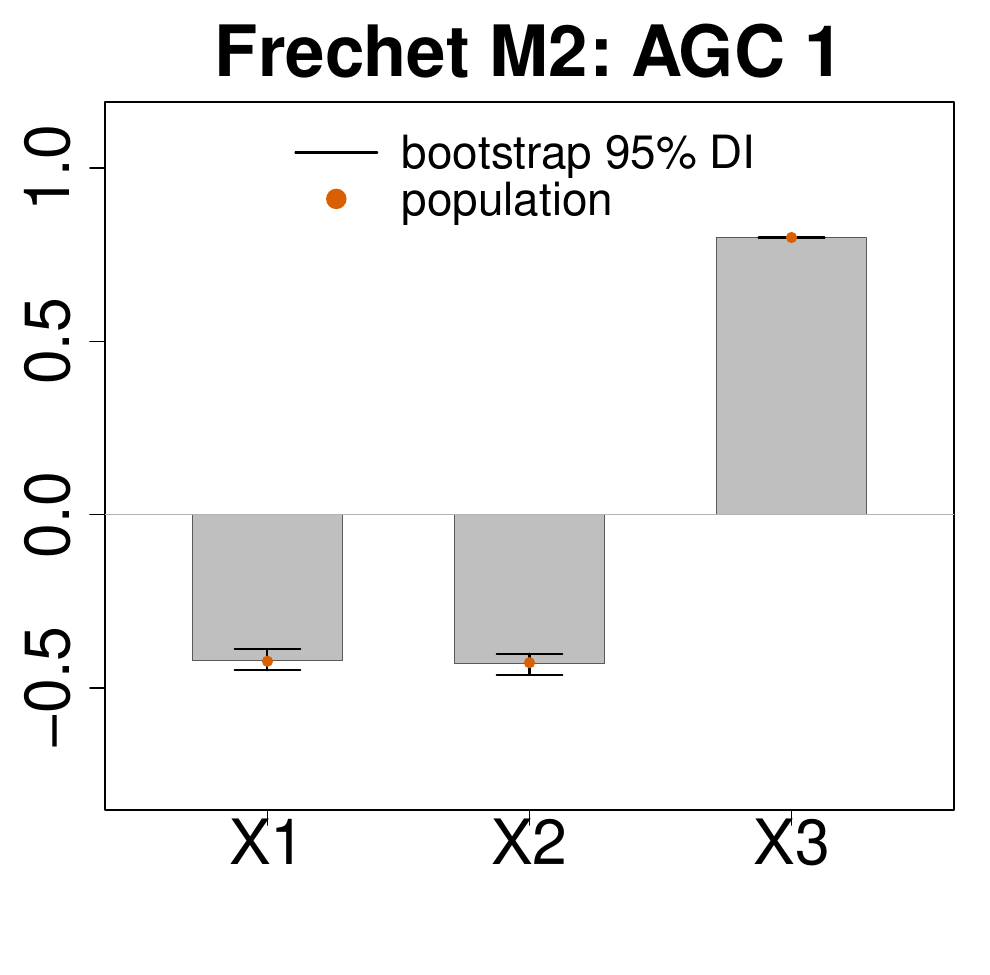}
\end{minipage}\hfill
\begin{minipage}[t]{0.235\linewidth}
\centering
\appinclude[width=\linewidth]{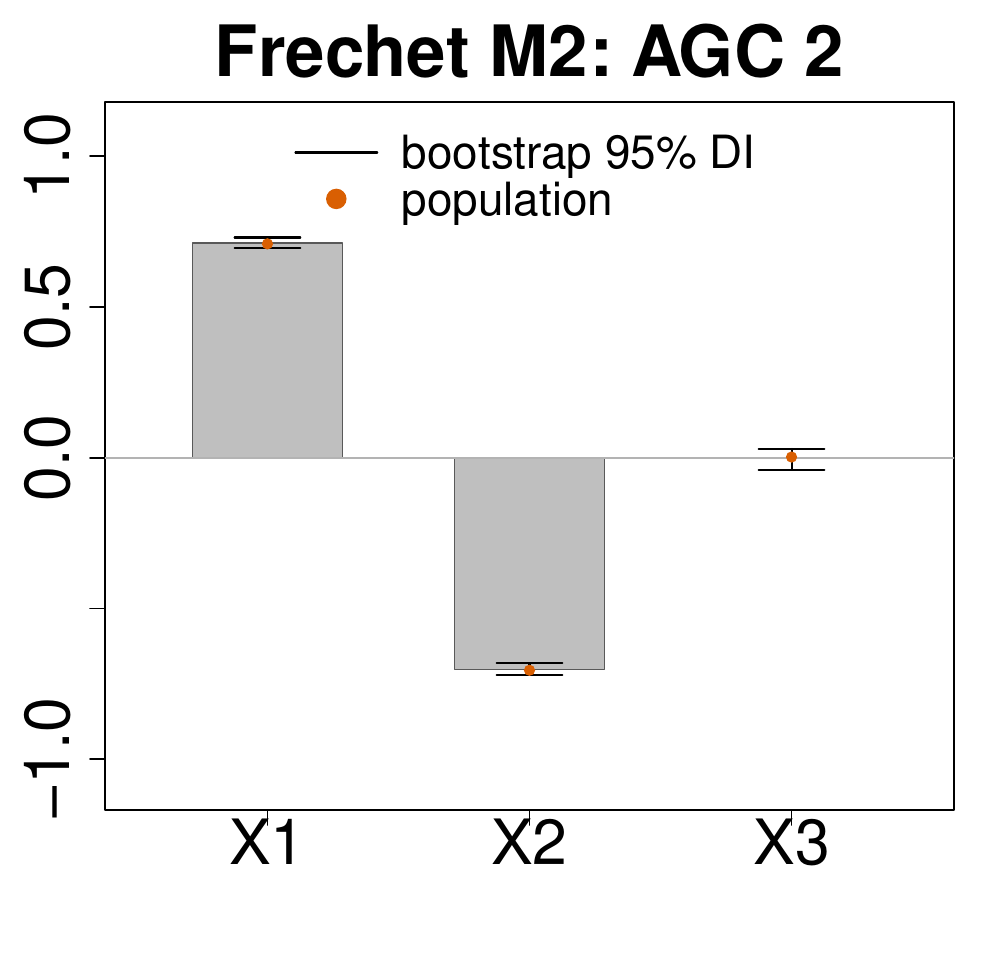}
\end{minipage}
\caption{Fr\'echet-anchor loadings. From left to right: Model~1 AGC~1, Model~1 AGC~2,
Model~2 AGC~1, and Model~2 AGC~2. Bars show sample loadings, vertical intervals show bootstrap
95\% diagnostic intervals (DIs), and dots show population loadings matched to the sample AGCs under the
Fr\'echet anchor.}
\label{fig:app-sim-3d-frechet-loadings}
\end{figure}

\subsection{Finite-sample sensitivity}
\label{app:simulations-finite-sample}

Figures~\ref{fig:app-sim-3d-finite-n-recovery}
and~\ref{fig:app-sim-3d-finite-n-variation-risk} repeat both three-dimensional designs over
increasing \(n\), keeping the tail fraction fixed. The recovery figure contains two diagnostics.
The loading-alignment curves compare the estimated leading loading with the known target contrast:
the \(X_1\)-versus-\(X_2\) contrast in Model~1 and the \(X_3\)-axis contrast in Model~2. The
projector-distance curves compare the estimated leading one-dimensional AGCA space with the
corresponding target space, so smaller values mean better subspace recovery. In both designs the
median alignment is already high at the displayed sample sizes and the projector distance
decreases as \(n\) grows; Model~2 is slightly more variable because the selected sample mixes a
shared \(X_1\)-\(X_2\) regime with a separate near-axis \(X_3\) regime.

\begin{figure}[tbp]
\centering
\appinclude[width=0.82\linewidth]{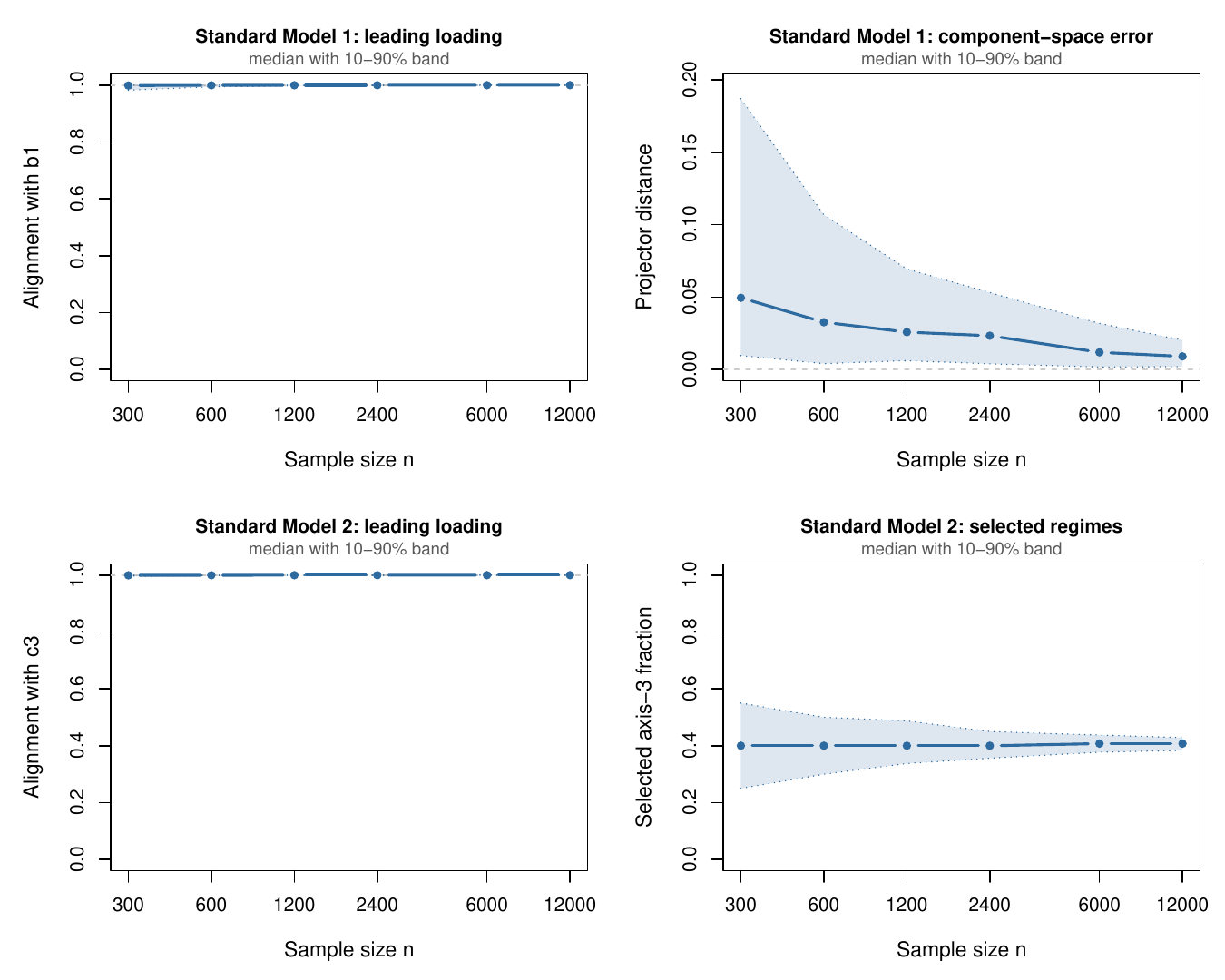}
\caption{Finite-sample recovery diagnostics for the two three-dimensional simulations. Curves
report medians with 10--90\% bands across repeated samples at a fixed tail fraction.}
\label{fig:app-sim-3d-finite-n-recovery}
\end{figure}

Figure~\ref{fig:app-sim-3d-finite-n-variation-risk} reports the scalar fit diagnostics along the
same finite-sample path. The anchored-variation curves show the fraction of total anchored
variation explained by the leading AGC, while the residual-risk curves show the average squared
anchored-coordinate error left after the rank-one fit. These quantities stabilize near the
main-sample values used in the preceding displays: Model~1 has a deliberately nonzero residual
because the angular cloud is slightly bent, whereas Model~2 has high rank-one explained variation
because the dominant contrast separates shared \(X_1\)-\(X_2\) episodes from the \(X_3\)-axis
episodes. The bands shrink with \(n\), confirming that the visual conclusions are not artifacts of
one favorable selected sample.

\begin{figure}[tbp]
\centering
\appinclude[width=0.82\linewidth]{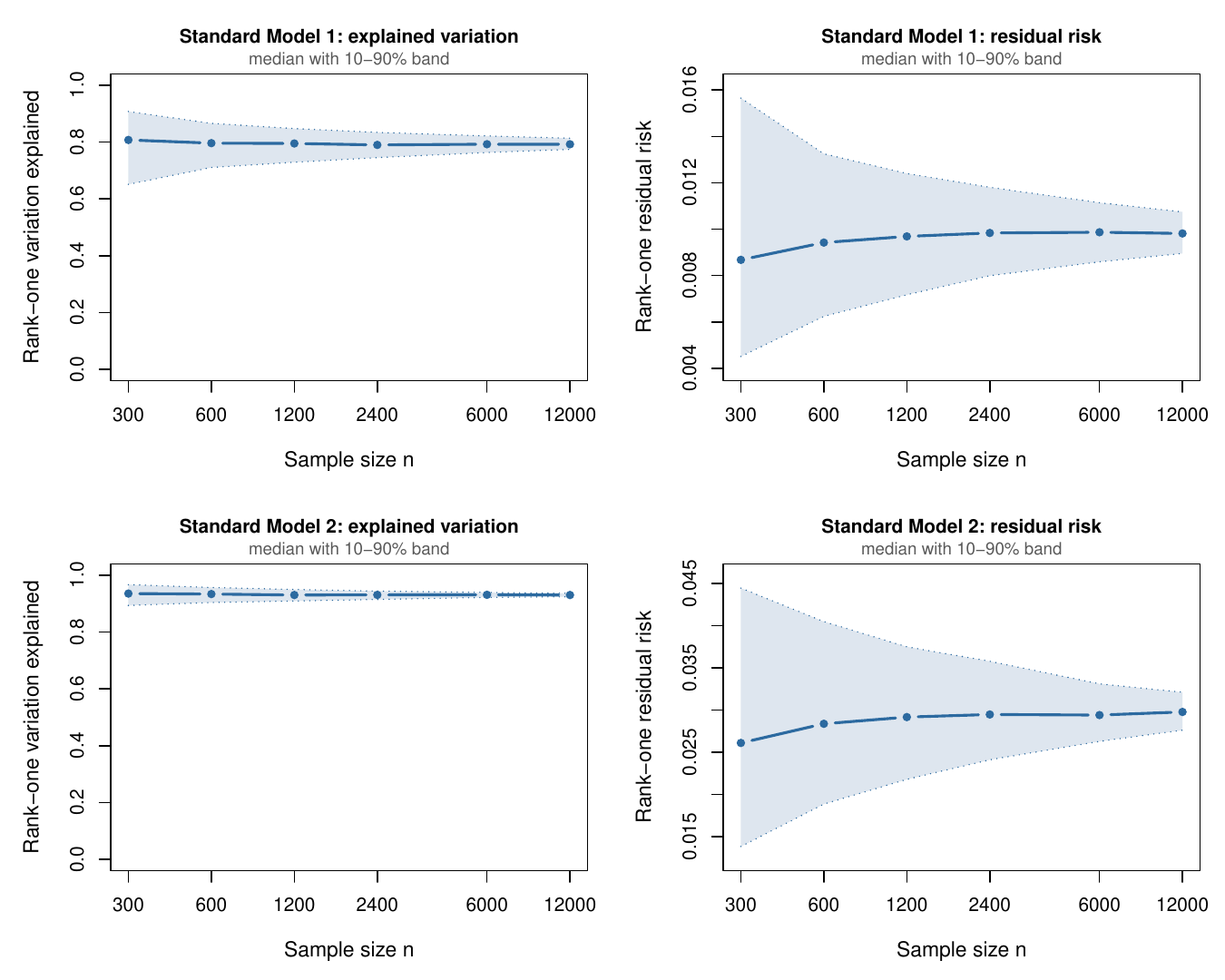}
\caption{Finite-sample anchored variation and residual-risk diagnostics for the two
three-dimensional simulations. Curves report medians with 10--90\% bands across repeated samples
at a fixed tail fraction.}
\label{fig:app-sim-3d-finite-n-variation-risk}
\end{figure}

\subsection{Ten-dimensional design}
\label{app:simulations-10d}

The ten-dimensional design checks that the preceding messages are not artifacts of a visible
three-dimensional sphere. Variables \(X_1,\ldots,X_8\) share a lower-rank angular mechanism,
while \(X_9\) and \(X_{10}\) have asymptotically independent near-axis regimes. After the
rank-Pareto transformation, the first two AGCs are expected to capture the two large
variable-specific regimes, and the next components should recover the lower-amplitude shared
variation among \(X_1,\ldots,X_8\). Figure~\ref{fig:app-sim-10d-variation} shows that the
canonical, Fr\'echet, and principal anchors give qualitatively the same anchored-variation
summary: the first few components dominate, and the ordering of the low-rank fit is stable across
anchors. The data-driven anchors therefore do not overturn the canonical summary, even though two
variables have asymptotically independent tails. The canonical anchor remains preferable for
interpretation because its loadings are departures from balanced complete dependence rather than
from an empirical center.

\begin{figure}[tbp]
\centering
\begin{minipage}[t]{0.48\linewidth}
\centering
\textbf{(A) Anchored variation}\par\medskip
\appinclude[width=\linewidth]{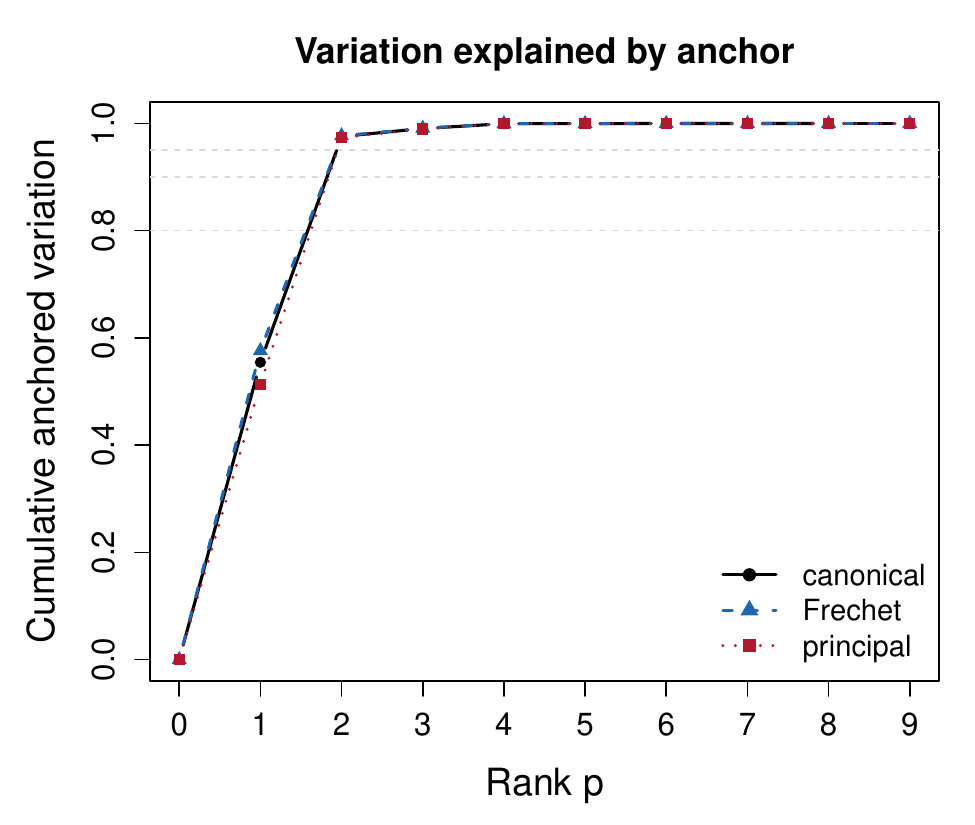}
\end{minipage}\hfill
\begin{minipage}[t]{0.48\linewidth}
\centering
\textbf{(B) Anchor distances}\par\medskip
\appinclude[width=0.7\linewidth]{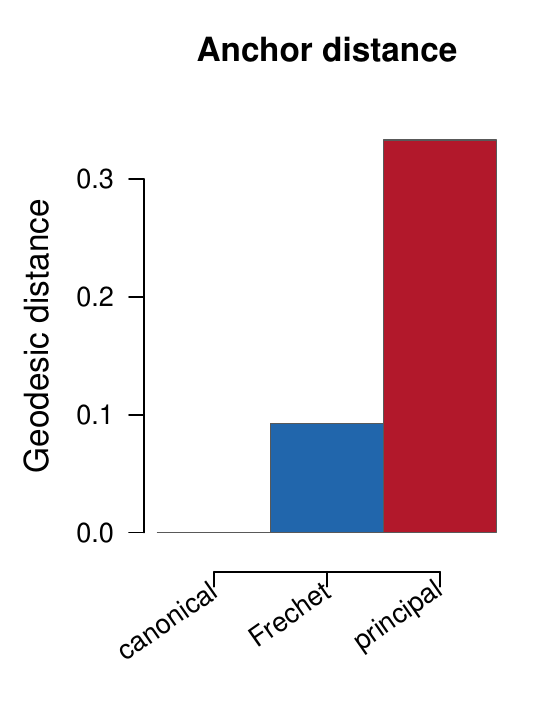}
\end{minipage}
\caption{Ten-dimensional anchor diagnostics. Panel (A) compares cumulative anchored variation
explained under the canonical, Fr\'echet, and principal anchors. Panel (B) reports the spherical
distance from the canonical anchor to the data-adaptive anchors.}
\label{fig:app-sim-10d-variation}
\end{figure}

Figures~\ref{fig:app-sim-10d-loadings} and~\ref{fig:app-sim-10d-loading-bars} give the loading
interpretation. The heatmap shows the leading canonical-anchor loadings: the first two components
concentrate on \(X_9\) and \(X_{10}\), which are the designed near-axis regimes, while the
following components show structured contrasts across \(X_1,\ldots,X_8\). The separate AGC~3 and
AGC~4 bar plots make this lower-amplitude shared structure explicit. Thus AGCA does not ignore the
shared block; it orders it after the larger near-axis departures because components are ranked by
anchored variation.

\begin{figure}[tbp]
\centering
\appinclude[width=0.78\linewidth]{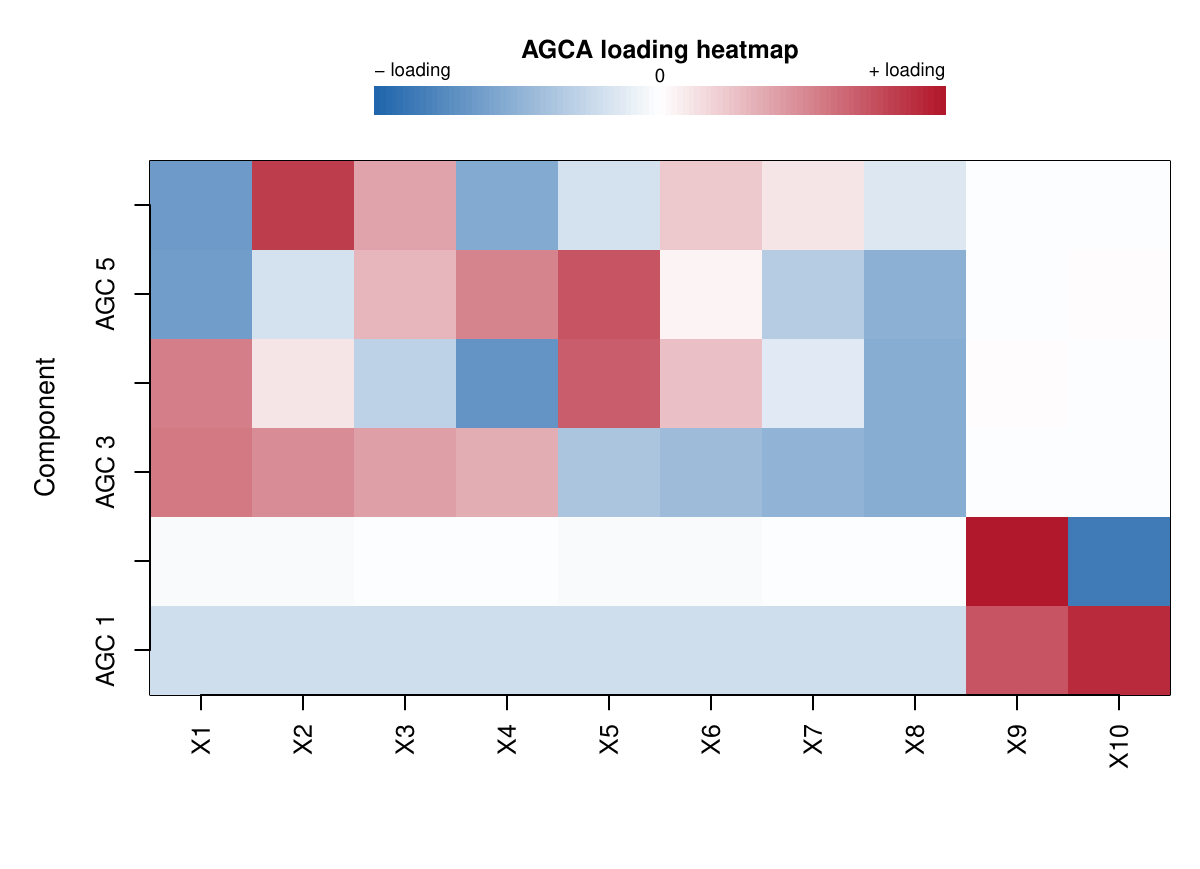}
\caption{Ten-dimensional canonical-anchor loading heatmap. The first two components concentrate
on the designed near-axis regimes, while later components recover structured contrasts in the
shared block \(X_1,\ldots,X_8\).}
\label{fig:app-sim-10d-loadings}
\end{figure}

\begin{figure}[tbp]
\centering
\begin{minipage}[t]{0.45\linewidth}
\centering
\textbf{(A) AGC 3}\par\medskip
\appinclude[width=\linewidth]{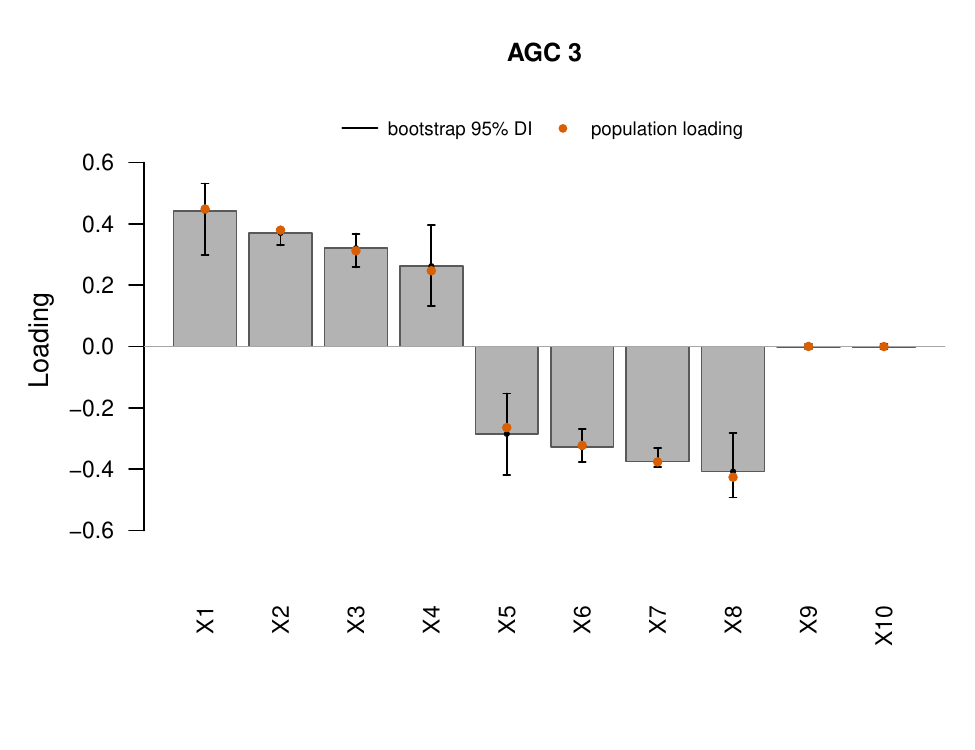}
\end{minipage}\hfill
\begin{minipage}[t]{0.45\linewidth}
\centering
\textbf{(B) AGC 4}\par\medskip
\appinclude[width=\linewidth]{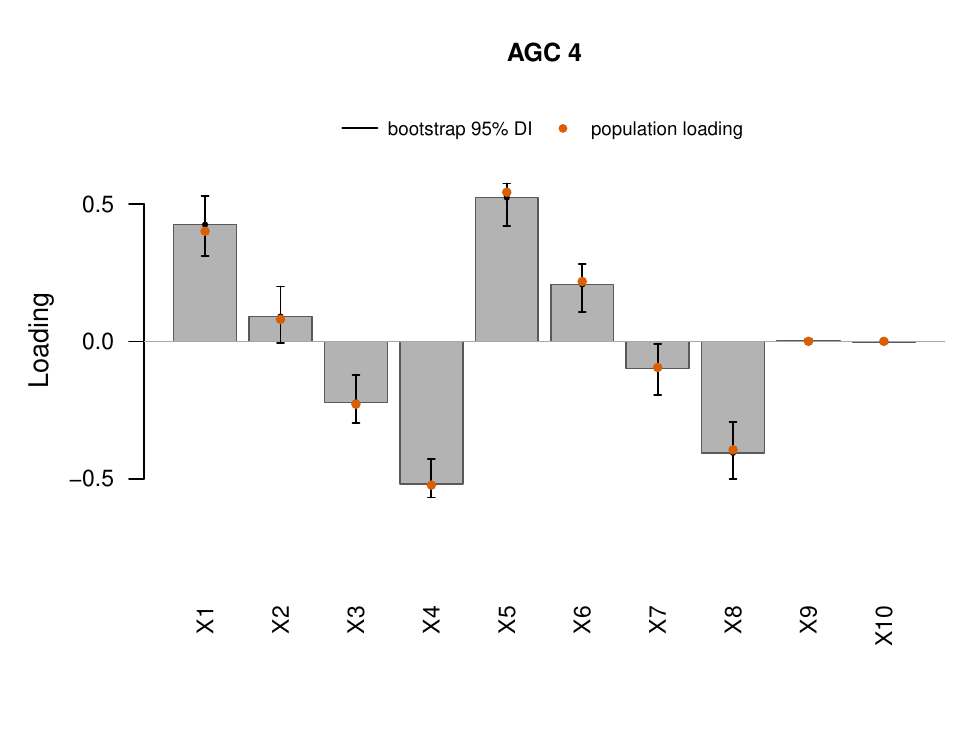}
\end{minipage}
\caption{Ten-dimensional canonical-anchor bar loadings for AGC~3 and AGC~4. Bars show sample
loadings, vertical intervals show bootstrap 95\% diagnostic intervals (DIs), and dots show
population loadings matched to the sample AGCs. These components recover the lower-amplitude
shared angular variation among \(X_1,\ldots,X_8\).}
\label{fig:app-sim-10d-loading-bars}
\end{figure}

Figure~\ref{fig:app-sim-10d-oracle} reports the oracle alignment matrix. The target columns are
known only because this is a simulation. The first two sample AGCs align with the two-dimensional
axis-regime subspace generated by \(X_9\) and \(X_{10}\), up to rotation within that subspace. The
third and fourth sample AGCs align with the two designed shared AGCs for \(X_1,\ldots,X_8\). The
threshold and bootstrap diagnostics in Figure~\ref{fig:app-sim-10d-stability} then show that this
interpretation is not driven by a narrow threshold choice: the explained-variation path and the
rank-two residual risk are stable around the main threshold, and bootstrap DIs for cumulative
variation remain narrow.

\begin{figure}[tbp]
\centering
\appinclude[width=0.62\linewidth]{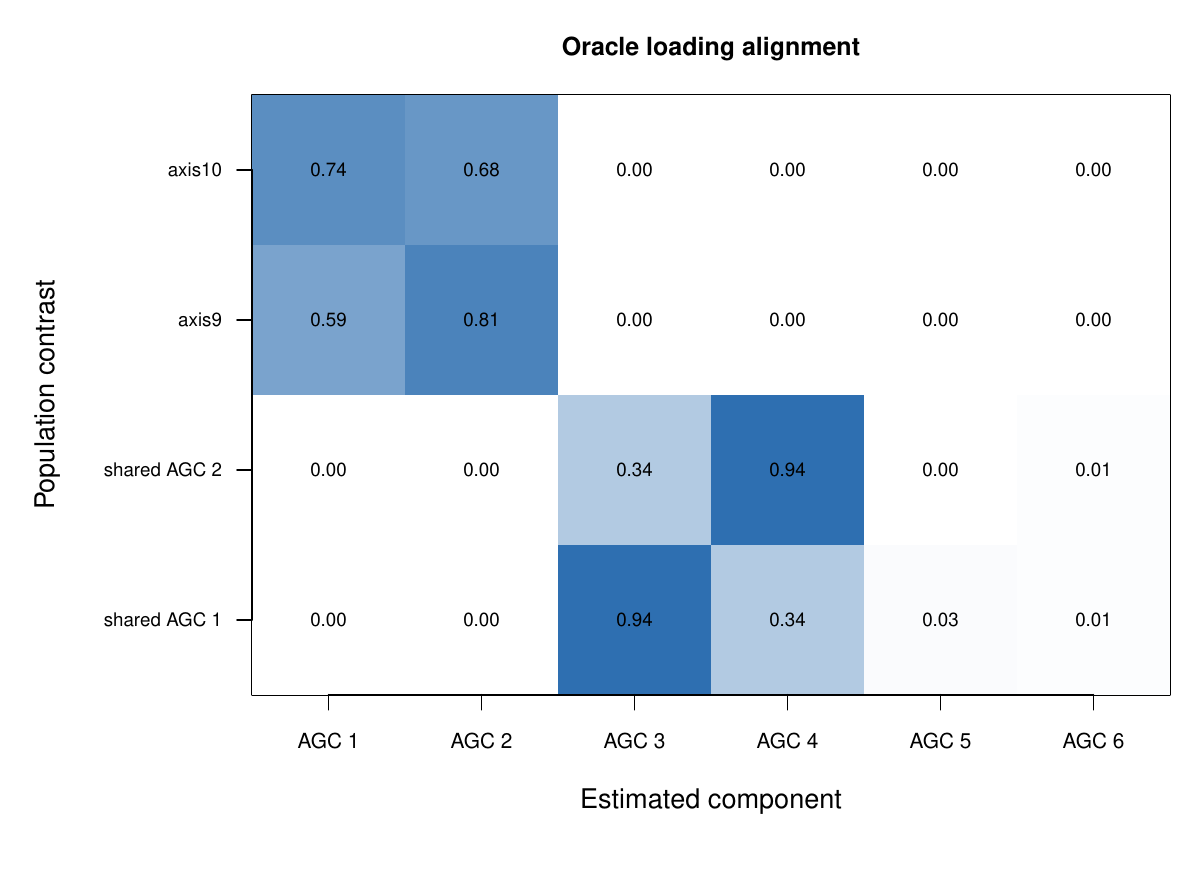}
\caption{Oracle alignment for the ten-dimensional simulation. Rows correspond to sample AGCs and
columns to known design contrasts.}
\label{fig:app-sim-10d-oracle}
\end{figure}

\begin{figure}[tbp]
\centering
\begin{minipage}[t]{0.48\linewidth}
\centering
\textbf{(A) Threshold sensitivity}\par\medskip
\appinclude[width=\linewidth]{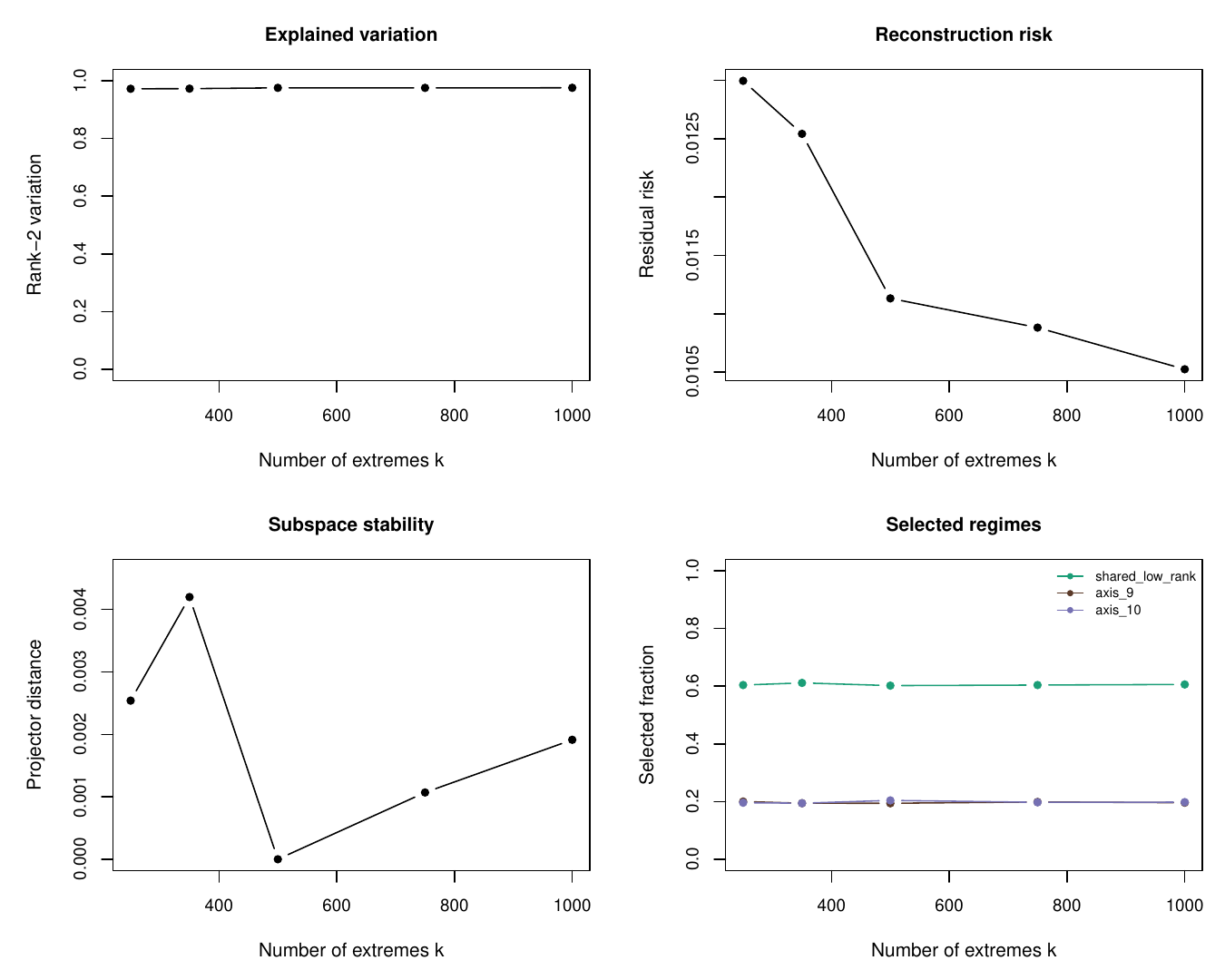}
\end{minipage}\hfill
\begin{minipage}[t]{0.48\linewidth}
\centering
\textbf{(B) Bootstrap DIs}\par\medskip
\appinclude[width=\linewidth]{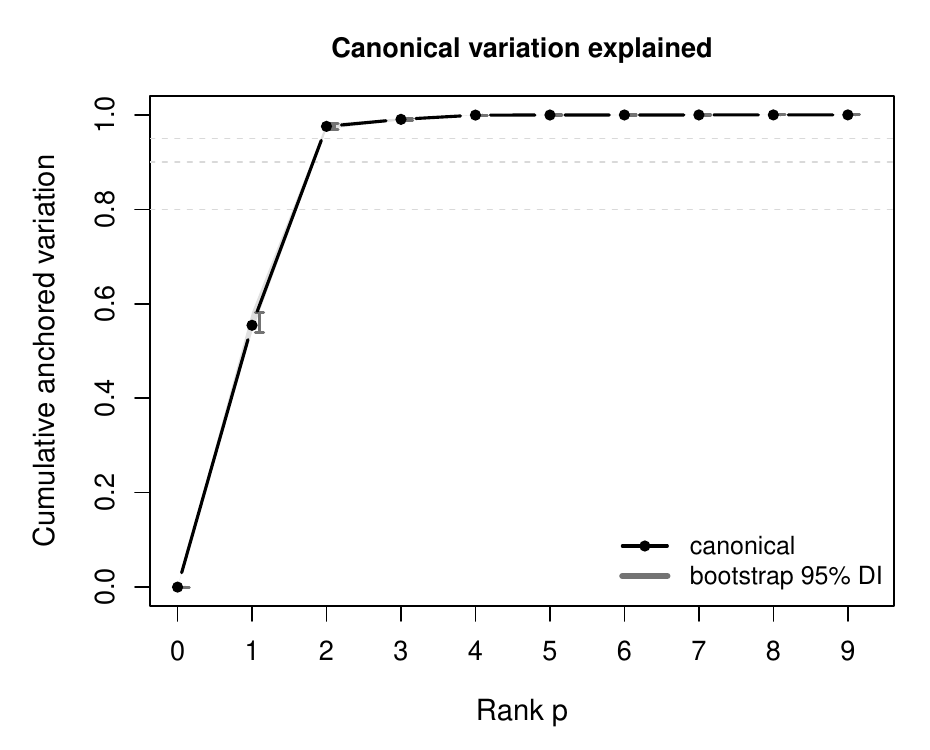}
\end{minipage}
\caption{Ten-dimensional stability diagnostics. Panel (A) varies the number of selected radial
extremes around the main threshold. Panel (B) shows bootstrap 95\% diagnostic intervals (DIs) for
canonical-anchor variation explained.}
\label{fig:app-sim-10d-stability}
\end{figure}

\subsection{Oracle interval coverage in the ten-dimensional design}
\label{app:simulations-10d-oracle-coverage}

We also use the ten-dimensional design to check the finite-sample behavior of the oracle
confidence intervals from Corollary~\ref{cor:oracle-agca-spectral-clt} and
Corollary~\ref{cor:oracle-agca-plugin-ci} in the main text. The goal is not to replace the bootstrap diagnostics
above, but to verify that the bounded-departure oracle variance formulas give reasonable
repeated-sampling calibration in the same setting used for the geometric diagnostics. We run
\(5000\) independent samples with \(n=10000\) and \(k=500\), using the canonical anchor and
rank \(p=2\). The population targets are estimated once from an independent sample of size
\(500000\), with the same tail fraction \(k/n=0.05\). Since the exact finite-sample marginal
distributions of the noisy ten-dimensional generator are not available in closed form, the
oracle-margin analysis uses an independent calibration sample of size \(500000\) to approximate
the true Pareto marginal transforms, with Pareto tail extrapolation beyond the calibration tail.
For comparison, we apply the same oracle interval formulas after the rank-Pareto marginal
standardization used in the main simulation analysis. This second comparison is diagnostic: as
discussed in Section~\ref{sec:rank-pareto-clt-scale}, the oracle variance formula is not claimed
to be a full rank-margin CLT.

\begin{table}[tbp]
\centering
\small
\begin{tabular}{@{}llrrrr@{}}
\toprule
Margins & Target & Population value & Coverage & Bias & Mean length \\
\midrule
Oracle & \(\tau_\mu\) & \(0.4637\) & \(0.941\) & \(0.0028\) & \(0.0577\) \\
Rank & \(\tau_\mu\) & \(0.4637\) & \(1.000\) & \(0.0005\) & \(0.0576\) \\
Oracle & \(\lambda_1\) & \(0.2576\) & \(0.969\) & \(0.0029\) & \(0.0266\) \\
Rank & \(\lambda_1\) & \(0.2576\) & \(1.000\) & \(-0.0003\) & \(0.0211\) \\
Oracle & \(\lambda_2\) & \(0.1942\) & \(0.952\) & \(0.0001\) & \(0.0423\) \\
Rank & \(\lambda_2\) & \(0.1942\) & \(1.000\) & \(-0.0000\) & \(0.0407\) \\
Oracle & \(\mathrm{AVE}_{\mu,2}\) & \(0.9742\) & \(0.931\) & \(0.0005\) & \(0.0099\) \\
Rank & \(\mathrm{AVE}_{\mu,2}\) & \(0.9742\) & \(0.878\) & \(-0.0017\) & \(0.0107\) \\
\bottomrule
\end{tabular}
\caption{Monte Carlo coverage of nominal \(95\%\) oracle plug-in intervals in the ten-dimensional
design. Bias is the Monte Carlo mean estimate minus the population target. In the rank-margin
rows, each sample is first rank-Pareto standardized and then the oracle plug-in standard error is
applied deliberately as a shortcut.}
\label{tab:app-sim-10d-oracle-coverage}
\end{table}

Table~\ref{tab:app-sim-10d-oracle-coverage} shows that the oracle intervals are reasonably
calibrated for the formal oracle problem. Coverage is close to the nominal level for
\(\lambda_2\), slightly conservative for \(\lambda_1\), and mildly below nominal for
\(\tau_\mu\) and \(\mathrm{AVE}_{\mu,2}\). The biases are small relative to the interval lengths,
so the remaining discrepancies are consistent with finite-threshold and plug-in variance effects
rather than a breakdown of the bounded-departure CLT. The rank-margin comparison is informative
but should be read differently. For \(\tau_\mu\), \(\lambda_1\), and \(\lambda_2\), the oracle
formula applied to rank margins overcovers: the rank transformation fixes much of the marginal
tail-scale variation across samples, while the oracle plug-in standard errors remain comparatively
large. For \(\mathrm{AVE}_{\mu,2}\), however, the same shortcut undercovers, with coverage
\(0.878\). This ratio functional is sensitive to the joint rank-margin correction in the numerator
and denominator, and the result illustrates why the paper uses bootstrap and stability diagnostics 
for rank-standardized empirical
analyses. At the same time, the oracle-margin results support the main inferential message:
for this high-dimensional design with near-axis regimes, AGCA's bounded spherical departures lead
to usable first-order uncertainty quantification for the oracle spectral summaries.

Together, the simulations isolate the core geometric messages. In Model~1, AGCA summarizes a
smooth, slightly bent angular law by a stable anchored direction plus a smaller residual
direction. In Model~2, AGCA summarizes an asymptotically independent variable by a finite
near-axis displacement from the balanced anchor. The same diagnostics used in the empirical
application (explained variation, scores, loadings, anchor checks, bootstrap DIs, and
threshold or finite-sample paths) are already visible in this low-dimensional setting and remain
informative in the ten-dimensional design with two near-axis regimes.

\FloatBarrier

%% file: appendix_empirics_portfolios.tex
\section{Portfolio empirical analysis: additional details}
\label{app:empirics-portfolios}

This supplementary section gives the data-construction details and secondary diagnostics for
Section~\ref{sec:empirics-portfolios}. All portfolio-return panels are converted to losses by
multiplying returns by \(-1\). Each margin is then transformed to the pseudo-Pareto scale using
\eqref{eq:pseudo-pareto-transform}, and AGCA is fit to Euclidean-spherical directions for the
largest radii. Unless otherwise stated, the anchor is the canonical complete-dependence
direction \(\mu_0=d^{-1/2}\one\).

\subsection{Data construction and anomaly-panel robustness design}
\label{app:emp-port-data}
\label{app:emp-port-osap-robustness}

The main Fama--French panel uses daily value-weighted \(2\times 3\) portfolios from the Kenneth
R. French Data Library \citep{kenFrenchDataLibrary}. The four included sorts are
Size--Book-to-Market, Size--Operating Profitability, Size--Investment, and Size--Momentum. Each
sort contributes six portfolios, corresponding to two size groups crossed with three
characteristic groups. The retained complete panel has \(d=24\) assets and \(n=13340\) daily
observations from July 2, 1973 through May 29, 2026. The main threshold keeps the largest
\(k=667\) radii.

The robustness panel uses the Open Source Asset Pricing daily value-weighted anomaly-sorted
quintile portfolio file \citep{openSourceAssetPricing}. We use six continuous signals related to
liquidity and trading frictions: bid--ask spread, illiquidity, dollar volume, one-month zero
trading, six-month zero trading, and price. Each signal contributes five quintile portfolios,
giving \(d=30\) assets. The retained complete panel has \(n=12988\) daily observations from July
2, 1973 through December 31, 2024. The main threshold keeps the largest \(k=649\) radii.

\begin{table}[H]
\centering
\caption{Portfolio blocks in the two empirical panels.}
\label{tab:app-emp-port-blocks}
\small
\begin{tabular}{@{}lll@{}}
\toprule
Panel & Block & Portfolios retained \\
\midrule
FF & Size--Book-to-Market & six \(2\times 3\) value-weighted portfolios \\
FF & Size--Operating Profitability & six \(2\times 3\) value-weighted portfolios \\
FF & Size--Investment & six \(2\times 3\) value-weighted portfolios \\
FF & Size--Momentum & six \(2\times 3\) value-weighted portfolios \\
\addlinespace
OSAP & Bid--ask spread & five value-weighted quintile portfolios \\
OSAP & Illiquidity & five value-weighted quintile portfolios \\
OSAP & Dollar volume & five value-weighted quintile portfolios \\
OSAP & Zero trading, one month & five value-weighted quintile portfolios \\
OSAP & Zero trading, six months & five value-weighted quintile portfolios \\
OSAP & Price & five value-weighted quintile portfolios \\
\bottomrule
\end{tabular}
\end{table}

The Fama--French panel is the main application because it gives a compact and familiar
portfolio-formation universe with clear economic blocks. The OSAP panel is deliberately
different: it is a modern anomaly-sorted trading-friction panel, distinct from a standard
size-characteristic panel. Its role is to test whether the portfolio-functional conclusions
survive in another long daily portfolio universe. The robustness evidence is reported below in
the cross-panel spectrum, OSAP loading heatmap, and OSAP portfolio-functional diagnostics.

\subsection{AGCA spectrum diagnostics}
\label{app:emp-port-diagnostics}

Figure~\ref{fig:app-emp-port-spectrum-panels} reports the main canonical-anchor spectra for both
portfolio panels. It is included here as the cross-panel version of the main-text Fama--French
spectrum: the Fama--French panel is more compressed, while the OSAP anomaly panel gives the same
qualitative low-rank message in a distinct portfolio universe.

\begin{figure}[H]
\centering
\appinclude[width=0.95\linewidth]{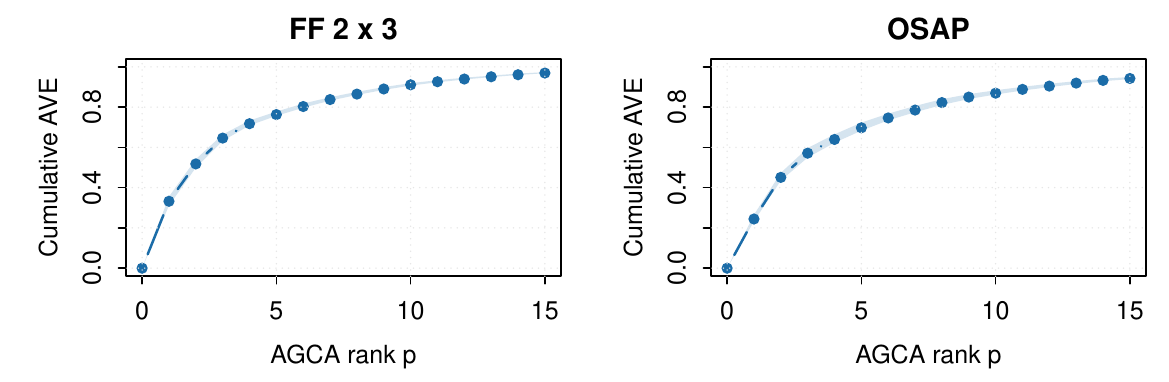}
\caption{Cumulative anchored variation explained for the Fama--French and OSAP portfolio panels.
The line is the canonical-anchor fit at the main \(5\%\) radial threshold. Shaded bands are
bootstrap 95\% diagnostic intervals, computed as pointwise \(2.5\%\) and \(97.5\%\)
conditional bootstrap percentiles after resampling the selected extreme directions and refitting
AGCA.}
\label{fig:app-emp-port-spectrum-panels}
\end{figure}

Figure~\ref{fig:app-emp-port-anchor} reports anchor sensitivity for the effective-dimension
summary. Besides the canonical anchor, we fit AGCA using the leading principal direction of the
selected angular sample and the normalized selected-sample mean direction. These anchors are
diagnostic references; the main text uses the canonical anchor. The canonical anchor keeps the
interpretation of loadings as departures from balanced complete dependence.

\begin{figure}[H]
\centering
\appinclude[width=0.95\linewidth]{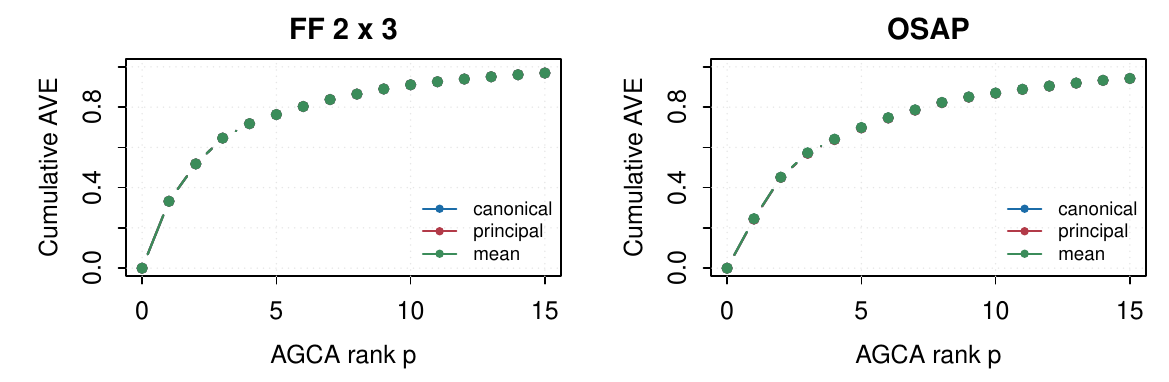}
\caption{Anchor sensitivity for cumulative anchored variation explained. Each panel reports the
canonical, principal, and mean-anchor spectra at the main \(5\%\) radial threshold.}
\label{fig:app-emp-port-anchor}
\end{figure}

Figure~\ref{fig:app-emp-port-anchor-distance} puts this diagnostic in geometric perspective by
reporting the spherical distance from each data-driven anchor to the canonical anchor over the
same threshold grid used below. The data-driven anchors move modestly but measurably away from
balanced complete dependence, especially at smaller thresholds in the OSAP panel, so the near
overlap in Figure~\ref{fig:app-emp-port-anchor} is not merely a numerical identity between
anchor choices.

\begin{figure}[H]
\centering
\appinclude[width=0.95\linewidth]{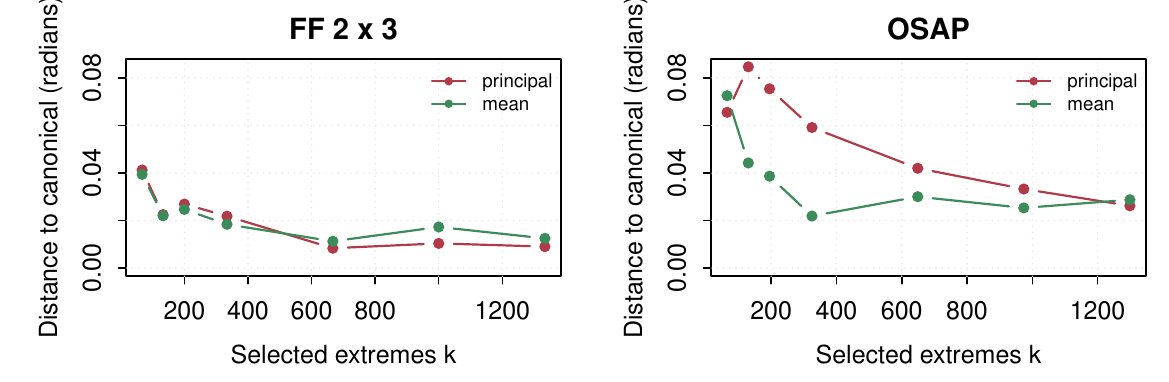}
\caption{Spherical distance from the canonical anchor to the principal and selected-sample mean
anchors over the threshold grid. Each panel reports distances for one portfolio panel.}
\label{fig:app-emp-port-anchor-distance}
\end{figure}

The scalar spectra are insensitive to these anchor changes. In the Fama--French panel, rank ten
explains about \(91\%\) of anchored variation under all three anchors. In the OSAP panel, the
rank-twelve explained variation remains close to \(90\%\). Thus the low-rank conclusion remains
stable across anchor choices, even though the diagnostic anchors move away from the canonical
anchor by nonzero spherical distances.

Figure~\ref{fig:app-emp-port-threshold} reports threshold sensitivity. For each panel we refit
the canonical-anchor AGCA model over a grid of radial tail fractions from \(0.5\%\) to
\(10\%\), keeping only thresholds with more than \(d+2\) selected directions. The expected
bias-variance tradeoff is visible: the smallest \(k\) curves are less smooth, while larger
thresholds admit less extreme days. The main empirical message nevertheless persists. The
Fama--French panel remains more compressed than the OSAP panel, and both retain a clearly
ordered low-rank spectrum across the displayed threshold choices.

\begin{figure}[H]
\centering
\appinclude[width=0.95\linewidth]{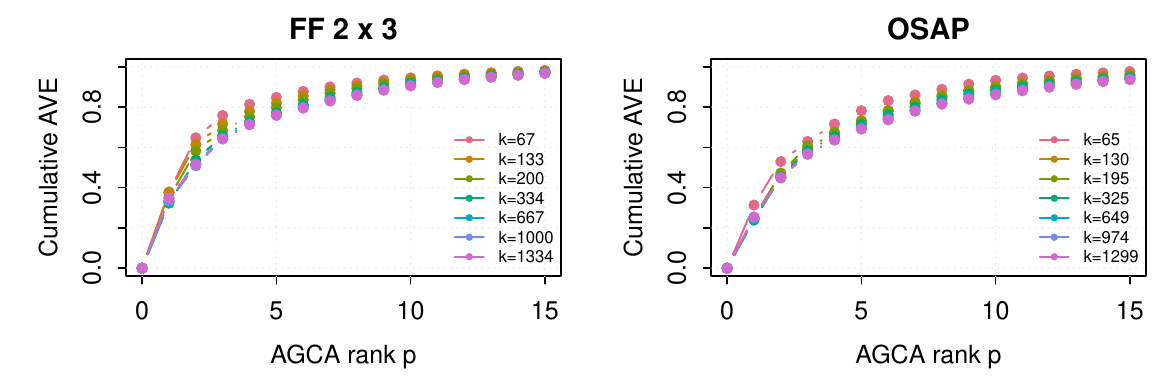}
\caption{Threshold sensitivity for cumulative anchored variation explained. Each line
corresponds to a different number \(k\) of selected radial extremes.}
\label{fig:app-emp-port-threshold}
\end{figure}

\subsection{Loading heatmaps and AGC1 stability}
\label{app:emp-port-loadings}

The main text reports the Fama--French loading heatmap. Figure~\ref{fig:app-emp-port-osap-loadings}
gives the corresponding OSAP display. In both heatmaps, assets are on the horizontal axis and
AGCA loading vectors are the rows. Signs are oriented by the deterministic convention used in
the implementation: within each loading vector, the entry with largest absolute magnitude is
positive. This is the same orientation convention used in the main text.

\begin{figure}[H]
\centering
\appinclude[width=0.95\linewidth]{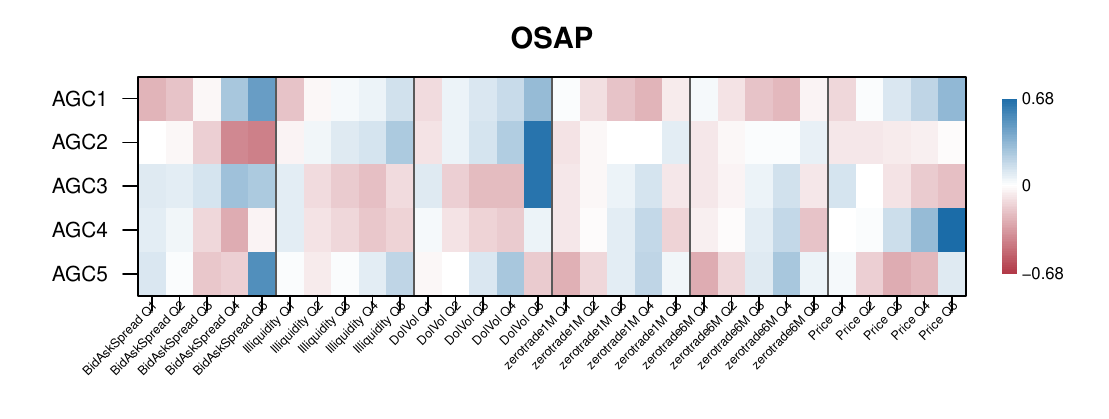}
\caption{Canonical-anchor AGCA loadings for the OSAP robustness panel. Columns are portfolios,
grouped by signal, and rows are the first five AGCA loading vectors. Loading signs are oriented
so that the largest absolute entry in each loading vector is positive.}
\label{fig:app-emp-port-osap-loadings}
\end{figure}

The OSAP heatmap is useful mainly as a robustness diagnostic. Because the assets are quintile
portfolios formed on liquidity and trading-friction signals, the loadings should be read as
contrasts across trading-friction portfolios, with the AGCA interpretation tied to angular tail
directions. The important comparison with the Fama--French panel is the preservation of both
angular variation and the portfolio functionals reported below with a similar number of AGCA
coordinates. AGC1 has a simple gradient interpretation for several OSAP signals: bid--ask
spread, dollar volume, and price move from negative loadings in the lower quintiles to positive
loadings in the upper quintiles. The first coordinate therefore captures a signal-sorted
rotation of the extreme-loss direction, analogous to the size contrast in the Fama--French
panel.

\begin{figure}[H]
\centering
\appinclude[width=0.95\linewidth]{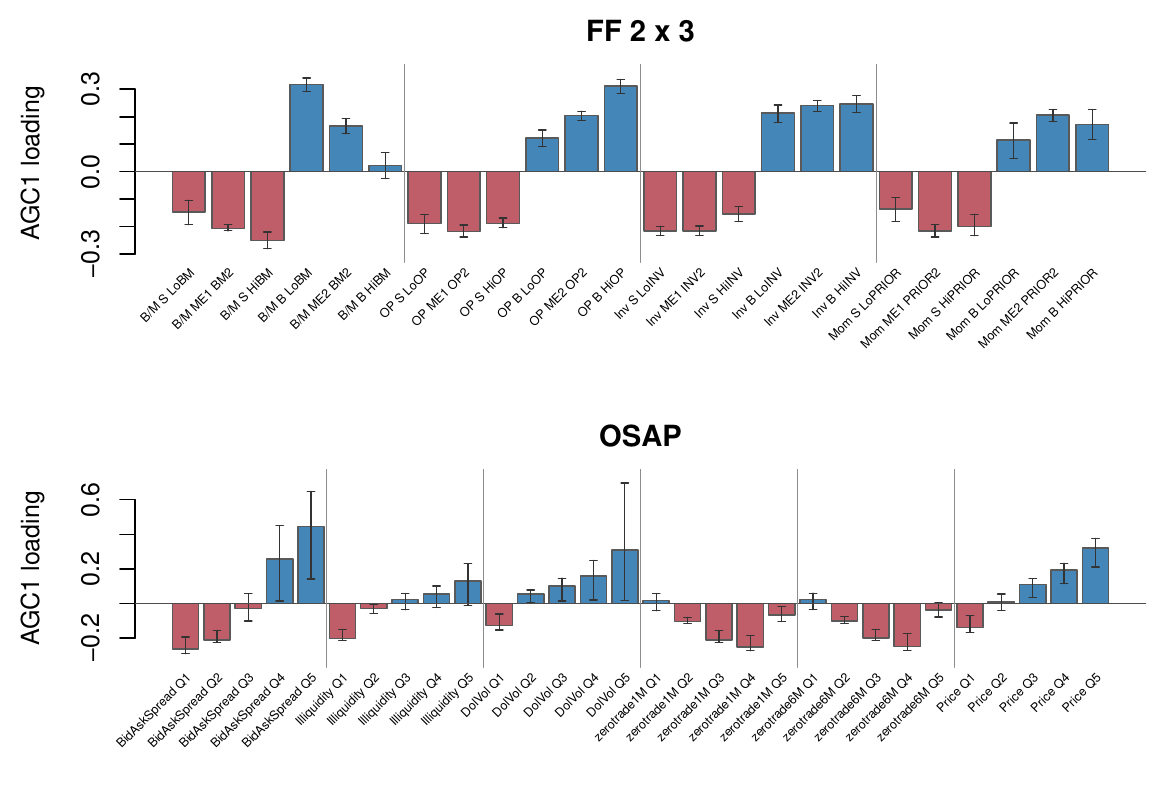}
\caption{AGC1 loadings with pointwise conditional bootstrap 95\% diagnostic intervals. Bars
are the canonical-anchor AGC1 loadings at the main \(5\%\) radial threshold. DIs are \(2.5\%\) and
\(97.5\%\) bootstrap percentiles after resampling the selected extreme directions and aligning
bootstrap loading signs to the full-sample AGC1.}
\label{fig:app-emp-port-agc1-bootstrap-loadings}
\end{figure}

Figure~\ref{fig:app-emp-port-agc1-bootstrap-loadings} checks whether the AGC1 signs are stable
under resampling of the selected extreme directions. The DIs are diagnostic rather than
formal confidence intervals, because they condition on the selected tail sample and on the
rank-Pareto marginal transformation. In the Fama--French panel, 23 of 24 DIs exclude
zero. The stable signs confirm that AGC1 is a repeated size contrast across the four bivariate
sorts; the only near-zero case is the big high book-to-market portfolio. In the OSAP panel, 22
of 30 DIs exclude zero. The most stable gradients occur for bid--ask spread, dollar
volume, and price, while several middle illiquidity and zero-trading quintiles have DIs
that cross zero. These crossings identify entries whose signs should be read as weak local
features of AGC1 rather than robust loadings.

\subsection{Portfolio tail-functional diagnostics}
\label{app:emp-port-functionals}

The portfolio-functional experiment uses four classes of weights. The first is the equal-weight
portfolio across all assets in the panel. The second contains block-equal portfolios, one per
Fama--French sort or OSAP signal. The third contains 250 randomly generated long-only portfolios
with weights summing to one. The fourth contains 250 randomly generated limited-leverage
portfolios with \(\one^\T w=1\) and \(\norm{w}_1\le1.5\). These random portfolios are fixed
across ranks within each data set.

The capped-excess functional is bounded and Lipschitz by construction. The VaR diagnostic uses
the Pareto-limit tail constant, with Pareto index \(\alpha=1\), matching the rank-Pareto
marginal scale. The main text reports rank-level average relative errors for the Fama--French
panel. Figure~\ref{fig:app-emp-port-ff-class} gives the Fama--French class-level breakdown, and
Figure~\ref{fig:app-emp-port-relative-rank-osap} adds the OSAP robustness panel on the same
relative-error scale.

\begin{figure}[H]
\centering
\appinclude[width=0.95\linewidth]{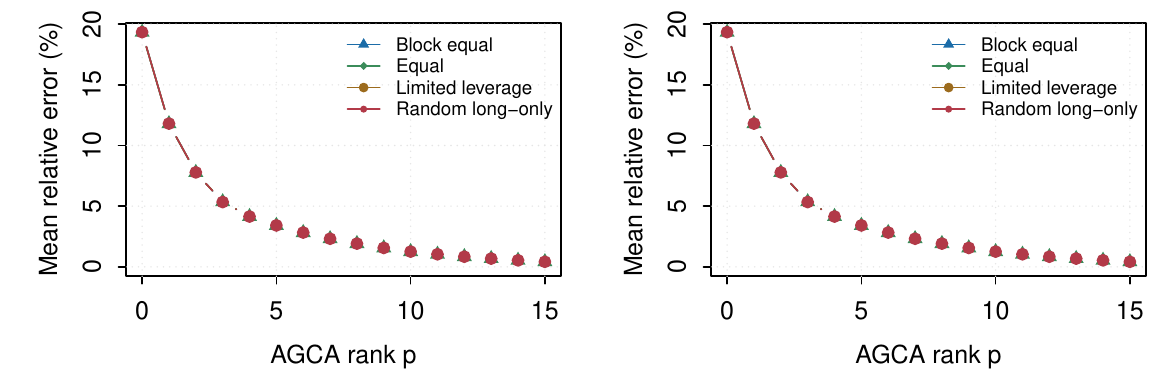}
\caption{Fama--French relative portfolio-functional errors by portfolio class. The left display
reports relative capped-excess error and the right display reports relative normalized VaR
error.}
\label{fig:app-emp-port-ff-class}
\end{figure}

\begin{figure}[H]
\centering
\appinclude[width=0.95\linewidth]{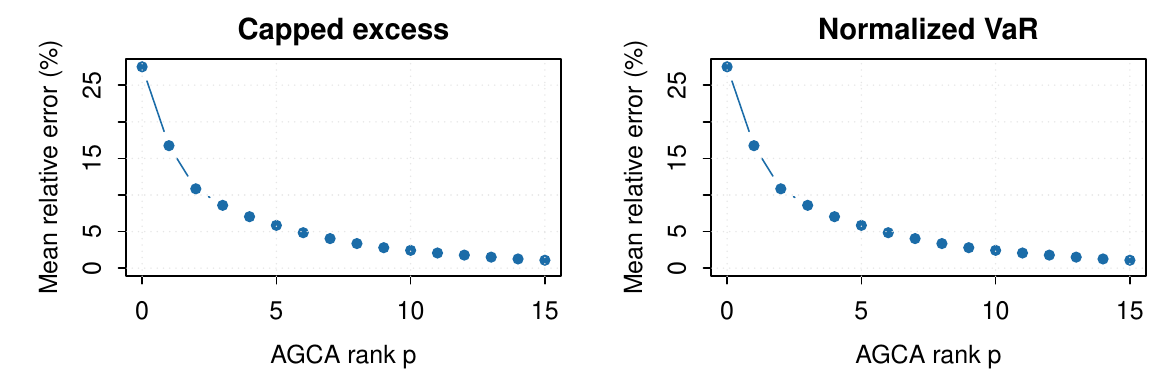}
\caption{OSAP relative portfolio tail-functional errors by AGCA rank. The left display reports
relative capped-excess errors; the right display reports relative normalized VaR errors.}
\label{fig:app-emp-port-relative-rank-osap}
\end{figure}

Table~\ref{tab:app-emp-port-functionals} reports selected-rank values on both the original error
scales and the relative financial scale used in the main text.

\begin{table}[H]
\centering
\caption{Mean portfolio-functional errors at selected AGCA ranks.}
\label{tab:app-emp-port-functionals}
\small
\begin{tabular}{@{}lrrrrrr@{}}
\toprule
Panel & Rank & AVE & Capped abs. & Capped rel. (\%) & Log VaR & VaR rel. (\%) \\
\midrule
FF & 3  & 0.646 & 0.0129 & 5.36 & 0.0522 & 5.36 \\
FF & 5  & 0.763 & 0.0082 & 3.41 & 0.0335 & 3.41 \\
FF & 8  & 0.864 & 0.0046 & 1.91 & 0.0189 & 1.91 \\
FF & 10 & 0.911 & 0.0030 & 1.25 & 0.0124 & 1.25 \\
FF & 12 & 0.939 & 0.0020 & 0.84 & 0.0083 & 0.84 \\
\addlinespace
OSAP & 3  & 0.571 & 0.0192 & 8.59 & 0.0824 & 8.59 \\
OSAP & 5  & 0.697 & 0.0131 & 5.85 & 0.0569 & 5.85 \\
OSAP & 8  & 0.822 & 0.0075 & 3.36 & 0.0331 & 3.36 \\
OSAP & 10 & 0.869 & 0.0054 & 2.44 & 0.0241 & 2.44 \\
OSAP & 12 & 0.904 & 0.0040 & 1.79 & 0.0177 & 1.79 \\
\bottomrule
\end{tabular}
\end{table}

Figure~\ref{fig:app-emp-port-absolute-rank-all} reports the original absolute capped-excess and
log-VaR scales, including the positive post-projection. The post-projected curves are close to
the unconstrained AGCA curves, so enforcing positive directions for simulation does not change
the rank-level conclusion in these data.

\begin{figure}[H]
\centering
\begin{minipage}[t]{0.49\linewidth}
\centering
\appinclude[width=\linewidth]{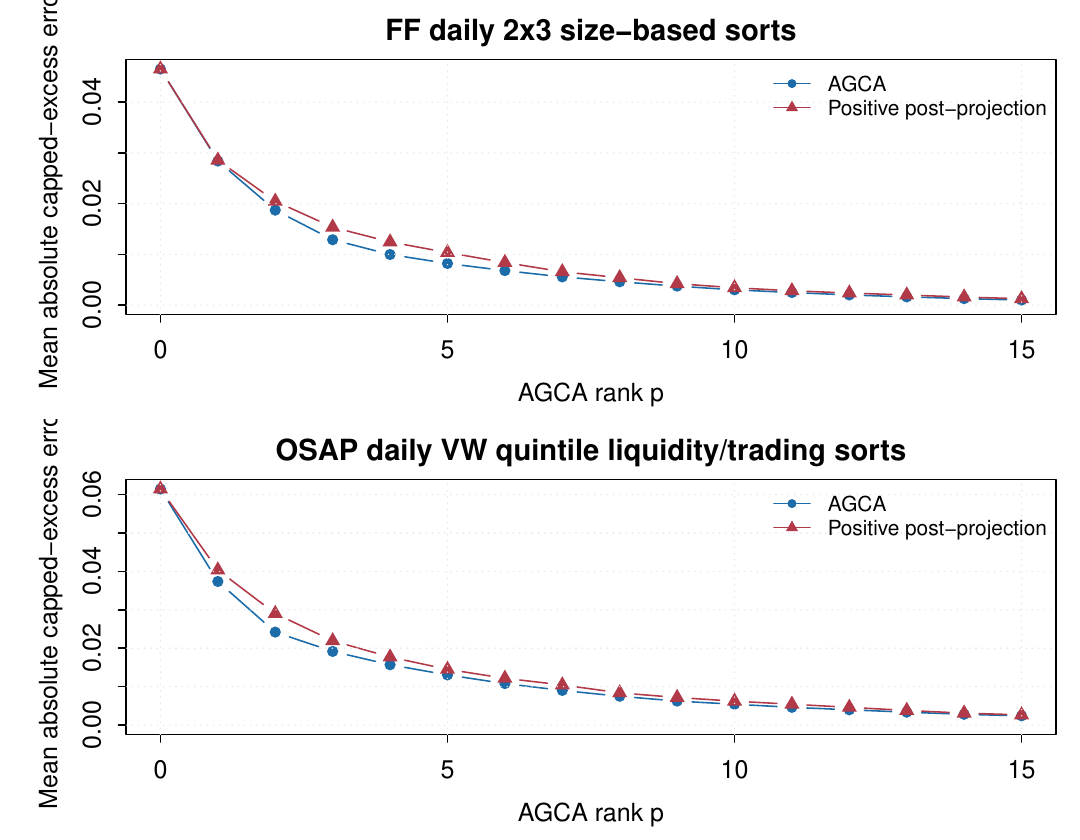}
\end{minipage}\hfill
\begin{minipage}[t]{0.49\linewidth}
\centering
\appinclude[width=\linewidth]{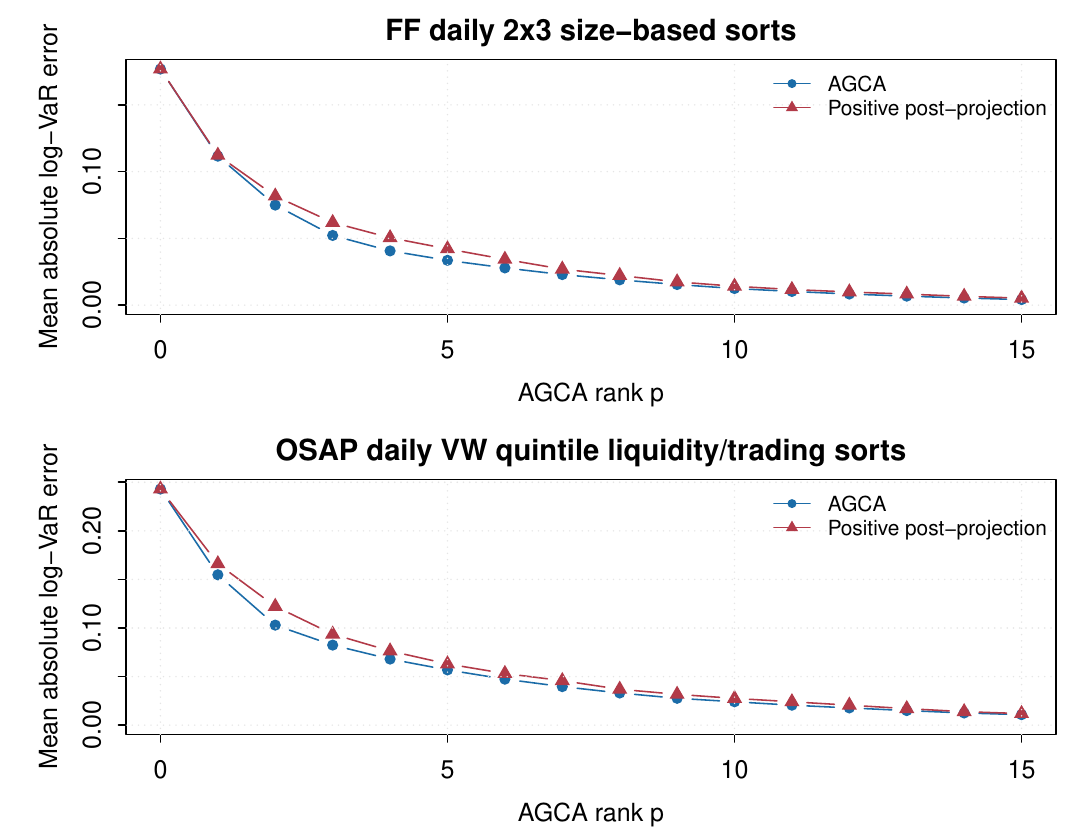}
\end{minipage}
\caption{Original-scale portfolio tail-functional errors by AGCA rank. The left display reports
absolute capped-excess error; the right display reports absolute log error in the normalized
Pareto-limit VaR tail constant. Each display includes the raw AGCA reconstruction and the
positive post-projection.}
\label{fig:app-emp-port-absolute-rank-all}
\end{figure}

The rank-level averages hide some variation across portfolio classes.
Figure~\ref{fig:app-emp-port-class-osap} reports the OSAP class-level relative errors, complementing
the Fama--French class-level display above. The class-level patterns are consistent with the main
conclusion: the errors fecline monotonically with rank for all classes, and the Fama--French panel
achieves small functional errors at lower ranks than the OSAP panel.

\begin{figure}[H]
\centering
\appinclude[width=0.95\linewidth]{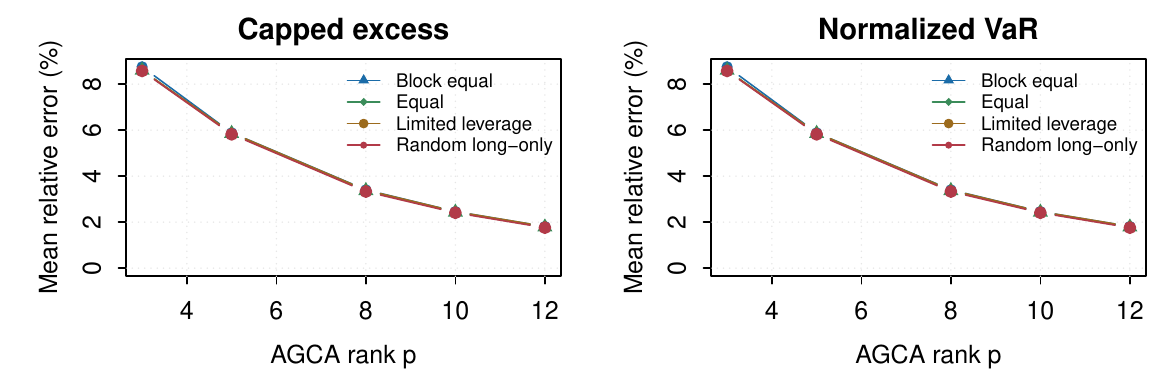}
\caption{OSAP relative portfolio-functional errors by portfolio class. The left display reports
relative capped-excess error and the right display reports relative normalized VaR error.}
\label{fig:app-emp-port-class-osap}
\end{figure}

Figure~\ref{fig:app-emp-port-functional-anchor} checks whether the same data-driven anchors used
in the spectral diagnostic materially reduce the portfolio-functional errors. The comparison
uses the same selected extreme directions and the same portfolio weights, changing only the
anchor used to fit the AGCA reconstruction. The rank effect is much larger than the anchor
effect: principal and mean anchors can move the curves slightly, but they do not materially
improve the capped-excess or normalized VaR errors relative to the canonical anchor. The
canonical anchor is therefore competitive for these scalar tail summaries while retaining the
clean loading interpretation used in the main text.

\begin{figure}[H]
\centering
\appinclude[width=0.95\linewidth]{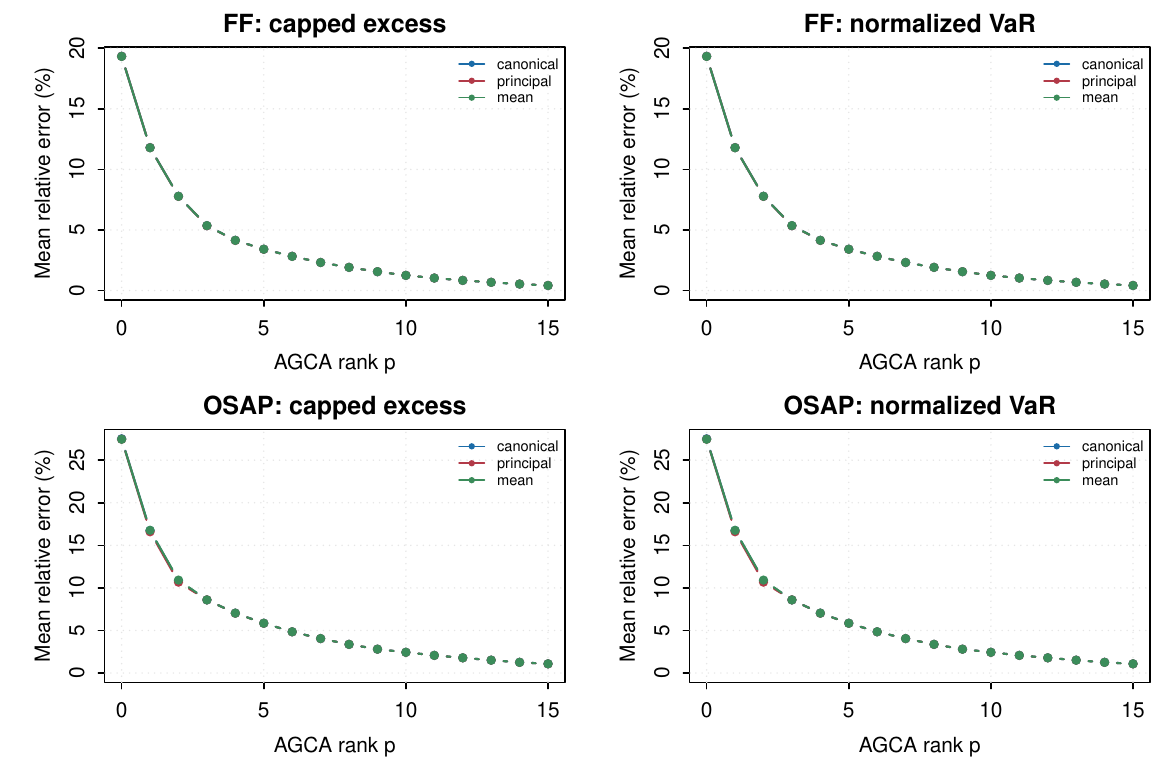}
\caption{Anchor sensitivity for relative portfolio tail-functional errors. Rows correspond to
the Fama--French and OSAP panels; columns report capped-excess and normalized VaR errors.}
\label{fig:app-emp-port-functional-anchor}
\end{figure}

\FloatBarrier

%% file: main.bbl
\begin{thebibliography}{}

\bibitem[\protect\citeauthoryear{Avella~Medina, Davis, and Samorodnitsky}{Avella~Medina et~al.}{2025}]{medina2025insights}
Avella~Medina, M., R.~A. Davis, and G.~Samorodnitsky (2025).
\newblock Insights into kernel pca with application to multivariate extremes.
\newblock {\em SIAM Journal on Mathematics of Data Science\/}~{\em 7\/}(2), 777--801.

\bibitem[\protect\citeauthoryear{Butsch and Fasen-Hartmann}{Butsch and Fasen-Hartmann}{2025}]{butsch2025estimation}
Butsch, L. and V.~Fasen-Hartmann (2025).
\newblock Estimation of the number of principal components in high-dimensional multivariate extremes.
\newblock {\em Scandinavian Journal of Statistics\/}~{\em 52\/}(4), 2270--2313.

\bibitem[\protect\citeauthoryear{Chautru}{Chautru}{2015}]{chautru2015dimension}
Chautru, E. (2015).
\newblock Dimension reduction in multivariate extreme value analysis.
\newblock {\em Electronic Journal of Statistics\/}~{\em 9\/}(1), 383--418.

\bibitem[\protect\citeauthoryear{Chen and Zimmermann}{Chen and Zimmermann}{2026}]{openSourceAssetPricing}
Chen, A.~Y. and T.~Zimmermann (2026).
\newblock {Open Source Asset Pricing}.
\newblock \url{https://www.openassetpricing.com/}.
\newblock Accessed July 7, 2026.

\bibitem[\protect\citeauthoryear{Cooley, Sabourin, and Wixson}{Cooley et~al.}{2026}]{cooley2026principal}
Cooley, D., A.~Sabourin, and T.~Wixson (2026).
\newblock Principal component analysis for multivariate extremes.
\newblock {\em arXiv preprint arXiv:2606.07213\/}.

\bibitem[\protect\citeauthoryear{Cooley and Thibaud}{Cooley and Thibaud}{2019}]{cooley2019decompositions}
Cooley, D. and E.~Thibaud (2019).
\newblock Decompositions of dependence for high-dimensional extremes.
\newblock {\em Biometrika\/}~{\em 106\/}(3), 587--604.

\bibitem[\protect\citeauthoryear{Dai and M{\"u}ller}{Dai and M{\"u}ller}{2018}]{dai2018principal}
Dai, X. and H.-G. M{\"u}ller (2018).
\newblock Principal component analysis for functional data on riemannian manifolds and spheres.
\newblock {\em The Annals of Statistics\/}~{\em 46\/}(6B), 3334--3361.

\bibitem[\protect\citeauthoryear{De~Haan and Ferreira}{De~Haan and Ferreira}{2006}]{de2006extreme}
De~Haan, L. and A.~Ferreira (2006).
\newblock {\em Extreme value theory: an introduction}.
\newblock Springer.

\bibitem[\protect\citeauthoryear{Drees}{Drees}{2025}]{drees2025asymptotic}
Drees, H. (2025).
\newblock Asymptotic behavior of principal component projections for multivariate extremes.
\newblock {\em arXiv preprint arXiv:2503.22296\/}.

\bibitem[\protect\citeauthoryear{Drees and Sabourin}{Drees and Sabourin}{2021}]{Drees2021principal}
Drees, H. and A.~Sabourin (2021).
\newblock Principal component analysis for multivariate extremes.
\newblock {\em Electronic Journal of Statistics\/}~{\em 15\/}(1), 908--943.

\bibitem[\protect\citeauthoryear{Einmahl, De~Haan, and Piterbarg}{Einmahl et~al.}{2001}]{einmahl2001nonparametric}
Einmahl, J. H.~J., L.~De~Haan, and V.~I. Piterbarg (2001).
\newblock Nonparametric estimation of the spectral measure of an extreme value distribution.
\newblock {\em The Annals of Statistics\/}~{\em 29\/}(5), 1401--1423.

\bibitem[\protect\citeauthoryear{Einmahl, De~Haan, and Sinha}{Einmahl et~al.}{1997}]{einmahl1997estimating}
Einmahl, J. H.~J., L.~De~Haan, and A.~K. Sinha (1997).
\newblock Estimating the spectral measure of an extreme value distribution.
\newblock {\em Stochastic Processes and their Applications\/}~{\em 70\/}(2), 143--171.

\bibitem[\protect\citeauthoryear{Einmahl, Krajina, and Segers}{Einmahl et~al.}{2012}]{einmahl2012m}
Einmahl, J. H.~J., A.~Krajina, and J.~Segers (2012).
\newblock An {M}-estimator for tail dependence in arbitrary dimensions.
\newblock {\em The Annals of Statistics\/}~{\em 40\/}(3), 1764--1793.

\bibitem[\protect\citeauthoryear{Einmahl and Segers}{Einmahl and Segers}{2009}]{einmahl2009maximum}
Einmahl, J. H.~J. and J.~Segers (2009).
\newblock Maximum empirical likelihood estimation of the spectral measure of an extreme-value distribution.
\newblock {\em The Annals of Statistics\/}~{\em 37\/}(5B), 2953--2989.

\bibitem[\protect\citeauthoryear{Engelke and Hitz}{Engelke and Hitz}{2020}]{engelke2020graphical}
Engelke, S. and A.~S. Hitz (2020).
\newblock Graphical models for extremes.
\newblock {\em Journal of the Royal Statistical Society: Series B (Statistical Methodology)\/}~{\em 82\/}(4), 871--932.

\bibitem[\protect\citeauthoryear{Fletcher, Lu, Pizer, and Joshi}{Fletcher et~al.}{2004}]{fletcher2004principal}
Fletcher, P.~T., C.~Lu, S.~M. Pizer, and S.~Joshi (2004).
\newblock Principal geodesic analysis for the study of nonlinear statistics of shape.
\newblock {\em IEEE Transactions on Medical Imaging\/}~{\em 23\/}(8), 995--1005.

\bibitem[\protect\citeauthoryear{Fomichov and Ivanovs}{Fomichov and Ivanovs}{2023}]{fomichov2023spherical}
Fomichov, V. and J.~Ivanovs (2023).
\newblock Spherical clustering in detection of groups of concomitant extremes.
\newblock {\em Biometrika\/}~{\em 110\/}(1), 135--153.

\bibitem[\protect\citeauthoryear{French}{French}{2026}]{kenFrenchDataLibrary}
French, K.~R. (2026).
\newblock {Kenneth R. French Data Library}.
\newblock \url{https://mba.tuck.dartmouth.edu/pages/faculty/ken.french/data_library.html}.
\newblock Accessed July 7, 2026.

\bibitem[\protect\citeauthoryear{Goix, Sabourin, and Cl{\'e}men{\c{c}}on}{Goix et~al.}{2017}]{goix2016sparsity}
Goix, N., A.~Sabourin, and S.~Cl{\'e}men{\c{c}}on (2017).
\newblock Sparse representation of multivariate extremes with applications to anomaly detection.
\newblock {\em Journal of Multivariate Analysis\/}~{\em 161}, 12--31.

\bibitem[\protect\citeauthoryear{Huckemann and Ziezold}{Huckemann and Ziezold}{2006}]{huckemann2006principal}
Huckemann, S. and H.~Ziezold (2006).
\newblock Principal component analysis for riemannian manifolds, with an application to triangular shape spaces.
\newblock {\em Advances in Applied Probability\/}~{\em 38\/}(2), 299--319.

\bibitem[\protect\citeauthoryear{Jan{\ss}en and Wan}{Jan{\ss}en and Wan}{2020}]{janssen2020kmeans}
Jan{\ss}en, A. and P.~Wan (2020).
\newblock $k$-means clustering of extremes.
\newblock {\em Electronic Journal of Statistics\/}~{\em 14\/}(1), 1211--1233.

\bibitem[\protect\citeauthoryear{Jung, Dryden, and Marron}{Jung et~al.}{2012}]{jung2012analysis}
Jung, S., I.~L. Dryden, and J.~S. Marron (2012).
\newblock Analysis of principal nested spheres.
\newblock {\em Biometrika\/}~{\em 99\/}(3), 551--568.

\bibitem[\protect\citeauthoryear{Lhaut, Rootz{\'e}n, and Segers}{Lhaut et~al.}{2026}]{lhaut2026simulation}
Lhaut, S., H.~Rootz{\'e}n, and J.~Segers (2026).
\newblock Simulation of multivariate extremes: A wasserstein--aitchison gan approach.
\newblock {\em Extremes\/}~{\em 29\/}(2), 157--194.

\bibitem[\protect\citeauthoryear{Meyer and Wintenberger}{Meyer and Wintenberger}{2021}]{meyer2021sparse}
Meyer, N. and O.~Wintenberger (2021).
\newblock Sparse regular variation.
\newblock {\em Advances in Applied Probability\/}~{\em 53\/}(4), 1115--1148.

\bibitem[\protect\citeauthoryear{Mourahib, Kiriliouk, and Segers}{Mourahib et~al.}{2025}]{mourahib2025multivariate}
Mourahib, A., A.~Kiriliouk, and J.~Segers (2025).
\newblock Multivariate generalized pareto distributions along extreme directions.
\newblock {\em Extremes\/}~{\em 28\/}(2), 239--272.

\bibitem[\protect\citeauthoryear{Resnick}{Resnick}{2007}]{resnick2007heavy}
Resnick, S.~I. (2007).
\newblock {\em Heavy-tail phenomena: probabilistic and statistical modeling}.
\newblock Springer.

\bibitem[\protect\citeauthoryear{Rootz{\'e}n, Segers, and Wadsworth}{Rootz{\'e}n et~al.}{2018}]{rootzen2018multivariate}
Rootz{\'e}n, H., J.~Segers, and J.~L. Wadsworth (2018).
\newblock Multivariate generalized pareto distributions: Parametrizations, representations, and properties.
\newblock {\em Journal of Multivariate Analysis\/}~{\em 165}, 117--131.

\bibitem[\protect\citeauthoryear{Rootz{\'e}n and Tajvidi}{Rootz{\'e}n and Tajvidi}{2006}]{rootzen2006multivariate}
Rootz{\'e}n, H. and N.~Tajvidi (2006).
\newblock Multivariate generalized pareto distributions.
\newblock {\em Bernoulli\/}~{\em 12\/}(5), 917--930.

\bibitem[\protect\citeauthoryear{Sommer, Lauze, and Nielsen}{Sommer et~al.}{2014}]{sommer2014optimization}
Sommer, S., F.~Lauze, and M.~Nielsen (2014).
\newblock Optimization over geodesics for exact principal geodesic analysis.
\newblock {\em Advances in Computational Mathematics\/}~{\em 40\/}(2), 283--313.

\bibitem[\protect\citeauthoryear{Tabaghi, Khanzadeh, Wang, and Mirarab}{Tabaghi et~al.}{2024}]{tabaghi2024principal}
Tabaghi, P., M.~Khanzadeh, Y.~Wang, and S.~Mirarab (2024).
\newblock Principal component analysis in space forms.
\newblock {\em IEEE Transactions on Signal Processing\/}~{\em 72}, 4428--4443.

\bibitem[\protect\citeauthoryear{Wan}{Wan}{2026}]{wan2026characterizing}
Wan, P. (2026).
\newblock Characterizing extremal dependence on a hyperplane.
\newblock {\em Biometrika\/}~{\em 113\/}(2), asag015.

\end{thebibliography}


\begin{thebibliography}{}

\bibitem[\protect\citeauthoryear{Bingham, Goldie, and Teugels}{Bingham et~al.}{1987}]{bingham1987regular}
Bingham, N.~H., C.~M. Goldie, and J.~L. Teugels (1987).
\newblock {\em Regular Variation}.
\newblock Cambridge University Press.

\bibitem[\protect\citeauthoryear{Chen and Zimmermann}{Chen and Zimmermann}{2026}]{openSourceAssetPricing}
Chen, A.~Y. and T.~Zimmermann (2026).
\newblock {Open Source Asset Pricing}.
\newblock \url{https://www.openassetpricing.com/}.
\newblock Accessed July 7, 2026.

\bibitem[\protect\citeauthoryear{De~Haan and Ferreira}{De~Haan and Ferreira}{2006}]{de2006extreme}
De~Haan, L. and A.~Ferreira (2006).
\newblock {\em Extreme value theory: an introduction}.
\newblock Springer.

\bibitem[\protect\citeauthoryear{Einmahl, De~Haan, and Piterbarg}{Einmahl et~al.}{2001}]{einmahl2001nonparametric}
Einmahl, J. H.~J., L.~De~Haan, and V.~I. Piterbarg (2001).
\newblock Nonparametric estimation of the spectral measure of an extreme value distribution.
\newblock {\em The Annals of Statistics\/}~{\em 29\/}(5), 1401--1423.

\bibitem[\protect\citeauthoryear{Einmahl, De~Haan, and Sinha}{Einmahl et~al.}{1997}]{einmahl1997estimating}
Einmahl, J. H.~J., L.~De~Haan, and A.~K. Sinha (1997).
\newblock Estimating the spectral measure of an extreme value distribution.
\newblock {\em Stochastic Processes and their Applications\/}~{\em 70\/}(2), 143--171.

\bibitem[\protect\citeauthoryear{Einmahl, Krajina, and Segers}{Einmahl et~al.}{2012}]{einmahl2012m}
Einmahl, J. H.~J., A.~Krajina, and J.~Segers (2012).
\newblock An {M}-estimator for tail dependence in arbitrary dimensions.
\newblock {\em The Annals of Statistics\/}~{\em 40\/}(3), 1764--1793.

\bibitem[\protect\citeauthoryear{Einmahl and Segers}{Einmahl and Segers}{2009}]{einmahl2009maximum}
Einmahl, J. H.~J. and J.~Segers (2009).
\newblock Maximum empirical likelihood estimation of the spectral measure of an extreme-value distribution.
\newblock {\em The Annals of Statistics\/}~{\em 37\/}(5B), 2953--2989.

\bibitem[\protect\citeauthoryear{French}{French}{2026}]{kenFrenchDataLibrary}
French, K.~R. (2026).
\newblock {Kenneth R. French Data Library}.
\newblock \url{https://mba.tuck.dartmouth.edu/pages/faculty/ken.french/data_library.html}.
\newblock Accessed July 7, 2026.

\bibitem[\protect\citeauthoryear{Genest and Segers}{Genest and Segers}{2009}]{genest2009rank}
Genest, C. and J.~Segers (2009).
\newblock Rank-based inference for bivariate extreme-value copulas.
\newblock {\em The Annals of Statistics\/}~{\em 37\/}(5B), 2990--3022.

\bibitem[\protect\citeauthoryear{Horn and Johnson}{Horn and Johnson}{2013}]{hornJohnson2013matrix}
Horn, R.~A. and C.~R. Johnson (2013).
\newblock {\em Matrix Analysis\/} (2 ed.).
\newblock Cambridge University Press.

\bibitem[\protect\citeauthoryear{Shorack and Wellner}{Shorack and Wellner}{1986}]{shorack1986empirical}
Shorack, G.~R. and J.~A. Wellner (1986).
\newblock {\em Empirical Processes with Applications to Statistics}.
\newblock Wiley.

\bibitem[\protect\citeauthoryear{Wellner}{Wellner}{1978}]{wellner1978limit}
Wellner, J.~A. (1978).
\newblock Limit theorems for the ratio of the empirical distribution function to the true distribution function.
\newblock {\em Zeitschrift f{\"u}r Wahrscheinlichkeitstheorie und verwandte Gebiete\/}~{\em 45\/}(1), 73--88.

\end{thebibliography}
